\theoremstyle{plain}
\newtheorem{definition}{Definition}[section] 
\newtheorem{prop}[definition]{Proposition}
\newtheorem{lemma}[definition]{Lemma}
\newtheorem{cor}[definition]{Corollary}
\newtheorem{rmk}[definition]{Remark}
\newtheorem{conjecture}{Conjecture}
\newtheorem{thmrough}{Theorem}
\newtheorem*{notation}{Notation}
\newtheorem{theorem}[definition]{Theorem}
\renewcommand{\d}{\mathrm{d}}
\newcommand{\Hp}{\mathcal{H}}
\newcommand{\Ch}{\mathcal{CH}}
\newcommand{\V}{{\mathcal{V}}}
\newcommand{\red}{{r_{\mathrm{red}}}}
\newcommand{\Rred}{{\mathcal{R}_{\mathrm{red}}}}
\newcommand{\cv}[1]{{\underline{\mathcal{C}}_{#1}}}
\newcommand{\dvol}{\mathrm{dvol}}
\newcommand{\rblue}{r_{\mathrm{blue}}}
\newcommand{\dvpsi}{{\tilde{\nabla}_v \psi}}
\newcommand{\dupsi}{{\tilde{\nabla}_u \psi}}
\newcommand{\vp}{\langle v \rangle^p}
\newcommand{\up}{\langle u \rangle^p}
\renewcommand{\paragraph}[1]{%
	\par 
	\addvspace{\medskipamount}
	\textit{#1\@addpunct{.}}\enspace\ignorespaces
}
\numberwithin{equation}{section}
\numberwithin{theorem}{section}
\title{Uniform boundedness and continuity at the Cauchy horizon \\for linear waves on Reissner--Nordström--AdS black holes}
\author{Christoph Kehle\thanks{c.kehle@maths.cam.ac.uk}}
\affil{\small  Department of Pure Mathematics and Mathematical Statistics,\\ University~of~Cambridge,~Wilberforce~Road,~Cambridge CB3 0WB, United Kingdom \vskip.1pc \  }
\date{July 22, 2019}
\begin{document}
\maketitle
\thispagestyle{empty}
\begin{abstract}
	Motivated by the Strong Cosmic Censorship Conjecture for asymptotically AdS spacetimes, we initiate the study of massive scalar waves satisfying $\Box_g \psi - \mu \psi =0$ on the interior of Anti-de~Sitter (AdS) black holes. We prescribe initial data on a spacelike hypersurface of a Reissner--Nordström--AdS black hole and impose Dirichlet (reflecting) boundary conditions at infinity. It was known previously that such waves only decay at a sharp logarithmic rate (in contrast to a polynomial rate as in the asymptotically flat regime) in the black hole exterior. In view of this slow decay, the question of uniform boundedness in the black hole interior and continuity at the Cauchy horizon has remained up to now open. We answer this question in the affirmative.
\end{abstract}
\tableofcontents
\thispagestyle{empty}
\newpage
\section{Introduction}
\setcounter{page}{1}
We initiate the study of (massive) linear waves satisfying \begin{align}\Box_g \psi -\mu \psi =0\label{eq:wave0}\end{align} on the \emph{interior} of asymptotically Anti-de~Sitter (AdS) black holes $(\mathcal{M},g)$. In the context of asymptotically AdS spacetimes it is natural to consider (possibly negative) mass parameters $\mu$ satisfying the Breitenlohner--Freedman \cite{breitenlohner} bound $\mu > \frac{3}{4}\Lambda$, where $\Lambda <0$ is the cosmological constant of the underlying spacetime. In particular, this covers the conformally invariant operator with $\mu = \frac{2}{3}\Lambda$. We will consider Reissner--Nordström--AdS (RN--AdS) black holes \cite{rnads} which can be viewed as the simplest model in the context of the question of stability of the Cauchy horizon. These spacetimes are spherically symmetric solutions of the Einstein equations 
\begin{align}\label{eq:Einstein}\tag{EE}
	\mathrm{Ric}_{\mu\nu}  - \frac{1}{2} \mathrm R g_{\mu\nu} + \Lambda g_{\mu\nu} = 8 \pi T_{\mu\nu}
\end{align} coupled to the Maxwell equations via the energy momentum tensor $T_{\mu\nu}$. Our main result \cref{thm:thmrough} (see \cref{thm:main} in \cref{sec:mainthm} for its precise formulation) is the statement of \textbf{uniform boundedness} in the black hole interior and \textbf{continuity} at the Cauchy horizon of solutions to \eqref{eq:wave0} arising from initial data on a spacelike hypersurface on RN--AdS. We moreover assume Dirichlet (reflecting) boundary conditions at infinity.
Our result is surprising because in contrast to black hole backgrounds with non-negative cosmological constants ($\Lambda \geq 0$), the decay of $\psi$ in the exterior region for asymptotically AdS black holes ($\Lambda <0$) is only \emph{logarithmic} as shown by Holzegel--Smulevici \cite{decaykg} (cf.\  {polynomial} \cite{price,partiii,aretakis} ($\Lambda =0$) and {exponential} \cite{decaydesitter,dyatlov} ($\Lambda >0$)). Indeed, the logarithmic decay is too slow to adapt the mechanism exploited in previous studies of black hole interiors \cite{dafermos_cauchy_horizon,anneboundedness,dafermos2017interior}. The proof of our main theorem will now follow a new approach, combining physical space estimates with Fourier based estimates exploited in the scattering theory developed in \cite{kehle2018scattering}.

In the rest of the introduction we will give some background on the problem and formulate our main result \cref{thm:thmrough}.
\paragraph{\textbf{The Cauchy horizon and the Strong Cosmic Censorship Conjecture}}
The main motivation for studying linear waves on black hole interiors is to shed light on one of the most fundamental puzzles in general relativity: The Kerr(--de~Sitter or --Anti-de~Sitter) and Reissner--Nordström (--de~Sitter or --Anti-de~Sitter) black holes share the property that in addition to the \emph{event horizon} $\mathcal H$, they hide another horizon, the so-called \emph{Cauchy horizon} $\mathcal{CH}$, in their interiors.
\footnote{More precisely, this holds true for subextremal and non-trivially rotating(charged) Kerr(Reissner--Nordström) black holes which we will assume for the rest of the paper, unless explicitly stated otherwise.}
 This Cauchy horizon defines the boundary beyond which initial data on a spacelike hypersurface (together with boundary conditions at infinity in the asymptotically AdS case)  no longer uniquely determine the spacetime as a solution of \eqref{eq:Einstein}.
 In particular, these spacetimes admit infinitely many smooth extensions beyond their Cauchy horizons solving \eqref{eq:Einstein}.
 This severe violation of determinism is conjectured to be an artifact of the high degree of symmetry in those explicit spacetimes and generically, due to blue-shift instabilities, it is expected that a singularity ought to form at or before the Cauchy horizon. This is known as the \emph{Strong Cosmic Censorship Conjecture} (SCC) \cite{penrose1974gravitational,christo}. A full resolution of the SCC conjecture would also include a precise description of the breakdown of regularity at or before the Cauchy horizon. 
 
 We first present the $C^0$ formulation of SCC (see \cite{christo,dafermos2017interior}), which can be seen as the strongest inextendibility statement in this context.  
 \begin{conjecture}[$C^0$ formulation of strong cosmic censorship]\label{conject0}
For generic compact or asymptotically flat (asymptotically Anti-de~Sitter) vacuum initial data, the maximal Cauchy development of \eqref{eq:Einstein} is inextendible as a Lorentzian manifold with $C^0$ (continuous) metric.
 \end{conjecture}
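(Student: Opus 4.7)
The stated conjecture is a major open problem, so what follows is a \emph{program}, not a proof sketch. The plan is to decompose $C^0$-inextendibility into a coupled exterior-stability / interior-instability statement, in the spirit of the strategy deployed (with partial success) for $\Lambda \geq 0$.

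First I would fix an appropriate notion of ``generic'' in the asymptotically AdS setting: a residual subset, in some natural Sobolev topology, of vacuum initial data on a Cauchy surface together with Dirichlet-type conformal boundary data at infinity, for which the maximal development is expected to settle down to a subextremal Kerr--AdS or Reissner--Nordström--AdS exterior. I would then establish \emph{nonlinear orbital stability} of this exterior region up to and including the event horizon $\Ho$. In the AdS setting this is itself open, and is made delicate by the logarithmic (rather than polynomial or exponential) decay rates of Holzegel--Smulevici; the linear boundedness result \cref{thm:thmrough} of the present paper should be viewed as a first, purely linear, step in this direction.

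The next step is to propagate the exterior stability statement through $\Ho$ into the black hole interior and prove uniform boundedness up to the Cauchy horizon $\Ch$. This is essentially a nonlinear analogue of \cref{thm:thmrough}, and one would hope that the combination of the red-shift at $\Ho$, the (slow) decay inherited from the exterior, and the mix of physical-space and Fourier-based estimates exploited by the present paper can be made stable under a quasilinear perturbation.

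The hardest step, and the main obstacle, is the actual inextendibility. I would try to show that generically some coordinate-invariant quantity (the Kretschmann scalar, or a suitable weighted energy of the null second fundamental form of $\Ch$) blows up, which by now-standard arguments rules out even continuous extensions of the metric. The real difficulty is that in the $\Lambda=0$ case the work of Dafermos--Luk strongly suggests that the $C^0$ statement \emph{fails} --- the Cauchy horizon is $C^0$-stable even though higher-regularity norms blow up. Whether the logarithmic exterior decay in the AdS case amplifies the blue-shift along $\Ch$ enough to force genuine $C^0$-inextendibility (in contrast to the asymptotically flat case) is the central question, and I would expect the answer to depend sensitively on the interplay between the AdS boundary condition and the oscillatory content of generic perturbations. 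A natural intermediate target would be a pointwise \emph{lower} bound on $\psi$ along $\Ch$ in RN--AdS from generic initial data, complementing the upper bound of \cref{thm:thmrough}; even this linear lower bound does not appear to be currently known, and without it the nonlinear $C^0$ statement seems out of reach.
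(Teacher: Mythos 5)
This statement is a \emph{conjecture}, not a theorem: the paper offers no proof of it, and in fact its main result points in the opposite direction for the spacetimes under consideration. \cref{thm:thmrough} shows that on the linear level (the ``poor man's linearization'' $\psi \sim g$) the analogue of \cref{conject0} is \emph{false} for Reissner--Nordström--AdS, since $\psi$ remains uniformly bounded and extends continuously across $\Ch$ despite the slow logarithmic exterior decay; the paper only speculates that the conjecture might survive for non-axisymmetric perturbations of Kerr--AdS. So there is no ``paper proof'' to compare your program against, and your proposal, honestly presented as a program, cannot be assessed as a proof.

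Beyond that, one step of your program contains a genuine error: blow-up of the Kretschmann scalar (or of an energy of the second fundamental form of $\Ch$) does \emph{not} rule out continuous extensions of the metric. Curvature blow-up obstructs $C^2$ extensions, and this is precisely the point of the $\Lambda=0$ picture you cite: Dafermos--Luk prove $C^0$ stability of the Kerr Cauchy horizon while the spherically symmetric results show generic $C^2$ instability, so the two phenomena coexist. Proving genuine $C^0$ inextendibility requires controlling \emph{all} possible continuous extensions, which is a much harder and structurally different task (and is exactly why \cref{conj:2}, with the weaker $H^1_{\mathrm{loc}}$ criterion on the Christoffel symbols, was introduced). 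Relatedly, your proposed intermediate target of a pointwise \emph{lower} bound on $\psi$ along $\Ch$ cannot support $C^0$ blow-up in RN--AdS: the paper's upper bound already shows $\psi$ stays bounded there, so a nontrivial lower bound would be consistent with, not evidence against, continuous extendibility; the quantities whose lower bounds are relevant to censorship are derivative/energy quantities (local $H^1$ fluxes), which bear on Christodoulou's formulation rather than on the $C^0$ one.
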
 
Surprisingly, the $C^0$ formulation (\cref{conject0}) was recently proved to be false for both cases $\Lambda =0$ and $\Lambda>0$ (see discussion later, \cite{dafermos2017interior}). However, the following weaker, yet well-motivated, formulation introduced by Christodoulou in \cite{christo} is still expected to hold true (at least) in the asymptotically flat case ($\Lambda=0$).
\begin{conjecture}[Christodoulou's re-formulation of strong cosmic censorship]\label{conj:2}
	For generic asymptotically
	flat vacuum initial data, the maximal Cauchy development of \eqref{eq:Einstein} is inextendible as a Lorentzian manifold with $C^0$ (continuous) metric and locally square integrable Christoffel symbols.
\end{conjecture}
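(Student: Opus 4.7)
The plan is to follow the two-tiered strategy initiated by Dafermos and developed in the works of Dafermos--Luk for Kerr and Luk--Oh--Shlapentokh-Rothman for Reissner--Nordström, adapted to the full non-linear vacuum problem. The strategy reduces \cref{conj:2} to (a) sharp lower bounds on late-time tails along the event horizon and (b) a blue-shift amplification argument in the interior.

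First I would establish sharp \emph{lower} bounds on the late-time tail of linear perturbations on the exterior of a subextremal Kerr (or Reissner--Nordström) black hole. The Price law has long been known as an upper bound and, in the scalar case, has been proven as a two-sided asymptotic by Angelopoulos--Aretakis--Gajic; what is needed here is an analogue for the full Teukolsky system or the metric perturbations themselves, valid on a generic (dense $G_\delta$) subset of asymptotically flat vacuum Cauchy data. Without such a lower bound, nothing prevents a miraculous cancellation of the radiation tail that would allow the Cauchy horizon to remain sufficiently regular.

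Second, I would propagate this generic pointwise lower bound along the event horizon into the interior. Along the Cauchy horizon the blue-shift instability amplifies an inverse-polynomial tail $v^{-p}$ (with $p$ below a threshold fixed by the surface gravities) into a logarithmic divergence of the local energy, so that the first derivatives of the perturbation fail to be locally square integrable. At the non-linear level this picture has been carried out conditionally by Dafermos--Luk: assuming sufficient pointwise decay on the event horizon, they construct a $C^0$ extension of the metric across the Cauchy horizon which is generically singular in the $H^1_{\mathrm{loc}}$ sense required by Christodoulou's formulation. Combining the two steps then yields the conclusion.

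The main obstacle, and the essential open problem, is precisely the coupling of these two steps: the conservation laws that yield tails in the linear scalar case (multipole structure, radiation field, Aretakis-type horizon charges) do not \emph{a priori} survive the non-linear coupling, and the notion of ``generic'' must be formulated in a way that is simultaneously stable under the flow of \eqref{eq:Einstein} and compatible with the detailed asymptotic analysis on the exterior. Any successful proof will likely require a genuinely non-linear version of the scattering-theoretic and physical-space techniques used in the present paper and in \cite{kehle2018scattering}, together with a new mechanism for ruling out the fine-tuned cancellations at the non-linear level.
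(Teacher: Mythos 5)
The statement you were asked about is \cref{conj:2}, which is a \emph{conjecture}: the paper does not prove it, and no proof of it exists in the literature. Indeed the paper itself records that nonlinear inextendibility results at the Cauchy horizon are known only in spherical symmetry (for the Einstein--Maxwell--scalar system, via \cite{dafermos_cauchy_horizon,luk2017strong,luk2017strong2}), that these give the $C^2$ version but \emph{not yet} Christodoulou's $H^1_{\mathrm{loc}}$ formulation, and that the validity of \cref{conj:2} even restricted to a neighborhood of Kerr ``has yet to be understood.'' So there is nothing in the paper for your argument to match, and your text should not be presented as a proof at all.

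Judged on its own terms, your proposal is a reasonable survey of the expected two-step mechanism (generic lower bounds on late-time tails along the event horizon, then blue-shift amplification in the interior leading to failure of $H^1_{\mathrm{loc}}$), but both steps are themselves open at the level required: sharp \emph{lower} bounds for the Teukolsky system or metric perturbations on a generic set of vacuum data are not known, the conditional interior results of Dafermos--Luk give $C^0$ extendibility (falsifying \cref{conject0}) rather than the generic $H^1_{\mathrm{loc}}$ blow-up needed here, and the passage from linear tails to the nonlinear problem without symmetry is precisely the missing ingredient --- as you yourself concede in your final paragraph. A sketch whose central lemmas are acknowledged open problems is a research program, not a proof, and it would be a serious error to cite it as establishing \cref{conj:2}. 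The honest statement is that the conjecture remains open; the most you can claim is a plausible strategy consistent with the linear-level evidence (e.g.\ \cite{Lukreissner,timetranslation,luk2016kerr}) discussed in the introduction.
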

 In order to gain insight about SCC, the most naive approach (often referred to as ``poor man's linearization'') is to study solutions of \eqref{eq:wave0} with $\mu=0$ on a fixed explicit black hole spacetime (e.g.\ Kerr or Reissner--Nordström). This can be considered as the most naive toy model for \eqref{eq:Einstein} with initial data close to Kerr or Reissner--Nordström data, for which many features of \eqref{eq:Einstein} including the non-linear terms and the tensorial structure are neglected; see the pioneering works for asymptotically flat ($\Lambda =0$) black holes \cite{Simpson1973,mcnamara_behavior,mcnamara_instab,chandra}. Under the identification $\psi \sim g$ and $\partial \psi \sim \Gamma$, where $\psi$ is a solution to \eqref{eq:wave0}, \cref{conject0} corresponds to a failure of $\psi$ to be continuous ($C^0$) at the Cauchy horizon. Similarly, \cref{conj:2} corresponds to a failure of $\psi$ to lie in $H^1_\mathrm{loc}$ at the Cauchy horizon.
 
 \paragraph{\textbf{The state of the art for $\Lambda =0$ and $\Lambda >0$}} The definitive disproof \cite{dafermos2017interior} of \cref{conject0} was preceded by corresponding results on the level of \eqref{eq:wave0}.
 
\paragraph{Linear level for $\Lambda =0$}In the asymptotically flat case ($\Lambda =0$) it was shown in \cite{anneboundedness,franzenkerr} (see also \cite{hintz_dS}) that solutions of \eqref{eq:wave0} with $\mu=0$ arising from data on a spacelike hypersurface remain continuous and uniformly bounded (no $C^0$ blow-up) at the Cauchy horizon of general subextremal Kerr or Reissner--Nordström black hole interiors. (For the extremal case see \cite{gajic1,gajic2}.) The key method for the proof is to use the polynomial decay on the event horizon proved in \cite{partiii} (with rate $|\psi| \lesssim v^{-p}$ and $p>1$) and propagate it into the interior. The boundedness and continuity of $\psi$ at the Cauchy horizon was then concluded from red-shift estimates, energy estimates associated to the novel vector field \begin{align}
\label{eq:x} S = |u|^p \partial_u + |v|^p \partial_v\end{align}
and commuting with angular momentum operators followed by Sobolev embeddings. Here $u,v$ are Eddington--Finkelstein-type null coordinates in the interior.

Besides the above $C^0$ boundedness, it was proved that the (non-degenerate) local energy at the Cauchy horizon blows up for a generic set of solutions $\psi$ in Reissner--Nordström \cite{Lukreissner} and Kerr \cite{timetranslation} black holes. (Note that this blow-up is compatible with the finiteness of the flux associated to \eqref{eq:x} because $\partial_v$ and $\partial_u$ degenerate at the Cauchy horizons $\Ch_A$ and $\Ch_B$, respectively.) A similar blow-up behavior was obtained for Kerr in \cite{luk2016kerr} assuming lower bounds on the energy decay rate of a solution along the event horizon. These results support \cref{conj:2} at least on the level of \eqref{eq:wave0}.
  
Another type of result that has been shown in \cite{kehle2018scattering} is a finite energy scattering theory for solutions of \eqref{eq:wave0} (with $\mu=0$) from the event horizon $\mathcal{H}_A^+ \cup \mathcal{H}_B^+$ to the Cauchy horizon $\Ch_A \cup \Ch_B$ in the interior of Reissner--Nordström black holes. In this scattering theory a linear isomorphism between the degenerate energy spaces (associated to the Killing field $T = \partial_ v -  \partial_u$) corresponding to the event and Cauchy horizon was established. The question reduced to obtaining uniform control over  transmission and reflection coefficients $\mathfrak T(\omega,\ell)$ and $\mathfrak R(\omega,\ell)$ corresponding to fixed frequency solutions. Intuitively, for a purely incoming wave at the event horizon $\mathcal{H}_A^+$, the transmission and reflection coefficients correspond to the amount of $T$-energy scattered to $\Ch_B$ and $\Ch_A$, respectively. Indeed, the theory also carries over to $\Lambda \neq 0$ and $\mu \neq 0$ \emph{except} for the $\omega=0$ frequency.  This will turn out to be important for the present paper.

\paragraph{Linear level for $\Lambda >0$} For Kerr(and Reissner--Nordström)--de Sitter ($\Lambda >0$) it was shown in \cite{hintz_vasy} that solutions of \eqref{eq:wave0} (with $\mu=0$) also remain bounded up to and including the Cauchy horizon. 
 Note that in both cases, $\Lambda =0$ and $\Lambda >0$, the proofs rely crucially on quantitative decay along the event horizon (polynomial for $\Lambda =0$ and exponential for $\Lambda >0$). 

 On the other hand the exponential convergence on the event horizon of a Kerr--de~Sitter black hole is in direct competition with the exponential blue-shift instability and the question of local energy blow-up at the Cauchy horizon for \eqref{eq:wave0} is more subtle, see the conjecture in \cite{nospacelike} and the more recent \cite{dafermos2018rough,dias2018strong,dias2018strong2}.
 
\paragraph{Nonlinear level for $\Lambda =0$ and $\Lambda >0$}Now we turn to the full nonlinear problem for \eqref{eq:Einstein}. As mentioned before, for the Einstein vacuum equations Dafermos--Luk showed that the Kerr Cauchy horizon is $C^0$ stable \cite{dafermos2017interior}, i.e.\ the spacetime is extendible as a $C^0$ Lorentzian manifold. Note that this definitively falsifies \cref{conject0} for $\Lambda=0$ (subject only to the completion of a proof of the nonlinear stability of the Kerr exterior).
 In principle, their proof of $C^0$ extendibility also applies to the interior of Kerr--de~Sitter black holes, where the exterior has been proved to be stable for slowly rotating Kerr--de~Sitter black holes \cite{hintz2016global}, thus falsifying \cref{conject0} for $\Lambda >0$.
 
Nonlinear inextendibility results at the Cauchy horizon have been proved only in spherical symmetry: Coupling the Einstein equation  \eqref{eq:Einstein} to a Maxwell--Scalar field system, it is proved in \cite{dafermos_cauchy_horizon} that the Cauchy horizon is $C^0$ stable, yet $C^2$ unstable \cite{luk2017strong,luk2017strong2,dafermos_cauchy_horizon} for a generic set of spherically symmetric initial data. See also the pioneering work in \cite{internal90,ori91}. This shows the $C^2$ formulation of SCC (but not yet \cref{conj:2}) in spherical symmetry. See \cite{costa2,costa3} for work in the $\Lambda >0$ case. The question of any type of nonlinear instability of the Cauchy horizon without symmetry assumptions and the validity of \cref{conj:2} (even restricted to  a neighborhood of Kerr) have yet to be understood.

\paragraph{\textbf{Linear waves and SCC for asymptotically AdS black holes}} The situation is changed radically if one considers asymptotically Anti-de~Sitter ($\Lambda <0$) spacetimes.
\begin{figure}
	\centering
	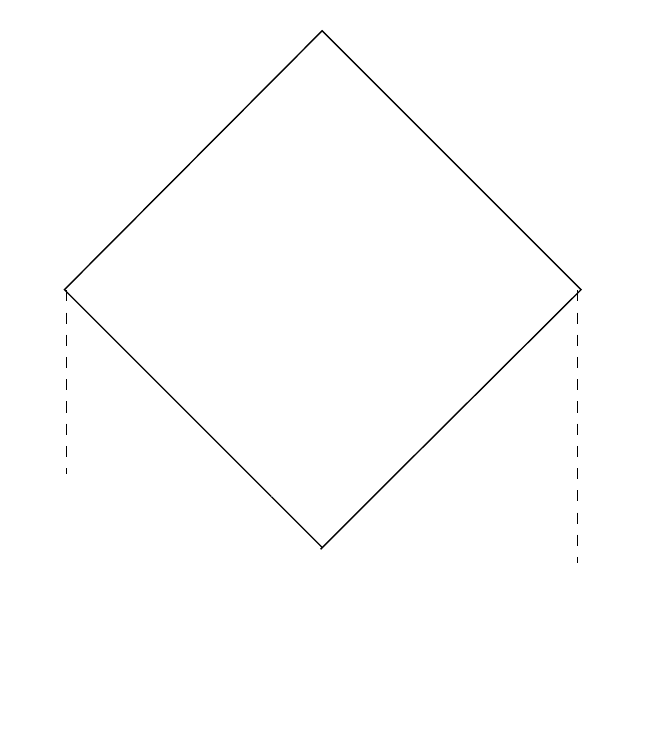
	\caption{Penrose diagram of the maximal Cauchy development of Reissner--Nordström--AdS or Kerr--AdS data on a spacelike surface $\Sigma_0$ with Dirichlet (reflecting) boundary conditions prescribed on null infinity $\mathcal{I}= \mathcal{I}_A \cup \mathcal{I}_B$.}
	\label{fig:adsintro}
\end{figure}
 Due to the timelike nature of null infinity $\mathcal I = \mathcal{I}_A \cup \mathcal{I}_B $, see for example \cref{fig:adsintro}, these spacetimes are not globally hyperbolic. For well-posedness of \eqref{eq:Einstein} and \eqref{eq:wave0} it is required to impose also boundary conditions at infinity.
 The most natural conditions are Dirichlet (reflecting) boundary conditions, see \cite{FRIEDRICH1995125}. Before we address the question of stability of the Cauchy horizon, it is essential to understand the behavior in the exterior region of Kerr--AdS or Reissner--Nordström--AdS.
  \paragraph{Logarithmic decay for linear waves on the exterior of Kerr--AdS and Reissner--Nordström--AdS} For the massive linear wave equation 
  \eqref{eq:wave0} on Kerr--AdS and Reissner--Nordström--AdS, Holzegel--Smulevici showed in \cite{decaykg} stability in the exterior region. Indeed, they proved that solutions decay at least at logarithmic rate towards $i^+$ (cf.\ polynomial ($\Lambda =0$) and exponential ($\Lambda >0$)) assuming the Hawking--Reall \cite{hawking_reall} bound\footnote{Note that otherwise exponentially growing mode solutions can be constructed as shown in \cite{dold}.} $r_+ > |a|l$ and the Breitenlohner--Freedman \cite{breitenlohner} bound $\mu > \frac{3}{4}\Lambda$. Moreover, they showed that solutions of \eqref{eq:wave0} with fixed angular momentum actually decay exponentially on the exterior of Reissner--Nordström--AdS. (This is in contrast to the asymptotically flat case, in which fixed angular momentum solutions of \eqref{eq:wave0} decay {polynomially} on the exterior of Reissner--Nordström.)
  However, their main insight was that a suitable infinite sum of such rapidly decaying fixed angular momentum solutions,  possessing finite energy in some weighted norm, indeed achieves the logarithmic decay rate  \cite{lower_bound}. This is due to the presence of stable trapping. Note that this sharpness can also be concluded from later work showing the existence of quasinormal modes converging to the real axis at an exponential rate as the real part of the frequency and angular momentum tend to infinity \cite{quasi_warnick,quasinormal_gannot}. (For some asymptotically flat five dimensional black holes a similar inverse logarithmic lower bound was shown in \cite{gabriele}.)
   
 \paragraph{Strong Cosmic Censorship for AdS black holes} 
 With the logarithmic decay on the exterior in hand, we turn to the question of the stability of the Cauchy horizon. Indeed, the logarithmic decay rate on the exterior is too slow to follow the methods involving the red-shift vector field and the vector field $S$ as in \eqref{eq:x} (see discussion before) to prove uniform boundedness and $C^0$ (continuous) extendibility at the Cauchy horizon of solutions to \eqref{eq:wave0}. More specifically, after propagating the logarithmic decay through the red-shift region, the energy flux associated to $S$ is infinite on a $\{ r= const.\}$ hypersurface in the black hole interior due to the slow logarithmic decay towards $i^+$. Thus, the question of whether to expect the validity of \cref{conject0} for asymptotically AdS black holes appears to be completely open. (See also the paragraph in the end of the introduction discussion a possible nonlinear instability in the exterior.)
 
 The present paper is an attempt to shed some first light on SCC in the asymptotically AdS case: We will show (\cref{thm:thmrough}) that, \emph{despite the slow decay on the exterior}, boundedness in the interior and continuous extendibility to the Cauchy horizon still holds for solutions of \eqref{eq:wave0} on  Reissner--Nordström--AdS black holes. The additional phenomenon which we exploit to prove boundedness is that the trapped frequencies responsible for slow decay have high energy with respect to the $T$ vector field and can be bounded using the scattering theory developed in \cite{kehle2018scattering}.
Thus, for Reissner--Nordström--AdS, the analog of \cref{conject0} is false on the linear level, just as in the $\Lambda \geq 0$ cases. See however our remarks on Kerr--AdS later in the introduction. 

\paragraph{\textbf{The massive linear wave equation on Reissner--Nordström--AdS}} As mentioned above, we will consider the massive linear wave equation \begin{align}\label{eq:wave} \Box_{g_{\mathrm{RNAdS}}} \psi + \frac{\alpha}{l^2} \psi =0\end{align} 
for AdS radius $l^2 := - \frac{3}{\Lambda}$ on a fixed subextremal Reissner--Nordström--AdS black hole with mass parameter $M>0$ and charge parameter $0<|Q|<M$. Moreover, we assume the so-called Breitenlohner--Freedman
bound \cite{breitenlohner} for the Klein--Gordon mass parameter $\alpha<\frac 94$, which includes the conformally invariant case $\alpha =2$. This bound is required to obtain well-posedness \cite{wellposedgustav,warnick_massive_wave,vasy_wave} of \eqref{eq:wave}.
 
Recall from the discussion above that solutions with fixed angular momentum $\ell$ actually decay exponentially in the exterior region. For such solutions with fixed $\ell$, uniform boundedness with upper bound $C=C_\ell$ in the interior and continuity at the Cauchy horizon can be shown using the methods involving the vector field $S$ as in \eqref{eq:x}. Note however that this does not imply that a general solution remains bounded in the interior as the constant $C_\ell$ is not summable: $ \sum_{\ell=0}^{L} C_\ell \sim e^{ L} \to + \infty$ as $L\to\infty$. Note in particular that, as a result of this, one cannot study the new non-trivial aspect of this problem restricted to spherical symmetry. (Nevertheless, see \cite{ori2} for a discussion of the Ori model for RN--AdS black holes.)  
\paragraph{\textbf{Main theorem: Uniform boundedness and continuity at the Cauchy horizon}}We now state a rough version of our main result. See \cref{thm:main} for the precise statement.
\begin{thmrough}[Rough version of \cref{thm:main}]\label{thm:thmrough}
Let $\psi$ be a solution to \eqref{eq:wave} arising from smooth and compactly supported initial data $(\psi_0,\psi_1)$ posed on a spacelike hypersurface $\Sigma_0$ as depicted in \cref{fig:adsintro}. Then, $\psi$ remains \textbf{uniformly bounded} in the black hole interior
\begin{align*}
|\psi|\leq C,
\end{align*}
where $C$ is constant depending on the parameters $M,Q,l,\alpha$, the choice of $\Sigma_0$ and on some higher order Sobolev norm of the initial data $(\psi_0,\psi_1)$. Moreover, $\psi$ can be extended \textbf{continuously} across the Cauchy horizon. 
\end{thmrough}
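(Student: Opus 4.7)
The plan is to combine physical-space energy estimates in the interior with a frequency-localised decomposition controlled via the scattering theory of \cite{kehle2018scattering}. The naive red-shift plus $S$-vector-field scheme of \cite{anneboundedness} fails here because the only decay available on the event horizon from \cite{decaykg} is logarithmic in $v$, which is not summable enough to render the flux of $S=|u|^p\partial_u + |v|^p\partial_v$ finite on any $\{r=\mathrm{const.}\}$ hypersurface in the interior. The observation to exploit is that the trapped, high-angular-momentum modes responsible for the slow logarithmic decay nonetheless carry \emph{high} $T$-energy, and the scattering theory transports the $T$-energy flux from the event horizon to the Cauchy horizon with bounds uniform in angular momentum.

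\textbf{Reduction to the event horizon.} First I would propagate the smooth compactly supported data from $\Sigma_0$ to $\mathcal{H}^+_A \cup \mathcal{H}^+_B$ using exterior well-posedness for \eqref{eq:wave} and conservation of $T$-energy. Commuting \eqref{eq:wave} with the Killing field $T$ and with the $\mathrm{SO}(3)$ angular momentum operators $\Omega_i$, both of which preserve the Dirichlet boundary condition at $\mathcal I$, yields finite flux bounds of the form $\int_{\mathcal{H}^+_A \cup \mathcal{H}^+_B}|T^j \Omega^\alpha \psi|^2 \leq C_{j,|\alpha|}[\psi_0,\psi_1]$, controlled by a sufficiently high Sobolev norm of the initial data.

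\textbf{Interior propagation via angular-momentum decomposition.} Split $\psi = \psi_{\leq L} + \psi_{> L}$ by projecting onto spherical harmonics of degree $\leq L$, with $L$ a large but fixed cutoff. For the finite-dimensional piece $\psi_{\leq L}$ the exterior decay is \emph{exponential} by \cite{decaykg}, so the red-shift plus $S$-vector-field argument of \cite{anneboundedness} goes through on each fixed $\ell$-subspace and yields pointwise boundedness and continuous extendibility at $\mathcal{CH}$, with a constant $C_L$ depending badly on $L$ but harmlessly so since $L$ is fixed. For the high-angular-momentum tail $\psi_{> L}$, no useful time-decay is available; instead, I would apply the scattering theory of \cite{kehle2018scattering}, extended to $\mu \neq 0$ and $\Lambda < 0$, to obtain a bounded isomorphism between the degenerate $T$-energy spaces on $\mathcal{H}^+_A \cup \mathcal{H}^+_B$ and on $\mathcal{CH}_A \cup \mathcal{CH}_B$, uniformly in $\ell$. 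Commuting also with $T$ and $\Omega_i$ before passing through the scattering map gives $L^2$-bounds on $T^j \Omega^\alpha \psi_{>L}$ at $\mathcal{CH}$; a short transport estimate for $\partial_v \psi$ along ingoing null segments, combined with Sobolev embedding on $\mathbb{S}^2$ and in the $(u,v)$-variables, then converts these $L^2$-bounds into the desired pointwise $L^\infty$-bound.

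\textbf{Continuity and main obstacle.} Continuity of $\psi$ across $\mathcal{CH}$ follows from the uniform $L^\infty$-bound together with the finite $T$-energy flux on $\mathcal{CH}$, by integrating $\partial_v\psi$ along ingoing null rays up to the horizon and using dominated convergence. The main obstacle I anticipate is the $\omega = 0$ frequency, where, as highlighted in the introduction, the scattering map of \cite{kehle2018scattering} is singular and cannot be applied directly to $\psi_{> L}$. The remedy should be a time-frequency cutoff isolating a small $\omega$-neighbourhood of $0$: the stationary/near-stationary contribution is finite-dimensional in a suitable sense and can be handled by direct transport estimates (since time-independent solutions of \eqref{eq:wave} with Dirichlet data are smooth across $\mathcal{CH}$), while the complementary piece, supported away from $\omega = 0$, is controlled uniformly by the scattering theory. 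Making this decomposition compatible with the spherical-harmonic splitting, uniform in $\ell > L$, and with the Dirichlet condition at $\mathcal I$, is where the most delicate technical work will lie.
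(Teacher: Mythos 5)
Your overall architecture is close to the paper's in spirit (split off a part controlled by quantitative exterior decay plus red-shift/$S$-type vector fields, and control the rest via the uniform bounds on the scattering coefficients of \cite{kehle2018scattering} together with the commuted $T$-energy flux on the event horizon, with no decay needed there). But there is a genuine gap at exactly the point you flag as "the most delicate technical work": the low-frequency, high-angular-momentum regime. After your angular cutoff, the piece $\psi_{>L}$ restricted to a small $\omega$-neighbourhood of $0$ is \emph{not} finite-dimensional in any useful sense and is not spanned by time-independent solutions: it is a continuum of frequencies $|\omega|<\omega_0$ over \emph{all} $\ell>L$, and neither of your two tools reaches it — the fixed-$\ell$ exponential exterior decay of \cite{decaykg} comes with constants growing like $e^{C\ell}$ (so it cannot be summed over $\ell>L$), and the scattering map of \cite{kehle2018scattering} is genuinely singular at $\omega=0$ when $\Lambda\neq 0$, $\mu\neq 0$, so no limiting or perturbative use of it near $\omega=0$ is available. "Direct transport estimates" for this piece are precisely the estimates one does not have, since its only a priori exterior decay is the logarithmic one.

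What the paper does to fill this hole is its main new analytic ingredient: for the low time-frequency projection $\psi_\flat$ (cutoff $|\omega|\leq 2\omega_0$ with $\omega_0$ small depending only on $M,Q,l,\alpha$) one proves a frequency-localised integrated energy decay estimate on the exterior that is \emph{uniform in $\ell$} (\cref{prop:propimprove}); the point is that for $|\omega|\leq 2\omega_0$ one has $V_\ell-\omega^2\geq 0$ through the would-be trapping region, so a current of $Q^f+Q^{\tilde h}$ type closes for all $\ell$ simultaneously. Summing over $\ell,m$ and integrating in $\omega$, and handling the localisation error by a reflection/backward-solution construction, this yields \emph{superpolynomial} decay of the $N$-energy of $\psi_\flat$ along the event horizon (\cref{prop:psiflat}), which is then propagated into the interior by twisted red-shift, no-shift and weighted blue-shift ($S_N=r^N(\langle u\rangle^p\partial_u+\langle v\rangle^p\partial_v)$) estimates to give the $L^\infty$ bound and continuity for $\psi_\flat$; the high-frequency remainder is then handled by the scattering theory essentially as you propose. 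Two further points you should incorporate: (i) because the Klein--Gordon mass may be negative, the interior energy estimates must be carried out with \emph{twisted} derivatives/energy-momentum tensors so that the dominant energy condition holds — the untwisted red-shift and $S$-currents you invoke from \cite{anneboundedness} do not close as stated; (ii) your final step "finite $T$-energy flux on $\mathcal{CH}$ plus integration of $\partial_v\psi$" is too weak as written, since the $T$-energy degenerates at the Cauchy horizon; the paper instead gets continuity from the $\langle v\rangle^p$-weighted fluxes for the low-frequency part and from the pointwise representation formula (with uniformly bounded mode solutions, dominated convergence and Riemann--Lebesgue) for the high-frequency part.
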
 
As we have explained above, the main difficulty compared to the asymptotically flat case, where the analysis was carried out entirely in physical space and requires inverse polynomial decay in the exterior \cite{anneboundedness}, is the slow decay of $\psi$ along the event horizon. Our strategy is to decompose the solution $\psi$ in a low and high frequency part $\psi = \psi_\flat + \psi_\sharp$ with respect to the Killing field $T =\frac{\partial}{\partial t}$ and treat each term separately.

For the low frequency part $\psi_\flat$, we will show a superpolynomial decay rate in the exterior, see already \cref{prop:psiflat}. For this part we also use integrated energy decay estimates for bounded angular momenta $\ell$ established in \cite{decaykg}. This superpolynomial decay in the exterior is sufficient so as to follow the method of \cite{anneboundedness} with vector fields of the form \eqref{eq:x} to show boundedness and continuity at the Cauchy horizon, up to the additional difficulty caused by the fact that we allow a possibly negative Klein--Gordon mass parameter. The violation of the dominant energy condition due to the presence of a negative mass term can be overcome with twisted derivatives \cite{warnick_massive_wave,warnick_boundedness_and_growth}, which provide a useful framework to replace Hardy inequalities for the lower order terms in this context.  
 
For the high frequency part $\psi_\sharp$, which is exposed to stable trapping and does in general only decay at a sharp logarithmic rate in the exterior, the key ingredient is the scattering theory developed in \cite{kehle2018scattering} (see discussion above). 
More specifically, the uniform bounds for the transmission and reflections coefficients $\mathfrak T$ and $\mathfrak R$ for $|\omega| \geq \omega_0$ proved in \cite{kehle2018scattering} turn out to be useful for the high frequency part $\psi_\sharp$. These bounds allow us to control $|\psi_\sharp|$ at the Cauchy horizon by the $T$-energy norm on the event horizon commuted with angular derivatives. The $T$-energy flux on the event horizon is in turn bounded from initial data by a simple application of the $T$-energy identity in the exterior.
\emph{In particular, no quantitative decay along the event horizon is used for the high frequency part $\psi_\sharp$.} This is what allows us to overcome the problem of slow logarithmic decay.
\paragraph{\textbf{Outlook on Kerr--AdS}}
  We strongly believe that our arguments also apply to axially symmetric solutions $\psi$ of \eqref{eq:wave} on a Kerr--AdS black hole. For general non-axisymmetric solutions, however, the question of uniform boundedness and continuity at the Cauchy horizon is less clear. Indeed, specific high frequency solutions which decay at a logarithmic decay rate can be considered as ``low frequency'' solutions when frequency is measured with respect to the Killing generator of the Cauchy horizon. In fact, it might well be the case that for solutions of \eqref{eq:wave} on Kerr--AdS there is $C^0$ blow-up at the Cauchy horizon, supporting the validity of \cref{conject0} after all in this context!
 \paragraph{\textbf{Instability of asymptotically AdS spacetimes?}} 
Turning to the fully nonlinear dynamics, there is another scenario which could happen. Recall that Minkowski space ($\Lambda =0$) and de~Sitter space ($\Lambda >0$) have been proved to be nonlinearly stable \cite{Friedrich1986,christodoulou2014global}.  Anti-de~Sitter space ($\Lambda <0$), however, is expected to be nonlinearly unstable with Dirichlet conditions imposed at infinity. This was recently proved in \cite{moschidis2017einstein,moschidis2017proof,moschidis2,moschidis3} for appropriate matter models. See also the original conjecture in \cite{dafermosholzegelconj} and the numerical results in \cite{bizon}. Similarly, for Kerr--AdS (or Reissner--Nordström--AdS), the slow logarithmic decay on the linear level proved in \cite{lower_bound} could in fact give rise to nonlinear instabilities in the exterior.\footnote{Note that in contrast, nonlinear stability for spherically symmetric perturbations of Schwarzschild--AdS was shown for Einstein--Klein--Gordon systems \cite{schwarzschild_ads_stable}.} If indeed the exterior of Kerr--AdS was nonlinearly unstable, linear analysis like that in the present paper would be manifestly inadequate and the question of the validity of Strong Cosmic Censorship would be thrown even more open!
Refer to the introduction of \cite{dafermos2017interior} for a more elaborate discussion.
 \paragraph{\textbf{Outline}}
This paper is organized as follows. In \cref{sec:prelims} we set up the spacetime and summarize relevant previous work. In \cref{sec:mainthm} we state and prove our main result \cref{thm:main}. Parts of the proof require a separate analysis which are  treated in \cref{sec:lowfreq} and \cref{sec:highfreq}.
 
\paragraph{\textbf{Acknowledgment}} The author would like to express his gratitude to Mihalis Dafermos and Yakov Shlapen\-tokh-Rothman for many valuable discussions and helpful remarks. The author thanks John Anderson, Anne Franzen, Dejan Gajic, Jonathan Luk, Georgios Moschidis, Federico Pasqualotto, Igor Rodnianski and Claude Warnick. The author also thanks two anonymous referees for their helpful comments.
This work was supported by the EPSRC grant EP/L016516/1.
The author thanks Princeton University
for hosting him as a VSRC.
\section{Preliminaries}
\label{sec:prelims}
We start by setting up the Reissner--Nordström--AdS spacetime (see \cite{rnads}) and defining relevant norms and energies. We will also introduce useful coordinate systems.
\subsection{The Reissner--Nordström--AdS black hole}\label{sec:prelim21}
We are ultimately interested in the behavior of solutions to \eqref{eq:wave}  to the future of a spacelike hypersurface $\Sigma_0$ as depicted in \cref{fig:adsintro}. For technical reasons (Fourier space decompositions are non-local operations) we will however construct also parts to the past of $\Sigma_0$. 
In the following will define the spacetime pictured in \cref{fig:adsrn}.  
\begin{figure}[ht]
	\centering
		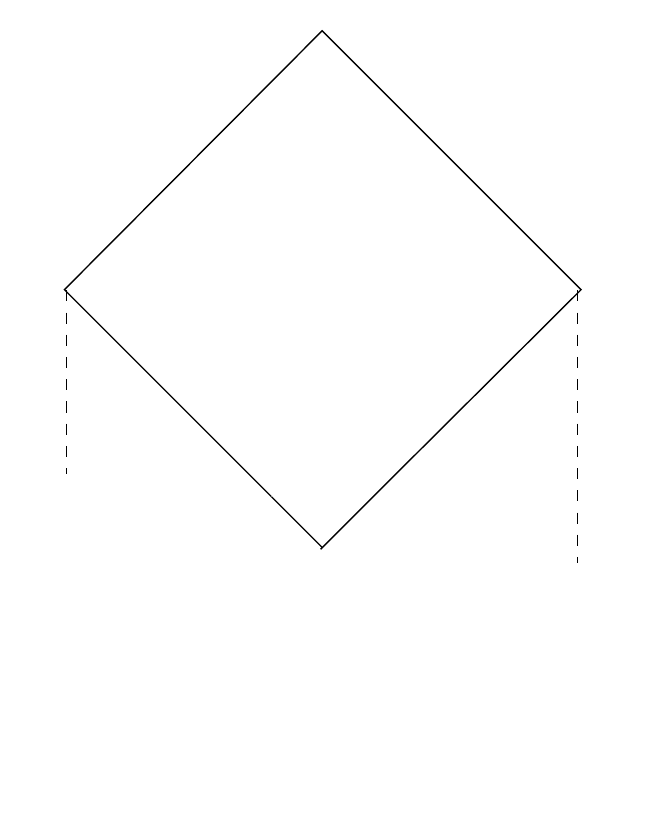
		\caption{Penrose diagram of the constructed spacetime $(\mathcal M_{\mathrm{RNAdS}},g_{\mathrm{RNAdS}})$}
		\label{fig:adsrn}
\end{figure}
\subsubsection{Construction of the spacetime \texorpdfstring{$(\mathcal{M}_\mathrm{RNAdS},g_{\mathrm{RNAdS}})$}{Mg}}
First, for black hole parameters $M>0, Q\neq 0, l^2\neq 0$ define the polynomial  \begin{align}\Delta_{M,Q,l}(r):=r^2 - {2M}r + \frac{r^4}{l^2} + {Q^2}\end{align}
and define the non-degenerate set 
\begin{align}
\mathcal P :=\{ (M,Q,l) \in (0,\infty)\times \mathbb R \times (0,\infty) \colon \Delta_{M,Q,l}(r) \text{ has two postive roots satisfying } 0 < r_- < r_+  \}.
\end{align} 
Note that $\mathcal{P}$ defines black hole parameters in the subextremal range.
From now on, we will consider \emph{\textbf{fixed}} parameters $M,Q,l,\alpha$, where \begin{align}\label{eq:parameters}(M,Q,l) \in \mathcal P \text{ and } \alpha < \frac 94 . \end{align}
Note that $M$ is the mass parameter, $Q$ the charge parameter of the black hole and $l = \sqrt{-\frac{3}{\Lambda}}$ is the Anti-de~Sitter radius.
For this specific choice of parameters we will also write $\Delta (r):= \Delta_{M,Q,l}(r)$ and denote by $0<r_-<r_+$ the positive roots of $\Delta$.

 Now, let the two exterior regions $\mathcal{R}_A$, $\mathcal{R}_B$ and the black hole region $\mathcal{B}$ be smooth four dimensional manifolds diffeomorphic to $\mathbb{R}^2 \times \mathbb{S}^2$.
On $\mathcal{R}_A, \mathcal{R}_B$ and $\mathcal{B}$ we introduce global\footnote{Up to the known degeneracy of spherical coordinates at the poles of the sphere.} coordinate charts: \begin{align}\nonumber &(r_{\mathcal{R}_A}  ,t_{\mathcal{R}_A}, \theta_{\mathcal{R}_A},\varphi_{\mathcal{R}_A} )  \in (r_+,\infty)\times \mathbb{R}\times \mathbb{S}^2,\\& (r_{\mathcal{R}_B}  ,t_{\mathcal{R}_B}, \theta_{\mathcal{R}_B},\varphi_{\mathcal{R}_B} )  \in (r_+,\infty)\times \mathbb{R}\times \mathbb{S}^2, \label{eq:trcoords}\\ & (r_{\mathcal{B}}  ,t_{\mathcal{B}}, \theta_{\mathcal{B}},\varphi_{\mathcal{B}} )  \in (r_-,r_+)\times \mathbb{R}\times \mathbb{S}^2.\nonumber \end{align}
 If it is clear from the context which coordinates are being used, we will omit their subscripts throughout the paper. 
Again, on the manifolds $\mathcal{R}_A,\mathcal{R}_B$ and $\mathcal B$ we define---using the coordinates $(t,r,\theta,\varphi)$ on each of the patches---the  Reissner--Nordström--Anti-de~Sitter metric
\begin{align}
	g := - \frac{\Delta(r)}{r^2}\d t\otimes\d t + \frac{r^2 }{\Delta(r)}\d r \otimes\d r + r^2 (\d\theta \otimes\d\theta + \sin^2\theta \d \varphi\otimes \d\varphi ).
\end{align}
On each of $\mathcal{R}_A,\mathcal{R}_B$ and $\mathcal{B}$, we define time orientations using the vector field $\partial_{t_{\mathcal{R}_A}}$ on $\mathcal{R}_A$, $-\partial_{t_{\mathcal{R}_B}}$ on $\mathcal{R}_B$ and  $-\partial_{r_\mathcal{B}}$ on $\mathcal{B}$.

We will also define the tortoise coordinate $r_\ast$ by
\begin{align}
	\frac{\d r_\ast}{\d r} := \frac{r^2}{\Delta}
\end{align}
in $\mathcal{R}_A$, $\mathcal{R}_B$ and $\mathcal{B}$ independently. This defines $r_\ast$ up to an unimportant constant. Then, in each of the regions $\mathcal{R}_A$, $\mathcal{R}_B$ and $\mathcal{B}$, we define null coordinates by
\begin{align}
	v=r_\ast + t   \text{ and } u = r_\ast-t,
\end{align}
where for example for the $v$ coordinate on $\mathcal{R}_A$, we will use the notation $v_{\mathcal{R}_A}$ and analogously for the other regions.
Note that throughout the paper we will use the notation $^\prime$ for derivatives $\frac{\partial}{\partial r_\ast}$. 
\paragraph{\textbf{Patching the regions} \texorpdfstring{$\mathcal{R}_A,\mathcal{R}_B$ \textbf{and} $\mathcal{B}$ \textbf{together}}{Ra and Rb together}}
Now, we patch the regions $\mathcal{R}_A$, $\mathcal{R}_B$ and $\mathcal{B}$ together. We begin by attaching the future (resp.\ past) event horizon $\mathcal{H}_A^+$ (resp.\ $\mathcal{H}_A^-$) to  $\mathcal{R}_A$ by formally\footnote{This can be made rigorous using ingoing Eddington--Finkelstein coordinates ($r,v,\varphi,\theta$) adapted to the event horizon. Since this is well-known, we avoid introducing yet another coordinate system.} setting
\begin{align}
\mathcal{H}_A^+ := \{  u_{\mathcal{R}_A} = -\infty\} \text{ and } \mathcal{H}_A^- := \{  v_{\mathcal{R}_A} = -\infty\}.
\end{align}
 Similarly, we attach $\mathcal{H}_B^+ := \{ v_{\mathcal{R}_B} = -\infty \}$ and $\mathcal{H}_B^- := \{ u_{\mathcal{R}_B} = -\infty \}$ to $\mathcal{R}_B$.
In the $(u_\mathcal{B},v_\mathcal{B})$ coordinates associated to $\mathcal{B}$ we make the identifications $\mathcal{H}_A^+ = \{ u_{\mathcal{B}} = - \infty \}$ and $\mathcal{H}_B^+ = \{ v_{\mathcal{B}} =-\infty\}$.
Then, we attach the Cauchy horizon $\Ch_A:= \{ v_{\mathcal{B}} = + \infty\}$ and $\Ch_B := \{ u_{\mathcal{B}} = + \infty\}$ to $\mathcal{B}$. 

Finally, we attach the past (resp.\ future) bifurcation sphere $\mathcal{B}_-$ (resp.\ $\mathcal{B}_+$) to $\mathcal{B}$ as 
\begin{align}
	\mathcal{B}_- := \{ u_{\mathcal{B}} = -\infty, v_{\mathcal{B}} = -\infty \} \text{ and } \mathcal{B}_+ := \{ u_{\mathcal{B}} = +\infty, v_{\mathcal{B}}= + \infty\}.
\end{align}
We shall also set $\Ch := \Ch_A \cup \Ch_B\cup \mathcal{B}_+$.
Note that all horizons  $\Hp_A^+, \Hp_A^-, \Hp_B^+, \Hp_B^-, \Ch_A$ , and $\Ch_B$ are diffeomorphic to $\mathbb{R} \times \mathbb{S}^2$ and the past (future) bifurcation sphere $\mathcal{B}_-$ ($\mathcal{B}_+$) is diffeomorphic to $\mathbb{S}^2$. 
Moreover, we identify $\mathcal{B}_-$ with $ \{ u_{\mathcal{R}_A} = -\infty, v_{\mathcal{R}_A} = -\infty \} $ and also with $ \{ u_{\mathcal{R}_B} = -\infty, v_{\mathcal{R}_B} = -\infty \} $. The resulting manifold will be called $\mathcal M_{\mathrm{RNAdS}}$. Note that, $g$ extends to a smooth Lorentzian metric on $\mathcal M_{\mathrm{RNAdS}}$ which we will call $g_{\mathrm{RNAdS}}$ and in particular, $(\mathcal M_{\mathrm{RNAdS}},g_{\mathrm{RNAdS}})$ is a time oriented smooth Lorentzian manifold with corners. We illustrate the constructed spacetime as a Penrose diagram in \cref{fig:adsrn}.
Note that the vector field $\partial_t$ defined on $\mathcal{R}_A$, $\mathcal{R}_B$ and $\mathcal{B}$, respectively, extends to a smooth Killing field on $\mathcal{M}_{\mathrm{RNAdS}}$, which we will from now on call $T$.  Moreover, the standard angular momentum operators $\mathcal{W}_i$ for $i=1,2,3$, the generators of $\mathfrak{so}(3)$ defined as
\begin{align}
	\mathcal{W}_1 =  \sin\varphi \partial_\theta + \cot\theta \cos\varphi \partial_\varphi, \mathcal{W}_2 =  -\cos\varphi \partial_\theta + \cot\theta \sin\varphi \partial_\varphi , \mathcal{W}_3 = - \partial_\varphi 
\end{align} are Killing vector fields. It shall be noted that $\mathcal{W}_i$ for $i=1,2,3$ are spacelike everywhere, whereas $T$ is future-directed timelike on $\mathcal{R}_A$, spacelike on $\mathcal{B}$ and past-directed timelike on $\mathcal{R}_B$. Moreover, $T$ is future-directed null on $\mathcal{H}_A^-,\mathcal{H}_A^+, \mathcal{CH}_B$, past-directed null on  $\mathcal{H}_B^-,\mathcal{H}_B^+, \mathcal{CH}_A$ and vanishes on $\mathcal{B}_-,\mathcal{B}_+$. Finally, note that one can attach conformal timelike boundaries $\mathcal{I}_A$ and $\mathcal{I}_B$ corresponding to $\{r_{\mathcal{R}_A} =+\infty\}$ and $\{ r_{\mathcal{R}_B} = +\infty \}$, respectively.\footnote{Note that $\mathcal{I}_A$ and $\mathcal{I}_B$ are not contained in $\mathcal{M}_{\mathrm{RNAdS}}$.}

\subsubsection{Initial hypersurface \texorpdfstring{$\Sigma_0$}{sigma}}
We will impose initial data on a spacelike hypersurface $\Sigma_0$ to be made precise in the following. Note that we can choose for convenience that the spacelike hypersurface $\Sigma_0$ lies to the future of the past bifurcation sphere $\mathcal{B}_-$. Indeed, by general theory (an energy estimate in a compact region) this can be assumed without loss of generality \cite{lecture_notes}. 
More precisely, let $\Sigma_0$ be a 3 dimensional connected, complete and spherically symmetric spacelike hypersurface extending to the conformal infinity $\mathcal{I} =\mathcal{I}_A \cup \mathcal{I}_B$. Moreover, assume that $\mathcal{B}_- \subset J^-(\Sigma_0)\setminus \Sigma_0$.

\begin{figure}[ht]
	\centering
	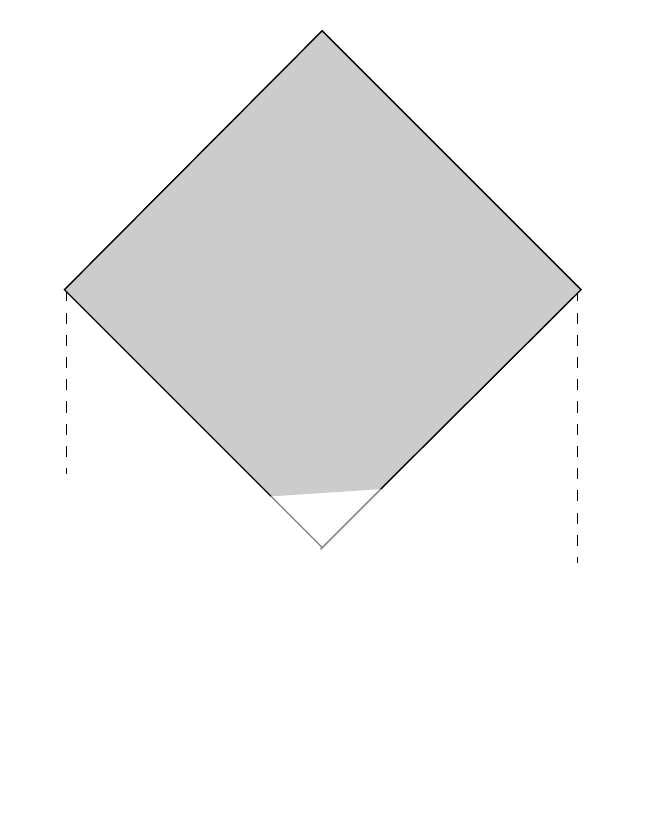
	\caption{The shaded region of interest lies in the future of $\Sigma_0$.}
	\label{fig:adsrn_Sigma}
\end{figure}

 A possible choice of $\Sigma_0$ is denoted in \cref{fig:adsrn_Sigma}. We are ultimately interested in the shaded region to the future of $\Sigma_0$.
 For the rest of the paper, we will consider such a $\Sigma_0$ to be \emph{\textbf{fixed}}. 

\subsection{Conventions}\label{sec:convention}
With $a\lesssim b$ for $a\in \mathbb{R}$ and $b\geq 0$ we mean that there exists a constant $C(M,Q,l,\alpha, \Sigma_0)$ with $a\leq C b$. If $C(M,Q,l,\alpha, \Sigma_0)$ depends on an additional parameter, say $\ell$, we will write $a \lesssim_\ell b$.  We also use $a\sim b$ for some $a,b\geq 0$ if there exist constants $C_1(M,Q,l,\alpha,\Sigma_0), C_2(M,Q,l,\alpha,\Sigma_0) >0$ with $C_1 a \leq b \leq C_2 a$. 
We shall also make use of the standard Landau notation $O$ and $o$ \cite{olver}. To be more precise, let $X$ be a point set (e.g.\ $X=\mathbb R, [a,b], \mathbb C$)  with limit point $c$. As $x\to c$ in $X$, $f(x) = O(g(x))$ means $\frac{|f(x)|}{|g(x)|} \leq C(M,Q,l,\alpha)$ holds in a fixed neighborhood of $c$. We write $O_\ell(g(x))$ if the constant $C$ depends on an additional parameter $\ell$.
For the standard volume form in spherical coordinates $(\varphi,\theta)$ on the sphere $\mathbb{S}^2$ we will use the notation $\d\sigma_{\mathbb{S}^2} := \sin\theta\d\varphi\d\theta$.
Finally, let the Japanese symbol be defined as $\langle x \rangle := \sqrt{ 1+ x^2 }$ for $x \in \mathbb R$. 
\subsection{Norms and Energies}
We are interested in solutions to the massive wave equation~\eqref{eq:wave} associated to the metric $g_{\mathrm{RNAdS}}$ on a subextremal  Reissner--Nordström AdS black hole with black hole parameters $M,Q,l$ as in \eqref{eq:parameters}.
In view of the timelike boundaries $\mathcal{I}_A$ and $\mathcal{I}_B$, we need to specify boundary conditions on $\mathcal{I}_A$ and $\mathcal{I}_B$ in addition to prescribing data on the spacelike hypersurface ${\Sigma}_0$, cf. \cref{fig:adsrn_Sigma}. We will use  Dirichlet (reflecting) boundary conditions which can be viewed as the most natural conditions in the context of stability of the Cauchy horizon. In principle, however, in view of \cite{warnick_massive_wave}, we could also use more general boundary conditions like Neumann or Robin conditions. We will now introduce an appropriate foliation and norms in order to state the well-posedness statement in \cref{sec:wellposedness}.

We will foliate $\mathcal{R}_A \cup \mathcal{R}_B \cup \mathcal{H}_A^+\cup \mathcal{H}_B^+ \cup \mathcal{B}$ with spacelike hypersurfaces. To do so, we let  $\mathcal T$ be a smooth future-directed causal vector field on $\mathcal{R}_A \cup \mathcal{R}_B \cup \mathcal{H}_A^+\cup \mathcal{H}_B^+ \cup \mathcal{B}$ with the properties that 
\begin{align*}
	\mathcal{T} = \begin{cases}
	T & \text{ on } \mathcal{R}_A\cup \mathcal{H}_A^+\\
	-T & \text{ on } \mathcal{R}_B\cup \mathcal{H}_B^+
	\end{cases}
\end{align*} and that $\mathcal{T}$ is a future-directed timelike vector field on $\mathcal{B}$. Now, define the leaves
\begin{align}\label{eq:foliation}
	\Sigma_{t^\ast}:= \Phi^{\mathcal{T}}(t^\ast)[\Sigma_0],
\end{align}
where $\Phi^{\mathcal{T}}$ is the flow generated by $\mathcal{T}$ and $t^\ast\in \mathbb R$ is its affine parameter. We have illustrated some leaves in \cref{fig:adsrn_translation}. 
\begin{figure}[ht]
	\centering
	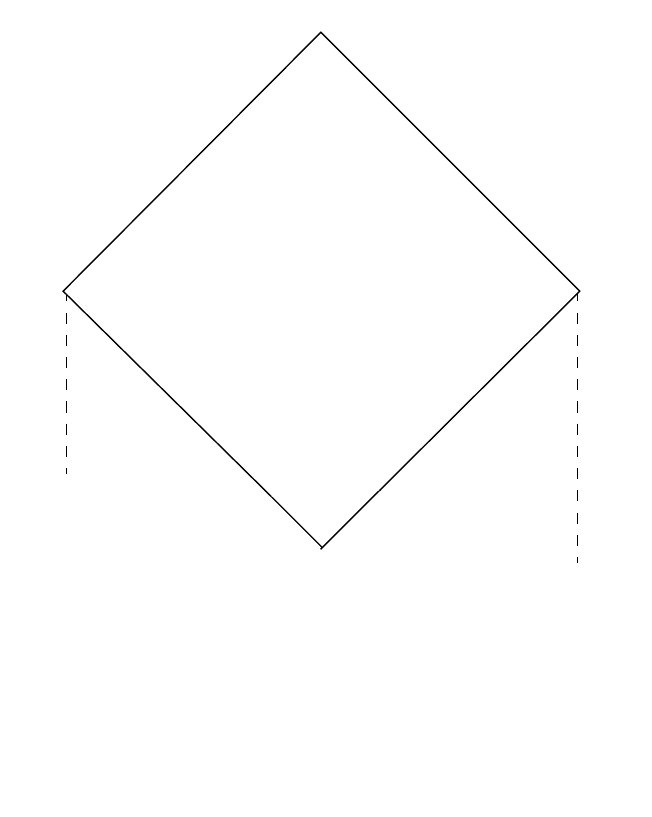
	\caption{Illustration of the foliation with leaves $\Sigma_\tau$ defined in \eqref{eq:foliation}.}
	\label{fig:adsrn_translation}
\end{figure}
\subsubsection{Further coordinates in the exterior region}
In the region $\mathcal{R}_A \cup \Hp_A^+$, we moreover define a global (up to the well-known degeneracy on $\mathbb{S}^2$) coordinate system $(t^\ast,r,\varphi,\theta)$, where $t^\ast$ is the affine parameter of the flow generated by $\mathcal{T}$. Note that on $\mathcal{R}_A \cup \Hp_A^+$ we have $\partial_{t^\ast} =T$ such that $t^\ast( t_2,r) - t^\ast(t_1,r) = t_2 - t_1$ and $t(t_2^\ast ,r ) - t(t_1^\ast ,r) = t_2^\ast - t_1^\ast$. Similarly, we can define such a coordinate system on $\mathcal{R}_B$. 
\subsubsection{Norms on hypersurfaces \texorpdfstring{$\Sigma_{t^\ast}$}{Sigmat}}
By construction $\Sigma_{t^\ast}$ intersects $\mathcal{R}_A$, $\mathcal{R}_B$ and $\mathcal{B}$. 
We will now define norms on $\Sigma_{t^\ast}$ which are adaptations of the norms introduced in \cite{wellposedgustav}. We define
\begin{align}\label{eq:norm}
	\| \psi\|_{H^{k,s}_{\mathrm{RNAdS}}(\Sigma_{t^\ast})}^2 := \|\psi \|^2_{H^k(\Sigma_{t^\ast} \cap \mathcal{B})} + \| \psi \|_{H^{k,s}_{\mathrm{AdS}}(\Sigma_{t^\ast} \cap (\mathcal{R}_A \cup \mathcal{H}_A^+))}^2 + \| \psi \|_{H^{k,s}_{\mathrm{AdS}}(\Sigma_{t^\ast} \cap (\mathcal{R}_B \cup \mathcal{H}_B^+))}^2
\end{align}
and \begin{align}\label{eq:chclass}
	 CH_{\mathrm{RNAdS}}^2:= C^2(\mathbb R_{t^\ast};H^{0,-2}_{\mathrm{RNAdS}} (\Sigma_{t^\ast}) )  \cap C^1 (\mathbb R_{t^\ast}; H^{1,0}_{\mathrm{RNAdS}}(\Sigma_{t^\ast})) \cap C^0(\mathbb R_{t^\ast}; H^{2,0}_{\mathrm{RNAdS}}(\Sigma_{t^\ast})),
\end{align}
where each of the terms appearing in \eqref{eq:norm} will be defined in the following.
\paragraph{\textbf{Norms in the interior region}}
We begin by defining the  first term in \eqref{eq:norm}. We define  $\|\cdot \|^2_{H^k(\Sigma_{t^\ast} \cap \mathcal{B})}$ as the standard Sobolev norm of order $k$ on the Riemannian manifold $(\Sigma_{t^\ast }\cap \mathcal{B},g_{\mathrm{RNAdS}}\restriction_{\Sigma_{t^\ast }\cap \mathcal{B}})$. 

\paragraph{\textbf{Norms in the exterior region}}
Due to the symmetry of the regions $\mathcal{R}_A$ and $\mathcal{R}_B$, we will only define the norms on $\mathcal{R}_A$ in the following. The norms on $\mathcal{R}_B$ are be constructed analogously.
We use the coordinates $(t^\ast, r, \theta,\varphi)$ in $\mathcal{R}_A$ to define the norms
\begin{align*}
	\| \psi \|_{H^{0,s}_{\mathrm{AdS}}(\Sigma_{t^\ast} \cap \mathcal{R}_A )}^2 &:= \int_{\Sigma_{t^\ast} \cap \mathcal{R}_A } r^s |\psi|^2 r^2 \d r \sin\theta \d \theta \d \varphi\\
	\| \psi \|_{H^{1,s}_{\mathrm{AdS}}(\Sigma_{t^\ast} \cap \mathcal{R}_A )}^2 &:=	\| \psi \|_{H^{0,s}_{\mathrm{AdS}}(\Sigma_{t^\ast} \cap \mathcal{R}_A) }^2+  \int_{\Sigma_{t^\ast} \cap \mathcal{R}_A } r^s \left( r^2 |\partial_r \psi|^2 + |\slashed \nabla \psi|^2\right) r^2 \d r \sin\theta \d \theta \d \varphi\\
	\| \psi \|_{H^{2,s}_{\mathrm{AdS}}(\Sigma_{t^\ast} \cap \mathcal{R}_A )}^2 &:=	\| \psi \|^2_{H_{\mathrm{AdS}}^{1,s}({\Sigma_{t^\ast} \cap \mathcal{R}_A })} + \int_{\Sigma_{t^\ast} \cap \mathcal{R}_A } r^s \left( r^4 |\partial_r^2\psi|^2 +r^2 | \slashed \nabla\partial_r \psi|^2 +|\slashed \nabla \slashed \nabla \psi|^2 \right) r^2 \d r \sin\theta \d \theta \d \varphi 
\end{align*}
and similarly for higher order norms. Here and in the following we denote with $\slashed \nabla$ and $\slashed g$ the induced covariant derivative and the induced metric, respectively,  on spheres of constant ($t^\ast,r$). We will also use the notation $|\slashed \nabla \psi|^2:= \slashed g(\slashed \nabla \psi, \slashed\nabla \psi)$.
Now having defined \eqref{eq:norm}, we will define energies in the following.

 \subsubsection{Energies on hypersurfaces \texorpdfstring{$\Sigma_{t^\ast}$}{Sigmat}}
 We set
 \begin{align}\label{eq:energiestotal}
 E_i[\psi](t^\ast) := E_i^A[\psi](t^\ast) + E_i^B[\psi](t^\ast) + E_i^{\mathcal B}[\psi](t^\ast)
 \end{align}
 for $i=1,2$, where all terms in \eqref{eq:energiestotal} will be defined in the following.
\paragraph{\textbf{Energies in the interior region}}
In the interior region we are not concerned with $r$-weights and define the energies as
\begin{align}E_1^{\mathcal{B}}[\psi](t^\ast) &:= \|\psi \|_{H^1(\Sigma_{t_\ast}\cap \mathcal B)}^2 + \| \partial_{t^\ast} \psi \|_{L^2(\Sigma_{t_\ast} \cap \mathcal{B})}^2, \\
E_2^{\mathcal{B}}[\psi](t^\ast) &:= \|\psi \|_{H^2(\Sigma_{t_\ast}\cap \mathcal B)}^2 + \| \partial_{t^\ast} \psi \|_{H^1(\Sigma_{t_\ast} \cap \mathcal{B})}^2 + \| \partial_{t^\ast}^2 \psi\|_{L^2(\Sigma_{t_\ast} \cap \mathcal{B})}^2 . \end{align}
\paragraph{\textbf{Energies in the exterior region}}
To define the energies in the exterior region, it is convenient to start with defining the following energy densities
\begin{align*}
&e_1[\psi]:= \frac{1}{r^2} |\partial_{t^\ast} \psi|^2 + r^2 |\partial_r \psi|^2 + |\slashed \nabla \psi|^2 + |\psi|^2\\
&e_2[\psi] := e_1[\psi]+ e_1[\partial_{t^\ast} \psi] + \sum_{i =1}^3 e_1[\mathcal{W}_i \psi] + r^4 |\partial_r \partial_r  \psi|^2 + r^2|\slashed \nabla \partial_r\psi|^2 + |\slashed\nabla\slashed\nabla \psi|^2
\end{align*}
and their integrals as
\begin{align}
	E_i^A[\psi](t^\ast) := \int_{\Sigma_{t^\ast} \cap (\mathcal{R}_A \cup \mathcal{H}_A^+)} e_i[\psi] r^2 \d r \sin\theta\d \theta \d \varphi
\end{align}
for $i = 1,2$. Note that we will write $E_i^B$ for the analogous energy restricted to $\mathcal{R}_B$.

Also remark the following relation between the norms and energies defined above 
\begin{align*}
&E_1^A[\psi] = \| \psi \|^2_{H^{1,0}_{\mathrm{Ads}} (\Sigma_{t^\ast} \cap \mathcal{R}_A) } + \| \partial_{t^\ast} \psi \|^2_{H^{0,-2}_{\mathrm{AdS}}(\Sigma_{t^\ast} \cap\mathcal{R}_A)}, \\
&	E_2^A[\psi] \sim \sum_i \| \mathcal{W}_i \psi \|_{H^{1,0}_{\mathrm{Ads}} (\Sigma_{t^\ast} \cap \mathcal{R}_A )}^2 +  \| \partial_{t^\ast} \psi \|^2_{H^{1,0}_{\mathrm{Ads}} (\Sigma_{t^\ast}  \cap \mathcal{R}_A)} + \| \psi \|^2_{H^{2,0}_{\mathrm{Ads}}(\Sigma_{t^\ast} \cap \mathcal{R}_A)}+ \|  \partial_{t^\ast}^2 \psi \|^2_{H^{0,-2}_{\mathrm{AdS}}(\Sigma_{t^\ast} \cap \mathcal{R}_A)}.
\end{align*} 
	\subsection{Well-posedness and mixed boundary value Cauchy problem}\label{sec:wellposedness}
Having set up the spacetime and the norms, we will restate the well-posedness result for \eqref{eq:wave} as a mixed boundary value-Cauchy problem. For asymptotically AdS spacetimes, well-posedness was first proved  in \cite{wellposedgustav}.

\begin{theorem}[\cite{wellposedgustav}]\label{prop:wellposedness}  Let the Reissner--Nordström--AdS parameters $(M,Q,l)$ and the Klein--Gordon mass $\alpha < \frac 94$ be as in \eqref{eq:parameters}.
	Let initial data $(\psi_0 ,\psi_1)\in C_c^\infty(\Sigma_0) \times C_c^\infty(\Sigma_0) $ be prescribed on the spacelike hypersurface $\Sigma_0$ and impose Dirichlet (reflecting) boundary conditions on $\mathcal{I}= \mathcal{I}_A \cup \mathcal{I}_B$. 
	
	Then, there exists a smooth solution $\psi \in C^\infty(\mathcal{M}_\mathrm{RNAdS}\setminus \mathcal{CH})$ of \eqref{eq:wave} such that $\psi\restriction_{\Sigma_0} = \psi_0$, $\mathcal{T}\psi\restriction_{\Sigma_0} = \psi_1$. The solution $\psi$ is also unique in the class $C(\mathbb{R}_{t^\ast};H^{1,0}_{\mathrm{RNAdS}}(\Sigma_{t^\ast}))\cap C^1(\mathbb R_{t^\ast}; H^{0,-2}(\Sigma_{t^\ast}))$.
\end{theorem}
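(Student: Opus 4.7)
The plan is to combine classical interior hyperbolic theory with a boundary value treatment near the conformal infinity $\mathcal{I}$, following the framework of Holzegel and Warnick. Since this is a linear equation with smooth coefficients (away from $\mathcal{CH}$ and away from $\mathcal{I}$), the non-trivial aspect is purely the behavior at the AdS boundary, and of course the fact that the Killing field $T$ becomes spacelike in $\mathcal B$.

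First I would localize near $\mathcal{I}_A$ (and symmetrically near $\mathcal{I}_B$), introducing a tortoise-type coordinate near $r=\infty$ and the associated twisted derivatives $\tilde\nabla = h\, \nabla (h^{-1} \cdot )$ with twisting function $h=r^{-s_+}$, where $s_+ = \tfrac 32 + \sqrt{\tfrac 94-\alpha}$ is the larger root of the indicial equation at infinity (well-defined exactly because of the Breitenlohner–Freedman bound $\alpha<\tfrac 94$). Rewriting $\Box_g\psi + \tfrac{\alpha}{l^2}\psi$ in terms of $\tilde\nabla$ produces a symmetric operator whose natural energy contains a \emph{positive} potential-type term even though $\alpha$ may be negative; this is the replacement of Hardy's inequality for the dangerous zero-order term and is the heart of the boundary analysis. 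Dirichlet boundary conditions are then encoded as vanishing of the Dirichlet trace in $H^{1,0}_{\mathrm{AdS}}$, and a boundary term arising in the energy identity on a neighborhood of $\mathcal{I}$ is shown to vanish for such traces.

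Next, I would prove an a priori energy inequality of the form
\begin{equation*}
E_1[\psi](t^\ast) \;\lesssim\; E_1[\psi](0) + \int_0^{t^\ast} E_1[\psi](s)\,ds
\end{equation*}
via the multiplier $\mathcal{T}$. In the exterior regions $\mathcal{R}_A\cup\mathcal{H}_A^+$ and $\mathcal{R}_B\cup\mathcal{H}_B^+$ the $\mathcal{T}$-current is coercive (modulo the twisted correction above), while in $\mathcal B$ the Killing field $T$ is spacelike, so one must instead use the timelike $\mathcal{T}$, producing a non-coercive bulk term which is nevertheless controllable by the right-hand side of Grönwall because $\mathcal B$ is compact in $r$ and one foliates only up to leaves $\Sigma_{t^\ast}$ that stay uniformly away from $\mathcal{CH}$. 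Gronwall then yields the bound in $H^{1,0}_{\mathrm{RNAdS}}$. Commuting with $T$ (which is Killing) and with the angular momentum operators $\mathcal W_i$ (also Killing, and tangent to $\mathcal I$ so preserving the Dirichlet condition), and using the wave equation itself to trade an $r$-derivative for $t^\ast$-derivatives, upgrades this to the $H^{2,0}_{\mathrm{RNAdS}}$ bound required to place the solution in the class $CH^2_{\mathrm{RNAdS}}$; iterating gives control in all higher norms, hence smoothness away from $\mathcal{CH}$.

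Existence would be obtained by the standard Galerkin approximation adapted to the AdS setting: project onto finite-dimensional subspaces spanned by eigenfunctions of the twisted spatial operator on $\Sigma_0$ (compatible with the Dirichlet condition), solve the resulting ODE system, extract a weak limit using the uniform energy bounds, and verify the limit is a smooth classical solution by elliptic regularity in $r$ and the above commutation estimates. Uniqueness in the stated class follows directly from the energy inequality applied to the difference of two solutions, the point being that the $\mathcal T$-energy controls exactly the norm $H^{1,0}_{\mathrm{RNAdS}}$ modulo the Dirichlet trace. Extension into $\mathcal B$ past $\mathcal{H}_A^+\cup\mathcal{H}_B^+$ uses finite speed of propagation once ingoing Eddington–Finkelstein coordinates are introduced to show that the horizons are regular null hypersurfaces; the only true boundary of the problem is $\mathcal{I}$, which lies entirely in the exterior. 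The main obstacle throughout is the combined degeneracy at $\mathcal{I}$ and the loss of coercivity of $T$ in $\mathcal B$: the first is overcome by the Breitenlohner–Freedman bound and twisted derivatives, the second by using $\mathcal T$ rather than $T$ and working on a bounded $t^\ast$-interval via Grönwall.
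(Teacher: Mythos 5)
There is nothing in the paper to compare your argument against: the statement is not proved in this paper at all, but imported verbatim from the cited well-posedness literature (\cite{wellposedgustav}, with closely related treatments in \cite{warnick_massive_wave,vasy_wave}), and the author only records the norms and the foliation needed to state it. Judged on its own, your sketch is essentially a compressed account of the Holzegel--Warnick route and is sound in outline: twisted derivatives adapted to the Breitenlohner--Freedman bound to restore coercivity of the energy near $\mathcal{I}$, a $\mathcal{T}$-multiplier estimate plus Gr\"onwall on the foliation $\Sigma_{t^\ast}$ (using that $T$ is timelike only in the exteriors, while $\mathcal{T}$ is globally causal and the leaves stay away from $\mathcal{CH}$), commutation with the Killing fields $T,\mathcal{W}_i$ and elliptic estimates for higher regularity, a Galerkin/approximation scheme for existence, and local hyperbolic theory in regular (Eddington--Finkelstein) coordinates to propagate the solution across $\mathcal{H}_A^+\cup\mathcal{H}_B^+$ into $\mathcal{B}$. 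Two details deserve care if you were to write this out. First, the twisting exponent: the standard choice is built from the \emph{slower-decaying} indicial root $s_-=\tfrac32-\sqrt{\tfrac94-\alpha}$ (not $s_+$), precisely so that the twisted energy is finite on the admissible data class and the renormalized boundary flux at $\mathcal{I}$ has a definite sign; one must also distinguish the range $\tfrac54\le\alpha<\tfrac94$, where the Dirichlet branch is selected automatically by finiteness of the energy, from $\alpha<\tfrac54$, where the boundary condition is genuinely imposed. Second, uniqueness is asserted in the rough class $C(\mathbb{R}_{t^\ast};H^{1,0}_{\mathrm{RNAdS}})\cap C^1(\mathbb{R}_{t^\ast};H^{0,-2})$, so the energy identity applied to a difference of two such solutions must be justified at that regularity (by density or a duality argument), not merely for smooth solutions; this is exactly the point where the weighted norms $H^{k,s}_{\mathrm{AdS}}$ and the twisted integration by parts near $\mathcal{I}$ enter in the cited references.
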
 
\begin{rmk}The well-posedness statement in \cref{prop:wellposedness} holds true for a more general class of initial data, called a $H^2_{\mathrm{AdS}}$ initial data triplet which give rise to a solution in $CH_{\mathrm{RNAdS}}^2$, see \cite{wellposedgustav}. 
\end{rmk}
\subsection{Energy identities and estimates}
In order to prove energy estimates, it turns out to be useful to introduce two types of energy-momentum tensors. Besides the standard energy-momentum tensor associated to \eqref{eq:wave}, a suitable \textit{twisted} energy-momentum tensor plays an important role in our estimates. Indeed, due to the negative mass term, the standard energy-momentum tensor does not satisfy the dominant energy condition. However, the dominant energy condition can be restored for the twisted  energy-momentum tensor introduced in \cite{breitenlohner,warnick_massive_wave}. In particular, these twisted energies will be used in the interior region, whereas in the exterior region we will work with the standard energy-momentum tensor. 
We will first review the energy estimates in the exterior.
\subsubsection{Energy estimates in the exterior region}
\paragraph{\textbf{Energy-momentum tensor}}
For a smooth function $\phi$ we define
\begin{align}
	\mathbf{T}_{\mu\nu}[\phi] := \operatorname{Re}(\partial_\mu \phi \overline{ \partial_\nu \phi}) - \frac 12 g_{\mu\nu} \left(  \overline{ \partial_\alpha \phi} \partial^\alpha \phi- \frac{\alpha}{l^2} |\phi|^2 \right) .
\end{align}
For a smooth vector field $X$ we also define \begin{align}\label{eq:defjx}
J^X[\phi]:= \mathbf{T}[\phi](X,\cdot) \text{ and } K^X[\phi] := {}^X\pi_{\mu\nu} \textbf{T}^{\mu\nu}[\phi],
\end{align} 
where ${}^X\pi := \mathcal{L}_X g$ is the deformation tensor. The term $K^X$ is often referred to as the ``bulk term'' and satisfies \begin{align}\label{eq:defkx}K^X [\phi] = \nabla^\mu J^X_\mu[\phi]\end{align} if $\phi$ is a solution to \eqref{eq:wave}. Note that if $X$ is Killing, then $K^X$ vanishes. More generally, integrating \eqref{eq:defkx} one obtains an energy identity relating boundary and bulk terms. For more details about the energy-momentum tensor and its usage for standard energy estimates we refer to \cite{lecture_notes}.

\paragraph{\textbf{Boundedness and decay in the exterior region}}
In the exterior regions $\mathcal{R}_A$ and $\mathcal{R}_B$ we have energy decay and boundedness results which have been proved in \cite{wellposedgustav,Gustav_early,decaykg,lower_bound}\footnote{Strictly speaking, in \cite{decaykg} this has been only explicitly proved for Kerr--AdS which includes Schwarzschild--AdS.  However, the same proof as for Schwarzschild--AdS works \emph{completely analogously} for Reissner--Nordström--AdS and we shall not repeat these arguments here.}. To state them we make the following choice of volume forms and normals on the event horizon. We set $\dvol_{\mathcal{H}_A^+} = r^2 \d t^\ast \d \sigma_{\mathbb S^2}$ and $n_{\mathcal{H}_A^+} = T$ and similarly for $\mathcal{H}_B^+$. Moreover, we denote by $\dvol_{\Sigma_{t^\ast}} \sim r \d r \sin\theta\d\theta\d\varphi$ the induced volume form on the spacelike hypersurface $\Sigma_{t^\ast}\cap \mathcal{R}_A$ and by $n^\mu_{\Sigma_t^\ast}$ its future-directed unit normal. 
 We summarize these energy identities and estimates in the following.

\begin{prop}[\cite{wellposedgustav}] 	A solution $\psi$ to \eqref{eq:wave} arising from smooth and compactly supported data on $\Sigma_0$ as in  \cref{prop:wellposedness} satisfies
	\begin{align}\label{eq:energyidnofluxthroughI}
 \int_{\Sigma_{t^\ast_2} \cap \mathcal{R}_A} J_\mu^T[\psi] n^{\mu}_{\Sigma_{t^\ast_2}} \dvol_{\Sigma_{t^\ast_2}} + \int_{\mathcal{H}_A^+(t^\ast_1,t^\ast_2)} J^T_\mu [\psi] n^\mu_{\mathcal{H}^+_A} \dvol_{\mathcal{H}^+_A} = \int_{\Sigma_{t^\ast_1} \cap \mathcal{R}_A} J_\mu^T[\psi] n^{\mu}_{\Sigma_{t^\ast_1}} \dvol_{\Sigma_{t^\ast_1}},
 \end{align}
 where $t^\ast_1 \leq t^\ast_2$ and $\mathcal{H}_A^+ (t_1^\ast,t_2^\ast):= \mathcal{H}_A^+ \cap \{t_1^\ast \leq t^\ast \leq t_2^\ast \}$. The analogous energy identity holds in $\mathcal{R}_B$. In particular,  \eqref{eq:energyidnofluxthroughI} shows that the $T$-energy flux through $\mathcal{I}=\mathcal{I}_A \cup \mathcal{I}_B$ vanishes. 
 
 Moreover, the $T$-energy flux through the event horizon is bounded by initial data
 \begin{align}
 \int_{\Hp_A^+} J^T_\mu [\psi] n^\mu_{\Hp_A^+} \dvol_{\Hp_A^+} + 	\int_{\Hp_B^+} J^T_\mu [\psi] n^\mu_{\Hp_B^+} \dvol_{\Hp_B^+} \lesssim E_1[\psi](0)\label{eq:boundednenergyflux}.
 \end{align}
 Finally, note that 
 \begin{align}
 \int_{\Sigma_{t^\ast}\cap \mathcal{R}_A} J_\mu^T[ {{\psi}}] n^\mu_{\Sigma_{t^\ast}}  \dvol_{\Sigma_{t^\ast}} \sim	\int_{\Sigma_{t_\ast} \cap \mathcal{R}_A} \left[ r^{-2} |\partial_{t^\ast} {\psi}|^2 + \frac{\Delta}{r^2} |\partial_{r} \psi|^2 +    |\slashed\nabla {\psi}|^2 +  |{{\psi}}|^2\right] r^2  \d r \sin\theta \d\theta\d\varphi. \label{eq:hardyinequ}
 \end{align}
 Remark that \eqref{eq:hardyinequ} follows from a Hardy inequality (see \cite[Equation~(50)]{Gustav_early}) which is used to absorb the (possibly) negative contribution from the Klein--Gordon mass term.
 \end{prop}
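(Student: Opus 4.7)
The plan is to prove the three claims in the order (coercivity) $\to$ (identity) $\to$ (horizon bound), since the coercivity is needed both to make the divergence theorem on the non-globally-hyperbolic slab meaningful and to dominate the slice terms when sending $t^\ast_2\to\infty$. I exploit that $T=\partial_t$ is Killing in $\mathcal{R}_A$, handle the conformal boundary $\mathcal{I}_A$ via a radial truncation, and use a weighted Hardy inequality to absorb the (possibly positive) Klein--Gordon mass term.

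For \eqref{eq:hardyinequ}, a direct computation of $\mathbf{T}_{\mu\nu}[\psi] n^\mu_{\Sigma_{t^\ast}} n^\nu_{\Sigma_{t^\ast}}$ in the chart $(t^\ast,r,\theta,\varphi)$ yields, up to uniformly bounded positive factors, the integrand $r^{-2}|\partial_{t^\ast}\psi|^2 + (\Delta/r^2)|\partial_r\psi|^2 + |\slashed\nabla\psi|^2 - \tfrac{\alpha}{l^2}|\psi|^2$. The upper bound in \eqref{eq:hardyinequ} follows immediately. For the reverse bound, when $\alpha\leq 0$ the mass term is already non-negative; when $0<\alpha<\tfrac{9}{4}$, I invoke the weighted Hardy inequality of \cite[Eq.~(50)]{Gustav_early}, valid under the Dirichlet condition at $\mathcal{I}_A$, which provides $\int|\psi|^2 r^2\,\d r\,\d\sigma_{\mathbb{S}^2}\lesssim l^2 \int (\Delta/r^2)|\partial_r\psi|^2\,r^2\,\d r\,\d\sigma_{\mathbb{S}^2}$. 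The sharp constant is tied exactly to the Breitenlohner--Freedman threshold $\alpha=\tfrac{9}{4}$, so the strict bound $\alpha<\tfrac{9}{4}$ leaves positive slack to absorb a fraction of $\tfrac{\alpha}{l^2}|\psi|^2$ into the radial derivative term, with a residual positive $|\psi|^2$ appearing on the right-hand side.

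For the conservation identity \eqref{eq:energyidnofluxthroughI}, the deformation tensor ${}^T\pi$ vanishes, hence $K^T[\psi] = \nabla^\mu J^T_\mu[\psi] = 0$ on any solution of \eqref{eq:wave}. I apply Stokes's theorem to $J^T[\psi]$ in the truncated slab $(J^+(\Sigma_{t^\ast_1})\cap J^-(\Sigma_{t^\ast_2}))\cap \mathcal{R}_A\cap\{r\leq R\}$, which produces the two slice terms, the horizon term, and an artificial boundary flux on the cylinder $\{r=R\}$. In the natural $r$-weighted volume form the cylinder flux is dominated by a quantity tending to zero as $R\to\infty$: the $\partial_{t^\ast}$ and angular contributions come with an extra $r^{-2}$ factor, while the radial and mass contributions are controlled using the Dirichlet decay together with the Hardy estimate from the previous step. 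Passing to the limit $R\to\infty$ delivers \eqref{eq:energyidnofluxthroughI} and simultaneously shows that the $T$-flux through $\mathcal{I}_A$ vanishes.

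For \eqref{eq:boundednenergyflux}, I set $t^\ast_1=0$ in the identity and rearrange to $\int_{\mathcal{H}_A^+(0,t^\ast_2)}J^T_\mu n^\mu \dvol = \int_{\Sigma_0\cap \mathcal{R}_A}J^T_\mu n^\mu \dvol - \int_{\Sigma_{t^\ast_2}\cap \mathcal{R}_A}J^T_\mu n^\mu \dvol$. On $\mathcal{H}_A^+$ the normal $n=T$ is null, so in $\mathbf{T}_{\mu\nu}[\psi]T^\mu T^\nu$ the Lagrangian piece $\tfrac12 g(T,T)(\cdot)$ is annihilated, leaving the manifestly non-negative $|T\psi|^2$. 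The second slice term on the right-hand side is non-negative by the coercivity in the first step, so the horizon term is bounded uniformly in $t^\ast_2$ by the initial slice flux, which again by the first step is $\lesssim E_1^A[\psi](0)$. A symmetric argument in $\mathcal{R}_B$ (with $\mathcal{T}=-T$ and the time orientation reversed) supplies the $\mathcal{H}_B^+$ contribution, summing to \eqref{eq:boundednenergyflux}. The most delicate technical point I anticipate is justifying the $R\to\infty$ limit at the regularity afforded by \eqref{eq:chclass} alone, since $\psi$ need not be pointwise continuous up to $\mathcal{I}_A$; this is handled by first proving the identity for smooth $C_c^\infty(\Sigma_0)$ data provided by \cref{prop:wellposedness}, and then extending by density in the norm \eqref{eq:norm}.
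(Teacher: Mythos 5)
Your proposal is correct and follows essentially the route behind the result as cited (the paper itself gives no proof here, attributing the proposition to \cite{wellposedgustav}): the divergence theorem for the conserved current $J^T$ of the Killing field $T$, vanishing of the boundary contribution at $\mathcal{I}_A$ under the Dirichlet condition via a radial truncation $\{r\leq R\}$ with $R\to\infty$, positivity $J^T_\mu n^\mu_{\mathcal{H}^+_A}=|T\psi|^2$ on the null horizon, and the Hardy inequality of \cite[Equation~(50)]{Gustav_early} under the strict Breitenlohner--Freedman bound to obtain the coercivity \eqref{eq:hardyinequ}. Two minor imprecisions, neither fatal: the flux through the timelike cylinder $\{r=R\}$ consists only of (a normalization times) $\operatorname{Re}(\partial_t\psi\,\overline{\partial_r\psi})$ because $g(T,\partial_r)=0$, so there are no separate angular or mass contributions to control there; and for the lower bound in \eqref{eq:hardyinequ} the Hardy inequality is needed also when $\alpha\leq 0$ (in particular $\alpha=0$), since mere non-negativity of the mass term does not produce the $|\psi|^2$ term on the right-hand side, and near $r=r_+$ one uses the degenerate (logarithmic-weight) version of the Hardy estimate rather than the sharp near-infinity constant.
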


 \begin{theorem}[{\cite[Theorem~1.1]{lower_bound}}, {\cite[Section~12]{decaykg}}]\label{prop:basicenergyest}
 	A solution $\psi$ to \eqref{eq:wave} arising from smooth and compactly supported data on $\Sigma_0$ as in  \cref{prop:wellposedness} satisfies
 	\begin{align}
 	&	\int_{\Sigma_{t^\ast} \cap \mathcal{R}_A  } e_1[\psi] r^2 \sin\theta \d r  \d\theta \d\varphi \lesssim \int_{\Sigma_{0} \cap \mathcal{R}_A } e_1[\psi] r^2 \sin\theta \d r  \d\theta \d\varphi,\\
 	&\int_{\Sigma_{t^\ast} \cap \mathcal{R}_A} e_2[\psi] r^2 \sin\theta \d r  \d\theta \d\varphi \lesssim \int_{\Sigma_{0} \cap \mathcal{R}_A } e_2[\psi] r^2 \sin\theta \d r  \d\theta \d\varphi,
 	\end{align}
 	and similarly for higher order norms. Moreover, we have the energy decay statements
 	\begin{align}
 	\int_{\Sigma_{t^\ast} \cap \mathcal{R}_A } e_1[\psi] r^2 \sin\theta \d r  \d\theta \d\varphi \lesssim \frac{1}{[\log(2+ t^\ast)]^2}  \int_{\Sigma_{0} \cap \mathcal{R}_A } e_2[\psi] r^2 \sin\theta \d r  \d\theta \d\varphi
 	\end{align}
 	for $t^\ast \geq 0$ and the pointwise decay
 	\begin{align}\label{eq:pointwisedecayontheexterior}
 	\sup_{\Sigma_{t^\ast} \cap \mathcal{R}_A}|\psi|^2 \lesssim \frac{1}{[\log(2+ t^\ast)]^2}  \int_{\Sigma_{0} \cap \mathcal{R}_A }( e_2[\psi]  + e_2[\partial_{t^\ast}\psi] )r^2 \sin\theta \d r  \d\theta \d\varphi
 	\end{align}
 	for $t^\ast \geq 0$
 	in the exterior region $\mathcal{R}_A$ and similarly in $\mathcal{R}_B$.
 	Moreover, just like for Schwarzschild--AdS (cf.~\cite{decaykg}), fixed angular frequencies decay exponentially. More precisely, let $Y_{\ell m}$ denote the spherical harmonics and let $\psi$ be a solution to \eqref{eq:wave} arising from smooth and compactly supported data on $\Sigma_0$. If there exists an $L\in \mathbb{N}$ with $\langle \psi ,Y_{m\ell} \rangle_{L^2(\mathbb{S}^2)} =0 $ for $\ell \geq L$, then 
 	\begin{align}	\label{eq:expdecay}
 	\int_{\Sigma_{t^\ast} \cap \mathcal{R}_A} e_1[\psi] r^2 \sin\theta \d r  \d\theta \d\varphi \lesssim \exp\left(- e^{-C(M,Q,l,\alpha) L}  t^\ast\right) \int_{\Sigma_{0} \cap \mathcal{R}_A } e_1[\psi] r^2 \sin\theta \d r  \d\theta \d\varphi,
 	\end{align}
 	for $t^\ast \geq 0$ and a constant $ C(M,Q,l,\alpha)>0$ only depending on the parameters $M,Q,l,\alpha$.
 \end{theorem}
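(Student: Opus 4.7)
The statement is a compilation of boundedness, logarithmic decay, pointwise decay, and (for fixed angular frequencies) exponential decay results from Holzegel--Smulevici, so my strategy is to invoke the ingredients developed in \cite{decaykg,lower_bound} and verify that they transfer without change from Schwarzschild/Kerr--AdS to Reissner--Nordstr\"om--AdS. The starting point is the $T$-energy identity \eqref{eq:energyidnofluxthroughI} which, combined with the Hardy-type absorption of the (possibly negative) Klein--Gordon mass term encoded in \eqref{eq:hardyinequ}, gives $\int_{\Sigma_{t^\ast}\cap\mathcal{R}_A} e_1[\psi] r^2\, dr\, d\sigma_{\mathbb{S}^2} \lesssim E_1^A[\psi](0)$. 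Boundedness of the higher-order energy $E_2^A$ then follows by commuting with $T$ and with the angular Killing vector fields $\mathcal{W}_i$ (all of which preserve \eqref{eq:wave} because they are Killing and the metric is spherically symmetric) and using elliptic estimates for the radial second derivatives, again absorbing the mass term with the Hardy inequality.

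The main step is the logarithmic decay. The plan is to first establish a degenerate integrated local energy decay (Morawetz) estimate with loss of derivatives of the schematic form
\begin{align*}
\int_{0}^{\infty}\!\!\int_{\Sigma_{t^\ast}\cap\mathcal{R}_A} \chi(r)\, e_1[\psi]\, r^2\, dr\, d\sigma_{\mathbb{S}^2}\, dt^\ast \lesssim E_2^A[\psi](0) + (\text{boundary}),
\end{align*}
where $\chi$ is a suitable cutoff. The loss of derivatives is \emph{inevitable} because of stable trapping of null geodesics near $r=r_+$ caused by the reflecting boundary at $\mathcal{I}$. The construction of such an estimate proceeds exactly as in \cite{decaykg} via a radial current together with a redshift commutation at $\mathcal{H}_A^+$; the geometry near the horizon is identical to Schwarzschild--AdS up to the addition of the $Q^2/r^2$ term in $\Delta$, so the argument carries over line by line. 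Once such an estimate is established, one combines it with boundedness of $E_2$ to run the standard pigeonhole/iteration scheme to upgrade loss-of-derivatives integrated decay into the inverse-logarithmic pointwise-in-$t^\ast$ decay rate $[\log(2+t^\ast)]^{-2}$ for $E_1^A$. The pointwise decay \eqref{eq:pointwisedecayontheexterior} then follows at once by Sobolev embedding on each $\Sigma_{t^\ast}\cap \mathcal{R}_A$ after commuting once more with $\mathcal{T}$ to convert the energy estimate into an $L^\infty$ bound (the extra factor of $e_2[\partial_{t^\ast}\psi]$ on the right accounts for this commutation).

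For the exponential decay of solutions with angular support bounded by $L$, the key observation is that the stable trapping at $r_+$ occurs only in the high-frequency regime $\ell \to \infty$; for fixed $\ell\leq L$, the angular momentum potential provides sufficient convexity that one can construct a \emph{non-degenerate} (no loss of derivatives) integrated local energy decay estimate with an $\ell$-dependent constant. Projecting onto spherical harmonics and using the decomposition $\psi = \sum_{\ell\leq L,m} \psi_{\ell m}(t,r)Y_{\ell m}$, one applies this $\ell$-by-$\ell$ Morawetz estimate together with the $T$-energy boundedness to obtain a uniform-in-time integrated decay with constant $\sim e^{C(M,Q,l,\alpha)L}$; by a standard argument, such an estimate combined with $T$-energy conservation upgrades to exponential decay of $E_1^A[\psi]$ at rate $e^{-C(M,Q,l,\alpha) L} t^\ast$.

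The main obstacle is conceptual rather than computational: verifying that nothing in the Schwarzschild--AdS (or Kerr--AdS axisymmetric) arguments of \cite{decaykg,lower_bound} relies on features that fail in Reissner--Nordstr\"om--AdS. The only real change is the presence of the Maxwell-induced term $Q^2/r^2$ in $\Delta$, which affects neither the surface gravity sign at $\mathcal{H}_A^+$ (so the redshift estimate still works), nor the Hawking--Reall-type condition (trivial in the spherically symmetric case), nor the asymptotic form of the potential as $r\to\infty$. Since the author explicitly defers to \cite{decaykg} via a footnote, I would simply indicate these three checks and cite the corresponding sections, rather than reproduce the detailed calculations.
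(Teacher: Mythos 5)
Your overall strategy coincides with the paper's: \cref{prop:basicenergyest} is not proved in the paper at all, but quoted from \cite{decaykg,lower_bound}, with a footnote asserting that the Schwarzschild--AdS arguments carry over verbatim to Reissner--Nordstr\"om--AdS; your three transfer checks (sign of the surface gravity at $\mathcal{H}_A^+$, triviality of the Hawking--Reall condition in spherical symmetry, unchanged structure of the potential as $r\to\infty$) are exactly the verification that footnote alludes to. Likewise, the boundedness statements via the $T$-energy identity, Hardy absorption of the mass term, and commutation with $T$ and $\mathcal{W}_i$, and the per-$\ell$ route to \eqref{eq:expdecay} (an integrated decay estimate with $\ell$-dependent constant of size $e^{C\ell}$, which upgrades to exponential decay at rate $e^{-CL}t^\ast$ because no energy escapes through the reflecting boundary) are faithful to \cite{decaykg}.

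However, your route to the logarithmic decay contains a step that is genuinely false. You posit a frequency-\emph{uniform} integrated local energy decay estimate with loss of one derivative, schematically $\int_0^\infty\int_{\Sigma_{t^\ast}\cap\mathcal{R}_A}\chi\, e_1[\psi]\, r^2\,\d r\,\d\sigma_{\mathbb{S}^2}\,\d t^\ast\lesssim E_2^A[\psi](0)$, and propose to upgrade it by pigeonhole. Since the Dirichlet boundary is reflecting, such an estimate combined with energy boundedness (applied from later times) would yield $E_1^A[\psi](t^\ast)\lesssim E_2^A[\psi](0)/t^\ast$, i.e.\ polynomial decay with one derivative lost; this contradicts the quasimode construction of \cite{lower_bound}, which shows the logarithmic rate is sharp even allowing loss of any finite number of derivatives. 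In other words, the stable trapping does not merely force a derivative loss --- it makes any uniform-in-$\ell$ Morawetz estimate with finite derivative loss fail outright, and moreover a pigeonhole upgrade would never produce a logarithm. What \cite{decaykg} actually proves is an integrated decay estimate whose constant grows like $e^{C\ell}$ at fixed angular frequency (precisely the content behind \eqref{eq:expdecay}); the $[\log(2+t^\ast)]^{-2}$ decay is then obtained by splitting $\psi$ into modes with $\ell\le L$ and $\ell>L$, bounding the high-$\ell$ tail by $L^{-2}$ times the angularly commuted energy (this is where $e_2$ enters), using $\exp(-e^{-CL}t^\ast)$ decay for the low modes, and optimizing $L\sim\log t^\ast$. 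Your proposal already contains the needed per-$\ell$ statement, so the repair is to discard the uniform ILED/pigeonhole step and assemble the logarithmic decay from \eqref{eq:expdecay} together with this splitting-and-optimization argument; the pointwise bound \eqref{eq:pointwisedecayontheexterior} then follows by commutation and Sobolev embedding as you indicate.
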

 \begin{rmk} \label{rmk:rmkdecay} Note that \eqref{eq:expdecay}  also  implies pointwise exponential decay for $\psi$ (assuming $\langle \psi ,Y_{\ell m} \rangle_{L^2(\mathbb{S}^2)} =0 $ for $\ell \geq L$) and all higher derivatives of $\psi$ using standard techniques like commuting with $T$ and $\mathcal{W}_i$, elliptic estimates as well as applying a Sobolev embedding. Moreover, the previous estimates above also hold true for a the more general class of solutions $CH_{\mathrm{RNAdS}}^2$. See {\cite{wellposedgustav}} or \cite[Theorem~4.1]{decaykg} for more details. \end{rmk}
 \begin{rmk}
 	The previous decay estimates have only been stated to the future of $\Sigma_0$ in the region $\mathcal{R}_A$, nevertheless, they  also hold in $\mathcal{R}_B$. Moreover, they also hold true to the past of $\Sigma_0$ for an appropriate foliation for which the leaves intersect $\mathcal{H}_A^-$ and $\mathcal{H}_B^-$, and are transported along the flow of $-T$ for $\mathcal{R}_A\cup\mathcal{H}_A^- $ and along the flow of $T$ for $\mathcal{R}_B\cup\mathcal{H}_B^-$.
 \end{rmk} 
 We now turn to the energy estimates in the interior region $\mathcal{B}$.
\subsubsection{Energy estimates in the interior region}
\paragraph{\textbf{Twisted energy-momentum tensor}}
We begin by defining twisted derivatives. 
 \begin{definition}[Twisted derivative]
 	For a smooth and nowhere vanishing function $f$ we define the \textbf{twisted derivative}
 	\begin{align}
 	\tilde \nabla_\mu := f \nabla_\mu \left(\frac{\cdot}{f}\right)
 	\end{align}
 	and its formal adjoint
 	\begin{align}
 	\tilde \nabla_\mu^\ast:= - \frac{1}{f}\nabla_\mu(f \cdot). 
 	\end{align}
 	We shall refer to $f$ as the \textbf{twisting function}.
 	\label{defn:twistedder}
 \end{definition} 
 \begin{rmk}
 	Note that we can rewrite the Klein--Gordon equation \eqref{eq:wave} in terms of the twisted derivatives as
 	\begin{align}
 	- \tilde \nabla_\mu^\ast \tilde{\nabla}^\mu {{\psi}} - \V {{\psi}}  =0,
 	\end{align}
 	where the potential $\V$ is given by
 	\begin{align}\label{eq:potential}
 	\V=  - \left( \frac{\alpha}{l^2} + \frac{\Box_g f}{f}\right).
 	\end{align}
 \end{rmk}
 Now, we also associate a twisted energy-momentum tensor to the twisted derivatives.
 \begin{definition}[Twisted energy-momentum tensor]Let $f$ be smooth and nowhere vanishing and $\tilde \nabla$ as defined in \cref{defn:twistedder}. We define the \textbf{twisted energy-momentum tensor} associated to \eqref{eq:wave} and $f$ as
 	\begin{align}\label{eq:twistedemt}
 	\tilde {\mathbf T}_{\mu\nu} [{{\phi}}] :=\mathrm{Re}\left( \overline{\tilde \nabla_\mu {{\phi}} }\tilde \nabla_\nu {{\phi}} \right) - \frac{1}{2}g_{\mu\nu} (\overline{\tilde \nabla_\sigma {{\phi}}} \tilde \nabla^\sigma {{\phi}} + \V |{\phi}|^2),
 	\end{align}
 	where $\V$ is as in \eqref{eq:potential} and $\phi$ is any smooth function.
 \end{definition}
 We will now compute the divergence of the twisted energy-momentum tensor.
 \begin{prop}[{\cite[Proposition~3]{warnick_boundedness_and_growth}}]Let $\phi$ be a smooth function and $f$ be a smooth nowhere vanishing twisting function. Then,
 	\begin{align}
 	\nabla_\mu {\tilde {\mathbf T}^\mu}_{\nu} [\phi] = \mathrm{Re}\left(\left(- \tilde \nabla_\mu^\ast \tilde{\nabla}^\mu {{\phi}} - \V {{\phi}} \right) \tilde{\nabla}_\nu \phi\right) + \tilde S_\nu [\phi],
 	\end{align}
 	where \begin{align}
 	\tilde S_\nu[\phi] = \frac{\tilde \nabla^\ast_\nu (f\V)}{2f} |{\phi}|^2 + \frac{\tilde \nabla^\ast_\nu f}{2f} \overline{\tilde{\nabla}_\sigma {{\phi}}} \tilde \nabla^\sigma {\phi}.
 	\end{align}
 	Now, assume that $\phi$ moreover satisfies \eqref{eq:wave} and $X$ is a smooth vector field. Set \begin{align}
 	\tilde J^X_\mu[\phi] := \tilde {\mathbf{T}}_{\mu\nu}[\phi] X^\nu \text{ and }  \; \tilde K^X[\phi] := ~^X\pi_{\mu\nu}\tilde {\mathbf T}^{\mu\nu}[\phi] + X^\nu \tilde S_\nu[\phi]. 
 	\end{align}
 	Then,
 	\begin{align}\label{eq:twiteddefjx}
 	\nabla^\mu \tilde J^X_\mu [\phi ]= \tilde K^X[\phi].
 	\end{align}
 	Finally, note that if the twisting function $f$ associated to $\tilde \nabla$ is chosen such that $\V \geq 0$, then $\tilde{\mathbf T}_{\mu\nu}$ satisfies the dominant energy condition, i.e.\ if $X$ is a future pointing causal vector field, then so is $-\tilde J^X$. \end{prop}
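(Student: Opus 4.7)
The strategy is a direct computation starting from definition \eqref{eq:twistedemt}, exploiting two algebraic identities. Writing $A_\mu := \nabla_\mu \log f$, we have $\tilde\nabla_\mu = \nabla_\mu - A_\mu$ and $\tilde\nabla^\ast_\mu = -\nabla_\mu - A_\mu$. The first identity is $-\tilde\nabla^\ast_\mu\tilde\nabla^\mu\phi = \Box_g \phi - (\Box_g f / f)\phi$, which follows by expanding $\frac{1}{f}\nabla_\mu(f\tilde\nabla^\mu\phi)$ and using $fA^\mu = \nabla^\mu f$; this confirms \eqref{eq:potential} and yields the useful rearrangement $\nabla^\mu\tilde\nabla_\mu\phi = -\tilde\nabla^\ast_\mu\tilde\nabla^\mu\phi - A^\mu\tilde\nabla_\mu\phi$. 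The second is the commutation $\nabla_\mu\tilde\nabla_\nu\phi - \nabla_\nu\tilde\nabla_\mu\phi = A_\mu\nabla_\nu\phi - A_\nu\nabla_\mu\phi$, obtained from $[\nabla_\mu,\nabla_\nu]\phi = 0$ on scalars together with the fact that $A$ is a closed one-form so $\nabla_\mu A_\nu$ is symmetric.

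I would then differentiate $\tilde{\mathbf T}_{\mu\nu}$ and apply the Leibniz rule. The divergence of the principal part $\mathrm{Re}(\overline{\tilde\nabla_\mu\phi}\tilde\nabla_\nu\phi)$ splits into two pieces: $\mathrm{Re}[(\nabla^\mu\overline{\tilde\nabla_\mu\phi})\tilde\nabla_\nu\phi]$, into which the first identity substitutes $-\tilde\nabla^\ast_\mu\tilde\nabla^\mu\overline{\phi} - A^\mu\overline{\tilde\nabla_\mu\phi}$; and $\mathrm{Re}[\overline{\tilde\nabla_\mu\phi}\nabla^\mu\tilde\nabla_\nu\phi]$, which the commutation identity reshapes into $\tfrac{1}{2}\nabla_\nu(\overline{\tilde\nabla_\sigma\phi}\tilde\nabla^\sigma\phi)$ plus a remainder linear in $A$. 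The $\tfrac{1}{2}\nabla_\nu$ piece cancels exactly against the corresponding contribution from differentiating $-\tfrac{1}{2}g_{\mu\nu}\overline{\tilde\nabla_\sigma\phi}\tilde\nabla^\sigma\phi$, whereas $-\tfrac{1}{2}\nabla_\nu(\V|\phi|^2)$ supplies the $\V$-dependent pieces needed to assemble the target factor $\mathrm{Re}((-\tilde\nabla^\ast_\mu\tilde\nabla^\mu\phi - \V\phi)\tilde\nabla_\nu\phi)$. The main bookkeeping task is to recognize the surviving remainders, which involve $A_\nu$, $A^\mu$, $\nabla_\nu\V$, and $|\phi|^2$, as $\tilde S_\nu[\phi]$, using the easily checked identities $\tilde\nabla^\ast_\nu f/(2f) = -A_\nu$ and $\tilde\nabla^\ast_\nu(f\V)/(2f) = -A_\nu\V - \tfrac{1}{2}\nabla_\nu\V$. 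This step is the main obstacle: one must carefully track complex conjugations, total $\nabla_\nu$-differentials of $|\phi|^2$, and pieces proportional to the kinetic density $\overline{\tilde\nabla_\sigma\phi}\tilde\nabla^\sigma\phi$.

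For the second half of the proposition, contracting $\nabla^\mu\tilde{\mathbf T}_{\mu\nu}[\phi]$ with $X^\nu$ and invoking the twisted form of the wave equation \eqref{eq:wave} kills the factor $-\tilde\nabla^\ast_\mu\tilde\nabla^\mu\phi - \V\phi$, leaving $X^\nu\tilde S_\nu[\phi]$. Combining with $\nabla^\mu(\tilde{\mathbf T}_{\mu\nu}X^\nu) = (\nabla^\mu\tilde{\mathbf T}_{\mu\nu})X^\nu + \tilde{\mathbf T}_{\mu\nu}\nabla^\mu X^\nu$ and using the symmetry of $\tilde{\mathbf T}$ to rewrite $\tilde{\mathbf T}_{\mu\nu}\nabla^\mu X^\nu$ as a contraction with ${}^X\pi_{\mu\nu}$ produces \eqref{eq:twiteddefjx}. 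The dominant energy condition under $\V\geq 0$ follows by noting that, pointwise in an orthonormal frame, the principal part of $\tilde{\mathbf T}$ has the algebraic form of the standard scalar-field stress tensor with $\tilde\nabla\phi$ in place of $\nabla\phi$, which is well known to satisfy the dominant energy condition; the extra zeroth-order piece $-\tfrac{1}{2}\V|\phi|^2 g_{\mu\nu}$ contributes $-\tfrac{1}{2}\V|\phi|^2 g(X,\cdot)$ to $-\tilde J^X$, and for future-causal $X$ this covector is itself future-causal precisely when $\V\geq 0$. Combining both contributions gives the claimed causality of $-\tilde J^X$.
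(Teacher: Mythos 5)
Your direct computation is correct, and it is essentially the proof of the cited result: the paper itself does not prove this proposition but quotes it from \cite[Proposition~3]{warnick_boundedness_and_growth}, where it is established by exactly this kind of calculation. I checked your two ingredients --- $\nabla^\mu\tilde\nabla_\mu\phi=-\tilde\nabla^\ast_\mu\tilde\nabla^\mu\phi-A^\mu\tilde\nabla_\mu\phi$ with $A_\mu=\nabla_\mu\log f$, and the commutation $\nabla_\mu\tilde\nabla_\nu\phi-\nabla_\nu\tilde\nabla_\mu\phi=A_\mu\nabla_\nu\phi-A_\nu\nabla_\mu\phi$ --- and the cancellation of $\tfrac12\nabla_\nu(\overline{\tilde\nabla_\sigma\phi}\tilde\nabla^\sigma\phi)$; the surviving terms indeed collapse to $-A_\nu\,\overline{\tilde\nabla_\sigma\phi}\tilde\nabla^\sigma\phi-\tfrac12\nabla_\nu\V\,|\phi|^2$ after absorbing $-\tfrac12\V\nabla_\nu(|\phi|^2)$ into $\mathrm{Re}(\overline{\V\phi}\,\tilde\nabla_\nu\phi)$, which is exactly $\tilde S_\nu$ by your identities $\tilde\nabla^\ast_\nu f/(2f)=-A_\nu$ and $\tilde\nabla^\ast_\nu(f\V)/(2f)=-A_\nu\V-\tfrac12\nabla_\nu\V$. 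One cosmetic slip in the last paragraph: the piece $-\tfrac12\V|\phi|^2 g_{\mu\nu}$ contributes $-\tfrac12\V|\phi|^2\,g(X,\cdot)$ to $\tilde J^X$, hence $+\tfrac12\V|\phi|^2\,g(X,\cdot)$ to $-\tilde J^X$, and it is this nonnegative multiple of the future-causal covector $g(X,\cdot)$ that you add to the DEC-satisfying principal part when $\V\geq 0$ --- your conclusion is right, but the sign as written attaches the term to the wrong current.
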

 	
 	We will make use of the twisted energy-momentum tensor in the interior region $\mathcal{B}$ for which we use null coordinates $(u_{\mathcal B}, v_{\mathcal B})$ introduced in \cref{sec:prelim21}. For the rest of the subsection we will drop the index $\mathcal{B}$. Then, setting  \begin{align}
 	\Omega^2(u,v) := -\left(1-\frac{2M}{r} + \frac{Q^2}{r^2} + \frac{r^2}{l^2}\right),
 	\end{align}
 	where $r = r(u,v)$, we write the metric in the interior region $\mathcal{B}$ as 
 	\begin{align}\label{eq:metricinterior}
 	g_{\mathrm{RNAdS}} = -\frac{\Omega^2(u,v)}{2} (\d u \otimes\d v + \d v \otimes \d u) + r^2(u,v) \d \sigma_{\mathbb{S}^2}.
 	\end{align}
 	Note that in the interior we have $r_- < r(u,v) < r_+$ and $\d r_\ast = \frac{r^2}{\Delta} \d r$. 
 In \cref{prop:appendix} in the appendix we have written out the components of the twisted energy-momentum tensor, the twisted 1-jets $\tilde J^X$ and the twisted bulk term $\tilde K^X$ in null components. 
 We will use the notation $\mathcal C_{u_1} := \{ u = u_1 \} $, $\underline{\mathcal C}_{v_1} = \{v = v_1  \}$ for null cones and $\Sigma_{r_1} = \{ r = r_1 \}$ for spacelike hypersurfaces in the interior. Furthermore, we set (in mild abuse of notation)
 \begin{align}
& \mathcal{C}_{u_1}(v_1,v_2) := \{ u = u_1\}\cap \{ v_1 \leq v \leq v_2\},\\
 & \mathcal{C}_{u_1}(r_1,r_2) := \{ u = u_1\}\cap \{ r_1 \leq r \leq r_2\}
 \end{align}
 and analogously for $\Sigma$ and $\underline{\mathcal{C}}$.
 We will also make use of the following notation. For any $\tilde r\in (r_-,r_+)$ we set 
 \begin{align*}
 &v_{\tilde r}(u) := 2r_\ast(\tilde r) - u,\\
 &u_{\tilde r}(v) := 2 r_\ast(\tilde r) - v
 \end{align*}
 and for hypersurfaces with constant $u,v,r$ we denote $n_{{\mathcal{C}}_u},n_{\underline{\mathcal{C}}_v},n_{\Sigma_r}$ as their normals.\footnote{For null hypersurfaces there does not exist a unit norm normal vector, however, for a fixed volume form, there exists a canonical normal vector which we will choose here. Our choice of volume forms and the corresponding normals can be found in \cref{appendix}.}
 
 \paragraph{\textbf{Twisted red-shift vector field}}
 \begin{prop}\label{prop:redshift} 
 	There exist a $\red \in (r_-,r_+)$, a constant $b(M,Q,l,\alpha)>0$, a nowhere vanishing smooth function $f$ associated to the twisted energy momentum tensor and a future directed timelike vector field $N$ such that\begin{align}
 	0 \leq \tilde J^N_\mu[\phi] n_{\underline{\mathcal{C}}_v}^\mu \leq 	b \tilde K^N[\phi]
 	\end{align}
 	for $\mathcal{R}_{\mathrm{red}}:= \{ \red  \leq r\leq r_+ \} \cap \{v \geq 1\}$ and any smooth solution $\phi$ to \eqref{eq:wave}.
 	\begin{proof}
 		This is proven in \cref{sec:twistedredshift}.
 	\end{proof}
 \end{prop}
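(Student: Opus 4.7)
The plan is to adapt the Dafermos--Rodnianski red-shift construction to the black hole interior, in the version using twisted derivatives developed by Warnick \cite{warnick_massive_wave,warnick_boundedness_and_growth} to neutralise the possibly negative Klein--Gordon mass term. The argument proceeds in three steps.

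First, I would fix a twisting function $f=f(r)$, smooth and strictly positive on some one-sided neighbourhood $[r_1,r_+]$ of the event horizon with $r_-<r_1<r_+$, so that the induced potential $\V$ of \eqref{eq:potential} satisfies $\V\geq \delta>0$ there. Because $\V+\alpha/l^2=-\Box_g f/f$ depends on up to two $r$-derivatives of $\log f$, its sign and size near $r_+$ can be prescribed freely (for example, $f(r)=e^{\eta(r_+-r)}$ with $\eta$ sufficiently large does the job). Positivity of $\V$ forces $\tilde{\mathbf T}_{\mu\nu}$ to satisfy the dominant energy condition on $[r_1,r_+]$, which immediately yields the lower bound $\tilde J^N_\mu n^\mu_{\underline{\mathcal C}_v}\geq 0$ for any future-directed causal $N$.

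Second, I would take $N=a(r)\partial_u+b(r)\partial_v$ with $a,b$ smooth, strictly positive functions of $r$. Since the time orientation of $\mathcal B$ is given by $-\partial_r$ and $\partial_u+\partial_v$ is proportional to $\partial_{r_\ast}$ which points in the direction of decreasing $r$ in the interior, both $\partial_u$ and $\partial_v$ are future-directed null in $\mathcal B$, hence $N$ is future-directed timelike. Using the null-component expressions for $\tilde{\mathbf T}$, $\tilde J^N$ and $\tilde K^N$ collected in \cref{prop:appendix}, the bulk
\begin{align*}
\tilde K^N[\phi]={}^N\pi_{\mu\nu}\tilde{\mathbf T}^{\mu\nu}[\phi]+N^\nu\tilde S_\nu[\phi]
\end{align*}
has schematically the form
\begin{align*}
\tilde K^N[\phi]=A(r)\,|\tilde\nabla_u\phi|^2+B(r)\,|\tilde\nabla_v\phi|^2+C(r)\,|\slashed\nabla\phi|^2+D(r)\,\V\,|\phi|^2+(\text{cross terms}).
\end{align*}
The crucial observation is that as $r\to r_+^-$, the coefficient $A(r)$ acquires a dominant positive term of size $b(r)\,|\partial_r\Omega^2(r_+)|/\Omega^2$, reflecting the positive surface gravity $\kappa_+$ of $\mathcal H_A^+$; analogously $B(r)$ is driven by $a(r)\,\kappa_+$. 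By choosing $a,b$ to be comparable positive constants, taking $\red$ close enough to $r_+$, and absorbing the cross terms by Cauchy--Schwarz, the coefficients $A,B,C,D$ stay uniformly positive on $[\red,r_+]$, producing the pointwise coercivity
\begin{align*}
\tilde K^N[\phi]\gtrsim |\tilde\nabla_u\phi|^2+|\tilde\nabla_v\phi|^2+|\slashed\nabla\phi|^2+|\phi|^2\qquad\text{on }\mathcal R_{\mathrm{red}}.
\end{align*}

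Third, from the very same null decomposition one reads off that the flux $\tilde J^N_\mu[\phi]\,n^\mu_{\underline{\mathcal C}_v}$ across an ingoing cone $\underline{\mathcal C}_v$, whose normal is a positive multiple of $\partial_u$, is a non-negative combination of $|\tilde\nabla_v\phi|^2$, $|\slashed\nabla\phi|^2$ and $\V\,|\phi|^2$ with bounded coefficients, and hence is dominated by the right-hand side of the coercivity estimate above. Combining the two inequalities gives the desired sandwich with some constant $b=b(M,Q,l,\alpha)>0$. The main obstacle in this scheme is that, in contrast to the classical Dafermos--Rodnianski calculation, the twisted bulk carries the extra source contribution $N^\nu\tilde S_\nu$, which couples the choice of $f$ to the coercivity of $\tilde K^N$, while the potentially negative Klein--Gordon mass threatens zeroth-order positivity. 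Both issues are handled by the order in which the parameters are fixed: first $f$ is chosen to secure $\V\geq\delta>0$, and only then is $\red$ taken close enough to $r_+$ that the $\kappa_+$-driven red-shift contributions dominate all terms arising from $\tilde S$ as well as from $a'(r),b'(r)$. The cutoff $\{v\geq 1\}$ in the definition of $\mathcal R_{\mathrm{red}}$ simply ensures one stays away from $\mathcal H_B^+=\{v=-\infty\}$ and from $\mathcal B_-$, where the null coordinate construction degenerates.
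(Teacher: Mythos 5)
Your Step 1 (the twisting function $f=e^{\eta(r_+-r)}$, equivalently the paper's $f=e^{-\beta_{\mathrm{red}}r}$, chosen so that $\V\gtrsim 1$ near $r_+$) matches the paper exactly. The gap is in Steps 2--3: the ansatz $N=a(r)\partial_u+b(r)\partial_v$ with $a,b$ \emph{comparable positive constants} is not a red-shift vector field in these coordinates --- it is essentially the no-shift vector field $\partial_u+\partial_v$ --- and with it the claimed inequality fails near $r_+$. Concretely, from \eqref{eq:bulk} the coefficient of $|\tilde\nabla_u\phi|^2$ in $\tilde K^N$ is $-\tfrac{2}{\Omega^2}\partial_v N^u$; since $\partial_v r=-\tfrac{\Omega^2}{2}$, for $N^u=a(r)$ this equals $a'(r)=O(1)$ (and vanishes for constant $a$). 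There is no term of size $b(r)\,|\partial_r\Omega^2(r_+)|/\Omega^2$ as you assert: that $\Omega^{-2}$-enhanced surface-gravity term appears only when $N^u$ itself is taken $\sim\Omega^{-2}$, so that $N$ is regular and transversal to $\mathcal{H}_A^+$ (recall $\partial_u$ degenerates at $\{u=-\infty\}$; the regular transversal field is $\Omega^{-2}\partial_u$). This is precisely the paper's choice $N^u=\tfrac{1}{\Omega^2}-\tfrac{1}{\delta_1}$, $N^v=1-\tfrac{\Omega^2}{\delta_2}$, which produces the bulk term $(-\partial_r\Omega^2)\,\Omega^{-4}|\tilde\nabla_u\phi|^2$ in \eqref{eq:kx1}.

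The second, related error is your reading of the flux: by \eqref{eq:jxv}, the flux through $\underline{\mathcal{C}}_v$ (normal $\propto\partial_u$) is $\tfrac{2N^u}{\Omega^2}|\tilde\nabla_u\phi|^2+\tfrac{N^v}{2}(|\slashed\nabla\phi|^2+\V|\phi|^2)$; it contains $|\tilde\nabla_u\phi|^2$, not $|\tilde\nabla_v\phi|^2$, and its coefficient $\tfrac{2N^u}{\Omega^2}$ is \emph{unbounded} as $r\to r_+$ even for bounded $N^u$. With your constants $a,b$ the flux carries $\tfrac{2a}{\Omega^2}|\tilde\nabla_u\phi|^2$ while the bulk carries no positive $|\tilde\nabla_u\phi|^2$ term at all (only sign-indefinite cross terms such as $-\tfrac{2}{r}(a+b)\operatorname{Re}(\overline{\tilde\nabla_u\phi}\,\tilde\nabla_v\phi)$ and the $\tilde S$ contributions), so the pointwise bound $\tilde J^N_\mu n^\mu_{\underline{\mathcal{C}}_v}\leq b\,\tilde K^N$ cannot hold uniformly on $\{\red\leq r\leq r_+\}$: test it on a solution with $\tilde\nabla_u\phi$ dominant near the horizon. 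The repair is the paper's: take $N^u\sim\Omega^{-2}$ so that flux and bulk carry the \emph{matching} weight $\Omega^{-4}|\tilde\nabla_u\phi|^2$ (with the bulk constant given by the surface gravity $-\partial_r\Omega^2\geq c_{\mathrm{red}}>0$), choose $\delta_1,\delta_2$ and then $\red$ close to $r_+$ to absorb the cross terms and the twisting source terms, and only then conclude the sandwich inequality.
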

 
 We will now prove the main estimate which we will use in the  red-shift region in the interior. 
 \begin{prop}\label{prop:redshiftprelim} Let $\phi$ be a smooth solution to \eqref{eq:wave} and let $r_0 \in [\red,r_+)$. Then, for any $1\leq v_1 \leq v_2$ we obtain
 	 \begin{align}
 	 \nonumber
 	 \int_{\cv{v_2}(r_0,r_+)} \tilde J^N_\mu [{\phi}] n^\mu_{\cv{v}} \dvol_{\cv{v}} + \int_{\Sigma_{r_0}(v_1,v_2)} &\tilde J_\mu^N[{\phi}] n^\mu_{\Sigma_r} \dvol_{\Sigma_r}  +  \int_{v_1}^{v_2}  \int_{\cv{v}(r_0,r_+)} \tilde J^N_\mu [{\phi}] n^\mu_{\cv{v}} \dvol_{\cv{v}} \d v \\&\lesssim  \int_{\cv{v_1}(r_0,r_+)}\tilde J^N_\mu [{\phi}] n^\mu_{\cv{v}} \dvol_{\cv{v}}  + \int_{\mathcal{H}(v_1,v_2)}\tilde J_\mu^N[{\phi}] n^\mu_{\mathcal{H}^+} \dvol_{\Hp^+}.
 	 \end{align}
 \end{prop}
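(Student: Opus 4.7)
The plan is to derive the estimate by applying the divergence theorem to the twisted current $\tilde J^N$ over the region
\[ \Omega := \{(u,v,\omega) \in \mathcal{B} : v_1 \leq v \leq v_2, \ r_0 \leq r(u,v) \leq r_+\}, \]
whose boundary decomposes into the four pieces $\cv{v_1}(r_0,r_+)$, $\cv{v_2}(r_0,r_+)$, $\Sigma_{r_0}(v_1,v_2)$ and $\mathcal{H}^+(v_1,v_2)$. Since $r_0 \geq \red$ and $v_1 \geq 1$, the entire region $\Omega$ lies in $\Rred$, so \cref{prop:redshift} applies throughout. Invoking $\nabla^\mu \tilde J^N_\mu = \tilde K^N$ from \eqref{eq:twiteddefjx}, together with the volume forms and future-directed normals fixed on each boundary piece (see \cref{appendix}), the divergence theorem yields an energy identity of the form
\[ \int_{\cv{v_2}(r_0,r_+)} \tilde J^N_\mu n^\mu_{\cv{v}} \dvol_{\cv{v}} + \int_{\Sigma_{r_0}(v_1,v_2)} \tilde J^N_\mu n^\mu_{\Sigma_r} \dvol_{\Sigma_r} + \int_\Omega \tilde K^N \dvol = \int_{\cv{v_1}(r_0,r_+)} \tilde J^N_\mu n^\mu_{\cv{v}} \dvol_{\cv{v}} + \int_{\mathcal{H}(v_1,v_2)} \tilde J^N_\mu n^\mu_{\mathcal{H}^+} \dvol_{\mathcal{H}^+}. \]
Each of the four boundary fluxes is non-negative: this is a consequence of $\tilde{\mathbf{T}}$ satisfying the dominant energy condition in $\Rred$ (ensured by the choice of twisting function $f$ rendering $\V \geq 0$, as exploited in the proof of \cref{prop:redshift}) together with the future-directedness of $N$ and of all boundary normals.

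To produce the integrated $v$-flux appearing on the left of the claim, I rewrite the bulk integral by Fubini as
\[ \int_\Omega \tilde K^N \dvol = \int_{v_1}^{v_2} \int_{\cv{v}(r_0,r_+)} \tilde K^N \dvol_{\cv{v}} \, \d v, \]
and then apply the pointwise red-shift bound $\tilde K^N \geq b^{-1} \tilde J^N_\mu n^\mu_{\cv{v}} \geq 0$ from \cref{prop:redshift} to obtain
\[ \int_\Omega \tilde K^N \dvol \geq b^{-1} \int_{v_1}^{v_2} \int_{\cv{v}(r_0,r_+)} \tilde J^N_\mu n^\mu_{\cv{v}} \dvol_{\cv{v}} \, \d v. \]
Substituting this lower bound for the bulk term on the left of the energy identity and absorbing the constant $b^{-1}$ into the implicit $\lesssim$ yields the stated inequality.

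The only delicate step is the bookkeeping of signs and orientations in the energy identity --- in particular, distinguishing the future null boundary $\cv{v_2}$ from the past null boundary $\cv{v_1}$, and $\Sigma_{r_0}$ (a spacelike slice with future-directed normal pointing to decreasing $r$) from the past null boundary $\Hp^+$ at $r = r_+$. This is a routine exercise given the explicit null-coordinate expressions for the components of $\tilde J^N$ and $\tilde K^N$ catalogued in \cref{prop:appendix}, and beyond this the argument is essentially mechanical.
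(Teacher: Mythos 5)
Your proposal is correct and follows essentially the same route as the paper: one applies the twisted energy identity (the spacetime integral of \eqref{eq:twiteddefjx}) in the region $\{r_0\leq r\leq r_+\}\cap\{v_1\leq v\leq v_2\}$ and then uses \cref{prop:redshift}, via the coarea/Fubini decomposition of the bulk integral, to bound $\int \tilde K^N \dvol$ from below by $b^{-1}\int_{v_1}^{v_2}\int_{\cv{v}(r_0,r_+)}\tilde J^N_\mu n^\mu_{\cv{v}}\dvol_{\cv{v}}\d v$. The extra details you supply (orientations, normals, non-negativity of the fluxes from the twisted dominant energy condition) are consistent with \cref{prop:appendix} and merely make explicit what the paper leaves implicit.
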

 \begin{proof}We apply the energy identity (spacetime integral of \eqref{eq:twiteddefjx}) in the region $\mathcal{R}(v_1,v_2) := \{ r_0 \leq r \leq r_+\} \cap \{ v_1 \leq v \leq v_2\}$ to obtain
 	\begin{align}
 	\nonumber
 	\int_{\cv{v_2}(r_0,r_+)} \tilde J^N_\mu [{\phi}] n^\mu_{\cv{v}} \dvol_{\cv{v}} + \int_{\Sigma_{r_0}(v_1,v_2)} &\tilde J_\mu^N[{\phi}] n^\mu_{\Sigma_r} \dvol_{\Sigma_r}  + \int_{\mathcal{R}(v_1,v_2)} \tilde K^N[\phi] \dvol \\&= \int_{\cv{v_1}(r_0,r_+)}\tilde J^N_\mu [{\phi}] n^\mu_{\cv{v}} \dvol_{\cv{v}}  + \int_{\mathcal{H}(v_1,v_2)}\tilde J_\mu^N[{\phi}] n^\mu_{\mathcal{H}^+} \dvol_{\Hp^+}.
 	\end{align}
 	Finally, the claim follows from \cref{prop:redshift}.
 \end{proof}
 \paragraph{\textbf{Twisted no-shift vector field}}
 	In this region we propagate estimates towards $i^+$ from the red-shift region to the blue-shift region using a $T=\partial_t$ invariant vector field $X$ and a $t$-independent twisting function $f$. Take $\red$ fixed from \cref{prop:redshift} and let $\rblue >r_-$ be close to $r_-$. We will use the no-shift vector field in two different parts of the paper: First, we will use it in the proof of \cref{prop:reddecay2} in the appendix in order to prove well-definedness of the Fourier projections. In this case we will choose $\rblue$ in principle arbitrarily close to $r_-$. The estimate degenerates as we take $\rblue\to r_-$, however for the purpose of \cref{prop:reddecay2} such an estimate is sufficient. Our second application of the no-shift vector field is to propagate decay of the low-frequency part $\psi_\flat$ in the interior (see already \cref{sec:interiorlow}). Here, we will take $\rblue = \rblue(M,Q,l)$ only depending on the black hole parameters as determined in \cref{prop:decayinblueshift}.
 	
 	In either case,  we will choose 
 	\begin{align}X = X_{\mathrm{ns}}:= \partial_u + \partial_v\label{eq:xnoshift}\end{align} as our vector field. (Indeed, any future directed and $T$ invariant vector field would work.) We define our twisting function as \begin{align}
 	\label{eq:fnoshift}f_{\mathrm{ns}}(r) = e^{\beta_{\mathrm{ns}} r}
 	\end{align} for some $\beta_{\mathrm{ns}} = \beta_{\mathrm{ns}}(\rblue) >0$ large enough such that 
 	\begin{align}
 	\V = -\frac{\Box_g f_{\mathrm{ns}}}{f_{\mathrm{ns}}} - \frac{\alpha}{l^2}  = {\Omega^2 } \beta_{\mathrm{ns}}^2 + {\beta_{\mathrm{ns}}} \partial_r (\Omega^2) + \frac{2 \beta_{\mathrm{ns}}}{r}  \Omega^2 - \frac{\alpha}{l^2} \gtrsim 1
 	\end{align} uniformly in $[\rblue,\red]$. In particular, since $r\in [\rblue,\red]$ is bounded away from $r_+,r_-$, we have \begin{align}\tilde{J}^X_\mu[\phi] n_{\Sigma_r}^\mu \gtrsim |\tilde{\nabla}_u {\phi}|^2 + |\tilde{\nabla}_v {\phi}|^2 + |\slashed\nabla {\phi}|^2 + |{\phi}|^2\end{align} for a smooth function $\phi$.  
Our main estimate in the no-shift region is
\begin{prop} \label{prop:noshiftprop} Let $\phi$ be a smooth solution to \eqref{eq:wave} and $r_0 \in [\rblue, \red]$. Then for any $v_\ast \geq 1 $  we have
	\begin{align}\label{eq:noshiftestimate}
		\int_{\Sigma_{r_0}(v_\ast, 2v_\ast )} \tilde J_\mu^X[{{\phi}}] n^\mu_{\Sigma_{ r}}\dvol_{\Sigma_{ r}} \lesssim \int_{\Sigma_{\red}(v_{\red }(u_{r_0} (v_\ast )), 2v_\ast )} \tilde J_\mu^X[{{\phi}}] n^\mu_{\Sigma_{ r}}\dvol_{\Sigma_{ r}},
	\end{align}
	where we remark that 
	$v_\ast  - v_{\red}(u_{r_0}(v_\ast ))) = const.$
	\begin{proof} 	
 	We apply the energy identity (spacetime integral of \eqref{eq:twiteddefjx}) with $X  = \partial_u + \partial_v$ (cf.\ \eqref{eq:xnoshift}) and $f_{\mathrm{ns}}$ as in \eqref{eq:fnoshift} in the region $\{ r_0 \leq r\leq  \red \} \cap \{u < u_{\rblue}(v_\ast)\}\cap \{ v \leq 2v_\ast \}$. The choice of $f_{\mathrm{ns}}$ guarantees the twisted dominated energy condition for the twisted energy-momentum tensor. Together with the coarea formula as well as the facts that $[r_\ast(r_0),r_\ast(\red)]$ is compact and $X$ is $T$ invariant, we conclude
 	\begin{align}
 	\int_{\Sigma_{r_0}(v_\ast,2v_\ast )} \tilde J_\mu^X[{\phi}] n^\mu_{\Sigma_{r}} \dvol_{\Sigma_{r}} \leq &B_1 \int_{r_0 \leq \bar r \leq \red} \int_{\Sigma_{\bar r}(v_{\bar r}(u_{r_0}(v_\ast)) , 2v_\ast  )} \tilde J^X_\mu[{\phi}] n^\mu_{\Sigma_{\bar r}} \dvol_{\Sigma_{\bar r}} \d\bar r \nonumber \\ & +  \int_{\Sigma_{{\red}}(v_{\red}(u_{r_0}(v_\ast) ), 2v_\ast )} \tilde J^X_\mu[{\phi}] n^\mu_{\Sigma_\red} \dvol_{\red}
 	\end{align}
 	for a constant $B_1 = B_1(M,Q,l,\alpha,\Sigma_0,\red,\rblue)$.
 	Similarly, after setting \begin{align}E(\tilde v, \tilde r):= \int_{\Sigma_{\tilde r}(\tilde v, 2v_\ast )} \tilde J_\mu^X[{{\phi}}] n^\mu_{\Sigma_{ r}}\dvol_{\Sigma_{ r}} \end{align}
 	for $\tilde r\in [r_0,\red]$, 
 	we also have
 	\begin{align}
 	E(v_{\tilde r}(u_{r_0}(v_\ast )), \tilde r) \leq \tilde B_1 \int_{\tilde r \leq \bar r \leq \red} E(v_{\bar r } (u_{r_0}(v_\ast )), \bar r) \d\bar r + E(v_{\red}(u_{r_0}(v_\ast )),\red)
 	\end{align}
 	for a constant $\tilde B_1 = \tilde B_1 (M,Q,l,\alpha,\Sigma_0)$. 
 	An application of Grönwall's inequality yields
 	\begin{align}\label{eq:ev1}
 	E(v_{\tilde r}(u_{r_0} (v_\ast )), \tilde r)\lesssim E(v_{\red}(u_{r_0} (v_\ast )), \red)
 	\end{align}
 	which implies the result.
 	 	\end{proof}
 	\end{prop}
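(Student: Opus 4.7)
The plan is to apply the divergence identity $\nabla^\mu \tilde J^X_\mu [\phi] = \tilde K^X[\phi]$ in a suitable characteristic rectangle in $\mathcal{B}$ and close the resulting inequality with a coarea--Grönwall argument. The key structural point is that $X = \partial_u + \partial_v$ coincides with $\partial_{r_\ast}$ in $(t,r_\ast)$ coordinates and is therefore $T$-invariant, and the twisting function $f_{\mathrm{ns}}(r) = e^{\beta_{\mathrm{ns}} r}$ depends only on $r$. Consequently the deformation tensor $^X\pi$, the correction $\tilde S$, and the potential $\V$ all depend only on $r$, and the compactness of $[\rblue,\red]$ (bounded away from both horizons) together with the chosen uniform positivity $\V \gtrsim 1$ yields the pointwise bounds $|\tilde K^X[\phi]| \lesssim \tilde J^X_\mu[\phi] n^\mu_{\Sigma_r}$ and $\tilde J^X_\mu[\phi] n^\mu_{\Sigma_r} \gtrsim |\tilde \nabla_u \phi|^2 + |\tilde \nabla_v \phi|^2 + |\slashed \nabla \phi|^2 + |\phi|^2$; in particular the twisted dominant energy condition is satisfied throughout the no-shift region.

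Next I would integrate this divergence identity over the region $\mathcal{Q}(v_\ast) := \{r_0 \leq r \leq \red\} \cap \{u \leq u_{r_0}(v_\ast)\} \cap \{v \leq 2v_\ast\}$, whose boundary consists of the spacelike pieces $\Sigma_{r_0}(v_\ast, 2v_\ast)$ and $\Sigma_{\red}(v_{\red}(u_{r_0}(v_\ast)), 2v_\ast)$ together with two null pieces. Since $X$ is future-directed timelike on $\mathcal{B}$ (as $g(X,X) = -\Omega^2 < 0$) and the twisted DEC holds, the null boundary fluxes are non-negative and can be discarded, while the coarea formula $\dvol \sim \d \bar r \, \dvol_{\Sigma_{\bar r}}$ rewrites the bulk term as an $\bar r$-integral. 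This produces the inequality
\[
E(v_{r_0}(u_{r_0}(v_\ast)), r_0) \lesssim \int_{r_0}^{\red} E(v_{\bar r}(u_{r_0}(v_\ast)), \bar r)\, \d \bar r + E(v_{\red}(u_{r_0}(v_\ast)), \red),
\]
where $E(\tilde v, \tilde r)$ is the energy defined in the statement.

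Repeating the same argument with $r_0$ replaced by arbitrary $\tilde r \in [r_0, \red]$ gives the analogous inequality for $E(v_{\tilde r}(u_{r_0}(v_\ast)), \tilde r)$, and Grönwall's inequality in the variable $\tilde r$ then collapses the integral, yielding $E(v_{r_0}(u_{r_0}(v_\ast)), r_0) \lesssim E(v_{\red}(u_{r_0}(v_\ast)), \red)$; since $v_{r_0}(u_{r_0}(v_\ast)) = v_\ast$ by construction, this is exactly the claimed estimate. The only substantive point to verify is that $\beta_{\mathrm{ns}}$ can indeed be chosen so that $\V \gtrsim 1$ uniformly on $[\rblue, \red]$, which follows from direct computation since on this compact $r$-interval the $\beta_{\mathrm{ns}}^2 \Omega^2$ term dominates the finite contributions from $\partial_r \Omega^2$, $\Omega^2/r$, and $\alpha/l^2$; everything else is routine bookkeeping with Stokes's theorem and Grönwall.
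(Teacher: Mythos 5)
Your proposal is correct and follows essentially the same route as the paper: the energy identity for $\tilde J^X$ with $X=\partial_u+\partial_v$ and $f_{\mathrm{ns}}=e^{\beta_{\mathrm{ns}}r}$ in the characteristic slab, discarding the null fluxes via the twisted dominant energy condition, controlling $|\tilde K^X|$ by the energy density on the compact $r$-interval, and closing with the coarea formula and Grönwall in $r$. The only (immaterial) difference is your choice of the $u$-cutoff $u\leq u_{r_0}(v_\ast)$ versus the paper's slightly larger region; both yield the same Grönwall inequality and the same conclusion.
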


We will use an additional vector field in the interior in the blue-shift region $(r_-, \rblue]$. We will however only define it later in the paper in \cref{sec:blueshift} when we actually use it to propagate estimates for the low-frequency part $\psi_\flat$ all the way to the Cauchy horizon. 
\begin{notation} In the main part of the paper we will makes use of the Fourier transform and convolution associated to the coordinate $t$ in $(t,r,\theta,\varphi)$ coordinates as in \eqref{eq:trcoords}. We denote $\mathcal{F}_T$ as the Fourier transform (and $\mathcal{F}_T^{-1}$ as its inverse) defined as
	 \begin{align}\label{eq:deffouriertransform}
	 \mathcal{F}_T[f](\omega,r,\theta,\varphi) := \frac{1}{\sqrt{2\pi}} \int_{\mathbb{R}} e^{-i\omega t} f(t,r,\theta,\varphi) \d t
	 \end{align} 
	in the coordinates $(t,r,\varphi,\theta)$ of $\mathcal{R}_A, \mathcal{R}_B$ and $\mathcal{B}$, respectively. Here, we assume that $t\mapsto f(t,r,\theta,\varphi)$ is (at least) a tempered distribution and \eqref{eq:deffouriertransform}, in general, is to be understood in the distributional sense. Moreover, the convolution $\ast$ associated to the coordinate $t$ is defined as 
	 \begin{align}\label{eq:defconvolution}
	 (f\ast g) (t,r,\theta,\varphi) :=\int_{\mathbb{R}} f(t-s,r,\theta,\varphi) g(s,r,\theta,\varphi) \d s ,
	 \end{align}
	 where we again assume that $t\mapsto f(t,r,\theta,\varphi)$ is a tempered distribution and $t\mapsto g(t,r,\theta,\varphi)$ is a Schwartz function. Here, \eqref{eq:defconvolution}, in general, is to be understood in the distributional sense.
	 	\end{notation}
\section{Main theorem and frequency decomposition}
\label{sec:mainthm}
Now, we are in the position to state our main result
\begin{theorem}\label{thm:main} Let the Reissner--Nordström--AdS parameters $(M,Q,l)$ and the Klein--Gordon mass $\alpha < \frac 94$ be as in \eqref{eq:parameters}. 
	Let $\psi \in C^\infty(\mathcal{M}_\mathrm{RNAdS} \setminus \mathcal{CH})$ be a solution to \eqref{eq:wave} arising from smooth and compactly supported initial data  $(\psi,\mathcal{T} \psi)\restriction_{\Sigma_0} = (\psi_0,\psi_1) \in C_c^\infty(\Sigma_0) \times C_c^\infty(\Sigma_0)$ on $\Sigma_0$ with Dirichlet (reflecting) boundary conditions imposed at $\mathcal{I}_A$ and $\mathcal{I}_B$ (cf. \cref{prop:wellposedness}). Then, $\psi$ is \textbf{uniformly bounded} in the interior region $\mathcal{B}$ satisfying
	\begin{align}\label{eq:uniformboundedness}
	\sup_{\mathcal{B}} |\psi| & \lesssim D[\psi]^{\frac 12},
\end{align}
where $D[\psi]$ is defined as 	
\begin{align} D[\psi]:= E_1[\psi](0) +\sum_{i,j=1}^3 E_1[\mathcal{W}_i \mathcal{W}_j \psi] (0).\label{eq:dpsi}	
\end{align}
Moreover, $\psi$ extends \textbf{{continuously}} to the Cauchy horizon, i.e.\ $\psi \in C^0(\mathcal{M}_{\mathrm{RNAdS}})$. 
\begin{rmk}
The data term $D[\psi]$ in \eqref{eq:dpsi} can be controlled by the initial data $(\psi_0,\psi_1)$ such that \eqref{eq:uniformboundedness} can be written in terms of initial data as 
\begin{align}\nonumber
	\sup_{\mathcal{B}} |\psi|	\leq  C(M,Q,l,\alpha,\Sigma_0)
	&\Big( \|\psi_0 \|_{H^{1,0}_{\mathrm{RNAdS}(\Sigma_0)}} + \|\psi_1 \|_{H^{0,-2}_{\mathrm{RNAdS}(\Sigma_0)}}   \\ &\;\;\;\;\;\; + \sum_{i,j=1}^3 \|\mathcal{W}_i \mathcal{W}_j \psi_0 \|_{H^{1,0}_{\mathrm{RNAdS}(\Sigma_0)}} +\sum_{i,j=1}^3 \|\mathcal{W}_i \mathcal{W}_j\psi_1 \|_{H^{0,-2}_{\mathrm{RNAdS}(\Sigma_0)}} \Big)
	\end{align}
	for a constant $C(M,Q,l,\alpha,\Sigma_0)$ only depending on the parameters $M,Q,l,\alpha$ and the choice of initial hypersurface $\Sigma_0$.
	\end{rmk}

\end{theorem}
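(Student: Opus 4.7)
My plan is to follow the strategy advertised in the introduction and decompose a solution according to its frequency with respect to $T$. Pick a smooth cutoff $\chi_{\omega_0} \colon \mathbb R\to [0,1]$ with $\chi_{\omega_0}(\omega)=1$ for $|\omega|\leq \omega_0/2$ and $\chi_{\omega_0}(\omega)=0$ for $|\omega|\geq \omega_0$, where $\omega_0 = \omega_0(M,Q,l,\alpha)$ is the frequency threshold above which the scattering theory of \cite{kehle2018scattering} (extended to Reissner--Nordström--AdS) yields uniform bounds on the transmission/reflection coefficients $\mathfrak T,\mathfrak R$. Viewing $\psi$ as a (tempered) distribution in $t$ in the coordinates on $\mathcal R_A, \mathcal R_B, \mathcal B$ (after first extending the data to the past via the construction discussed in Section~\ref{sec:prelim21}), set
\begin{align*}
\psi_\flat := \mathcal F_T^{-1}\bigl[\chi_{\omega_0}\cdot \mathcal F_T[\psi]\bigr],\qquad \psi_\sharp := \psi - \psi_\flat.
\end{align*}
Both pieces satisfy \eqref{eq:wave} with Dirichlet conditions at $\mathcal I$, inherit bounds on the $T$-energy flux through $\mathcal H_A^+\cup \mathcal H_B^+$ from \eqref{eq:boundednenergyflux} applied at the level of $\psi$ and of any angular commutator $\mathcal W_i\mathcal W_j\psi$, and commute with $T$ and the $\mathcal W_i$. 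It then suffices to prove the two estimates separately and add.

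\emph{Low-frequency part.} For $\psi_\flat$ the Fourier support avoids a neighbourhood of the set of trapped frequencies. Combining the integrated energy decay for bounded angular modes from \cite{decaykg} with the fact that $\chi_{\omega_0}$ cuts off the dangerous high frequencies, one expects superpolynomial (indeed faster than any $\langle t^\ast\rangle^{-p}$) decay of every higher-order energy of $\psi_\flat$ in $\mathcal R_A$ and $\mathcal R_B$; this will be proved in Section~\ref{sec:lowfreq} (Proposition~\ref{prop:psiflat}). Given such decay on the exterior, I propagate through the interior in three stages, following \cite{anneboundedness} but in the twisted-derivative framework of \cite{warnick_massive_wave}:
\begin{enumerate}
\item \emph{Red-shift.} The positive bulk of Proposition~\ref{prop:redshift} transports the decay of $\int_{\mathcal H^+}J^T[\psi_\flat]$ into a decaying flux through $\cv{v}(\red,r_+)$, using Proposition~\ref{prop:redshiftprelim} applied dyadically in $v$.
\item \emph{No-shift.} Proposition~\ref{prop:noshiftprop} transports this decay across the compact $r$-interval $[\rblue,\red]$ with at most a constant loss, yielding the same pointwise-in-$v$ (resp.\ $u$) decay on $\Sigma_{\rblue}$.
\item \emph{Blue-shift.} Here the key ingredient is the vector field $S=|u|^p\partial_u+|v|^p\partial_v$ from \eqref{eq:x}, twisted by an appropriate nowhere-vanishing $f$ to ensure $\V\geq 0$ and thus restore the dominant energy condition in the presence of $\alpha<0$. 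The bulk $\tilde K^S$ is absorbed by the boundary flux provided the data on $\Sigma_{\rblue}$ decay faster than any polynomial, which is exactly what steps 1--2 deliver. This gives finite $\tilde J^S$-flux on $\{r=\mathrm{const.}\}$ all the way to $\Ch$.
\end{enumerate}
Commuting the entire chain with $\mathcal W_i\mathcal W_j$ and applying Sobolev on $\mathbb S^2$ upgrades this flux control to a pointwise $C^0$ bound on $\psi_\flat$ up to and including $\Ch$, plus continuity at $\Ch$.

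\emph{High-frequency part.} For $\psi_\sharp$, decay is unavailable, so I abandon vector-field methods in the interior and instead use the scattering theory of \cite{kehle2018scattering}. The Fourier transform in $t$ reduces \eqref{eq:wave} to a family of radial ODEs parametrised by $(\omega,\ell,m)$; for $|\omega|\geq \omega_0/2$ the transmission and reflection coefficients satisfy uniform bounds $|\mathfrak T(\omega,\ell,m)|+|\mathfrak R(\omega,\ell,m)|\leq C$. Representing $\psi_\sharp$ on $\mathcal H^+$ by the trace of its incoming radiation profile, Parseval together with these bounds gives
\begin{align*}
\int_{\Ch} J^{T_\Ch}_\mu[\psi_\sharp]n^\mu\dvol_\Ch \lesssim \int_{\mathcal H^+} J^T_\mu[\psi_\sharp]n^\mu\dvol_{\mathcal H^+},
\end{align*}
where $T_\Ch$ is the Killing generator of the Cauchy horizon. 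Commuting with $\mathcal W_i\mathcal W_j$ (which preserves the ODE structure) and applying Sobolev on $\mathbb S^2$ pointwise on the generators of $\Ch$ converts this $H^1_{\mathrm{ang}}L^2$-flux bound into a uniform pointwise bound $|\psi_\sharp|\lesssim D[\psi]^{1/2}$. The right-hand side flux on $\mathcal H^+$ for $\psi_\sharp$ and its angular commutators is controlled by \eqref{eq:boundednenergyflux} applied to $\psi, \mathcal W_i\psi, \mathcal W_i\mathcal W_j\psi$, since the $L^2$-bounded Fourier multiplier $1-\chi_{\omega_0}$ commutes with $T$ and the $\mathcal W_i$ and is bounded on the $T$-energy space.

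\emph{Continuity at $\Ch$ and the main obstacle.} For $\psi_\flat$ continuity follows from the fact that $\tilde J^S$-flux control gives not only $L^\infty$ but $C^0$ convergence of $\psi_\flat|_{\{r=r_n\}}$ as $r_n\to r_-$ (angular commutation plus Sobolev on $\mathbb S^2$). For $\psi_\sharp$, however, only the trace exists on $\Ch$, and the subtlety is to show that, at each point of $\Ch$, $\psi_\sharp(r,\cdot)$ has a limit as $r\to r_-$. The scattering construction identifies the trace at $\Ch$ via the inverse Fourier transform of $\mathfrak T\cdot\hat\psi_\sharp|_{\mathcal H^+} + \mathfrak R\cdot\hat\psi_\sharp|_{\mathcal H^+}$, and one must argue that the family $\psi_\sharp|_{\{r=r_n\}}$ converges uniformly to this trace --- this is where care is needed to exclude the $\omega=0$ frequency (which is absent by construction in $\psi_\sharp$ thanks to $\omega_0>0$) and where the main technical obstacle lies: controlling the convergence uniformly in $(u,v,\theta,\varphi)$, rather than merely in energy. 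Standard approximation by Schwartz functions in $\omega$, together with the uniform bound on $\mathfrak T,\mathfrak R$ and dominated convergence, should close this step and yield $\psi\in C^0(\mathcal M_{\mathrm{RNAdS}})$.
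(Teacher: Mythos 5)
Your overall plan coincides with the paper's: the same $T$-frequency cutoff decomposition $\psi=\psi_\flat+\psi_\sharp$, superpolynomial exterior decay for $\psi_\flat$ propagated through twisted red-shift, no-shift and $\langle u\rangle^p,\langle v\rangle^p$-weighted blue-shift estimates, and the scattering theory of \cite{kehle2018scattering} for $\psi_\sharp$. However, your high-frequency boundedness step has a genuine gap as stated. You propose to prove a $T_{\Ch}$-energy-flux bound on $\Ch$ from the flux on $\Hp^+$ and then to ``convert this $H^1_{\mathrm{ang}}L^2$-flux bound into a uniform pointwise bound'' by commuting with $\mathcal W_i\mathcal W_j$ and Sobolev on $\mathbb S^2$. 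This conversion fails: a flux bound controls only the $L^2$ norm of the derivative along the (infinite) generators of $\Ch$, and by Cauchy--Schwarz $|\psi_\sharp(v_2)-\psi_\sharp(v_1)|\lesssim (v_2-v_1)^{1/2}\|\partial\psi_\sharp\|_{L^2}$, which degenerates as the affine parameter tends to infinity; moreover it says nothing about points of $\mathcal B$ away from $\Ch$, and there is no anchor point on $\Ch$ from which to integrate. The paper's \cref{prop:highbound} instead bounds $|\psi_\sharp|$ pointwise \emph{directly} from the representation formula of \cref{lem52}: Cauchy--Schwarz in $(\omega,\ell,m)$ against the weight $(1+\ell)^3\omega^2$, Uns\"old's theorem \eqref{eq:unsold}, the uniform bounds \eqref{eq:u1u2bounded} on the interior mode solutions $u_1,u_2$ themselves (not only on $\mathfrak T,\mathfrak R$), and, crucially, the lower bound $|\omega|\geq\omega_0/2$ which makes $\int_{|\omega|\geq\omega_0/2}\omega^{-2}\,\d\omega$ finite; this turns the $\mathcal W_i\mathcal W_j$-commuted $T$-flux on $\Hp^+$ (an $L^2$ quantity, controlled by \eqref{eq:boundednenergyflux}) into an absolutely convergent frequency integral and hence an $L^\infty$ bound throughout $\mathcal B$. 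Your continuity paragraph already implicitly uses this representation, so the repair is to use it for boundedness as well.

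Two smaller points. First, the claim that $D[\psi_\flat],D[\psi_\sharp]\lesssim D[\psi]$ follows because $1-\chi_{\omega_0}$ ``is bounded on the $T$-energy space'' is not immediate: $\Sigma_0$ is not a level set of $t$ and the projection is non-local in $t$, so the paper's Step~2 passes to the auxiliary hypersurface $\tilde\Sigma_0\subset\{t=0\}$, uses that the convolution kernel $\mathcal F_T^{-1}[\chi_{\omega_0}]$ is Schwartz and $T$ is Killing together with the uniform exterior energy boundedness of \cref{prop:basicenergyest}, and then returns to $\Sigma_0$ by Cauchy stability and the vanishing flux through $\mathcal I$. Second, in the blue-shift region the unweighted-in-$r$ vector field $|u|^p\partial_u+|v|^p\partial_v$ is not quite sufficient: the paper's $S_N=r^N(\langle u\rangle^p\partial_u+\langle v\rangle^p\partial_v)$ exploits the $r^N$ factor to produce a coercive bulk term, and the sign-indefinite errors are absorbed by taking $N$ large and $\rblue$ close to $r_-$ (together with integrations by parts), not by the superpolynomial decay of the data on $\Sigma_{\rblue}$ alone (cf.\ \cref{prop:decayinblueshift}). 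Both of these are fixable along the paper's lines, but as written they are gaps rather than routine details.
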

		\begin{rmk}\cref{thm:main} can be extended to a more general class of initial data using standard density arguments. In the context of uniform boundedness and continuity at the Cauchy horizon, it is enough to consider smooth and localized initial data. Nevertheless, note that for more general initial data in appropriate Sobolev spaces, already well-posedness becomes more delicate \cite{wellposedgustav}.
	\end{rmk}
	\begin{proof}[Proof of \cref{thm:main}] We split up the proof in four steps, where \emph{Step~3} and \emph{Step~4} are the main parts relying on \cref{sec:lowfreq} and \cref{sec:highfreq}.
			\paragraph{Step 1: Decomposition into low and high frequencies}
			Let \begin{align}\psi \in C^\infty(\mathcal{M}_\mathrm{RNAdS} \setminus \mathcal{CH})\label{eq:defpsi}\end{align}be as in the assumption of \cref{thm:main}.
Now, in $\mathcal{R}_A$, $\mathcal{R}_B$ and in $\mathcal{B}$, define the low frequency part $\psi_\flat$ and the high frequency part $\psi_\sharp$ as 
		\begin{align}\label{eq:fouriertransforminterior}\psi_\flat  :=\frac{1}{\sqrt{2\pi }} \mathcal{F}_T^{-1} \left[ \chi_{\omega_0}\right]  \ast \psi  \text{ and } \psi_\sharp := \psi - \psi_\flat,\end{align}
	where \begin{align}
		\label{eq:chiomega0}
	\chi_{\omega_0}\in C_c^\infty(\mathbb{R}) \text{ such that } \chi_{\omega_0}(\omega) =0 \text{ for }|\omega|\geq  \omega_0\text{ and }\chi_{\omega_0} (\omega)=1 \text{ for }|\omega| \leq \frac 12 \omega_0.	\end{align}
	From  \cref{prop:welldefofpsiflat} in the appendix we know that the low and high frequency parts $\psi_\flat$ and $\psi_\sharp$ in \eqref{eq:fouriertransforminterior} are well-defined and $\psi_\flat$ and $\psi_\sharp$ extend to smooth solutions of \eqref{eq:wave} on $\mathcal{M}_{\mathrm{RNAdS}} \setminus \mathcal{CH}$. 
	  The cut-off frequency  $\omega_0 = \omega_0(M,Q,l,\alpha)>0$ will be chosen in the proof of \cref{prop:propimprove} only depending on $M,Q,l,\alpha$. For convenience we can also assume that $\chi_{\omega_0}$ is a symmetric function which implies that $\psi_\flat$ and $\psi_\sharp$ will be real-valued as long as $\psi$ was real valued. 
		  This concludes Step~1.
		 
		 Having decomposed the solution in low and high frequency parts $\psi_\flat$ and $\psi_\sharp$, we shall now see how the initial data $D[\psi_\flat]$ and $D[\psi_\sharp]$, respectively, can be bounded by the initial data $D[\psi]$ of $\psi$. 
			\paragraph{Step 2: Estimating the initial data of the decomposed solution}
			This step is the content of the following proposition.
			\begin{prop}
Let $\psi$ be as in \eqref{eq:defpsi} and $\psi_\flat,\psi_\sharp$ be as in \eqref{eq:fouriertransforminterior} and recall the definition of $D[\cdot]$ from \eqref{eq:dpsi}. Then,\begin{align}
	D[\psi_\flat]\lesssim D[\psi] \text{ and } D[\psi_\sharp] \lesssim D[\psi].
\end{align}
\begin{proof}
		Since $\psi =  \psi_\flat + \psi_\sharp $, it suffices to obtain a bound of the type $D[\psi_\flat] \lesssim D[\psi]$, where $D[\cdot]$ is defined in \eqref{eq:dpsi}. 
	Because of the Dirichlet conditions imposed at infinity, the energy fluxes through $\mathcal{I}_A$ and $\mathcal{I}_B$ vanish (see \eqref{eq:energyidnofluxthroughI}), and we estimate \begin{align*}D[\psi_\flat] \lesssim \tilde{D}[\psi_\flat],\end{align*} where $\tilde{D}[\psi_\flat]$ is a higher order energy on the hypersurface \begin{align*}\tilde{\Sigma}_0 := \left( \mathcal{R}_A \cap \{t_{\mathcal{R}_A}=0\} \right)  \cup \mathcal{B}_- \cup \left( \mathcal{R}_B \cap \{t_{\mathcal{R}_B}=0\} \right) \end{align*} to be made precise in the following.
		Note also that the normal vector field on $\mathcal{R}_A\cap \tilde \Sigma_0$ is $n_{\tilde \Sigma_0} = \frac{r}{\sqrt \Delta}\partial_t$.
		
		More precisely, due to the support properties of the initial data, there exists a relatively compact 3-dimensional spherically symmetric submanifold $K \subset  \tilde{\Sigma}_0$ with $\mathcal{B}_-\subset K$\footnote{We introduce $K$ just for a technical reason: The energy density $e_1[\cdot]$ defined on $\tilde \Sigma_0\cap \mathcal{R}_A$ degenerates at the bifurcation sphere $\mathcal{B}_-$.} and such that 
		\begin{align}\nonumber
		D[\psi_\flat] \lesssim \tilde D[\psi_\flat] :=& \| \psi_\flat \|_{H^1(K)}^2 + \| n_{\tilde{\Sigma}_0}\psi_\flat \|^2_{L^2(K)} + \sum_{i,j=1}^3 \| \mathcal{W}_i \mathcal{W}_j \psi_\flat \|_{H^1(K)}^2 + \sum_{i,j=1}^3 \| \mathcal{W}_i \mathcal{W}_j n_{\tilde\Sigma_0} \psi_\flat \|_{L^2(K)}^2 \\ \nonumber &+  \int_{\tilde \Sigma_0 \cap \mathcal{R}_A \setminus K}  \left(   e_1[\psi_\flat ]  + \sum_{i,j=1}^3  e_1[\mathcal{W}_i \mathcal{W}_j \psi_\flat ] \right) r^2 \sin\theta \d r \d \theta \d \varphi  \\ &+ \int_{\tilde \Sigma_0 \cap \mathcal{R}_B\setminus K} \left( e_1[\psi_\flat ] + \sum_{i,j=1}^3  e_1[\mathcal{W}_i \mathcal{W}_j \psi_\flat ]\right) r^2 \sin\theta \d r \d \theta \d \varphi. \label{eq:initialdataest}
		\end{align}
		Estimate \eqref{eq:initialdataest} follows from general theory \cite{lecture_notes}, that is a (higher order) energy estimate followed by an application of Grönwall's lemma.
		In order to estimate the energy on the compact hypersurface $K$ we decompose $K$ in $K\cap \mathcal{R}_A $ and $K \cap \mathcal{R}_B$ and estimate the energy on each of those slices independently. Again, in view of the fact that $\mathcal{R}_A$ and $\mathcal{R}_B$ can be treated analogously, we only show the estimate in $\mathcal{R}_A$. 
		Note that all the terms of 
		\begin{align*}
		\| \psi_\flat \|_{H^1(K\cap \mathcal{R}_A)}^2 &+ \| n_{\tilde{\Sigma}_0}\psi_\flat \|^2_{L^2(K\cap\mathcal{R}_A)}+ \sum_{i,j=1}^3	\| \mathcal{W}_i \mathcal{W}_j  \psi_\flat \|_{H^1(K\cap \mathcal{R}_A)}^2 + \sum_{i,j=1}^3 \| \mathcal{W}_i \mathcal{W}_j n_{\tilde{\Sigma}_0}\psi_\flat \|^2_{L^2(K\cap\mathcal{R}_A)} \\ & + \int_{\tilde \Sigma_0 \cap \mathcal{R}_A} \left( e_1[\psi_\flat ]  + \sum_{i,j=1}^3  e_1[\mathcal{W}_i \mathcal{W}_j \psi_\flat ]\right) r^2 \sin\theta \d r \d \theta \d \varphi
		\end{align*}
		are of the form
\begin{align*}
	\int_{\{t=0\}\cap \mathcal{R}_A} f  |\partial^k \psi_\flat|^2 \sin\theta \d r  \d\theta \d\varphi
\end{align*}
for appropriate $T$ invariant weight functions $f\geq 0$ and $T$ invariant coordinate derivatives $\partial \in \{ \partial_t, \partial_r , \partial_\theta,\partial_\varphi \}$ of order $k=0,1,2,3$.  Using that 
\begin{align*}
	\psi_\flat = \frac{1}{\sqrt{2\pi }} \mathcal{F}_T^{-1} \left[ \chi_{\omega_0}\right]  \ast \psi,
\end{align*} 
where $ \mathcal{F}_T^{-1} \left[ {\chi_{\omega_0}}\right]=: \eta$ is a fixed Schwartz function, we conclude---again since $T$ is Killing---that
\begin{align*} 	\int_{\{t=0\}\cap \mathcal{R}_A} f(r)  |\partial^k \psi_\flat|^2(0,r,\varphi,\theta)  \d r \d\sigma_{\mathbb{S}^2}  &= 	\int_{\{r\geq r_+\} \times \mathbb{S}^2 } f(r)  |\eta \ast \partial^k \psi|^2(0,r,\varphi,\theta)  \d r \d\sigma_{\mathbb{S}^2}  \\ & = 
\int_{\{r\geq r_+\} \times \mathbb{S}^2 }  f(r)  \left| \int_{\mathbb{R}} \eta(-s) \partial^k \psi (s,r,\varphi,\theta) \d{s} \right|^2 \d{r} \d\sigma_{\mathbb{S}^2} \\ & \leq 
 \int_{\mathbb{R}} |\eta(s)|\d{s}  \int_{\mathbb{R}}  |\eta(-s)| \int_{\{r\geq r_+\} \times \mathbb{S}^2 }  f(r)   |\partial^k \psi (s,r,\varphi,\theta) |^2   \d{r} \d\sigma_{\mathbb{S}^2} \d{s} \\ & \lesssim  \sup_{s\in\mathbb{R}} \int_{\mathbb{R}}  f(r)|\partial^k \psi (s,r,\varphi,\theta)|^2 \d\sigma_{\mathbb{S}^2} \\ &\lesssim \int_{\{t=0\}\cap\mathcal{R}_A} f(r) |\partial^k \psi |^2(0,r,\varphi,\theta)  \d r\d\sigma_{\mathbb{S}^2}  \lesssim \tilde D[\psi], 
\end{align*}
where we have used boundedness of higher order energies in the exterior which are proved in \cite{Gustav_early,decaykg} and restated in \cref{prop:basicenergyest}. Also note that we can interchange the derivatives with the convolution since $T$ is a Killing vector field. Thus, we conclude that $\tilde D[\psi_\flat] \lesssim \tilde D[\psi]$ and again by Cauchy stability and the vanishing of the energy flux at $\mathcal{I}$ (see \eqref{eq:energyidnofluxthroughI}), we can bound $ \tilde D[\psi] \lesssim D[\psi]$ which finally shows $D[\psi_\flat] \lesssim
D[\psi]$. Hence, $D[\psi_\sharp] \lesssim D[\psi]$ also holds true.			\end{proof}\end{prop}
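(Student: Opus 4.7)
The plan is to reduce everything to proving $D[\psi_\flat] \lesssim D[\psi]$, from which $D[\psi_\sharp] \lesssim D[\psi]$ follows by linearity since $\psi_\sharp = \psi - \psi_\flat$. The decisive feature I would exploit is that $\psi_\flat = \eta \ast \psi$, where $\eta := \frac{1}{\sqrt{2\pi}} \mathcal{F}_T^{-1}[\chi_{\omega_0}]$ is a fixed Schwartz function in the $t$ variable, with the convolution taken along the Killing direction generated by $T = \partial_t$.

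First I would move off $\Sigma_0$ onto the $T$-adapted hypersurface $\tilde{\Sigma}_0 := (\mathcal{R}_A \cap \{t_{\mathcal{R}_A}=0\}) \cup \mathcal{B}_- \cup (\mathcal{R}_B \cap \{t_{\mathcal{R}_B}=0\})$, so that the convolution is expressed as an integral in the adapted $t$ coordinate. Because the Dirichlet condition forces the $T$-energy flux through $\mathcal{I}$ to vanish by \eqref{eq:energyidnofluxthroughI}, a standard $T$-energy identity together with Grönwall yields $D[\psi_\flat] \lesssim \tilde D[\psi_\flat]$, where $\tilde D$ is the structurally analogous higher order energy built on $\tilde\Sigma_0$. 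Since the density $e_1$ degenerates at the bifurcation sphere, $\tilde D$ must be split into a full Sobolev norm on some compact neighborhood $K \supset \mathcal{B}_-$ and the $e_1$ plus $\mathcal{W}_i\mathcal{W}_j$-commuted $e_1$ densities on $\tilde\Sigma_0\setminus K$.

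Each resulting term has the schematic form $\int f(r)|\partial^k \psi_\flat|^2(0,r,\theta,\varphi)\,\d r\, \d \sigma_{\mathbb{S}^2}$ with a $T$-invariant weight $f$ and coordinate derivatives $\partial \in \{\partial_t,\partial_r,\partial_\theta,\partial_\varphi\}$ that all commute with $T$. Consequently $\partial^k \psi_\flat = \eta \ast \partial^k \psi$, and by Cauchy-Schwarz in the convolution variable together with the $L^1$ bound on the Schwartz function $\eta$, each such integral is controlled by $\|\eta\|_{L^1}^2 \sup_{s \in \mathbb{R}} \int f(r)|\partial^k \psi|^2(s,r,\theta,\varphi)\, \d r\, \d \sigma_{\mathbb{S}^2}$.

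Finally, the uniform-in-$t$ boundedness of higher order exterior energies recorded in \cref{prop:basicenergyest}, applied both to $\psi$ and to its commutators with the Killing fields $\mathcal{W}_i \mathcal{W}_j$, bounds the supremum above by the initial data of $\psi$ on $\tilde\Sigma_0$; one more application of Cauchy stability together with the vanishing of flux through $\mathcal{I}$ transfers this back to $D[\psi]$ on $\Sigma_0$. The main bookkeeping obstacle I anticipate is making sure at every step that the convolution genuinely commutes with every derivative appearing in $D[\cdot]$, which relies essentially on $T$-invariance of both the weights and the derivatives in play; this invariance fails only on the compact piece $K$ near $\mathcal{B}_-$, which is precisely why $K$ must be handled separately with an unweighted Sobolev norm whose control is supplied by Cauchy stability applied directly to $\psi$.
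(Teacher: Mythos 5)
Your proposal reproduces the paper's argument essentially step for step: reduce to $D[\psi_\flat]\lesssim D[\psi]$ by linearity, pass from $\Sigma_0$ to the $T$-adapted slice $\tilde\Sigma_0$ using the vanishing of the flux through $\mathcal{I}$ in \eqref{eq:energyidnofluxthroughI} together with Cauchy stability/Gr\"onwall, isolate a compact spherically symmetric piece $K\supset\mathcal{B}_-$, observe that the remaining terms are of the form $\int f(r)\,|\partial^k\psi_\flat|^2$ with $T$-invariant weights and $T$-invariant coordinate derivatives, commute $\partial^k$ through the convolution with the fixed Schwartz kernel $\eta=\mathcal{F}_T^{-1}[\chi_{\omega_0}]$, apply Cauchy--Schwarz/Young in the convolution variable, and invoke the uniform boundedness of the (angularly commuted) exterior energies from \cref{prop:basicenergyest} before transferring back to $\Sigma_0$. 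This is exactly the paper's route.

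The one point where your write-up, as stated, would not go through is the final remark about $K$. You claim that $T$-invariance ``fails'' on $K$ and that the unweighted Sobolev norm of $\psi_\flat$ there is ``supplied by Cauchy stability applied directly to $\psi$''. Cauchy stability relates a solution to its \emph{own} data on a nearby hypersurface; it cannot by itself convert control of $\psi$ into control of $\psi_\flat$ on $K$, because $\psi_\flat$ at a point depends on $\psi$ along the entire $T$-orbit through that point, not on data in a compact spacetime region. Moreover, $K$ is introduced in the paper not because $T$-invariance breaks down there but because the density $e_1$ degenerates at $\mathcal{B}_-$; after splitting $K$ into $K\cap\mathcal{R}_A$ and $K\cap\mathcal{R}_B$, the $H^1(K\cap\mathcal{R}_A)$-terms and the $\| n_{\tilde\Sigma_0}\cdot\|_{L^2(K\cap\mathcal{R}_A)}$-terms, written in the $(t,r,\theta,\varphi)$ coordinates on $\{t_{\mathcal{R}_A}=0\}$, are again of the form $\int f(r)\,|\partial^k\psi_\flat|^2\,\d r\,\d\sigma_{\mathbb{S}^2}$ with nonnegative weights depending on $r$ alone (e.g.\ coming from $n_{\tilde\Sigma_0}=\frac{r}{\sqrt{\Delta}}\partial_t$), hence $T$-invariant. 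So the very same convolution estimate you set up applies verbatim on $K$, with the supremum in $s$ again bounded by the exterior energy boundedness of \cref{prop:basicenergyest} applied to $\psi$ and its $\mathcal{W}_i\mathcal{W}_j$-commutations; with that correction your argument coincides with the paper's proof.
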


The previous analysis in Step~1 and Step~2 allows us to treat the low and high frequency parts $\psi_\flat$ and $\psi_\sharp$ completely independently. 
	\paragraph{Step 3: Uniform boundedness for \texorpdfstring{$\psi_\flat$}{psiflat} and \texorpdfstring{$\psi_\sharp$}{psisharp}}
	This step is at the heart of the paper and will be proved in \cref{sec:lowfreq} and \cref{sec:highfreq}. According to \cref{prop:lowbound} and \cref{prop:highbound},
	\begin{align}
		\sup_{\mathcal{B}} |\psi_\flat|^2 \lesssim D[\psi_\flat]
	\end{align}
	and \begin{align}
			\sup_{\mathcal{B}} |\psi_\sharp|^2 \lesssim D[\psi_\sharp].
	\end{align}
	Thus, in view of \emph{Step~2}, we conclude
		\begin{align}
		\sup_{\mathcal{B}} |\psi|^2   \lesssim \sup_{\mathcal{B}} |\psi_\flat|^2  +\sup_{\mathcal{B}} | \psi_\sharp|^2 \lesssim D[\psi_\flat] + D[\psi_\sharp]\lesssim D[\psi]
		\end{align}
		which shows \eqref{eq:uniformboundedness}. 
		
		\paragraph{Step 4: Continuous extendibility beyond the Cauchy horizon} Again, this is proved \cref{sec:lowfreq} and \cref{sec:highfreq}. In particular, in \cref{continuityflat} and \cref{continuitysharp} it is proved that $\psi_\flat$ and $\psi_\sharp$, respectively, are  continuously extendible beyond the Cauchy horizon. Thus, $\psi = \psi_\flat + \psi_\sharp$ can be continuously extended beyond the Cauchy horizon which concludes the proof.
	\end{proof}
\section{Low frequency part \texorpdfstring{$\psi_\flat$}{psiflat}}
\label{sec:lowfreq}
We will begin this section by showing that $\psi_\flat$ decays superpolynomially in the exterior regions $\mathcal{R}_A$ and $\mathcal{R}_B$ (\cref{sec:exteriorlow}). This strong decay in the exterior regions then leads to uniform boundedness of $\psi_\flat$ in the interior $\mathcal{B}$ and continuous extendibility of $\psi_\flat$ beyond the Cauchy horizon. This will be shown in \cref{sec:interiorlow}. In the following, it suffices to only consider $\mathcal{R}_A$ because the region $\mathcal{R}_B$ can be treated completely analogously. 
\subsection{Exterior estimates}\label{sec:exteriorlow}
We will now consider $\psi_\flat$ in the exterior region $\mathcal{R}_A$ and show an integrated energy decay estimate which will eventually lead to the superpolynomial decay for $\psi_\flat$.  First, however, we review the separation of variables for solutions to \eqref{eq:wave}.
\begin{definition} Let $\phi \in CH_{\mathrm{RNAdS}}^2$  be a solution to \eqref{eq:wave} satisfying \begin{align}\label{eq:l1int} \sum_{0\leq i,j\leq 2} \int_{\mathbb R} | \partial_t^i \partial_r^j \langle \phi, Y_{\ell m} \rangle_{\mathbb{S}^2}(r,t) | \d t < \infty\end{align} for $r\in (r_-,r_+)$, $r\in (r_+,\infty)$ and every $|m|\leq \ell$. In the regions $\mathcal{B}$ and $\mathcal{R}_A$, respectively, set 
\begin{align}\label{eq:definitionu}
	u[\phi](r,\omega,\ell,m) := \frac{r}{\sqrt{2\pi}}\int_{\mathbb R} e^{-i\omega t} \langle {\phi}, Y_{\ell m} \rangle_{L^2(\mathbb S^2)} \d t,
\end{align}
where $(Y_{\ell m})_{|m|\leq \ell}$ are the standard spherical harmonics. 
\label{defn:u}
\end{definition}
\begin{prop}\label{rmk:welldefnofu}
	Let $\psi$ be as in \eqref{eq:defpsi} and $\psi_\flat$, $\psi_\sharp$ be as in \eqref{eq:fouriertransforminterior}.
Then, $u[\psi](r,\omega,\ell,m)$, $u[\psi_\flat](r,\omega,\ell,m)$ and $u[\psi_\sharp](r,\omega,\ell,m)$ as in \cref{defn:u} are well-defined and smooth functions of $r,\omega$ in $\mathcal{R}_A$ and $\mathcal{B}$.
\begin{proof} First, note that $\psi^{\ell m} := \langle \psi, Y_{\ell m}\rangle Y_{\ell m}$ is a solution to \eqref{eq:wave}, supported on the fixed angular parameter tuple $(\ell,m)$. Thus, in view of \cref{prop:basicenergyest} and \cref{prop:decayofpsil}, $\psi^{\ell m}(t,r,\theta,\varphi)$ and all its derivatives decay exponentially in $t$ in $\mathcal{R}_A$ and in $\mathcal{B}$ on any $\{ r=const. \}$ slice.
\end{proof}
\end{prop}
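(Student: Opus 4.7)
The plan is to reduce everything to showing that each spherical-harmonic coefficient $\langle \phi, Y_{\ell m}\rangle_{L^2(\mathbb{S}^2)}(t,r)$, together with its first two $t$- and $r$-derivatives, decays (at least) exponentially in $t$ on every $\{r = \mathrm{const.}\}$ slice inside $\mathcal{R}_A$ and strictly inside $\mathcal{B}$, where $\phi$ is any of $\psi, \psi_\flat, \psi_\sharp$. Once the integrability condition \eqref{eq:l1int} is in hand, the integral defining $u[\phi]$ in \eqref{eq:definitionu} converges absolutely; smoothness in $r$ and $\omega$ then follows from differentiation under the integral sign, applied to $t^k r^j$ times the integrand, whose integrability is again furnished by exponential decay of higher derivatives.

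I would first treat $\phi = \psi$. Since the operator $\Box_g - \alpha/l^2$ on RN--AdS is spherically symmetric and commutes with each Killing generator $\mathcal{W}_i$, it commutes with the $L^2(\mathbb{S}^2)$-projection onto a fixed spherical harmonic, so $\psi^{\ell m} := \langle \psi, Y_{\ell m}\rangle_{L^2(\mathbb{S}^2)}\, Y_{\ell m}$ is itself a smooth solution of \eqref{eq:wave} arising from smooth, compactly supported data and supported on the single tuple $(\ell,m)$. In the exterior region $\mathcal{R}_A$ it therefore enjoys the exponential energy decay \eqref{eq:expdecay} of \cref{prop:basicenergyest}; commuting with $T$ and $\mathcal{W}_i$, followed by an elliptic/Sobolev argument as in \cref{rmk:rmkdecay}, upgrades this to pointwise exponential decay in $t^\ast$ of $\psi^{\ell m}$ and all of its coordinate derivatives on every hypersurface $\{r = r_0\} \subset \mathcal{R}_A$. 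In the interior $\mathcal{B}$, the analogous pointwise exponential decay on $\{r = \mathrm{const.}\}$ slices is exactly what is asserted in the referenced \cref{prop:decayofpsil}; its proof propagates the exterior decay through the event horizon via the twisted red-shift estimate \cref{prop:redshiftprelim}, then through the no-shift region via \cref{prop:noshiftprop}, and finally through a blue-shift region by the vector-field/twisting construction already flagged after \cref{prop:noshiftprop}, with $\rblue$ taken arbitrarily close to $r_-$ for each given target slice.

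For the decomposed pieces $\psi_\flat$ and $\psi_\sharp$ the key observation is that the $t$-convolution with $\eta := \mathcal{F}_T^{-1}[\chi_{\omega_0}]$ and the $L^2(\mathbb{S}^2)$-projection commute, because $\eta$ depends only on $t$ while $Y_{\ell m}$ depends only on $(\theta,\varphi)$, and similarly $\partial_t^i$ and $\partial_r^j$ commute with both operations. Hence
\begin{align*}
\partial_t^i \partial_r^j \langle \psi_\flat, Y_{\ell m}\rangle_{L^2(\mathbb{S}^2)}(t,r) = \tfrac{1}{\sqrt{2\pi}}\,\eta \ast \bigl(\partial_t^i \partial_r^j \langle \psi, Y_{\ell m}\rangle_{L^2(\mathbb{S}^2)}\bigr)(t,r).
\end{align*}
Since $\eta \in \mathcal{S}(\mathbb{R}) \subset L^1(\mathbb{R})$ and the factor being convolved is, by the previous step, in $L^1(\mathbb{R}_t)$ for each fixed $r$, Young's inequality gives \eqref{eq:l1int} for $\psi_\flat$; the claim for $\psi_\sharp = \psi - \psi_\flat$ then follows by linearity, and the same argument furnishes any number of $t$-weights or additional derivatives needed to establish joint smoothness in $(r,\omega)$.

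The main obstacle is the interior exponential decay for fixed angular frequency, where the no-shift estimate \eqref{eq:noshiftestimate} degenerates as $\rblue \to r_-$. The resolution, which makes the statement well-tailored to present needs, is that one only requires integrability on a \emph{given} $\{r = \mathrm{const.}\} \subset \mathcal{B}$, so for each such slice one may pick $\rblue$ slightly inside $r_-$ relative to it and absorb the resulting $\ell$-dependent loss into the (still positive) exponential rate inherited from the exterior. All remaining ingredients — commuting Killing fields and the Schwartz kernel with projections, Young's inequality, and differentiation under the integral sign — are soft.
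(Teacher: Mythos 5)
Your proposal is correct and follows essentially the same route as the paper: project onto a fixed spherical harmonic, note that $\psi^{\ell m}$ is a fixed-angular-momentum solution, and invoke the exterior exponential decay of \cref{prop:basicenergyest} together with the interior decay of \cref{prop:decayofpsil} to obtain the integrability needed for \eqref{eq:definitionu}, with smoothness in $(r,\omega)$ by differentiation under the integral. Your additional observation that the $t$-convolution with the Schwartz kernel $\mathcal{F}_T^{-1}[\chi_{\omega_0}]$ commutes with the spherical-harmonic projection, so that Young's inequality transfers the $L^1_t$ bounds to $\psi_\flat$ and $\psi_\sharp$, is exactly the (implicit) step the paper leaves to the reader.
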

\begin{prop}
Let $\phi \in CH_{\mathrm{RNAdS}}^2$ be a $C^2$-solution to \eqref{eq:wave} satisfying \eqref{eq:l1int}. Let $u[\phi]$ be defined as in \eqref{eq:definitionu}.  Then, $u[\phi]$ solves the \textbf{radial o.d.e.} (in $\mathcal{B}$ and $\mathcal{R}_A$)
\begin{align}\label{eq:radialode}
-u^{\prime \prime} + (V_\ell-\omega^2 ) u = 0,
\end{align}
where ${}^\prime = \frac{\d }{\d r_\ast}$, 
\begin{align}
V_\ell(r) = h \left(\frac{\frac{\d h}{\d r}}{r} + \frac{\ell(\ell+1)}{r^2} - \frac{\alpha}{l^2} \right)
\end{align}
and
\begin{align}
h = \frac{\Delta}{r^2} = 1- \frac{2M}{r} + \frac{r^2}{l^2} + \frac{Q^2}{r^2}.
\end{align}
Moreover, in the exterior region $\mathcal{R}_A$ we have $\lim_{r\to\infty} |r^{\frac 12} u[\phi]| = 0$, $ \lim_{r\to\infty} |r^{- \frac 12} u[\phi]^\prime |= 0$.
Finally, note that
\begin{align}
\frac{\d V_\ell}{\d r} = \frac{\d h}{\d r} \left(\frac{\frac{\d h}{\d r}}{r} + \frac{\ell(\ell+1)}{r^2} - \frac{\alpha}{l^2}\right) + h  \left(-\frac{\frac{\d h}{\d r}}{r^2}  + \frac{\frac{d^2 h}{\d r^2}}{r}  - \frac{2\ell(\ell+1)}{r^3}\right).
\end{align}
\begin{proof}The fact that $u[\phi]$ solves the radial o.d.e.\ is a direct computation. For the decay statement as $r\to\infty$, note that $u[\phi](r,\omega, \ell,m) = u[\phi_{\ell m}](r,\omega,\ell,m)$, where $\phi_{\ell m}:=  \langle \phi, Y_{\ell m}\rangle_{\mathbb S^2} Y_{\ell m}$.  In particular, \eqref{eq:expdecay} (together with \cref{rmk:rmkdecay}) then implies    $\int_{-\infty}^{\infty} \left( \int_{r_+}^\infty r^2 |\langle \phi, Y_{\ell m}\rangle_{\mathbb S^2}|^2 \d r\right)^{\frac 12} \d t <\infty.$
	Thus, 
	\begin{align} 
	\left(	\int_{r_+}^\infty  |u[\phi]|^2 
	 \d r \right)^{\frac 12} \lesssim \left(	\int_{r_+}^\infty   \left(\int_{-\infty}^{\infty} r^2 |\langle \phi, Y_{\ell m}\rangle_{\mathbb S^2}| \d t\right)^2  \d r
	  \right)^{\frac 12} 
	\leq   \int_{-\infty}^{\infty} \left( \int_{r_+}^\infty r^2 |\langle \phi, Y_{\ell m}\rangle_{\mathbb S^2}|^2 \d r\right)^{\frac 12} \d t
	< \infty.\label{eq:usquare}
	\end{align}
Since $u[\phi]$ solves \eqref{eq:radialode}, analyzing the indicial equation at the regular singularity $r=\infty$ (see~\cite[Section~2.2.2]{dold}), shows that  $ | r^{\frac 12} u[\phi]| = O(r^{-\sqrt{\frac{9}{4} - \alpha}}) $ and $ |r^{-\frac 12} u[\phi]^\prime  |= O(r^{-\sqrt{\frac{9}{4} - \alpha}}) $ as $r\to\infty$ in order to satisfy \eqref{eq:usquare}.\footnote{The integrability condition \eqref{eq:usquare} corresponds to the Dirichlet boundary condition at infinity on the level of the o.d.e.}
\end{proof}
\end{prop}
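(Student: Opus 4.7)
The plan has three components: (i) derive the radial ODE by separation of variables plus Fourier transform; (ii) establish the claimed decay at infinity using the known exponential decay of fixed angular frequency modes combined with a Frobenius analysis at the regular singular point $r=\infty$; and (iii) verify the $dV_\ell/dr$ formula by direct differentiation.

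For (i), in the $(t,r,\theta,\varphi)$ chart one has $\sqrt{-g}=r^2\sin\theta$ and
$$\Box_g\phi=-\frac{r^2}{\Delta}\partial_t^2\phi+\frac{1}{r^2}\partial_r(\Delta\,\partial_r\phi)+\frac{1}{r^2}\slashed\Delta\phi.$$
Testing against $Y_{\ell m}$ converts $\slashed\Delta/r^2$ to $-\ell(\ell+1)/r^2$. Substituting $\langle\phi,Y_{\ell m}\rangle_{\mathbb{S}^2}=u_{\ell m}(t,r)/r$ and converting $\partial_r=h^{-1}\partial_{r_\ast}$ (with $h=\Delta/r^2$) reduces the equation, after multiplying by $rh$, to the wave-like form $-\partial_t^2 u_{\ell m}+u_{\ell m}''=V_\ell\,u_{\ell m}$ in physical space, with the extra $h\,h_r/r$ piece of $V_\ell$ arising precisely from the rescaling $\phi\mapsto r\phi=u_{\ell m}$. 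Fourier transforming in $t$ is then permitted by \eqref{eq:l1int}, which provides the $L^1$-integrability needed to commute the transform with the two $r$-derivatives and to convert $-\partial_t^2$ into $\omega^2$. This delivers \eqref{eq:radialode}.

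For (ii), the first step is to show $L^2_r$-integrability of $u[\phi](\cdot,\omega,\ell,m)$. By \eqref{eq:expdecay} together with \cref{rmk:rmkdecay}, applied to the projected solution $\phi_{\ell m}=\langle\phi,Y_{\ell m}\rangle_{\mathbb{S}^2}Y_{\ell m}$, which is itself a solution of \eqref{eq:wave}, the weighted $L^2_r$-norm of $\langle\phi,Y_{\ell m}\rangle_{\mathbb{S}^2}(t,\cdot)$ decays exponentially in $t$; Minkowski's integral inequality then allows one to exchange $L^1_t$ and $L^2_r$, producing $u(\cdot,\omega,\ell,m)\in L^2([r_+,\infty),dr)$. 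The second step is a Frobenius analysis at the singular point $r=\infty$ of \eqref{eq:radialode} (which becomes regular after passing to $\rho=1/r$): the leading behavior $V_\ell(r)\sim -(\alpha/l^4)r^2$ produces an indicial equation whose roots are $s_\pm=\tfrac{1}{2}\pm\sqrt{\tfrac{9}{4}-\alpha}$, yielding two linearly independent solutions with asymptotics $r^{s_\pm}$. The $L^2_r$-integrability rules out the growing branch, forcing $u\sim r^{s_-}$, which gives $|r^{1/2}u|=O(r^{-\sqrt{9/4-\alpha}})\to 0$ and, by differentiating the asymptotic expansion, $|r^{-1/2}u'|=O(r^{-\sqrt{9/4-\alpha}})\to 0$.

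Step (iii) is an immediate product-rule calculation applied to $V_\ell=h\bigl(h_r/r+\ell(\ell+1)/r^2-\alpha/l^2\bigr)$. The main obstacle I anticipate is the Frobenius step: one must either carry out the asymptotic expansion carefully, verifying that the lower-order corrections from $\ell(\ell+1)/r^2$ and $-\omega^2$ do not alter the leading-order indicial exponents, or invoke a standard reference such as \cite[Section~2.2.2]{dold} where essentially the same analysis has been performed. The Breitenlohner--Freedman bound $\alpha<\tfrac{9}{4}$ is what guarantees real indicial roots, so the two branches are algebraically distinguished and the $L^2$-condition unambiguously singles out the decaying one; this is the o.d.e.-level manifestation of the Dirichlet boundary condition at $\mathcal{I}_A$.
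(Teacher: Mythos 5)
Your outline follows the paper's proof essentially verbatim: the o.d.e.\ is a direct computation after projecting onto $Y_{\ell m}$ and rescaling by $r$; the $L^2_r$-integrability of $u[\phi]$ comes from the exponential decay \eqref{eq:expdecay} of the fixed-$\ell$ solution $\phi_{\ell m}$ (via \cref{rmk:rmkdecay}) plus Minkowski's integral inequality, exactly as in \eqref{eq:usquare}; and the pointwise asymptotics follow from the indicial analysis at the regular singular point $r=\infty$, for which the paper likewise cites \cite[Section~2.2.2]{dold}. So the strategy is sound and matches the paper.

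However, the Frobenius step as you wrote it would fail. The leading behavior of the potential is \emph{not} $V_\ell\sim -(\alpha/l^4)r^2$: since $h\sim r^2/l^2$ and $\frac{\d h}{\d r}\sim 2r/l^2$, the term $h\,\frac{\d h/\d r}{r}\sim 2r^2/l^4$ enters at the same order, so $V_\ell\sim(2-\alpha)\,r^2/l^4$. With your stated leading term the indicial equation (in the variable $x=r_\ast^\infty-r_\ast\sim l^2/r$, where $V_\ell\sim(2-\alpha)/x^2$) would read $s(s-1)=-\alpha$ and give roots $\tfrac12\pm\sqrt{\tfrac14-\alpha}$, which are complex for $\alpha>\tfrac14$ and contradict both the exponents you quote and your own remark that $\alpha<\tfrac94$ guarantees real roots; the correct equation $s(s-1)=2-\alpha$ yields $s_\pm=\tfrac12\pm\sqrt{\tfrac94-\alpha}$. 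Relatedly, your branch bookkeeping is inconsistent: the two solutions behave like $x^{s_\pm}\sim r^{-s_\pm}$ (the exponents live in $\rho=1/r$, not $r$), and the $L^2_r$ condition \eqref{eq:usquare} selects the \emph{larger} root, $u\sim r^{-s_+}=r^{-\frac12-\sqrt{9/4-\alpha}}$; as written ("$u\sim r^{s_-}$") the claimed conclusion $|r^{1/2}u[\phi]|=O(r^{-\sqrt{9/4-\alpha}})$ does not follow, and for $\tfrac54<\alpha<\tfrac94$ that branch is not even square-integrable. Once these two points are corrected (or the expansion is simply delegated to \cite[Section~2.2.2]{dold}, as the paper does), the argument goes through.
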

Next, we prove that the potential $V_\ell$ has a local maximum for large enough angular parameter $\ell_0$.
\begin{prop}\label{prop:potential}There exists an $\tilde \ell_0(M,Q,l,\alpha) \in \mathbb{N}$ such that for all $\ell \geq \tilde \ell_0$, the potential $V_\ell$ has a local maximum $ r_{\ell,\mathrm{max}} > r_+$ and $V^\prime_\ell \geq 0$ for $r_+\leq r \leq r_{\ell,\mathrm{max}}$. Moreover,
$r_{\ell,\mathrm{max}} \to r_{\mathrm{max}}:=\frac{3}{2}M + \sqrt{\frac{9}{4}M^2 - 2 Q^2}$	as $\ell \to \infty$.
\end{prop}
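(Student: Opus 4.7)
Write $V_\ell(r)=\ell(\ell+1)W_0(r)+F(r)$ with
\[ W_0(r)=\frac{h(r)}{r^2}=\frac{1}{r^2}-\frac{2M}{r^3}+\frac{1}{l^2}+\frac{Q^2}{r^4},\qquad F(r)=h(r)\!\left(\frac{h'(r)}{r}-\frac{\alpha}{l^2}\right), \]
so the $\ell$-dependence is concentrated in the $W_0$ term and $F$ depends only on $M,Q,l,\alpha$. A direct computation yields
\[ W_0'(r)=\frac{-2\left(r^2-3Mr+2Q^2\right)}{r^5}=\frac{-2(r-r_{\min})(r-r_{\max})}{r^5}, \]
where $r_{\min},r_{\max}=\tfrac{3}{2}M\mp\sqrt{\tfrac{9}{4}M^2-2Q^2}$; these are real since the subextremality assumption $|Q|<M$ gives $9M^2-8Q^2>M^2>0$. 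Hence $W_0$ is strictly increasing on $(r_{\min},r_{\max})$ and strictly decreasing on $(r_{\max},\infty)$, with a strict local maximum at $r_{\max}$ and $W_0''(r_{\max})<0$.

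The first step is to show $r_+\in(r_{\min},r_{\max})$. Since $r_+$ is the outer simple root of $h$, one has $h(r_+)=0$ and $h(r)>0$ for $r>r_+$ close to $r_+$, so $W_0(r_+)=0$ and $W_0>0$ on $(r_+,\infty)$. If we had $r_+\ge r_{\max}$, then $W_0$ would be non-increasing on $[r_+,\infty)$ starting from $0$, forcing $W_0\le 0$ to the right of $r_+$, a contradiction; similarly $r_+\le r_{\min}$ would force $W_0$ to decrease from $0$ on $[r_+,r_{\min}]$, again contradicting $W_0>0$ on $(r_+,\infty)$. Thus $r_{\min}<r_+<r_{\max}$ strictly, and consequently $W_0'(r_+)>0$ and $W_0'>0$ on the entire interval $[r_+,r_{\max})$.

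Now fix a small $\delta>0$ with $r_+<r_{\max}-\delta$, and consider the compact intervals $I_1:=[r_+,r_{\max}-\delta]$ and $I_2:=[r_{\max}-\delta,r_{\max}+\delta]$. On $I_1$ we have $W_0'\ge c_1>0$; on $I_2$ we have $W_0''\le -c_2<0$, both by continuity of $W_0'$ and $W_0''$ together with the strict signs established above. Since $F',F''$ are bounded on $I_1\cup I_2$ independently of $\ell$, for
\[ \ell(\ell+1)\ge\max\!\left(\tfrac{2\|F'\|_{L^\infty(I_1\cup I_2)}}{c_1},\tfrac{2\|F''\|_{L^\infty(I_2)}}{c_2}\right) \]
we obtain $V_\ell'>0$ on $I_1$ and $V_\ell''<0$ throughout $I_2$. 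Moreover $V_\ell'(r_{\max}-\delta)>0$, while at the right endpoint of $I_2$ one has $W_0'(r_{\max}+\delta)<0$ and $\ell(\ell+1)|W_0'(r_{\max}+\delta)|$ dominates $|F'|$ there, giving $V_\ell'(r_{\max}+\delta)<0$. Since $V_\ell''<0$ on $I_2$, $V_\ell'$ is strictly decreasing there, so it has a unique zero $r_{\ell,\max}\in(r_{\max}-\delta,r_{\max}+\delta)$, which is a strict local maximum of $V_\ell$ because $V_\ell''(r_{\ell,\max})<0$.

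Combining these facts, $V_\ell'>0$ on $[r_+,r_{\max}-\delta]$ and $V_\ell'$ decreases monotonically from a positive value at $r_{\max}-\delta$ to $0$ at $r_{\ell,\max}$, so $V_\ell'\ge 0$ on $[r_+,r_{\ell,\max}]$. Since $\delta>0$ was arbitrary, the same argument shows $r_{\ell,\max}\in(r_{\max}-\delta,r_{\max}+\delta)$ for all $\ell$ sufficiently large, establishing $r_{\ell,\max}\to r_{\max}$. The main (very mild) obstacle is the first step, namely the purely geometric/algebraic verification that $r_+<r_{\max}$ in the subextremal RN--AdS regime; everything else is a standard perturbation argument exploiting that the $\ell(\ell+1)W_0$ term is the leading contribution to $V_\ell$ and its derivatives on any compact $r$-interval bounded away from the horizons.
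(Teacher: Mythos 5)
Your proposal is correct and follows essentially the same route as the paper: you split off the $\ell(\ell+1)$ contribution (whose radial derivative changes sign precisely at the roots of $r^2-3Mr+2Q^2$, the larger being $r_{\mathrm{max}}$) and treat the $\ell$-independent remainder as a perturbation for large $\ell$, which is exactly the decomposition \eqref{eq:dvdr} used in the paper, only carried out more carefully (in particular you actually verify the monotonicity up to $r_{\ell,\mathrm{max}}$ and the convergence $r_{\ell,\mathrm{max}}\to r_{\mathrm{max}}$, which the paper's proof only sketches). The one loose end is that your dichotomy does not literally exclude the boundary case $r_+=r_{\mathrm{min}}$; this case would force $W_0'(r_+)=h'(r_+)/r_+^2=0$, i.e.\ a degenerate outer horizon, which the subextremality assumption rules out, and indeed the cleaner observation $W_0'(r_+)=h'(r_+)/r_+^2>0$ gives $r_{\mathrm{min}}<r_+<r_{\mathrm{max}}$ in one stroke.
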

	\begin{proof}
		Note that for $\ell$ large enough, $V_\ell$ is non-negative in a neighborhood of $r_+$ with $r\geq r_+$. Also, $V_\ell$ vanishes at $r=r_+$. Hence, it suffices to show that $\frac{\d V_\ell}{\d r} $ is negative somewhere for $r\geq r_+$. But note that \begin{align}\label{eq:dvdr}\frac{\d V_\ell}{\d r} = F(r) + {r}^{-3} \ell(\ell+1) \left( r\frac{\d h}{\d r} - {2h}\right) =  F(r) +2 r^{-3} \ell(\ell+1) \left( \frac{3M}{r} -1 - \frac{2Q^2}{r^2}\right) 
		\end{align}  
		for some function $F(r)$ which is independent of $\ell$. 
	Now, first choose $r>r_+$ large enough only depending on $M,Q$ such that the last term is negative. Then, choose $\ell$ large enough such that it dominates the first term which proves that a $r_{\ell,\mathrm{max}}$ as in the statement exists. The limiting behavior $r_{\ell,\mathrm{max}} \to \frac{3}{2}M + \sqrt{\frac{9}{4}M^2 - 2 Q^2}$	as $\ell \to \infty$ also follows from \eqref{eq:dvdr}. This concludes the proof.
	\end{proof}
	Now, we are in the position to prove a frequency localized integrated decay estimate in the exterior region for the bounded frequencies $|\omega|\leq 2 \omega_0$. 
\begin{prop}\label{prop:propimprove}
	Let $u(r_\ast) = u^{(\omega,m,\ell)} (r_\ast)$ solve the radial o.d.e.\ \eqref{eq:radialode} in the exterior $\mathcal{R}_A$ and assume that $\lim_{r\to\infty} |r^{\frac 12} u | = 0 $ and $\lim_{r\to\infty} |r^{-\frac 12} u^\prime |= 0$. Moreover, let $|\omega | \leq 2 \omega_0$, where $\omega_0(M,Q,\ell,\alpha)>0$ small enough will be fixed in the following proof. Then, we have 
	\begin{align}\label{eq:microlocaliled}
	\int_{R_\ast^{-\infty}}^{r = \infty} \frac{ \Delta}{r^4} \left( |u^\prime|^2 + |u|^2(  \ell (\ell+1) + r^2 )\right) \d r_\ast \lesssim - \tilde Q (R_\ast^{-\infty})
	\end{align}
	for all $R_\ast^{-\infty}$ small enough such that $r (R_\ast^{-\infty}) < r_0$, where $ r_0= r_0 (M,Q,l,\alpha)> r_+$ is determined in the following proof. Here, the boundary term $\tilde Q(R_\ast^{-\infty})$ satisfies
	\begin{align}\label{eq:boundryterms1}
		|\tilde Q(R_\ast^{-\infty}) | \lesssim ( |\omega|^2 |u|^2 + |u'|^2) (1+ O_\ell(r-r_+)  )\text{ as } R^{-\infty}_\ast \to -\infty.
	\end{align}
\end{prop}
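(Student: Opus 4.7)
The plan is to apply a Morawetz-type multiplier argument to the radial ODE $-u'' + (V_\ell - \omega^2)u = 0$, in the spirit of the ILED estimates of \cite{decaykg} for Schwarzschild--AdS and Kerr--AdS. Multiplying the equation by a current of the form $y \bar u' + \tfrac{1}{2}y' \bar u$ for a smooth function $y = y(r)$ to be designed, taking real parts, and integrating by parts over $[R_*^{-\infty}, \infty)$ yields the virial-type identity
\begin{align*}
\int_{R_*^{-\infty}}^\infty \left( y'|u'|^2 - \tfrac{1}{2}y V_\ell'|u|^2 + \tfrac{1}{2}y''\mathrm{Re}(u'\bar u) \right) dr_* = \tilde Q(\infty) - \tilde Q(R_*^{-\infty}),
\end{align*}
where $\tilde Q = \tfrac{1}{2}\big(y|u'|^2 - (V_\ell - \omega^2)y|u|^2 + y'\mathrm{Re}(u'\bar u)\big)$. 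The boundary contribution at $r = \infty$ vanishes thanks to the Dirichlet decay rates $|r^{1/2}u|, |r^{-1/2}u'| \to 0$ provided by the previous proposition.

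The crux is designing $y$ so that the bulk integrand is coercive with the target weight $\frac{\Delta}{r^4}(\ell(\ell+1) + r^2)$ uniformly in $\ell$. The natural choice changes sign at the photon sphere $r_{\ell,\mathrm{max}}$ for high $\ell$, with $y \leq 0$ for $r < r_{\ell,\mathrm{max}}$ and $y \geq 0$ for $r > r_{\ell,\mathrm{max}}$, so as to ensure $-yV_\ell' \geq 0$ using the monotonicity of $V_\ell$ provided by \cref{prop:potential}. The $\ell(\ell+1)$-weight on $|u|^2$ comes from $|V_\ell'| \sim \ell(\ell+1)\Delta/r^5$ away from the maximum, and the $r^2$-weight comes from the AdS contribution $-h\alpha/l^2$ in $V_\ell$; the possibly negative Klein--Gordon mass is absorbed using the Breitenlohner--Freedman bound $\alpha < 9/4$ together with a Hardy-type estimate at infinity analogous to \eqref{eq:hardyinequ}. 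For bounded $\ell \leq \tilde\ell_0$ the potential is effectively non-trapping and a simple monotone $y$ with $y' \sim \Delta/r^4$ suffices. The cross term $\tfrac{1}{2}y''\mathrm{Re}(u'\bar u)$ is absorbed into $y'|u'|^2$ and the $|u|^2$ bulk via Cauchy--Schwarz, provided $y$ is built so that $|y''|$ is controlled by $y'$.

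For high $\ell$ the naive multiplier degenerates at $r_{\ell,\mathrm{max}}$, where $y$ vanishes and the $|u|^2$ coefficient collapses. The low-frequency restriction is precisely what resolves this: since $V_\ell(r_{\ell,\mathrm{max}}) \gtrsim \ell(\ell+1) \to \infty$ by \cref{prop:potential}, I choose $\omega_0 = \omega_0(M,Q,l,\alpha)$ small enough that $4\omega_0^2 < \tfrac{1}{2}\inf_{\ell \geq \tilde\ell_0}V_\ell(r_{\ell,\mathrm{max}})$. Under this spectral gap, $V_\ell - \omega^2 \gtrsim V_\ell$ on a full neighborhood of $r_{\ell,\mathrm{max}}$, and one can recover a non-degenerate bulk coefficient by augmenting the multiplier with a lower-order term exploiting the sub-barrier character of the solution. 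The boundary estimate \eqref{eq:boundryterms1} then follows directly from the explicit form of $\tilde Q$ together with $V_\ell(r_+) = 0$ and $V_\ell(r) = O_\ell(r-r_+)$, by taking $y$ bounded near $r_+$.

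The main obstacle is achieving uniform-in-$\ell$ coercivity at the photon sphere for high $\ell$, where trapping would normally force a Morawetz estimate to degenerate. This is exactly what the cut-off frequency $\omega_0$ defining $\psi_\flat$ is designed to overcome: the restriction $|\omega| \leq 2\omega_0$ places the wave strictly below the potential barrier uniformly in the high-$\ell$ regime, converting what would be trapping into a favorable sign structure for the multiplier, which is what allows the bulk to be controlled by a single boundary term with no degeneration at $r_{\ell,\mathrm{max}}$.
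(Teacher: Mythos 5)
Your overall strategy---multiplier currents for the radial o.d.e., with the restriction $|\omega|\le 2\omega_0$ turning the region near $r_{\ell,\mathrm{max}}$ into a sub-barrier regime $\omega^2<V_\ell$---is the same mechanism the paper exploits, but as written the argument has two genuine gaps. First, your sign structure for the virial current needs $-yV_\ell'\ge 0$ on \emph{both} sides of $r_{\ell,\mathrm{max}}$, i.e.\ $V_\ell'\le 0$ for $r>r_{\ell,\mathrm{max}}$. \cref{prop:potential} only gives monotonicity up to the maximum, and the claim beyond it is false in general: to leading order $V_\ell\sim \frac{(2-\alpha)}{l^4}\,r^2+\frac{\ell(\ell+1)}{l^2}$ as $r\to\infty$, so for $\alpha<2$ one has $V_\ell'>0$ at large $r$, and with $y\ge 0$ there the term $-y\,V_\ell'\,|u|^2$ is a growing negative bulk contribution that cannot be absorbed. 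The paper never needs any sign information past the maximum because its multiplier $f\le 0$ is switched off, $f\equiv 0$ for $r\ge r_{\mathrm{max}}-\delta$.

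Second, and more importantly, the decisive step---a non-degenerate bulk $\frac{\Delta}{r^4}\big(\ell(\ell+1)+r^2\big)|u|^2$ at $r_{\ell,\mathrm{max}}$ and all the way to $r=\infty$---is only asserted (``augmenting the multiplier with a lower-order term exploiting the sub-barrier character''), not produced. Note that a virial current generates $-y\,V_\ell'\,|u|^2$, never $(V_\ell-\omega^2)|u|^2$, so the sub-barrier positivity cannot be harvested from it; one needs a second, Lagrangian-type current, which is exactly what the paper adds: $Q^{\tilde h}=\tilde h\,\operatorname{Re}(\bar u u')-\tfrac12\tilde h'|u|^2$, with $\tilde h=0$ near $r_+$ (so the boundary term at $R_\ast^{-\infty}$ is untouched) and $\tilde h=1$ for $r\ge r_1$, whose derivative contains $\tilde h\big(|u'|^2+(V_\ell-\omega^2)|u|^2\big)$. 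The quantitative input is the choice of $\omega_0$ small and $\ell_0$ large so that $V_\ell-\omega^2\ge 0$ on $[r_0,r_1]$ and $V_\ell-\omega^2+\frac{\Delta}{4l^2r^2}\gtrsim \ell(\ell+1)+\frac{\Delta}{r^2}$ for $r\ge r_0$, the Hardy correction being precisely what covers $2<\alpha<\tfrac94$; your Hardy remark points in this direction but is never combined into a coercivity statement. Finally, for $\ell\le\ell_0$ the claim that a ``simple monotone $y$'' suffices is unsubstantiated ($-yV_\ell'$ is sign-indefinite there as well); the paper instead invokes the bounded-frequency machinery of \cite{decaykg} (Lemma~7.3 and Proposition~7.4 with $a=0$, $\omega_+=0$, $H=0$). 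So the proposal identifies the right phenomenon, but the estimate does not close as stated.
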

\begin{proof}
We will first argue that  it suffices to prove \eqref{eq:microlocaliled} for $\ell\geq \ell_0(M,Q,l,\alpha) $ for some fixed $\ell_0(M,Q,l,\alpha) \in \mathbb N_0$. Note that \eqref{eq:microlocaliled} for $\ell \leq \ell_0$ is an easier variant of \cite[Proposition~7.4]{decaykg}. Indeed, we perform the same steps in \cite[Lemma~7.3 and Proposition~7.4]{decaykg} but instead take $a=0$, $\omega_+ =0$ and $H=0$ throughout \cite[Section~7]{decaykg}. This leads to \cite[Proposition~7.4]{decaykg} with $L$ replaced by $\ell_0$. The estimate on the boundary term follows from \cite[Section~9.3]{decaykg}. 

We will now consider $\ell \geq \ell_0$, where $\ell_0$ is determined below. Let $r_0, r_1$ depending only on $M,Q,l,\alpha$ be such that $r_+ < r_0 < r_1 < r_\text{max}- \delta$, where $r_{\text{max}}$ is defined in \cref{prop:potential}. Here, $\delta = \delta(\ell_0)>0$ is such that $V^\prime \geq 0$ for all $r_+\leq r \leq r_\text{max} - \delta$, cf.\ \cref{prop:potential}. We can make $\delta(\ell_0)$ as small as we want by choosing $\ell_0$ sufficiently large. Now, we choose $\omega_0(M,Q,l,\alpha)>0$ small enough and $\ell_0$ large enough such that 
\begin{align}
\begin{split}\label{eq:V+h}
&V - \omega^2 + \frac{\Delta }{4l^2 r^2}\gtrsim  \ell(\ell+1) + \frac{\Delta }{r^2} \; \text{ for $r\geq r_0$,}\\
& V -\omega^2 \geq 0 \;\;\; \text{ for } r_0 \leq r \leq r_1,
\end{split}
\end{align}
 and for all $|\omega|\leq 2  \omega_0$, $\ell \geq \ell_0$. 
For smooth $f(r_\ast)$ and $\tilde h(r_\ast)$, we define the currents
\begin{align}
&	Q^f := f \left[ |u^\prime|^2 + (\omega^2 - V) |u|^2 \right] + f^\prime \operatorname{Re}(u^\prime \bar u) - \frac{1}{2}f^{\prime \prime} |u|^2,\\
& Q^{\tilde h} := \tilde h \operatorname{Re}(\bar u u^\prime) - \frac{1}{2}\tilde h^\prime |u|^2
\end{align}
with
\begin{align}
	&{Q^f}^\prime = \frac{\d Q^f}{\d r_\ast} = 2 f^\prime |u^\prime|^2 - f V^\prime |u|^2 - \frac{1}{2}f^{\prime\prime\prime} |u|^2 , \\
	&{Q^{\tilde h}}^\prime  = \frac{\d Q^{\tilde h}}{\d r_\ast}=\tilde  h \left[|u^\prime|^2 + (V-\omega^2) |u|^2 \right] - \frac{1}{2} \tilde h^{\prime\prime} |u|^2 ,
\end{align}
where we recall that $~^\prime$ denotes the derivative $\frac{\d }{\d r_\ast}$. 
Thus,
\begin{align}\nonumber
{Q^f}^\prime + {Q^{\tilde h}}^\prime = |u^\prime|^2 (2f^\prime + \tilde h) + & |u|^2 \left( - f V^\prime - \frac{1}{2} f^{\prime \prime \prime} + \tilde h(V-\omega^2) - \frac{1}{2}\tilde h^{\prime \prime} \right).
\end{align}
We choose a smooth $f\leq 0$ such that
\begin{itemize}
	\item $f$ is monotonically increasing,
	\item $f = - 1/r^2$ in a neighborhood of $r=r_+$,
	\item $f\leq -c_1$ for $r_+\leq r \leq r_1$ and some $c_1(M,Q,l)>0$,
	\item $\Delta \lesssim f^\prime \lesssim \Delta$ for $r_+\leq r \leq r_1$,
	\item $|f^{\prime \prime \prime}|\lesssim \Delta$,
	\item $f=0$ for $r\geq r_\text{max}-\delta$.
\end{itemize}
and a smooth $ \tilde h\geq 0$ such that
\begin{itemize}
	\item $ \tilde h =0 $ for $r \leq r_0$,
	\item $|\tilde h^{\prime \prime}|\lesssim 1$ for $r_0<r_1$,
	\item $\tilde h=1$ for $r \geq  r_1$.
\end{itemize}
Then, we have
\begin{align}\label{eq:cases}
	{Q^f}^\prime + {Q^{\tilde h}}^\prime 
	\geq \begin{cases}
	2 f^\prime |u^\prime|^2 + |u|^2 (- f V^\prime - \frac{1}{2} f^{\prime\prime\prime}) & \text{ for } r_+\leq r\leq r_0,\\
	2 f^\prime |u^\prime|^2 + |u|^2 (- f V^\prime - \frac{1}{2} f^{\prime\prime\prime} - \frac 12 \tilde h^{\prime \prime}  ) & \text{ for } r_0 \leq r \leq r_1,\\
	|u^\prime|^2 + |u|^2 (- \frac 12 f^{\prime\prime\prime} + (V-\omega^2)) &\text{ for } r \geq r_1 .
	\end{cases} 
\end{align}
Thus, choosing $\ell_0(M,Q,l,\alpha)$ large enough (and $\omega_0(M,Q,l,\alpha) >0 $ possibly smaller) and using \eqref{eq:cases}, \eqref{eq:V+h}, \eqref{eq:dvdr} and the properties of $f$ and $\tilde h$, we have
\begin{align}
	{Q^f}^\prime + {Q^{\tilde h}}^\prime  \gtrsim \frac{ \Delta}{r^4} \left( |u^\prime|^2 + |u|^2 ( \ell (\ell+1) + r^2)  \right) 
\end{align}
for $r_+ \leq r \leq r_{\text{max}} - \delta$ and
\begin{align}\label{eq:hardyterm}
		{Q^f}^\prime + {Q^{\tilde h}}^\prime \gtrsim |u^\prime|^2 + (V-\omega^2)|u|^2 \geq  |u^\prime|^2 -  |u|^2 \frac{\Delta }{4l^2 r^2} + \tilde c  \left( \ell(\ell+1)  + \frac{\Delta}{r^2}\right)|u|^2
\end{align}
for $r \geq r_{\text{max}} - \delta$ and some $\tilde c(M,Q,l,\alpha)>0$. 
Integrating ${Q^f}^\prime + {Q^{\tilde h}}^\prime $  in the region $r_\ast \in (R_\ast^{-\infty}, r_\ast (r=+\infty ) )$ and applying the following Hardy inequality (see \cite[Lemma~7.1]{decaykg})
 \begin{align} \int_{r = r_{\text{max}} - \delta}^{r=\infty} |u^\prime|^2 \d r_\ast \geq \int_{r = r_{\text{max}} - \delta}^{r=\infty} \frac{\Delta }{4l^2 r^2}  |u|^2 \d r_\ast \end{align} to control the negative signed term in \eqref{eq:hardyterm}, yields
\begin{align}\label{eq:cutoffestimate}
	\int_{R_\ast^{-\infty}}^{r=+\infty} \frac{ \Delta}{r^4} \left( |u^\prime|^2 \chi_{\{ r\leq r_{\text{max}}-\delta \} }+ |u|^2 ( \ell (\ell+1) + r^2 )\right) \d r_\ast \lesssim - Q^f (R_\ast(-\infty)).
\end{align}
Note that we use $\lim_{r\to\infty} |r^{\frac 12} u |= 0 $ and $\lim_{r\to\infty} |r^{- \frac 12} u^\prime | =0$ to apply the Hardy inequality. 
To obtain control of $|u^\prime|^2$ in the region $r\geq r_{\text{max}} - \delta$ in \eqref{eq:cutoffestimate} we just add a small portion of the integral over \eqref{eq:hardyterm}. This proves
\begin{align}\label{eq:cutoffestimate1}
\int_{R_\ast^{-\infty}}^{r=+\infty}    \frac{ \Delta}{r^4}\left( |u^\prime|^2  +|u|^2 ( \ell (\ell+1) + r^2 )\right) \d r_\ast \lesssim - Q^f (R_\ast(-\infty)),
\end{align}
where $|Q^f(R_\ast^{-\infty})| \lesssim ( |\omega|^2 |u|^2 + |u'|^2) (1 + O_\ell(r-r_+)$ as $R_\ast^{-\infty} \to -\infty$ is satisfied by the construction of $f$.
\end{proof}
With the frequency localized integrated energy decay estimate of \cref{prop:propimprove} we will now prove a \emph{local} integrated energy decay estimate in physical space.
Indeed, a naive application of Plancherel's theorem to \eqref{eq:microlocaliled} gives a \emph{global} integrated energy estimate. However, localizing this energy decay requires some sort of cut-off which does not respect the compact frequency support. Nevertheless, by carefully choosing a localization, we can show that the error term  decays superpolynomially in time. At this point we shall remark that we do expect $\psi_\flat$ to decay exponentially. However, for our problem, superpolynomial decay in the exterior is (more than) sufficient. 
\begin{prop}\label{prop:microiled} 	Let ${{\psi}_\flat}$ be as in \eqref{eq:fouriertransforminterior}. Then, for any $q>1$, $   \tau_1 \geq 0 $ and in view of \eqref{eq:hardyinequ}, we have the integrated energy decay estimate
	\begin{align}
	\nonumber &
		\int_{\mathcal{R}_A \cap \{t^\ast \geq 2 \tau_1 \}} \left[ r^{-2} |\partial_{t^\ast} {\psi}_\flat|^2 + r^{-2} |\partial_{r_\ast} {\psi}_\flat|^2 +    |\slashed\nabla {\psi}_\flat|^2 +  |{{\psi}_\flat}|^2\right] r^2 \d t^\ast \d r \sin\theta \d\theta\d\varphi \\  &\lesssim\int_{\Sigma_{ \tau_1} \cap \mathcal{R}_A} J^T_\mu[{{\psi}_\flat}] n^\mu +  \frac{C(q)}{{1+ \tau_1^q }} \,{ \int_{\Sigma_0} J^T_\mu[\psi_\flat] n^\mu_{\Sigma_0} \dvol_{\Sigma_0}},\label{eq:est}
	\end{align}
	where $C(q)>0$ is a constant only depending on $q$. 
	Moreover, for any $\tau_2 \geq 2\tau_1$, this directly implies
	\begin{align}\nonumber\int_{\Sigma_{\tau_2}\cap \mathcal{R}_A} J_\mu^T[{{\psi}_\flat}]  n_{\Sigma_{\tau_2}}^\mu \dvol_{\Sigma_{\tau_2}} + & \int_{\mathcal{R}_A \cap \{t^\ast \geq 2 \tau_1 \}} \left[ r^{-2} |\partial_{t^\ast} {\psi}_\flat|^2 + r^{-2} |\partial_{r_\ast} {\psi}_\flat|^2 +    |\slashed\nabla {\psi}_\flat|^2 +  |{{\psi}_\flat}|^2\right] r^2 \d t^\ast \d r \sin\theta \d\theta\d\varphi
	\\ &	\lesssim \int_{\Sigma_{ \tau_1} \cap \mathcal{R}_A} J^T_\mu[{{\psi}_\flat}] n^\mu +\frac{ C(q)}{{1+ \tau_1^q}}{\int_{\Sigma_0\cap \mathcal{R}_A} J^T_\mu[\psi_\flat] n^\mu_{\Sigma_0} \dvol_{\Sigma_0}}
	\end{align}
	for the $T$-energy.
\end{prop}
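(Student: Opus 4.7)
The approach is Fourier-analytic in $t^\ast$, exploiting the compact frequency support of $\psi_\flat$ to apply \cref{prop:propimprove} pointwise in $\omega$. For each spherical harmonic mode $\psi_\flat^{\ell m} := \langle \psi_\flat, Y_{\ell m} \rangle_{L^2(\mathbb{S}^2)}$, the function $u_\flat^{\ell m}(\omega, r) := r\,\mathcal{F}_T[\psi_\flat^{\ell m}](\omega, r)$ is well-defined and smooth on $\mathcal{R}_A$ by \cref{rmk:welldefnofu}, and by the construction \eqref{eq:fouriertransforminterior}--\eqref{eq:chiomega0} it is compactly supported in $|\omega| \leq \omega_0 < 2\omega_0$. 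Since $u_\flat^{\ell m}$ solves the radial ODE \eqref{eq:radialode} with the Dirichlet decay at $r = \infty$, \cref{prop:propimprove} yields the pointwise-in-$(\omega, \ell, m)$ integrated-in-$r$ estimate with boundary term $-\tilde Q(R_\ast^{-\infty})$. Sending $R_\ast^{-\infty} \to -\infty$, using \eqref{eq:boundryterms1} to identify the boundary term with the $T$-flux density at $\mathcal{H}_A^+$, integrating in $\omega$, summing in $(\ell, m)$ and invoking Plancherel produces a \emph{global-in-$t^\ast$} integrated energy decay estimate with RHS $\lesssim \int_{\mathcal{H}_A^+} J^T_\mu[\psi_\flat] n^\mu \, d\mathrm{vol}$, itself bounded by $E_1^A[\psi_\flat](0)$ through \eqref{eq:boundednenergyflux}.

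To localize the bulk integral to $\{t^\ast \geq 2\tau_1\}$ with only the $T$-energy of $\psi_\flat$ on $\Sigma_{\tau_1}$ on the right, I would introduce a smooth time cutoff $\chi = \chi(t^\ast)$ with $\chi \equiv 0$ for $t^\ast \leq \tau_1$ and $\chi \equiv 1$ for $t^\ast \geq 2\tau_1$ and uniformly bounded derivatives, and apply the Fourier argument above to $\tilde\psi := \chi \psi_\flat$. This $\tilde\psi$ satisfies $\Box_g \tilde\psi + (\alpha/l^2)\tilde\psi = F$ with source $F$ supported in the transition strip $\{\tau_1 \leq t^\ast \leq 2\tau_1\}$ and linear in $\chi', \chi''$ times $\psi_\flat$ and its first derivatives. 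The corresponding sourced radial ODE estimates, together with an energy identity on the transition strip, produce a contribution controlled by the $T$-energy of $\psi_\flat$ on $\Sigma_{\tau_1}$, both for the boundary flux through $\mathcal{H}_A^+$ and for the inhomogeneous source term.

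The superpolynomial factor $C(q)/(1+\tau_1^q)$ arises from the high-frequency tail of $\mathcal{F}_T[\tilde\psi] = \mathcal{F}_T[\chi] \ast \mathcal{F}_T[\psi_\flat]$, where \cref{prop:propimprove} (valid only for $|\omega| \leq 2\omega_0$) does not directly apply. Because $\mathcal{F}_T[\psi_\flat]$ is supported in $|\omega| \leq \omega_0$, for $|\omega| \geq 2\omega_0$ the convolution only probes $\mathcal{F}_T[\chi]$ at arguments bounded below by $\omega_0$. Iteratively integrating by parts in $t^\ast$ in the defining convolution $\psi_\flat = \frac{1}{\sqrt{2\pi}}\eta \ast \psi$, with $\eta := \mathcal{F}_T^{-1}[\chi_{\omega_0}]$ a Schwartz function, transfers $t^\ast$-derivatives onto $\eta$; the rapid decay of $\eta$ then yields arbitrary polynomial gains in $\tau_1$, producing the $C(q)/\tau_1^q$ error controlled by $E_1^A[\psi_\flat](0)$. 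The corollary for the $T$-energy through $\Sigma_{\tau_2}$ follows from the $T$-energy identity in $\{\tau_1 \leq t^\ast \leq \tau_2\}\cap\mathcal{R}_A$ combined with the bulk estimate just established. The main obstacle is the bookkeeping needed to cleanly separate the low-frequency part (where \cref{prop:propimprove} applies and yields the $\Sigma_{\tau_1}$-flux term) from the high-frequency tail (which must be shown to be superpolynomially small in $\tau_1$ through the Schwartz decay of $\eta$).
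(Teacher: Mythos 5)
Your global (unlocalized) estimate in the first paragraph is essentially the paper's first step and is fine, but the localization scheme has two genuine gaps. First, the mechanism you give for the $C(q)/(1+\tau_1^q)$ gain does not work as stated: integrating by parts in the convolution $\psi_\flat=\tfrac{1}{\sqrt{2\pi}}\eta\ast\psi$ and using the Schwartz decay of $\eta=\mathcal{F}_T^{-1}[\chi_{\omega_0}]$ produces no $\tau_1$-dependence at all, since $\eta$ is a fixed kernel and nothing forces its argument to be of size $\tau_1$. A superpolynomial gain can only come from one of two places: either from the cutoff $\chi$ itself, via $|\chi^{(k)}|\lesssim_k\tau_1^{-k}$ on a transition window of length $\sim\tau_1$ (so that $|\mathcal{F}_T[\chi](\xi)|\lesssim_N|\xi|^{-1}(1+\tau_1|\xi|)^{-N}$ for $|\xi|\geq\omega_0$) --- which contradicts your hypothesis of ``uniformly bounded derivatives'' and must be tracked explicitly --- or from a physical-space support separation: writing the tail as $P_{>2\omega_0}\big((\chi-1)\psi_\flat\big)$ with $(\chi-1)\psi_\flat$ supported in $\{t^\ast\lesssim\tau_1\}$ and evaluating in $\{t^\ast\geq2\tau_1\}$, so that the Schwartz kernel $\mathcal{F}_T^{-1}[\chi_{2\omega_0}]$ is only sampled at arguments $\gtrsim\tau_1$. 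The latter is precisely the mechanism the paper uses. Second, $\tilde\psi=\chi\psi_\flat$ solves an inhomogeneous equation, and \cref{prop:propimprove} is proved only for homogeneous solutions of \eqref{eq:radialode} with the Dirichlet branch at infinity. You would have to rework the multiplier currents $Q^f,Q^{\tilde h}$ with a source, re-derive the horizon boundary-term identification \eqref{eq:boundryterms1} (near $r_+$ the solution is no longer a pure combination of $e^{\pm i\omega r_\ast}$ but carries a Duhamel correction), check the $r$-weights of the source up to $r=\infty$, and control zeroth-order horizon contributions of the type $|\chi'\psi_\flat|^2$ along $\mathcal{H}_A^+$, which the $T$-flux alone does not bound (the paper only obtains such horizon terms later, in \cref{IED}, \emph{using} the present proposition, so invoking them here would be circular). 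None of this is automatic, and it is exactly the content of the claim that all error terms are controlled by $\int_{\Sigma_{\tau_1}}J^T_\mu[\psi_\flat]n^\mu$.

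For comparison, the paper avoids both problems by never cutting off $\psi_\flat$: it solves a backwards-forwards mixed characteristic-Cauchy problem with data $(\psi_\flat,\mathcal{T}\psi_\flat)$ on $\Sigma_{\tau_1}$ and a $C^k$ reflection of $\psi_\flat\restriction_{\mathcal{H}_A^+\cap\{t^\ast\geq\tau_1\}}$ on $\mathcal{H}_A^+\cap\{t^\ast\leq\tau_1\}$, obtaining a genuine homogeneous solution $\Psi$ that coincides with $\psi_\flat$ on $J^+(\Sigma_{\tau_1})\cap\mathcal{R}_A$ and whose global horizon fluxes are bounded by $\int_{\Sigma_{\tau_1}}J^T_\mu[\psi_\flat]n^\mu$ by construction. \cref{prop:propimprove} then applies directly to the $2\omega_0$-low-frequency part $\Psi_\flat$, and the error $\Psi_\sharp$ is superpolynomially small on $\{t^\ast\geq2\tau_1\}$ by the support-separation argument applied to $\Psi-\psi_\flat$, which is supported in $\{t^\ast\leq\tau_1\}$. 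Your cutoff strategy could plausibly be completed along the lines indicated above, but as written the key quantitative step is misattributed and the sourced ODE estimate is missing.
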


\begin{proof} 
In order to show  \eqref{eq:est} we will first construct an auxiliary solution $\Psi$ of \eqref{eq:wave}. We set initial data for $\Psi$ on $\Sigma_{ \tau_1}$ as $(\Psi_0,\Psi_1) := (\psi_\flat, \mathcal{T}\psi_\flat)\restriction_ {\Sigma_{ \tau_1} \cap \mathcal{R}_A}$. Then, we will define data $\Psi_2$ on  $\mathcal{H}_A^+ \cap \{ t^\ast \leq \tau_1\}$ such that the data can be extended to a $C^k$ function in a neighborhood of $\mathcal{H}_A^+ \cap \{t^\ast =  \tau_1\}$ for some finite regularity $k$. Choosing the regularity $k$ large enough will guarantee well-posedness. More precisely, in local coordinates $(t^\ast,r,\theta,\varphi)$ and for $r= r_+$, we define
	\begin{align}
	 \Psi_2 (t^\ast  ,r_+,\varphi,\theta) := \sum_{j=1}^k \lambda_j \psi_\flat\restriction_{ \{ t^\ast \geq  \tau_1 \}} (- j( t^\ast-{\tau}_1) + {\tau}_1, r_+, \varphi,\theta )
	\end{align} 
	for $t^\ast \leq   \tau_1$ and some uniquely determined $(\lambda_j)_{1\leq j \leq k}$ such that 
	 \begin{align} \mathbb{R}\times \mathbb{S}^2 \ni (t^\ast ,\varphi,\theta) \mapsto \begin{cases}	 \Psi_2 (t^\ast  ,r_+,\varphi,\theta) & \text{ for } t^\ast \leq  \tau_1 \\ \psi_\flat  (t^\ast  ,r_+,\varphi,\theta) 
	& \text{ for } t^\ast >  \tau_1 \end{cases}\end{align} is $C^k$. Indeed, the function is smooth everywhere except at $t^\ast =  \tau_1$.   \begin{figure}[ht]\centering
	 	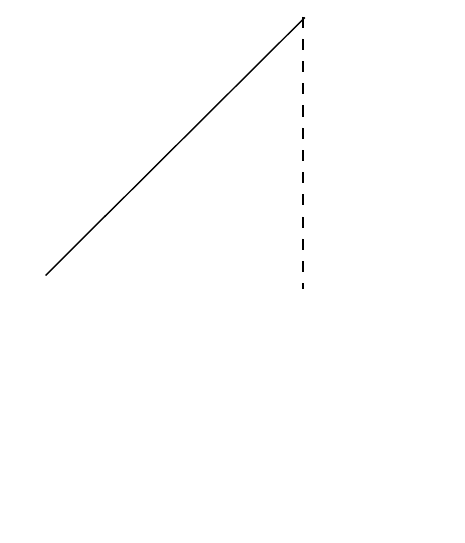
	 		 	\caption{In the darker shaded region $J^+(\Sigma_{\tau_1})\cap \mathcal{R}_A$ we have that $\Psi = \psi_\flat$ and in the lighter shaded region we can estimate the energy of $\Psi$ in terms of $\psi_\flat$. This holds true as $\Psi_2$ is the $C^k$ reflection of $\psi_\flat$ from $\mathcal{H}_A^+ \cap \{ t^\ast \geq \tau_1\}$ to $\mathcal{H}_A^+ \cap \{ t^\ast < \tau_1\}$. }
	 	\label{fig:data}
	 \end{figure}
 Now, we consider the mixed boundary value-Cauchy-characteristic problem, where we impose data as follows. On the null hypersurface $\mathcal{H}_A^+ \cap \{ t^\ast \leq  \tau_1 \}$ we impose $\Psi_2$. This null cone intersects the spacelike hypersurface $\Sigma_{ \tau _1}$ on which we have prescribed $(\Psi_0,\Psi_1)$ as data. As before, we assume the Dirichlet condition on $\mathcal{I}_A$.  For fixed $k >0$ large enough, this is a well-posed problem and can be solved backwards and forwards in $\mathcal{R}_A$  \cite[Theorem~2]{mixed_prob}. We will call the arising solution $\Psi$ and by uniqueness note that $\Psi = \psi_\flat$ on $ ( \mathcal{R}_A\cup \mathcal{H}_A^+ ) \cap J^+(\Sigma_{ \tau_1})$. Indeed, analogously to $\psi_\flat$, we have $\Psi \in CH^2_{\mathrm{RNAdS}}$ and by choosing $k$ large enough, we can make $\Psi$ arbitrarily regular, in particular $C^2$. Moreover, $\Psi$ decays logarithmically and $\langle\Psi,Y_{\ell m}\rangle Y_{\ell m}$  decays exponentially towards $i^+$ and $i^-$ on a $\{ r =const.\}$ hypersurface.\footnote{We will use this statement only in a qualitative way such that $u[\Psi_\flat]$ is well-defined in \eqref{eq:defPsiflat4} and satisfies \eqref{eq:microlocaliled}.}  Refer to \cref{fig:data} for a visualization of the Cauchy-characteristic problem with Dirichlet boundary conditions.
	  
	Analogously to $\psi = \psi_\flat + \psi_\sharp$, we decompose the new solution $\Psi$ in low and high frequencies $\Psi = \Psi_\flat + \Psi_\sharp$: We define
	  \begin{align}
	  	\Psi_\flat := \frac{1}{\sqrt{2\pi}}\mathcal{F}_T^{-1} [ \chi_{2\omega_0}]\ast  \Psi , \text{ and } \Psi_\sharp := \Psi - \Psi_\flat,
	  \end{align}
	  where $\chi_{2\omega_0}$ is a smooth cutoff function such that $\chi_{2\omega_0} = 1$ for $|\omega| \leq \omega_0$ and $\chi_{2\omega_0} = 0$ for $|\omega|\geq 2 \omega_0$. 
	 Now, note that from the $T$-energy identity  \eqref{eq:energyidnofluxthroughI} we have 
	 \begin{align}
	 	\int_{\mathcal{H}_A^+ (  \tau_1, \infty )} J^T_\mu [\psi_\flat] n^\mu_{\mathcal{H}} \dvol_{\mathcal{H}} = 	\int_{\Sigma_{ \tau_1}\cap\mathcal{R}_A} J^T_\mu[\psi_\flat] n^\mu_{\Sigma_{ \tau_1}} \dvol_{\Sigma_{ \tau_1}}
	 \end{align}
	 as the flux through $\mathcal{I}_A$ vanishes in view of the Dirichlet boundary condition at $\mathcal{I}_A$. Here, we use the notation $\mathcal{H}_A^+ (  a, b ) := \mathcal{H}_A^+ \cap \{a< t^\ast < b \}$.
Moreover, from the $T$ energy identity, we have
\begin{align} \nonumber
	 	\int_{\mathcal{H}_A^- } J^T_\mu[\Psi] n^\mu_{\mathcal{H}}
	 	\dvol_{\mathcal{H}} & = 	
 	\int_{\Sigma_{ \tau_1}\cap\mathcal{R}_A} J^T_\mu[\psi_\flat] n^\mu_{\Sigma_{ \tau_1}} \dvol_{\Sigma_{ \tau_1}} 
 	 + \int_{\mathcal{H}_A^+ ( -\infty,  \tau_1)} 
 	 J^T_\mu [\Psi] n^\mu_{\mathcal{H}} \dvol_{\mathcal{H}} \\ & \lesssim \int_{\Sigma_{ \tau_1}\cap\mathcal{R}_A} J^T_\mu[\psi_\flat] n^\mu_{\Sigma_{ \tau_1}} \dvol_{\Sigma_{ \tau_1}}  + 	\int_{\mathcal{H}_A^+ (  \tau_1, \infty )} J^T_\mu [\psi_\flat] n^\mu_{\mathcal{H}} \dvol_{\mathcal{H}} 
 	\nonumber\\ &  \lesssim 
 	 \int_{\Sigma_{ \tau_1}\cap\mathcal{R}_A} J^T_\mu[\psi_\flat] n^\mu_{\Sigma_{ \tau_1}} \dvol_{\Sigma_{ \tau_1}} .  \label{eq:usedreflection}
\end{align}
We have used the estimate  \begin{align*}\int_{\mathcal{H}_A^+ ( -\infty,  \tau_1)} 
J^T_\mu [\Psi] n^\mu_{\mathcal{H}} \dvol_{\mathcal{H}} \lesssim	\int_{\mathcal{H}_A^+ (  \tau_1, \infty )} J^T_\mu [\psi_\flat] n^\mu_{\mathcal{H}} \dvol_{\mathcal{H}}\end{align*} which follows from our construction of the initial data.
Thus,
\begin{align}\label{eq:energyestimatehtosigma}
	\int_{\mathcal{H}_A^- } J^T_\mu[\Psi] n^\mu_{\mathcal{H}} \dvol_{\mathcal{H}} + \int_{\mathcal{H}_A^+ } J^T_\mu[\Psi] n^\mu_{\mathcal{H}} \dvol_{\mathcal{H}}
	\lesssim 
	 \int_{\Sigma_{ \tau_1}\cap\mathcal{R}_A} J^T_\mu[\Psi] n^\mu_{\Sigma_{ \tau_1}} \dvol_{\Sigma_{ \tau_1}}   .
\end{align}

Now, note that $u[\Psi_\flat]$ defined as  \begin{align}\label{eq:defPsiflat4}
u[\Psi_\flat] (r,\omega,\ell,m) = \frac{r}{\sqrt{2\pi}}\int_{\mathbb R} e^{-i\omega t} \langle {{\Psi}_\flat}, Y_{\ell m} \rangle_{L^2(\mathbb S^2)} \d t
\end{align}
satisfies the assumptions of  \cref{prop:propimprove} such that \eqref{eq:microlocaliled} holds true for $u[\Psi_\flat]$. 
We now integrate the frequency localized energy estimate \eqref{eq:microlocaliled} associated to $ u[\Psi_\flat]$ in $\omega$ and sum over all spherical harmonics. There are two main terms appearing and we will estimate them in the following. This step is similar to \cite[Sections~9.1 and 9.3]{decaykg} so we will be rather brief. 
An application of Plancherel's theorem for the integrated left hand side of \eqref{eq:microlocaliled} yields
\begin{align}\nonumber 
\int_{\mathcal{R}_A}& \left[ |\partial_t {\Psi}_\flat|^2 +  |\partial_{r_\ast} {\Psi}_\flat|^2 +  r^2 |\slashed\nabla {\Psi}_\flat|^2 + r^2 |{{\Psi}_\flat}|^2\right]\d t^\ast \d r \sin\theta \d\theta\d\varphi \\ 
&\lesssim
	\lim_{R_\ast^{-\infty} \to -\infty} \sum_{m\ell}   \int_{\mathbb R} \d \omega \int_{R_\ast^{-\infty}}^{r=\infty} \d r_\ast \frac{\Delta}{r^4} \left[ \omega^2 |u[\Psi_\flat]|^2 + |u[\Psi_\flat]^\prime|^2 +  \ell(\ell+1) |{u[\Psi_\flat]}|^2 +  r^2 |u[\Psi_\flat]|^2 \right].
\end{align}
To estimate the boundary term on the right hand side of \eqref{eq:microlocaliled}, we first decompose $u[\Psi_\flat]$ as $u[\Psi_\flat] = a(\omega,m,\ell) u_1 + b(\omega,m,\ell) u_2$, where $u_1, u_2$ are defined as the unique solutions to the radial o.d.e.\ \eqref{eq:radialode} in the exterior satisfying $u_1 = e^{i\omega r_\ast} + O_\ell (r-r_+)$ and $u_2 = e^{-i \omega r_\ast} + O_\ell(r-r_+)$ as $r\to r_+$ ($r_\ast \to -\infty$). Here, $a = a(\omega,\ell, m)$ and $b = b (\omega,\ell, m)$ are the unique coefficients of the decomposition.
Then, in view of \eqref{eq:boundryterms1} and  $u_1^\prime = i \omega u_1 + O_\ell(r-r_+)$, $u_2^\prime = -i \omega u_2 + O_\ell(r-r_+)$, we estimate 
\begin{align}|\tilde Q| & \nonumber \lesssim \left( |\omega|^2 |a(\omega)|^2 |u_1|^2 +  |\omega|^2 |b(\omega)|^2 |u_2|^2 \right) ( 1 + O_\ell (r-r_+)) \\ & = \left( |\omega|^2 |a(\omega)|^2  +  |\omega|^2 |b(\omega)|^2  \right) ( 1 + O_\ell (r-r_+)) \end{align} as $r\to r_+$. Now, using that $\omega a(\omega)$, $\omega b(\omega)$ are in $L_\omega^1(\mathbb R)$ and in $L_\omega^2(\mathbb R)$ (note that they have compact support), an application of the Riemann--Lebesgue Lemma, the Fourier inversion theorem and Plancherel's theorem shows that $\sum_{m \ell } \int_{\mathbb R} |\omega|^2 (|a(\omega,\ell,m)|^2 + |b(\omega,\ell,m)|^2 ) \d \omega \lesssim  \int_{\mathcal{H}_A^+} |T\Psi_\flat|^2  + \int_{\mathcal{H}_A^-} |T\Psi_\flat|^2  \leq 2\int_{\mathcal{H}_A^+} |T\Psi_\flat|^2   $, where the last inequality follows from the $T$ energy identity  $\int_{\mathcal{H}_A^+} |T\Psi_\flat|^2  = \int_{\mathcal{H}_A^-} |T\Psi_\flat|^2 $ in the region $\mathcal{R}_A$. Thus, we conclude the global integrated energy decay statement
\begin{align}\label{eq:Psiflat}
	\int_{\mathcal{R}_A}& \left[  \frac{1}{r^2} |\partial_t {\Psi}_\flat|^2 +  \frac{1}{r^2} |\partial_{r_\ast} {\Psi}_\flat|^2 +  |\slashed\nabla {\Psi}_\flat|^2 + |{{\Psi}_\flat}|^2\right]\dvol \lesssim \int_{\mathcal{H}_A^+}|T{\Psi}_\flat|^2 .
\end{align}
Hence, in view of $\psi_\flat = \Psi$ in $\{ t^\ast \geq  \tau_1\}\cap \mathcal{R}_A$ we have
\begin{align}\nonumber& \int_{\mathcal{R}_A \cap \{ t^\ast \geq  2\tau_1\}} \left[  \frac{1}{r^2} |\partial_t {\psi}_\flat|^2 +  \frac{1}{r^2} |\partial_{r_\ast} {\psi}_\flat|^2 +  |\slashed\nabla {\psi}_\flat|^2 + |{{\psi}_\flat}|^2\right]\dvol \\
& = \int_{\mathcal{R}_A \cap \{ t^\ast \geq  2\tau_1 \}} \left[  \frac{1}{r^2} |\partial_t {\Psi}|^2 +  \frac{1}{r^2} |\partial_{r_\ast} {\Psi}|^2 +  |\slashed\nabla {\Psi}|^2 + |{{\Psi}}|^2\right]\dvol \nonumber \\
& \lesssim \int_{\mathcal{R}_A \cap \{ t^\ast \geq  2\tau_1 \}} \left[  \frac{1}{r^2} |\partial_t {\Psi_\flat}|^2 +  \frac{1}{r^2} |\partial_{r_\ast} {\Psi_\flat}|^2 +  |\slashed\nabla {\Psi_\flat}|^2 + |{{\Psi_\flat}}|^2\right]\dvol \nonumber \\
& ~ + \int_{\mathcal{R}_A \cap \{ t^\ast \geq  2\tau_1 \}} \left[  \frac{1}{r^2} |\partial_t {\Psi_\sharp}|^2 +  \frac{1}{r^2} |\partial_{r_\ast} {\Psi_\sharp}|^2 +  |\slashed\nabla {\Psi_\sharp}|^2 + |{{\Psi_\sharp}}|^2\right]\dvol \nonumber \\
		 \nonumber  & \lesssim \int_{\mathcal{H}^+_A}|T{\Psi}_\flat|^2 + \int_{t^\ast \geq  2\tau_1} \int_{\Sigma_{t^\ast }\cap\mathcal{R}_A} J^T_\mu [\Psi_\sharp] n_{\Sigma_{t^\ast }}^\mu  \dvol_{\Sigma_{t^\ast }} \d t^\ast\\ \nonumber
				& \lesssim \int_{\mathcal{H}^+_A}|T{\Psi}|^2 + \int_{t^\ast \geq  2\tau_1} \int_{\Sigma_{t^\ast }\cap\mathcal{R}_A} J^T_\mu [\Psi_\sharp] n_{\Sigma_{t^\ast }}^\mu  \dvol_{\Sigma_{t^\ast }} \d t^\ast
		\\ \nonumber &\lesssim \int_{\Sigma_{ \tau_1}\cap\mathcal{R}_A} J^T_\mu[\Psi] n^\mu_{\Sigma_{ \tau_1}} \dvol_{\Sigma_{ \tau_1}} + \int_{t^\ast \geq  2\tau_1} \int_{\Sigma_{t^\ast }\cap\mathcal{R}_A} J^T_\mu [\Psi_\sharp] n_{\Sigma_{t^\ast }}^\mu  \dvol_{\Sigma_{t^\ast }} \d t^\ast\\  & =  \int_{\Sigma_{ \tau_1}\cap\mathcal{R}_A} J^T_\mu[\psi_\flat] n^\mu_{\Sigma_{ \tau_1}} \dvol_{\Sigma_{ \tau_1}} + \int_{t^\ast \geq  2\tau_1} \int_{\Sigma_{t^\ast }\cap\mathcal{R}_A} J^T_\mu [\Psi_\sharp] n_{\Sigma_{t^\ast }}^\mu  \dvol_{\Sigma_{t^\ast }} \d t^\ast.\label{eq:last}
\end{align}
Here, we have also used \eqref{eq:Psiflat}, \eqref{eq:hardyinequ} and the fact that $\int_{\mathcal{H}^+_A}|T{\Psi}_\flat|^2 \lesssim \int_{\mathcal{H}^+_A}|T{\Psi}|^2 $.
Moreover, the estimate $\int_{\mathcal{H}^+_A}|T{\Psi}|^2 \lesssim \int_{\Sigma_{ \tau_1}\cap\mathcal{R}_A} J^T_\mu[\Psi] n^\mu_{\Sigma_{ \tau_1}} \dvol_{\Sigma_{ \tau_1}}$ follows from \eqref{eq:energyestimatehtosigma}.

Finally, we are left with the term $\int_{t^\ast \geq  2\tau_1} \int_{\Sigma_{t^\ast }\cap\mathcal{R}_A} J^T_\mu [\Psi_\sharp] n_{\Sigma_{t^\ast }}^\mu  \dvol_{\Sigma_{t^\ast }} \d t^\ast$. We will show that this term decays at a superpolynomial rate. 
First, introduce the notation $\chi_\sharp := 1 - \chi_{2\omega_0}$ and set $\check{\chi_{2\omega_0}}:= \mathcal{F}^{-1}_T(\chi_{2\omega_0})$, $\check{\chi_\sharp}:= \mathcal{F}^{-1}_T (\chi_\sharp)$, which are well-defined in the distributional sense. Then,	\begin{align}\Psi_\sharp =\frac{1}{\sqrt{2\pi}} \check{\chi_\sharp} \ast \Psi =\frac{1}{\sqrt{2\pi}} \check{\chi_\sharp} \ast (\Psi-\psi_\flat)  \end{align}
since $\check{\chi}_\sharp \ast \psi_\flat =0$ in view of their disjoint Fourier support. In particular, for $t^\ast \geq  \tau_1$ we have
\begin{align}
	\Psi_\sharp  = \frac{1}{\sqrt{2\pi}}\check{\chi_\sharp} \ast (\Psi-\psi_\flat)  = \frac{1}{\sqrt{2\pi}}(\sqrt{2\pi} \delta - \check{\chi_{2\omega_0}}) \ast (\Psi-\psi_\flat) =  -\frac{1}{\sqrt{2\pi}} \check{\chi_{2\omega_0}} \ast (\Psi-\psi_\flat)
\end{align}
as $\delta\ast (\Psi-\psi_\flat)  = \Psi-\psi_\flat =0$ for $t^\ast \geq  \tau_1$. To make notation easier we define $\phi:= -\frac{1}{\sqrt{2\pi}} (\Psi-\psi_\flat)$ which is only supported for $t^\ast \leq  \tau_1$ and satisfies $\Psi_\sharp = \check{\chi_{2\omega_0}} \ast \phi$. 
Now, as a result of the $T$ invariance of $\dvol_{\Sigma_{t^\ast}}$ and  $J_\mu^T[\cdot]n^\mu_{\Sigma_{t^\ast}}$, as well as \eqref{eq:hardyinequ}, we have that
 \begin{align*}
	& \int_{t^\ast \geq 2  \tau_1} \int_{\Sigma_{t^\ast }\cap\mathcal{R}_A} J^T_\mu [\Psi_\sharp] n_{\Sigma_{t^\ast }}^\mu  \dvol_{\Sigma_{t^\ast }} \d t^\ast
	\\ & \lesssim  	 \int_{t^\ast \geq 2  \tau_1}\int_{(r_+,\infty)\times \mathbb{S}^2}\left(\frac{1}{r^2} |\partial_{t^\ast} \Psi_\sharp|^2 + \frac{\Delta}{r^2}|\partial_r \Psi_\sharp|^2 + |\slashed\nabla\Psi_\sharp|^2 \right) r^2 \d\sigma_{\mathbb{S}^2}\d r\d t^\ast 
	\\&	 \leq  	 \int_{t^\ast \geq 2  \tau_1} \int_{(r_+,\infty)\times \mathbb{S}^2} \Bigg[ r^{-2} \left| \int_{-\infty}^{t( \tau_1,r)} \check{\chi_{2\omega_0}}(t(t^\ast,r) - s )  (\partial_{t^\ast} \phi) (s) \d s \right|^2
	\\ & \qquad + \frac{\Delta}{r^2} \left| \int_{-\infty}^{t( \tau_1,r)} \check{\chi_{2\omega_0}}(t(t^\ast,r) - s)  (\partial_{r} \phi) (s) \d s \right|^2 + \left| \int_{-\infty}^{t( \tau_1,r)} |\check{\chi_{2\omega_0}}(t(t^\ast,r) - s)|  |\slashed\nabla \phi| (s)  \d s \right|^2 \Bigg] r^2\d\sigma_{\mathbb{S}^2}\d r\d t^\ast
	\\ &	 \leq \int_{-\infty}^{\infty} | \check{\chi_{2\omega_0}}(s) | \d s	 \int_{t^\ast \geq 2  \tau_1} \int_{(r_+,\infty)\times \mathbb{S}^2} \left[ \int_{-\infty}^{ \tau_1}| \check{\chi_{2\omega_0}}(t^\ast- s^\ast) | r^{-2}|\partial_{t^\ast} \phi|^2 (s^\ast) \d s^\ast \right.
	 \\ & \qquad \left. +  \int_{-\infty}^{ \tau_1} | \check{\chi_{2\omega_0}}(t^\ast - s^\ast)| \frac{\Delta}{r^2} |\partial_{r} \phi|^2 (s^\ast) \d s^\ast   +  \int_{-\infty}^{ \tau_1} |\check{\chi_{2\omega_0}}(t^\ast - s^\ast) | |\slashed\nabla \phi|^2 (s^\ast) \d s^\ast \right] r^2 \d\sigma_{\mathbb{S}^2}\d r\d t^\ast\\&
	 \lesssim \int_{t^\ast \geq 2  \tau_1} \int_{-\infty}^{ \tau_1} |\check{\chi_{2\omega_0}}(t^\ast - s^\ast) | \left( \int_{(r_+,\infty)\times \mathbb{S}^2} \left[ r^{-2}|\partial_{t^\ast}\phi|^2(s^\ast) + \frac{\Delta}{r^2}|\partial_r \phi|^2(s^\ast) + |\slashed\nabla \phi|^2(s^\ast)\right] r^2 \d\sigma_{\mathbb{S}^2} \d r\right) \d s^\ast \d t^\ast\\& \lesssim \int_{\Sigma_0\cap \mathcal{R}_A} J^T_\mu[\phi] n^\mu_{\Sigma_0} \dvol_{\Sigma_0} \int_{t^\ast \geq 2  \tau_1} \int_{-\infty}^{ \tau_1} |\check{\chi_{2\omega_0}}(t^\ast - s^\ast) |  \d s^\ast \d t^\ast\\ &
	 \lesssim_q \int_{\Sigma_0\cap \mathcal{R}_A} J^T_\mu[\psi_\flat] n^\mu_{\Sigma_0} \dvol_{\Sigma_0} \int_{t^\ast \geq 2  \tau_1} \int_{-\infty}^{ \tau_1} \frac{1}{|t^\ast -s^\ast|^{q+2}}\d s^\ast \d t^\ast\\ &
	 \lesssim_q \frac{\int_{\Sigma_0\cap \mathcal{R}_A} J^T_\mu[\psi_\flat] n^\mu_{\Sigma_0} \dvol_{\Sigma_0}}{ 1+ \tau_1^q}.
\end{align*} 
Here, we have used the boundedness of the $T$-energy (cf.\ \eqref{eq:boundednenergyflux}), i.e. \begin{align}\int_{\Sigma_{t^{\ast}}\cap \mathcal{R}_A} J^T_\mu[\phi] n^\mu_{\Sigma_{t^{\ast}}} \dvol_{\Sigma_{t^{\ast}}} \leq \int_{\Sigma_0\cap \mathcal{R}_A} J^T_\mu[\phi] n^\mu_{\Sigma_0} \dvol_{\Sigma_0}\lesssim \int_{\Sigma_0\cap \mathcal{R}_A} J^T_\mu[\psi_\flat] n^\mu_{\Sigma_0} \dvol_{\Sigma_0}. \end{align}
Finally, we have also used that the Schwartz function $\check{\chi_{2\omega_0}}$ decays superpolynomially at any power $q>1$. This concludes the proof in view of \eqref{eq:last}.
\end{proof}
In order to remove the degeneracy of the $T$-energy at the event horizon, we will use the by now standard red-shift vector field \cite{lecture_notes}. As usual, the red-shift vector field $N$ is a future-directed $T$ invariant timelike vector field which has a positive bulk term $K^N\geq 0$ near the event horizon. In a compact $r$ region bounded away from the event horizon $\mathcal{H}_A^+$, the bulk term $K^N$ of $N$ is sign-indefinite but this will be absorbed in the spacetime integral of the $T$ current in \cref{prop:microiled}. Also, note that $N=T$ for large enough $r$.
In the negative mass AdS setting, we refer to \cite[Section~4.2]{Gustav_early} for an explicit construction of the red-shift vector field $N$. Note that the red-shift vector field $N$ has the property that
\begin{align}	 
\int_{\Sigma_{t^\ast} \cap \mathcal{R}_A}	J^N_\mu[\psi_\flat] n^\mu_{\Sigma_{t^\ast}} \dvol_{\Sigma_{t^\ast} } \sim \int_{\Sigma_{t^\ast} \cap \mathcal{R}_A } e_1[\psi_\flat] r^2 \d r \sin\theta\d \theta \d \varphi 
\end{align} for $\psi_\flat$ as in \eqref{eq:fouriertransforminterior}.
\begin{prop}\label{IED} Let ${{\psi}_\flat}$ be as in \eqref{eq:fouriertransforminterior}. Then for any $\tau_2 \geq 2\tau_1 \geq 0$, we have
	\begin{align}
	\nonumber
	 \int_{\Sigma_{\tau_2} \cap \mathcal{R}_A} & J_\mu^N[{{\psi}_\flat}]  n_{\Sigma_{\tau_2}}^\mu \dvol_{\Sigma_{\tau_2}} + \int_{\mathcal{H}_A^+ \cap \{ 2\tau_1\leq t^\ast \leq \tau_2\}} \left(   |\partial_{t^\ast}{{\psi}_\flat}|^2 + |\slashed \nabla {{\psi}_\flat}|^2 + |{\psi}_\flat|^2 \right) \d{t^\ast}\d{\sigma}_{\mathbb S^2}\\ &+ \int_{2\tau_1}^{\tau_2} \int_{\Sigma_{t^\ast}\cap \mathcal{R}_A} J_\mu^N[ {{\psi}_\flat}] n^\mu_{\Sigma_{t^\ast}}  \dvol_{\Sigma_{t^\ast}} \d t^\ast  \lesssim_q	 \int_{\Sigma_{\tau_1} \cap \mathcal{R}_A} J_\mu^N[{{\psi}_\flat}]  n^\mu + \frac{\int_{\Sigma_0 \cap \mathcal{R}_A} J^T_\mu[\psi_\flat] n^\mu_{\Sigma_0} \dvol_{\Sigma_0}}{1+\tau_1^q}
	\end{align}
	and in particular,
	\begin{align}
			\nonumber
			\int_{\Sigma_{\tau_2} \cap \mathcal{R}_A} J_\mu^N[{{\psi}_\flat}]  n_{\Sigma_{\tau_2}}^\mu  \dvol_{\Sigma_{\tau_2}} + \int_{2\tau_1}^{\tau_2} \int_{\Sigma_{t^\ast}\cap \mathcal{R}_A} & J_\mu^N[ {{\psi}_\flat}] n^\mu_{\Sigma_{t^\ast}}  \dvol_{\Sigma_{t^\ast}} \d t^\ast 
			\\ \nonumber & \lesssim_q	 \int_{\Sigma_{\tau_1}\cap \mathcal{R}_A} J_\mu^N[{{\psi}_\flat}]  n^\mu + \frac{\int_{\Sigma_0 \cap \mathcal{R}_A} J^N_\mu[\psi_\flat] n^\mu_{\Sigma_0} \dvol_{\Sigma_0}}{1+\tau_1^q}\\ & \lesssim  \int_{\Sigma_{\tau_1}\cap \mathcal{R}_A} J_\mu^N[{{\psi}_\flat}]  n^\mu + \frac{E^A_1[\psi_\flat](0)}{1+\tau_1^q}.
	\end{align}

\begin{proof}
	We apply the energy identity (the spacetime integral of \eqref{eq:defjx}) with the red-shift vector field $N$ for ${{\psi}_\flat}$ in the region $\mathcal{R}_A\cap \{ 2\tau_1\leq t^\ast \leq \tau_2 \}$, where $2{\tau_1}\leq  \tau_2$. 
	After taking care of the negative lower order term via a Hardy inequality and absorbing the sign-indefinite bulk of $N$ away from the horizon (in the region $\{r\geq r_0\}$ for some $r_0 > r_+$) in the spacetime integral of $J^T$ on the right hand side (see \cite[Section~4]{Gustav_early} for further details), we arrive at
	\begin{align}\nonumber
		\int_{\Sigma_{\tau_2}\cap \mathcal{R}_A}& J_\mu^N[{{\psi}_\flat}]  n_{\Sigma_{\tau_2}}^\mu \dvol_{\Sigma_{\tau_2}}+ \int_{\mathcal{H}^+_A\cap \{2\tau_1\leq t^\ast \leq \tau_2\}} J^N_\mu[{{\psi}_\flat}]n^\mu_{\mathcal{H}} \dvol_{\mathcal{H}}+\int_{2\tau_1}^{\tau_2} \int_{\Sigma_{t^\ast}\cap \mathcal{R}_A} J_\mu^N[ {{\psi}_\flat}] n^\mu_{\Sigma_{t^\ast}}  \dvol_{\Sigma_{t^\ast}} \d t^\ast\\& \lesssim \int_{2\tau_1}^{\tau_2} \int_{\Sigma_{t^\ast}\cap \mathcal{R}_A \cap \{ r\geq r_0 \} } J_\mu^T[ {{\psi}_\flat}] n^\mu_{\Sigma_{t^\ast}}  \dvol_{\Sigma_{t^\ast}} \d t^\ast +	\int_{\Sigma_{2\tau_1}\cap \mathcal{R}_A} J_\mu^N[{{\psi}_\flat}]  n_{\Sigma_{\tau_1}}^\mu \dvol_{\Sigma_{\tau_1}} .\label{eq:nlaw}
	\end{align} 
	First, note that the integrated energy term $\int_{2\tau_1}^{\tau_2} \int_{\Sigma_{t^\ast}\cap \mathcal{R}_A \cap \{ r\geq r_0 \} } J_\mu^T[ {{\psi}_\flat}] n^\mu_{\Sigma_{t^\ast}}  \dvol_{\Sigma_{t^\ast}} \d t^\ast $ on the right-hand side of \eqref{eq:nlaw} can be controlled by the left-hand side of \cref{prop:microiled}. Then, remark that the integral along the horizon $\int_{\mathcal{H}^+_A\cap \{2\tau_1\leq t^\ast \leq \tau_2\}}J^N_\mu[{{\psi}_\flat}]n^\mu_{\mathcal{H}} \dvol_{\mathcal{H}}$ is sign-indefinite due to the (possible) negative mass. However, this can be absorbed in the bulk term using an $\epsilon$ of the integrated bulk term of the red-shift vector field $N$ and some of the bulk term of the integrated energy estimate in \cref{prop:microiled}, cf.\ \cite[Equation~(70)]{Gustav_early}. Finally, using the integrated energy estimate from \cref{prop:microiled} again, we conclude
\begin{align}
		 \nonumber
		 \int_{\Sigma_{\tau_2} \cap \mathcal{R}_A} & J_\mu^N[{{\psi}_\flat}]  n_{\Sigma_{\tau_2}}^\mu \dvol_{\Sigma_{\tau_2}} + \int_{\mathcal{H}_A^+ \cap \{ 2\tau_1\leq t^\ast \leq \tau_2\}} \left(   |\partial_{t^\ast}{{\psi}_\flat}|^2 + |\slashed \nabla {{\psi}_\flat}|^2 + |{\psi}_\flat|^2 \right) \d{t^\ast}\d{\sigma}_{\mathbb S^2}\\ &+ \int_{2\tau_1}^{\tau_2} \int_{\Sigma_{t^\ast}\cap \mathcal{R}_A} J_\mu^N[ {{\psi}_\flat}] n^\mu_{\Sigma_{t^\ast}}  \dvol_{\Sigma_{t^\ast}} \d t^\ast  \lesssim_q	 \int_{\Sigma_{\tau_1}\cap \mathcal{R}_A} J_\mu^N[{{\psi}_\flat}]  n_{\Sigma_{\tau_1}}^\mu + \frac{\int_{\Sigma_0 \cap \mathcal{R}_A} J^T_\mu[\psi_\flat] n^\mu_{\Sigma_0} \dvol_{\Sigma_0}}{1+\tau_1^q}.\label{eq:dec}
\end{align}
\end{proof}
\end{prop}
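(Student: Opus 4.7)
The plan is to apply the energy identity for the red-shift vector field $N$ in the spacetime slab $\mathcal{R}_A \cap \{2\tau_1 \leq t^\ast \leq \tau_2\}$ and absorb the bad bulk contributions using the integrated $T$-current decay of \cref{prop:microiled}. Recall that $N$ is a future-directed, $T$-invariant timelike vector field for which $J^N_\mu n^\mu$ is coercive (modulo the negative zeroth order term from the possibly negative Klein--Gordon mass) uniformly up to and including $\mathcal{H}_A^+$, while $K^N[\psi_\flat]$ is non-negative in a neighborhood $\{r_+ \leq r \leq r_0\}$ of the event horizon. Outside this neighborhood the bulk is sign-indefinite but pointwise bounded by $J^T_\mu n^\mu$, since $N$ can be chosen to coincide with $T$ for large $r$ and the transition region is compact.

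After integrating and noting that the flux through $\mathcal{I}_A$ vanishes by the Dirichlet condition, the energy identity produces on the left-hand side the non-negative fluxes on $\Sigma_{\tau_2}$ and on $\mathcal{H}_A^+(2\tau_1,\tau_2)$ together with the positive bulk near the horizon, and on the right-hand side the flux on $\Sigma_{2\tau_1}$ plus a spacetime integral of $J^T_\mu n^\mu$ supported in $\{r \geq r_0\}$. The latter is precisely the quantity controlled by \cref{prop:microiled}: it is bounded by $\int_{\Sigma_{\tau_1}\cap\mathcal{R}_A}J^T_\mu[\psi_\flat] n^\mu$ plus the superpolynomial error $C(q)\,E^A_1[\psi_\flat](0)/(1+\tau_1^q)$. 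Using $J^T_\mu n^\mu \lesssim J^N_\mu n^\mu$ on the spacelike slices then gives the stated estimate in its two forms.

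Two sign-indefinite contributions need separate treatment. First, on the spacelike slices $J^N_\mu n^\mu$ contains a possibly negative $|\psi_\flat|^2$ term coming from $\alpha < 0$, which is absorbed using the Hardy-type inequality underlying \eqref{eq:hardyinequ} and restores coercivity. Second, the horizon flux $\int_{\mathcal{H}_A^+(2\tau_1,\tau_2)}J^N_\mu n^\mu$ inherits the same sign-indefinite zeroth order piece; we handle it by sacrificing a small $\epsilon$ of the positive red-shift bulk inside $\{r \leq r_0\}$, together with a small portion of the integrated $T$-bulk already controlled by \cref{prop:microiled}, exactly as in \cite{Gustav_early}.

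The main obstacle is structural rather than technical: without a quantitative decay for the $T$-current with a time-decaying error, the sign-indefinite part of $K^N$ away from the event horizon cannot be closed into a statement that itself decays. It is precisely the superpolynomial factor $C(q)/(1+\tau_1^q)$ delivered by \cref{prop:microiled} that survives into the right-hand side of the final estimate. The doubling of the time parameter, working on $[2\tau_1,\tau_2]$ instead of $[\tau_1,\tau_2]$, is the standard pigeonhole that will subsequently allow pointwise-in-time decay to be extracted from the integrated statement, and propagates unchanged through the red-shift argument.
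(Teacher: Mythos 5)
Your proposal is correct and follows essentially the same route as the paper: the $N$-energy identity on $\mathcal{R}_A\cap\{2\tau_1\leq t^\ast\leq\tau_2\}$ with red-shift positivity near $\mathcal{H}_A^+$, absorption of the sign-indefinite bulk in $\{r\geq r_0\}$ into the $T$-current spacetime integral controlled by \cref{prop:microiled}, a Hardy inequality for the negative zeroth order term, and an $\epsilon$-absorption of the sign-indefinite horizon flux into the red-shift and \cref{prop:microiled} bulk terms, with the superpolynomial error surviving to the right-hand side. This matches the paper's argument, including the citation of the construction in \cite{Gustav_early}.
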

Now we obtain
\begin{prop}\label{prop:psiflat}Let $\psi_\flat$ be defined as in \eqref{eq:fouriertransforminterior}. Then, for any $q>1$ and $\tau \geq 0$ we have
	\begin{align}\label{eq:superpolynomialdecay}
		 \int_{\Sigma_{\tau} \cap \mathcal{R}_A} J_\mu^N[{{\psi}_\flat}]  n^\mu_{\Sigma_\tau} \lesssim_q \frac{1}{1+\tau^q}  \int_{\Sigma_{0}\cap \mathcal{R}_A} J_\mu^N[{{\psi}_\flat}]  n_{\Sigma_0}^\mu \dvol_{\Sigma_0}  \lesssim_q \frac{1}{1+\tau^q}E^A_1[\psi_\flat](0)
	\end{align}
	and 
	\begin{align}\label{eq:decayalongeventhorizon}
\int_{\mathcal{H}(\tau,+\infty)} |\partial_{t^\ast} {{\psi}_\flat}|^2 + ( |\slashed\nabla {{\psi}_\flat}|^2 + |{\psi}_\flat|^2 ) \lesssim_q \frac{1}{1+\tau^q}  \int_{\Sigma_{0}\cap \mathcal{R}_A} J_\mu^N[{{\psi}_\flat}]  n_{\Sigma_0}^\mu \dvol_{\Sigma_0} \lesssim_q \frac{1}{1+\tau^q}E^A_1[\psi_\flat](0) .
	\end{align}
\end{prop}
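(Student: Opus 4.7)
The plan is to bootstrap the integrated energy decay of Proposition \ref{IED} into the pointwise decay $F(\tau):=\int_{\Sigma_\tau\cap\mathcal{R}_A}J^N_\mu[\psi_\flat]n^\mu \lesssim_q D_\flat(1+\tau)^{-q}$ for every $q>1$, where $D_\flat := E^A_1[\psi_\flat](0)$. Sending $\tau_2\to\infty$ in Proposition \ref{IED} furnishes two reusable consequences: an approximate monotonicity $F(\tau_2)\lesssim_q F(\tau_1)+D_\flat(1+\tau_1)^{-q}$ whenever $\tau_2\geq 2\tau_1$, and the integrated bound $\int_{2\tau_1}^\infty F(s)\,\d s \lesssim_q F(\tau_1)+D_\flat(1+\tau_1)^{-q}$. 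Together with the basic boundedness $F(\tau)\lesssim D_\flat$ from Proposition \ref{prop:basicenergyest}, these are the only inputs I need; all the decay information is already packaged into the super-polynomial tail of Proposition \ref{IED} through the free parameter~$q$.

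I then prove by induction on $k\in\mathbb{N}_0$ that $F(\tau)\lesssim_k D_\flat(1+\tau)^{-k}$. The base $k=0$ is boundedness. For the inductive step, assume the bound at level $k$, let $\tau\geq 1$, and set $\tau_1 := \tau/8$. Applying the integrated bound with parameter $q=k+2$ and substituting the inductive hypothesis into the $F(\tau_1)$ term on the right gives $\int_{2\tau_1}^{4\tau_1}F(s)\,\d s\lesssim_k D_\flat \tau_1^{-k}$. A standard pigeonhole on this dyadic window then produces some $\sigma\in[2\tau_1,4\tau_1]$ with $F(\sigma)\lesssim_k D_\flat \tau_1^{-(k+1)}\sim D_\flat\sigma^{-(k+1)}$. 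Since $\sigma\leq 4\tau_1=\tau/2$, the requirement $\tau\geq 2\sigma$ is met, so the approximate monotonicity propagates from $\sigma$ to $\tau$ and yields $F(\tau)\lesssim_k D_\flat \tau^{-(k+1)}$, closing the induction. For $\tau$ in a bounded range the desired bound is immediate from boundedness, and for arbitrary $q>1$ we apply the result with $k=\lceil q\rceil$.

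The horizon flux estimate \eqref{eq:decayalongeventhorizon} is then a direct consequence of reapplying Proposition \ref{IED} with $\tau_1:=\tau/2$ and $\tau_2\to\infty$: the horizon integral on the left is bounded by $F(\tau/2)+D_\flat(1+\tau/2)^{-(q+1)}$, which by the decay of $F$ just established is $\lesssim_q D_\flat\tau^{-q}$. Finally, the bound $D_\flat\lesssim E^A_1[\psi](0)$ for the data term is supplied by Step~2 of the proof of Theorem \ref{thm:main}. The only genuine obstacle in the argument is the constraint $\tau_2\geq 2\tau_1$ in Proposition \ref{IED}, which forbids a naive iteration; the pigeonhole on the dyadic window $[2\tau_1,4\tau_1]$ is precisely what bridges this gap and allows each induction step to improve the decay rate by a full power, leveraging the freedom to choose $q$ as large as we wish in Proposition \ref{IED}.
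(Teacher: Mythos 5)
Your argument is correct and is essentially the paper's own proof: the paper packages exactly this bootstrap as \cref{lem:expdecay}, whose proof is the same dyadic pigeonhole on an interval $[2\tau_1,4\tau_1]$ combined with the approximate monotonicity from \cref{IED}, iterated to gain one power of decay per step starting from uniform boundedness. The deduction of \eqref{eq:decayalongeventhorizon} from \cref{IED} once \eqref{eq:superpolynomialdecay} is known also matches the paper.
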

\begin{proof}In view of \cref{IED} it suffices to prove \eqref{eq:superpolynomialdecay}.
	Upon setting \begin{align*}f(s) := \int_{\Sigma_{s} \cap \mathcal{R}_A}  J_\mu^N[{{\psi}_\flat}]  n_{\Sigma_{s}}^\mu \dvol_{\Sigma_{s}}, \end{align*}
	we have from \cref{IED} that
	\begin{align*}
		f(t_2) + \int_{2t_1}^{t_2} f(s)  \d s \lesssim_q f(t_1)  + \frac{f(0)}{1+t_1^q}
	\end{align*}
	for any $t_2\geq 2 t_1 \geq 0$. 
	The claim follows now from  \cref{lem:expdecay} below.
\end{proof}
\begin{lemma}\label{lem:expdecay}
Let $f\colon [0,\infty)\to [0,\infty)$ be a continuous function satisfying
\begin{align}\label{eq:lem}
	f(t_2) + \int_{2t_1}^{t_2} f(s) \d s \leq \alpha(q) \left( f(t_1) + \frac{f(0)}{1+t_1^q}\right)
\end{align}
for any $q>1$, $0\leq 2t_1 \leq t_2$ and some $\alpha(q) >0$ only depending on $q$. Then, for all $q>1$, there exists a constant $C(\alpha(q),q)>0$ only depending on $\alpha$ and $q$ such that
\begin{align}
	f(t) \leq\frac{ C(\alpha(q),q)   }{1+t^q} f(0)
\end{align}
for all $t\geq 0$. 
\begin{proof} Fix $q>1$. First, note that from \eqref{eq:lem} we have for any $t_2\geq 2 t_1>0$
	\begin{align*}
		f(t_2) \leq \alpha(q) \left(f(t_1) + \frac{f(0)}{1+t_1^q}\right).
	\end{align*}
	Without loss of generality, let $t>10$ be arbitrary. Then, take a dyadic sequence $\tau_{k+1} = 2\tau_k$, where $\tau_0 =1$. Now, there exists a $n\in \mathbb{N}_0$ such that $t\in [\tau_{n+3} , \tau_{n+4}]$. Then, again from \eqref{eq:lem} we have
	\begin{align*}
		\int_{\tau_{n+1}}^{\tau_{n+2}} f(s) \d s \leq \alpha(q) \left( f(\tau_n) + \frac{f(0)}{1+\tau_n^q}\right) 
	\end{align*}
	from which we conclude that there exists a $\xi \in [\tau_{n+1}, \tau_{n+2}]$ such that
	\begin{align*}
		f(\xi)  \leq \alpha(q) \left( \frac{f(\tau_n)}{\tau_{n+1}} + \frac{f(0)}{1+\tau_n^{q+1}}\right).
	\end{align*}
	Hence, since $2\xi\leq \tau_{n+3} \leq t \leq \tau_{n+4}$, 
	\begin{align}\label{eq:lem2}
		f(t) & \leq \alpha(q) \left( f(\xi) + \frac{f(0)}{1+\tau_{n+1}^q}\right) \leq \alpha(q) \left(\alpha(q) \left(\frac{f(\tau_n)}{\tau_{n+1}} + \frac{f(0)}{1+\tau_n^{q+1}} \right) +\frac{f(0)}{1+\tau_{n+1}^q }\right)  .
	\end{align}
	Now, note that $\tau_n \sim t$ and hence, $f(t) \leq C(1,\alpha(q)) \frac{1}{1+t}$. This improved decay can now be fed into \eqref{eq:lem2} to obtain a decay of the form $f(t) \leq C(2,\alpha(q)) \frac{1}{1+t^2}$. This procedure can be iterated until one obtains \begin{align}
		f(t) \leq  \frac{C(q,\alpha(q))}{1+t^q}f(0).
	\end{align}
\end{proof}
\end{lemma}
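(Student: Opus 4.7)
The plan is to establish the decay iteratively via a bootstrap argument, starting from uniform boundedness and improving the polynomial rate by one power per iteration until it reaches $q$. First I would drop the nonnegative integral on the left-hand side of the hypothesis to obtain the pointwise bound
\begin{equation*}
f(t_2) \leq \alpha(q)\left(f(t_1) + \frac{f(0)}{1+t_1^q}\right) \quad \text{for all } t_2 \geq 2t_1 \geq 0.
\end{equation*}
Setting $t_1 = 0$ yields uniform boundedness $f(t) \lesssim f(0)$ for $t \geq 0$ (and continuity handles a neighborhood of $0$). In particular, it suffices to prove the claimed decay for $t$ large, say $t \geq 10$.

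Next I would run a dyadic argument. Let $\tau_k = 2^k$, and given $t \geq 10$ pick $n$ with $t \in [\tau_{n+3}, \tau_{n+4}]$, so in particular $\tau_n \sim t$. Applying the hypothesis with $t_1 = \tau_n$, $t_2 = \tau_{n+2}$ bounds $\int_{\tau_{n+1}}^{\tau_{n+2}} f(s)\,ds$, and by the mean value theorem there exists $\xi \in [\tau_{n+1}, \tau_{n+2}]$ with
\begin{equation*}
f(\xi) \leq \frac{\alpha(q)}{\tau_{n+1}}\left(f(\tau_n) + \frac{f(0)}{1+\tau_n^q}\right) \lesssim \frac{f(\tau_n)}{\tau_n} + \frac{f(0)}{\tau_n^{q+1}}.
\end{equation*}
Since $2\xi \leq \tau_{n+3} \leq t$, the pointwise bound above then gives
\begin{equation*}
f(t) \leq \alpha(q)\left(f(\xi) + \frac{f(0)}{1+\xi^q}\right) \lesssim \frac{f(\tau_n)}{\tau_n} + \frac{f(0)}{\tau_n^q} \lesssim \frac{f(\tau_n)}{t} + \frac{f(0)}{t^q}.
\end{equation*}

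Now I would bootstrap. Inserting the uniform bound $f(\tau_n) \lesssim f(0)$ yields $f(t) \lesssim f(0)/(1+t)$. Feeding this improved rate back into the displayed estimate produces $f(t) \lesssim f(0)/(1+t^{\min(2,q)})$, and after $\lceil q \rceil$ iterations the polynomial decay rate saturates at $q$, giving
\begin{equation*}
f(t) \leq \frac{C(q,\alpha(q))}{1+t^q}\, f(0)
\end{equation*}
for all $t \geq 10$. Combining with the trivial bound on $[0,10]$ finishes the proof.

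The main (minor) obstacle is simply keeping the constants tracked through the finite bootstrap so that the final constant depends only on $q$ and $\alpha(q)$; each iteration only multiplies by a factor depending on these, so $\lceil q \rceil$ iterations suffice. No substantial analytic difficulty arises beyond the dyadic pigeonhole step that transfers the integral control into a pointwise bound at a well-chosen $\xi$.
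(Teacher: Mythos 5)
Your proposal is correct and follows essentially the same route as the paper: the same dyadic sequence, the same pigeonhole step producing a good point $\xi\in[\tau_{n+1},\tau_{n+2}]$, and the same finite bootstrap that upgrades the decay rate by one power per iteration until it saturates at $q$. The only addition is that you make the initial uniform bound $f(t)\lesssim f(0)$ (from $t_1=0$) explicit, which the paper uses implicitly.
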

\subsection{Interior estimates}\label{sec:interiorlow}
 Having obtained the superpolynomial decay for $\psi_\flat$ in the exterior and in particular on the event horizon, we will now use this to show uniform boundedness in the black hole interior. 
We will first propagate the superpolynomial decay on the horizon established in \cref{prop:psiflat} further into the interior. To do so we will make use of the twisted red-shift.
\subsubsection{Red-shift region}
\label{subsubsec:twistedredshift}
	With the help of the constructed twisted red-shift current in \cref{prop:redshift}, we obtain
	\begin{prop}\label{prop:reddecay}
		Let $r_0 \in [\red,r_+)$. Let ${{\psi}_\flat}$ defined as in \eqref{eq:fouriertransforminterior} and recall that from \cref{prop:psiflat} we have
			\begin{align}
			\int_{\mathcal{H}(v_1,v_2)} \tilde J_\mu^N[{{\psi}_\flat}] n^\mu_{\mathcal{H}^+} \dvol_{\Hp^+} \lesssim_q \frac{1}{1+v_1^q}E^A_1[\psi_\flat](0) 
			\end{align}
			for $1\leq v_1 \leq v_2$.
			 Then, 
		 \begin{align}\label{eq:reddecay1}
			\int_{ \cv{v_1} (r_0,r_+) } \tilde J_\mu ^N[{{\psi}_\flat}] n_{\cv{v}}^\mu \dvol_{\cv{v}} \sim \int_{-\infty}^{u_{r_0}(v_1)} \int_{\mathbb{S}^2}\frac{1}{\Omega^2} |\tilde{\nabla}_u {{\psi}_\flat}|^2 + \Omega^2( |\slashed\nabla {{\psi}_\flat}|^2  + \mathcal{V} |{{\psi}_\flat}|^2)  \d\sigma_{\mathbb S^2} \d u &\lesssim_q \frac{1}{1+v_1^q}E_1[\psi_\flat](0),\\ 
			\int_{\Sigma_{r_0}(v_1,v_2)} \tilde J_\mu^N[{{\psi}_\flat}] n^\mu_{\Sigma_r} \dvol_{\Sigma_r} \sim \int_{v_1}^{v_2} \int_{\mathbb S^2} \frac{1}{\sqrt{\Omega^2}} |\tilde \nabla_u {{\psi}_\flat}|^2   + \sqrt{\Omega^2} \Big( |\tilde \nabla_v {{\psi}_\flat}|^2 + |\slashed \nabla {{\psi}_\flat}|^2 + &\mathcal{V} {|{\psi}_\flat|}^2 \Big)  \d v \d\sigma_{\mathbb{S}^2}\nonumber  \\   &\lesssim_q \frac{E_1[\psi_\flat](0)}{1+v_1^q} \label{eq:reddecay2}\end{align}
		for any $1\leq v_1 \leq v_2$.
		\begin{proof}
			 From \cref{prop:redshiftprelim}, estimate \eqref{eq:decayalongeventhorizon} in \cref{prop:psiflat} and upon defining  
			\begin{align}
			\tilde 	E(v) := \int_{\underline{C}_v(r_0,r_+)}\tilde J_\mu^N[{{\psi}_\flat}] n^\mu_{\underline{C}_v} \dvol_{\underline{C}_v},
			\end{align}
			we obtain
			\begin{align}\label{eq:480}
				\tilde E(v_2) +  \int_{v_1}^{v_2}\tilde E(v) \d{v} \lesssim_q \tilde E(v_1) +\frac{E^A_1[\psi_\flat](0)}{1+v_1^q},
			\end{align}
			for any $1\leq v_1 \leq v_2$. This implies
			\begin{align}
			\tilde	E(v) \lesssim_q (\tilde E(v=1) + E^A_1[\psi_\flat](0) ) \frac{1}{1+v^q}
			\end{align}
			for any $v\geq 1$.
			This follows from an argument very similar to \cref{lem:expdecay}.
			Note that we have by general theory \cite{lecture_notes} that $\tilde E(v=1) \lesssim E_1[\psi_\flat](0)$.  Thus,
			\begin{align}
			\tilde 	E(v) \lesssim_q E_1[\psi_\flat](0) \frac{1}{1+v^q}
					\end{align} 
			for $v\geq 1$ which proves \eqref{eq:reddecay1}. The estimate
			\eqref{eq:reddecay2} now follows from \eqref{eq:reddecay1} and \cref{prop:redshiftprelim}.  
		\end{proof}
	\end{prop}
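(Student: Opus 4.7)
The plan is to combine the pointwise-in-$v$ twisted red-shift estimate of \cref{prop:redshiftprelim} with the superpolynomial horizon decay of \cref{prop:psiflat}, and then close the loop with a Gr\"onwall-type iteration in the ingoing null coordinate $v$. Concretely, define
$$\tilde E(v) := \int_{\cv{v}(r_0,r_+)} \tilde J^N_\mu[\psi_\flat]\, n^\mu_{\cv{v}}\, \dvol_{\cv{v}}.$$
Feeding the horizon bound of \cref{prop:psiflat} into the right-hand side of \cref{prop:redshiftprelim} then yields, for every $1\le v_1\le v_2$,
$$\tilde E(v_2) + \int_{v_1}^{v_2} \tilde E(v)\,\d v \;\lesssim_q\; \tilde E(v_1) + \frac{E_1^A[\psi_\flat](0)}{1+v_1^q}.$$

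From this integral inequality I would deduce $\tilde E(v)\lesssim_q E_1[\psi_\flat](0)\,(1+v)^{-q}$ by essentially the dyadic bootstrap already carried out in \cref{lem:expdecay}: the spacetime integral $\int \tilde E\,\d v$ on the left lets one use the mean-value theorem to find an intermediate slice on which $\tilde E$ is small, and the pointwise inequality for $\tilde E(v_2)$ then propagates that smallness forward; iterating bootstraps the rate from an initial $1/v$ up to any prescribed $q>1$. The base value $\tilde E(v=1)$ is controlled by $E_1[\psi_\flat](0)$ via a routine local energy estimate in the compact spacetime region to the past of $\cv{1}\cap\{r\ge r_0\}$, using general theory as in \cite{lecture_notes}.

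Estimate \eqref{eq:reddecay2} for the $\Sigma_{r_0}(v_1,v_2)$ flux then requires no new work: it already appears on the left-hand side of \cref{prop:redshiftprelim}, and so inherits the bound $\lesssim_q \tilde E(v_1) + (1+v_1^q)^{-1} E_1^A[\psi_\flat](0)$; substituting the just-proven decay rate for $\tilde E(v_1)$ closes the argument. The explicit null-component expressions in \eqref{eq:reddecay1}--\eqref{eq:reddecay2} follow from the component formulas recorded in \cref{prop:appendix}, together with the coercivity of $\tilde J^N_\mu n^\mu$ on the normals of $\cv{v}$ and $\Sigma_r$ throughout $[\red,r_+]$, which in turn relies on the twisting function having been chosen so that the potential $\mathcal V\ge 0$ there, as established in \cref{prop:redshift}.

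The main obstacle I foresee is the iteration itself: the inhomogeneity on the right-hand side is already polynomially decaying at rate $q$, yet a direct application of the pointwise inequality only delivers decay at rate $1/v$, so the improvement to arbitrary $q>1$ has to be extracted step by step, each time feeding the better rate back in while keeping the constants uniform in $q$. This is precisely the content of \cref{lem:expdecay}, which was already applied to the exterior $T$-energy in \cref{prop:psiflat}, and reapplying it here is essentially mechanical once the structural inequality above has been obtained.
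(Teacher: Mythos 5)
Your proposal is correct and follows essentially the same route as the paper: define $\tilde E(v)$ on the ingoing cones, feed the horizon decay of \cref{prop:psiflat} into \cref{prop:redshiftprelim} to get the integral inequality, run the dyadic/Gr\"onwall iteration of \cref{lem:expdecay}, control $\tilde E(v=1)$ by a local energy estimate, and recover \eqref{eq:reddecay2} by reapplying \cref{prop:redshiftprelim}. No gaps worth flagging.
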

		
	\subsubsection{No-shift region}
	\label{subsubsec:noshiftregion}
Now, we will propagate the decay towards $i^+$ further into the black hole for $r\in [\red,\rblue]$, where $\rblue >r_-$ is determined in the proof of \cref{prop:decayinblueshift}.
		\begin{prop}Let ${{\psi}_\flat}$ defined as in \eqref{eq:fouriertransforminterior}.
			For any $r_0 \in [\rblue, \red]$, $q>1$ and any $v_\ast \geq1$ we have
		\begin{align}
		\label{eq:decayvv+1}
&\int_{\Sigma_{r_0}(v_\ast, 2v_\ast)} \tilde J_\mu^X[{{\psi}_\flat}] n^\mu_{\Sigma_{r}} \dvol_{\Sigma_{r}} \lesssim_q \frac{E_1[\psi_\flat](0)}{1+v^q_\ast}.\end{align}
Moreover, for any $1<p<q$ we also have
\begin{align}
&\int_{\Sigma_{r_0}(v_\ast, +\infty)} (\vp + \up )\tilde J_\mu^X[{{\psi}_\flat}] n^\mu_{\Sigma_{r}} \dvol_{\Sigma_{r}} \lesssim_{q,p} {E_1[\psi_\flat](0)}.
\label{eq:decayv2}
\end{align}
		\end{prop}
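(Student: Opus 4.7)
The plan is to transport the superpolynomial decay already established on $\Sigma_{\red}$ by \cref{prop:reddecay} across the no-shift region using \cref{prop:noshiftprop}, and then promote dyadic decay to the weighted integral bound by a standard dyadic summation.

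For \eqref{eq:decayvv+1}, I would first invoke \cref{prop:noshiftprop} to obtain
\[
\int_{\Sigma_{r_0}(v_\ast, 2v_\ast)} \tilde J_\mu^X[\psi_\flat] n^\mu_{\Sigma_r} \dvol_{\Sigma_r} \lesssim \int_{\Sigma_{\red}(v_{\red}(u_{r_0}(v_\ast)), 2v_\ast)} \tilde J_\mu^X[\psi_\flat] n^\mu_{\Sigma_r} \dvol_{\Sigma_r}.
\]
The lower limit equals $v_\ast - 2(r_\ast(r_0) - r_\ast(\red))$, which differs from $v_\ast$ by a constant uniformly bounded for $r_0 \in [\rblue, \red]$. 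On the compact hypersurface $\Sigma_{\red}$, both the no-shift twisting function $f_{\mathrm{ns}}$ and the red-shift twisting function from \cref{prop:redshift} are smooth and strictly positive, so the twisted $X$-energy density on $\Sigma_{\red}$ is pointwise equivalent to the twisted $N$-energy density (both being comparable to $|\nabla \psi_\flat|^2 + |\psi_\flat|^2$). Hence \eqref{eq:reddecay2} of \cref{prop:reddecay}, applied at $r_0 = \red$ with $v_1 \sim v_\ast$ and $v_2 = 2v_\ast$, yields the desired bound $\lesssim_q E_1[\psi_\flat](0)/(1+v_\ast^q)$ for $v_\ast \geq 1$ (for the remaining bounded range of $v_\ast$, the same bound is trivial from energy conservation).

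For the weighted estimate \eqref{eq:decayv2}, I would dyadically decompose $[v_\ast, +\infty) = \bigsqcup_{k\geq 0}[2^k v_\ast, 2^{k+1} v_\ast]$. On $\Sigma_{r_0}$ the identity $u + v = 2 r_\ast(r_0)$ forces $\vp + \up \lesssim (2^k v_\ast)^p$ on the $k$-th slab, uniformly in $r_0 \in [\rblue, \red]$ and $v_\ast \geq 1$. Applying \eqref{eq:decayvv+1} to each slab (with $v_\ast$ replaced by $2^k v_\ast$) reduces the weighted integral to a geometric sum
\[
\sum_{k=0}^\infty (2^k v_\ast)^p \cdot \frac{E_1[\psi_\flat](0)}{1 + (2^k v_\ast)^q} \lesssim_{q,p} v_\ast^{p-q}\, E_1[\psi_\flat](0) \lesssim_{q,p} E_1[\psi_\flat](0),
\]
where convergence uses $p < q$ and the last inequality uses $v_\ast \geq 1$.

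No serious obstacle is expected. The only mild subtlety is the pointwise equivalence of $\tilde J^X$ and $\tilde J^N$ at $\Sigma_{\red}$, which is immediate from the compactness of that $r$-slice and the smoothness and strict positivity of both twisting functions there; everything else is a clean combination of \cref{prop:noshiftprop}, \eqref{eq:reddecay2}, and dyadic summation.
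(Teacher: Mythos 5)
Your proposal is correct and follows essentially the same route as the paper: \cref{prop:noshiftprop} to transfer the estimate to $\Sigma_{\red}$ (with the constant shift $v_\ast - v_{\red}(u_{r_0}(v_\ast))$), \cref{prop:reddecay} to supply the superpolynomial decay there, and a dyadic summation together with $\langle u\rangle^p \sim \langle v\rangle^p$ on $\Sigma_{r_0}$ for the weighted bound, which is exactly the paper's ``well-known lemma.'' Your explicit remark that the $\tilde J^N$ and $\tilde J^X$ fluxes (with their respective twisting functions) are pointwise comparable on the compact slice $\Sigma_{\red}$ is a point the paper leaves implicit, but it is the same argument.
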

		\begin{proof}
						Applying \cref{prop:noshiftprop} with $\phi = \psi_\flat$ we have \eqref{eq:noshiftestimate} for $\psi_\flat$. To estimate the right-hand side of \eqref{eq:noshiftestimate} we use \cref{prop:reddecay} and the fact that the difference $v_\ast  - v_{\red}(u_{r_0}(v_\ast ))) = const.$ to obtain
							\begin{align}\label{eq:ev2}
							 \int_{\Sigma_{\red}(v_{\red }(u_{r_0} (v_\ast )), 2v_\ast )} \tilde J_\mu^X[{{\psi}_\flat}] n^\mu_{\Sigma_{ r}}\dvol_{\Sigma_{ r}} \lesssim_q \frac{E_1[\psi_\flat](0)}{1+v^q_\ast}
					\end{align}
					from which \eqref{eq:decayvv+1} follows.
							Finally, \eqref{eq:decayv2} is a consequence of the fact that $ \langle v \rangle^p \sim \langle u \rangle^p$ (using $\rblue \leq r \leq \red$) and the following well-known lemma.	\end{proof}
							\begin{lemma}
								Let $f\colon [1,\infty) \to \mathbb{R}_{\geq 0}$ be continuous and assume that there exists a $q\in \mathbb R$, $q>1$ such that $\int_x^{2x} f(s) \d s \leq \frac{D}{x^q}$ for all $x\geq 1$ and some constant $D>0$. Let $1<p<q$ be fixed. Then,
								$\int_1^\infty s^p f(s)  \d s < C(q,p) D$ for a constant $C(p,q)>0$ only depending on $p$ and $q$. 
								\begin{proof}
									Set $x_i := 2^i $. Then,
									$\int_1^\infty s^p f(s)  \d s = \sum_{i=0}^\infty \int_{x_i}^{x_{i+1}} s^p f(s) \d s\leq 2^p D \sum_{i=0}^\infty 2^{ip-iq}<C(q,p) D.$			\end{proof}
							\end{lemma}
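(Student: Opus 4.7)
The plan is a straightforward dyadic decomposition of the integral $\int_1^\infty s^p f(s)\,\d s$. Set $x_i := 2^i$ for $i \geq 0$, so that the intervals $[x_i, x_{i+1}]$ partition $[1,\infty)$ (up to endpoints) and each has the form $[x, 2x]$ with $x = x_i$, which is exactly the shape for which the hypothesis gives control on $\int f$.

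On each piece, the plan is to pull out the maximum of $s^p$, which occurs at the right endpoint, and then apply the hypothesis. Concretely, for each $i$,
\begin{align*}
\int_{x_i}^{x_{i+1}} s^p f(s)\,\d s \leq x_{i+1}^p \int_{x_i}^{x_{i+1}} f(s)\,\d s = 2^p x_i^p \int_{x_i}^{2 x_i} f(s)\,\d s \leq 2^p x_i^p \cdot \frac{D}{x_i^q} = \frac{2^p D}{x_i^{q-p}}.
\end{align*}
Summing in $i$ and using $x_i = 2^i$ yields
\begin{align*}
\int_1^\infty s^p f(s)\,\d s \leq 2^p D \sum_{i=0}^\infty 2^{-i(q-p)} = \frac{2^p D}{1 - 2^{-(q-p)}} =: C(q,p)\, D,
\end{align*}
where the geometric series converges because $q > p$. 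This is the entirety of the proof; no step is substantive and the constant $C(q,p)$ depends only on $p$ and $q$ as claimed. Since $q>p$ is precisely the condition that makes the dyadic sum summable, this is the only place where the hypothesis on the exponents is used.
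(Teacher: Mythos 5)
Your proposal is correct and follows exactly the paper's argument: the same dyadic decomposition with $x_i = 2^i$, bounding $s^p$ by $2^p x_i^p$ on each block, applying the hypothesis, and summing the resulting geometric series, which converges precisely because $q>p$. You have merely written out the intermediate steps that the paper compresses into one line.
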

							
						\begin{rmk}	\label{rmk:defp}From now on we will consider $p$ and $q$ as \textbf{fixed} and constants appearing in $\lesssim$, $\gtrsim$ and $\sim$ can additionally depend on $1<p<q$.\end{rmk}
	
By doing the analogous analysis in the neighborhood of the left component of $i^+$ we obtain
	\begin{prop}\label{prop:boundonblue}Let ${{\psi}_\flat}$ defined as in \eqref{eq:fouriertransforminterior}. Then, for any $r_0\in[\rblue,r_+)$ we have
		\begin{align} \int_{\Sigma_{r_0}} (\vp + \up ) \left( |\tilde{\nabla}_u {{\psi}_\flat}|^2 + |\tilde{\nabla}_v {{\psi}_\flat}|^2 + |\slashed\nabla {{\psi}_\flat}|^2 + |{{\psi}_\flat}|^2 \right) \dvol_{\Sigma_{r}} \lesssim {E_1[\psi_\flat](0)}	.	\end{align}
	\end{prop}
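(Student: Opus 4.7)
The plan is to decompose $\Sigma_{r_0}$ into three spherically symmetric pieces and estimate each separately. Concretely, parameterizing $\Sigma_{r_0}$ by $v$ (with $u = 2r_*(r_0)-v$), I split
\[
\Sigma_{r_0} = \Sigma_{r_0}(1,+\infty)\ \cup\ \Sigma_{r_0}(-1,1)\ \cup\ \Sigma_{r_0}(-\infty,-1),
\]
corresponding respectively to a neighborhood of $i^+_A$, a relatively compact middle piece, and a neighborhood of $i^+_B$.

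For the right piece $\Sigma_{r_0}(1,+\infty)$, the bound is immediate: it is precisely estimate \eqref{eq:decayv2} from the previous proposition applied with $v_\ast=1$. For the compact middle piece $\Sigma_{r_0}(-1,1)$, the weight $\vp+\up$ is uniformly bounded by a constant depending only on $p$ and $r_*(r_0)$; since the middle piece lies in the causal future of a compact portion of $\Sigma_0$, a standard energy estimate using the twisted vector field $X$ (or equivalently $N$) in the compact region bounded by $\Sigma_0$ and $\Sigma_{r_0}(-1,1)$ combined with Cauchy stability yields the required bound by $E_1[\psi_\flat](0)$.

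The main substantive step is the left piece $\Sigma_{r_0}(-\infty,-1)$, which approaches $i^+_B$. The strategy here is to repeat the chain of arguments from Sections 4.1 and 4.2.1--4.2.2 verbatim, but with the roles of $\mathcal{R}_A$ and $\mathcal{R}_B$ (and correspondingly of $u$ and $v$ in the interior) interchanged. Concretely, I would: (i) observe that the exterior decay results of \cref{prop:basicenergyest}, \cref{prop:microiled}, \cref{IED}, and \cref{prop:psiflat} hold in $\mathcal{R}_B$ by the exact same proofs, since the geometry and the Killing field structure of $\mathcal{R}_B$ (with the time orientation given by $-T$, see Section 2) are isometric to $\mathcal{R}_A$, and the Fourier cutoff is performed with respect to $T$ which is well-defined on both components; (ii) propagate this decay into the black hole across $\mathcal{H}_B^+ = \{v=-\infty\}$ via the twisted red-shift estimate of \cref{prop:redshift}, which applies symmetrically after swapping the roles of $u$ and $v$; and (iii) run the twisted no-shift argument of \cref{prop:noshiftprop} from $\{r=\red\}$ to $\{r=r_0\}$ along the $u$-direction, producing the analogue of \eqref{eq:decayv2} with $u$ in place of $v$, that is, $\int_{\Sigma_{r_0}(-\infty,-1)} (\vp+\up)\tilde J_\mu^X[\psi_\flat] n^\mu_{\Sigma_r}\dvol_{\Sigma_r}\lesssim E_1[\psi_\flat](0)$.

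The only real obstacle is bookkeeping, not mathematics: one must verify that the low-frequency cutoff \eqref{eq:fouriertransforminterior} is compatible with the symmetric analysis in $\mathcal{R}_B$ (it is, because $\chi_{\omega_0}$ is symmetric in $\omega$ and $T$ is the common Killing field), and that the twisting function $f$ and red-shift vector $N$ in the interior behave symmetrically near $\mathcal{H}_B^+$ as near $\mathcal{H}_A^+$ (they do, by the discrete symmetry $u\leftrightarrow v$ of the interior metric \eqref{eq:metricinterior}). Summing the three pieces then yields the claim.
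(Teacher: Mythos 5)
Your proposal is correct and coincides with the paper's argument: the paper proves this proposition simply by invoking ``the analogous analysis in the neighborhood of the left component of $i^+$,'' i.e.\ exactly the $\mathcal{R}_A\leftrightarrow\mathcal{R}_B$, $u\leftrightarrow v$ repetition of the exterior decay, twisted red-shift and no-shift estimates that you describe, combined with \eqref{eq:decayv2} for the right portion and a standard local energy estimate for the compact middle portion. Your three-piece decomposition is just the explicit bookkeeping behind that one-line proof, so there is nothing to add.
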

	Commuting with angular momentum operators $(\mathcal{W}_i)_{1\leq i \leq 3}$, an application of the Sobolev embedding $H^2(\mathbb{S}^2) \hookrightarrow L^\infty(\mathbb S^2)$ and using the fact that $p>1$, we also conclude
	\begin{prop}\label{prop:boundednessred}Let ${{\psi}_\flat}$ defined as in \eqref{eq:fouriertransforminterior}. Then,
	\begin{align}
		\sup_{\mathcal{B}\cap \{ \rblue \leq r < r_+\}} |{{\psi}_\flat}|^2\lesssim E_1[\psi_\flat] (0) +\sum_{i,j=1}^3 E_1[\mathcal{W}_i \mathcal{W}_j \psi_\flat] (0).
	\end{align}
	\end{prop}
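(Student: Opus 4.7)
Plan. I follow the sketch preceding the statement: commute the estimate of \cref{prop:boundonblue} with angular momentum operators, apply Sobolev embedding on $\mathbb{S}^2$, and conclude via a fundamental-theorem-of-calculus argument whose essential ingredient is the weighted $\langle v\rangle^p$, $\langle u\rangle^p$-integrability that rests on $p>1$.

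First, since each $\mathcal{W}_i$ is Killing and commutes with $T = \partial_t$, it commutes with the temporal convolution defining the low-frequency projection in \eqref{eq:fouriertransforminterior}; hence $\mathcal{W}^\beta \psi_\flat = (\mathcal{W}^\beta \psi)_\flat$ for every multi-index $\beta$. Applying \cref{prop:boundonblue} to each $\mathcal{W}^\beta \psi_\flat$ with $|\beta| \leq 2$, together with standard angular interpolation to bound $E_1[\mathcal{W}_i \psi_\flat]$ by $D[\psi_\flat] := E_1[\psi_\flat](0) + \sum_{i,j} E_1[\mathcal{W}_i \mathcal{W}_j \psi_\flat](0)$, yields uniformly in $r_0 \in [\rblue, r_+)$ the estimate
\begin{align*}
\int_{\Sigma_{r_0}} (\langle v\rangle^p + \langle u\rangle^p)\sum_{|\beta|\leq 2}\bigl(|\tilde\nabla_u \mathcal{W}^\beta \psi_\flat|^2 + |\tilde\nabla_v \mathcal{W}^\beta\psi_\flat|^2 + |\mathcal{W}^\beta\psi_\flat|^2\bigr)\dvol_{\Sigma_r} \lesssim D[\psi_\flat].
\end{align*}
The Sobolev embedding $H^2(\mathbb{S}^2) \hookrightarrow L^\infty(\mathbb{S}^2)$ then reduces the desired pointwise bound at a point $(r_0, v_0, \omega_0) \in \mathcal{B}\cap\{\rblue\leq r < r_+\}$ to a uniform bound, in both $r_0$ and $v_0$, on $f_\beta(v) := \int_{\mathbb{S}^2}|\mathcal{W}^\beta \psi_\flat|^2(r_0, v,\cdot)\,d\sigma_{\mathbb{S}^2}$ for each $|\beta|\leq 2$.

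Parameterizing $\Sigma_{r_0}$ by $v$ (with $u = 2r_\ast(r_0) - v$), Cauchy--Schwarz on $\mathbb{S}^2$ combined with AM-GM gives $|f_\beta'(v)| \leq f_\beta(v) + g_\beta(v)$, where $g_\beta(v) := \int_{\mathbb{S}^2}(|\partial_v \mathcal{W}^\beta\psi_\flat|^2 + |\partial_u\mathcal{W}^\beta\psi_\flat|^2)\,d\sigma$. Since the twisting functions used in \cref{prop:redshift} and \cref{prop:noshiftprop} are nowhere vanishing and smooth throughout the closed $r$-range of interest, coordinate and twisted derivatives are comparable up to bounded zero-order terms, so $\int \langle v\rangle^p g_\beta(v)\,dv \lesssim D[\psi_\flat]$ as well. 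The weighted $L^1$-integrability forces $f_\beta(v_n)\to 0$ along some sequence $v_n \to +\infty$, and the fundamental theorem of calculus gives
\begin{align*}
f_\beta(v_0) \leq \lim_{n\to\infty}f_\beta(v_n) + \int_{v_0}^{\infty}|f_\beta'(v)|\,dv \leq \int_{\mathbb{R}}(f_\beta + g_\beta)\,dv \leq \int_{\mathbb{R}}\langle v\rangle^p(f_\beta + g_\beta)\,dv \lesssim D[\psi_\flat],
\end{align*}
uniformly in $v_0$ and $r_0$. Summing over $|\beta|\leq 2$ and invoking the Sobolev reduction completes the argument. The only potentially delicate point is the uniformity of the implicit constant as $r_0 \to r_+$, but this is already delivered by \cref{prop:boundonblue}, whose uniform-in-$r_0$ character in turn derives from the twisted red-shift machinery of \cref{prop:redshift} and its propagation in \cref{prop:reddecay}.
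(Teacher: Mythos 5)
Your overall route is the one the paper intends (it only gives a one-line indication): commute with the $\mathcal{W}_i$, apply \cref{prop:boundonblue} to the commuted quantities, Sobolev on $\mathbb{S}^2$, and close with a fundamental-theorem-of-calculus argument along $\Sigma_{r_0}$ using the integrability of the transversal flux. The commutation with the Fourier projection, the angular interpolation for the first-order terms, and the FTC chain (including extracting a sequence $v_n\to\infty$ along which $f_\beta\to 0$) are all fine.

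The one genuine gap is the step ``$\int \langle v\rangle^p g_\beta(v)\,\d v \lesssim D[\psi_\flat]$ uniformly in $r_0$'', which you attribute to \cref{prop:boundonblue}. Your $f_\beta,g_\beta$ are measured against $\d v\,\d\sigma_{\mathbb{S}^2}$, whereas \cref{prop:boundonblue} controls the stated integrand against $\dvol_{\Sigma_{r_0}} = r^2\sqrt{\Omega^2}\,\d v\,\d\sigma_{\mathbb{S}^2}$; converting costs a factor $(\Omega^2(r_0))^{-1/2}$, which blows up as $r_0\to r_+$, so the uniformity you need in the red-shift range $[\red,r_+)$ does not follow from the literal statement of \cref{prop:boundonblue}, and the comparability of twisted and coordinate derivatives does not repair this. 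The correct source of the uniformity is the red-shift flux itself: for $r_0\in[\red,r_+)$ one should use \cref{prop:reddecay} (plus the dyadic lemma for the $\langle v\rangle^p$ weight), noting that the flux density through $\Sigma_{r_0}$ is
\begin{align*}
\tilde J^N_\mu[\psi_\flat] n^\mu_{\Sigma_r}\,\dvol_{\Sigma_r} \sim \Big( \tfrac{1}{\Omega^2}|\tilde\nabla_u\psi_\flat|^2 + |\tilde\nabla_v\psi_\flat|^2 + |\slashed\nabla\psi_\flat|^2 + \V|\psi_\flat|^2 \Big) r^2\,\d v\,\d\sigma_{\mathbb{S}^2},
\end{align*}
i.e.\ the weights $N^u\sim\Omega^{-2}$, $N^v\sim 1$ exactly compensate (indeed improve on) the degenerating factor $\sqrt{\Omega^2}$ in $\dvol_{\Sigma_r}$, so $\int(f_\beta+g_\beta)\,\d v$ is controlled with a constant independent of $r_0$ as $r_0\to r_+$; in the no-shift range $[\rblue,\red]$, where $\Omega^2\sim 1$, your use of \cref{prop:boundonblue} (equivalently \eqref{eq:decayv2}) is unproblematic. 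You do gesture at the ``twisted red-shift machinery'' in your last sentence, but this weight bookkeeping is precisely the substantive content of the uniformity claim and must be carried out (or, alternatively, the red-shift range can be handled by integrating $\partial_u\psi_\flat$ along ingoing cones from $\mathcal{H}^+_A$ using \eqref{eq:reddecay1} together with the boundedness of $\psi_\flat$ on the event horizon). With that fixed, your argument closes.
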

	Finally, we will use the decay towards $i^+$ to show uniform boundedness in the interior and continuity all the way up to and including the Cauchy horizon for $\psi_\flat$.	
		\subsubsection{Blue-shift region}\label{sec:blueshift}
	We will now introduce the twisting function and vector field which we will use in the blue-shift region. Recall that we look for a twisting function $f$ which satisfies $\V\gtrsim 1$, where  \begin{align}\V = - \left(\frac{\Box_g f}{f}+ \frac{\alpha}{l^2}\right).\end{align}
To do so, we set $f:= e^{\beta_{\mathrm{blue} }r}$ and obtain
		\begin{align}\label{eq:f1}
		\mathcal V = -\frac{\Box_g f}{f}  - \frac{\alpha}{l^2}=  \beta_\mathrm{blue}^2  {\Omega^2 } +  \beta_\mathrm{blue} \partial_r (\Omega^2) + \frac{2}{r}\beta_\mathrm{blue}  \Omega^2  - \frac{\alpha}{l^2}.
		\end{align}
		Note that for $\rblue > r_-$ close enough to $r_-$, we have 
		\begin{align}
		 \partial_r \Omega^2 \geq c_{\mathrm{blue}}
		\end{align}
		for all $\rblue \geq r \geq r_-$ and some constant $c_{\mathrm{blue}}>0$ only depending on the black hole parameters. Thus, we obtain $\V \gtrsim 1$ uniformly in the blue-shift region $\rblue \geq r \geq r_-$ by choosing $\beta_\mathrm{blue}>0$ large enough and $\rblue$ close enough to $r_-$. 
		In the blue-shift region we define the vector field 
		\begin{align}\label{eq:defnsn}
			S_N:=r^N(\langle u\rangle^p \partial_u +\langle v\rangle^p \partial_v)
		\end{align}
		for some potentially large $N>0$ and $p>1$ as in \cref{rmk:defp}. We will show in the following that $\sup_{\theta,\varphi}|\psi_\flat (u_0,v_0,\theta,\varphi)|$ is uniformly bounded from initial data $D[\psi_\flat]$ independently of $(u_0,v_0) \in J^+(\Sigma_{\rblue})\cap \mathcal{B}$.
		To do so, we will apply the energy identity (spacetime integral of \eqref{eq:twiteddefjx}) in the region \begin{align}
		\mathcal{R}_f = \mathcal{R}_f(u_0,v_0) =  J^+(\Sigma_{\rblue}) \cap J^-(v_0,u_0) = J^+(\Sigma_{\rblue}) \cap \{ u \leq u_0\} \cap \{v \leq v_0\} 
		\end{align}
		which we depict in \cref{fig:blueshift}.
		\begin{figure}[H]
			\centering
		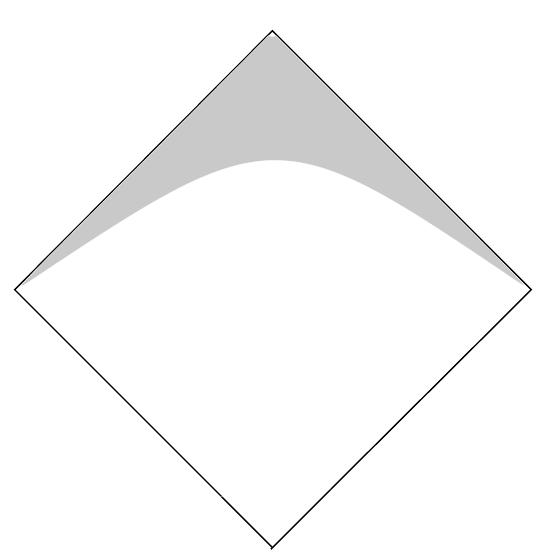
			\caption{Illustration of the region $\mathcal{R}_f$ as the darker shaded region in the Penrose diagram of the interior $\mathcal{B}$. The lighter shaded region is the blue-shift region. }
			\label{fig:blueshift}
		\end{figure}
		This leads to
		\begin{align} \nonumber 
			\int_{\mathcal{C}_{u_0}(v_{\rblue}(u_0), v_0)}& \tilde{J}^{S_N}_\mu[\psi_\flat] n_{\mathcal{C}_{u_0}}^\mu \dvol_{\mathcal{C}_{u_0}} + 	\int_{\underline{\mathcal{C}}_{v_0}(u_{\rblue}(v_0), u_0)}  \tilde{J}^{S_N}_\mu[\psi_\flat] n_{\underline{\mathcal{C}}_{v_0}}^\mu \dvol_{\underline{\mathcal{C}}_{v_0}} + \int_{\mathcal{R}_f} \tilde K^{S_N}[\psi_\flat] \dvol\\ & = \int_{\Sigma_{\rblue}\cap J^-(v_0,u_0)} \tilde J^{S_N}_\mu[\psi_\flat] n^\mu_{\Sigma_{\rblue}} \dvol_{\Sigma_{\rblue}}, \label{eq:energyidentity}
		\end{align}
		where $\psi_\flat$ is defined in \eqref{eq:fouriertransforminterior}.
		In the following we will show, that after choosing $N>0$ large enough and an appropriate integration by parts to control error terms, we can control the flux terms by initial data. This gives
		\begin{prop}\label{prop:decayinblueshift} Let ${{\psi}_\flat}$ defined as in \eqref{eq:fouriertransforminterior}. Then,
				\begin{align} \nonumber 
				\int_{\mathcal{C}_{u_0}(v_{\rblue}(u_0), v_0)} \tilde{J}^{S_N}_\mu[\psi_\flat]  n_{\mathcal{C}_{u_0}}^\mu \dvol_{\mathcal{C}_{u_0}} + &	\int_{\underline{\mathcal{C}}_{v_0}(u_{\rblue}(v_0), u_0)}  \tilde{J}^{S_N}_\mu[\psi_\flat]  n_{\underline{\mathcal{C}}_{v_0}}^\mu \dvol_{\underline{\mathcal{C}}_{v_0}} \\ & \lesssim \int_{\Sigma_{\rblue}\cap J^-(v_0,u_0)} \tilde J^{S_N}_\mu[\psi_\flat]  n^\mu_{\Sigma_{\rblue}} \dvol_{\Sigma_{\rblue}} \lesssim E_1[\psi_\flat](0)
				 \label{eq:energyidentityestim}
				\end{align}
				and 
				\begin{align}& \nonumber
	\int_{\mathcal{C}_{u_0}(v_{\rblue}(u_0), v_0)} \big( \langle v \rangle^p  |\partial_v \psi_\flat|^2 + (|\slashed\nabla \psi_\flat|^2 + |\psi_\flat|^2) \Omega^2 \big) \d{v} \d\sigma_{\mathbb{S}^2} \\ \nonumber &+ 	\int_{\underline{\mathcal{C}}_{v_0}(u_{\rblue}(v_0), u_0)}   \left( \langle u \rangle^p |\partial_u \psi_\flat|^2 + (|\slashed\nabla \psi_\flat|^2 + |\psi_\flat|^2) \Omega^2 \right) \d{v} \d\sigma_{\mathbb{S}^2} 
	 \\ & \lesssim \int_{\Sigma_{\rblue}\cap J^-(v_0,u_0)} \tilde J^{S_N}_\mu[\psi_\flat]  n^\mu_{\Sigma_{\rblue}} \dvol_{\Sigma_{\rblue}} \lesssim E_1[\psi_\flat](0) \label{eq:nontwistingenergy}
				\end{align}
				for any $(u_0,v_0) \in J^+(\Sigma_{\rblue})$. 
				Commuting with the angular momentum operators $(\mathcal{W}_i)_{1\leq i \leq 3}$ also gives
			\begin{align}& 			\int_{\mathcal{C}_{u_0}(v_{\rblue}(u_0), v_0)}  \langle v \rangle^p \big(  |\partial_v \psi_\flat|^2 +
			 \sum_{i,j} |\partial_v \mathcal{W}_i \mathcal{W}_j \psi_\flat|^2  \big) \d{v} \d\sigma_{\mathbb{S}^2}  \lesssim E_1[\psi_\flat](0) + \sum_{i,j=1}^3 E_1[\mathcal{W}_j\mathcal{W}_i \psi_\flat](0). \label{eq:nontwistingenergy_an}
			\end{align}
				\end{prop}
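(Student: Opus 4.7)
The plan is to apply the twisted energy identity \eqref{eq:energyidentity} obtained by integrating \eqref{eq:twiteddefjx} with $X = S_N$ and the twisting function $f = e^{\beta_{\mathrm{blue}} r}$ over the region $\mathcal{R}_f(u_0,v_0)$. Since $\rblue$ has been chosen so that $\mathcal{V} \gtrsim 1$ on $\{r \leq \rblue\}$, the twisted dominant energy condition holds and the two outgoing flux integrals on $\mathcal{C}_{u_0}$ and $\underline{\mathcal{C}}_{v_0}$ are non-negative. The data term on $\Sigma_{\rblue} \cap J^-(u_0,v_0)$ is controlled by \cref{prop:boundonblue}: indeed, in null components the twisted flux $\tilde J^{S_N}_\mu[\psi_\flat] n^\mu_{\Sigma_{\rblue}}$ is pointwise bounded by a constant (depending on $N$, the fixed value $\red$, the parameters and $p$) times the weighted integrand $(\vp + \up)(|\tilde{\nabla}_u \psi_\flat|^2 + |\tilde{\nabla}_v \psi_\flat|^2 + |\slashed\nabla \psi_\flat|^2 + |\psi_\flat|^2)$ already estimated by $E_1[\psi_\flat](0)$ in \cref{prop:boundonblue}.

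The crux is to control the bulk term $\int_{\mathcal{R}_f} \tilde K^{S_N}[\psi_\flat]\,\dvol$. Writing out $S_N = r^N(\vp \partial_u + \up \partial_v)$ in double null coordinates, the deformation tensor ${}^{S_N}\!\pi$ and the correction $S_N^\nu \tilde S_\nu[\psi_\flat]$ split into three groups of terms: (i) derivatives of the scalar weights $\vp, \up$ which produce contributions proportional to $p\langle v\rangle^{p-2} v \,|\partial_v\psi_\flat|^2/\Omega^2$ and the analogous $u$-term -- these have favorable sign in the regime $u,v \geq 1$; (ii) derivatives of the geometric weight $r^N$, producing $N r^{N-1}\partial_u r \cdot \vp $ and $N r^{N-1}\partial_v r \cdot \up$ contracted with $\mathbf{T}[\psi_\flat]$, which by $\partial_u r = \partial_v r = \Delta/(2r^2) < 0$ and a careful tracking of the null components from \cref{prop:appendix} contribute a definite-sign ``blue-shift-killing'' term proportional to $N \cdot |\partial_v r| \cdot (\vp |\tilde\nabla_u \psi_\flat|^2/\Omega^2 + \text{symm.})$; and (iii) sign-indefinite cross terms of schematic form $\Omega^2 \vp |\slashed\nabla \psi_\flat|^2$, $\Omega^2 \vp |\psi_\flat|^2$ and the twisting correction $\tilde S_\nu[\psi_\flat]$. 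Since $\Omega^2$ is uniformly bounded by $\sup_{r\in[r_-,\rblue]}\Omega^2$, which can be made as small as we like by choosing $\rblue$ close enough to $r_-$ (and group (ii) contributes a factor of $N$), the error terms in (iii) can be absorbed into (i)+(ii) upon taking $N = N(M,Q,l,\alpha,p)$ sufficiently large. This yields \eqref{eq:energyidentityestim}.

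To deduce \eqref{eq:nontwistingenergy}, expand $\tilde\nabla_v \psi_\flat = \partial_v \psi_\flat - (\partial_v \log f)\psi_\flat$ and use that $f$ and $\partial_v \log f = \beta_{\mathrm{blue}}\partial_v r$ are bounded on $\{r_- \leq r \leq \rblue\}$; the null fluxes then satisfy $\tilde J^{S_N}_\mu[\psi_\flat] n^\mu_{\mathcal{C}_{u_0}} \gtrsim \vp |\partial_v\psi_\flat|^2 + \Omega^2(|\slashed\nabla\psi_\flat|^2 + |\psi_\flat|^2)$ (modulo a harmless lower-order shift, compare \cref{prop:appendix}), and analogously on $\underline{\mathcal{C}}_{v_0}$. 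The second asserted inequality in \eqref{eq:energyidentityestim} and \eqref{eq:nontwistingenergy} follows by invoking \cref{prop:boundonblue} on the data slice $\Sigma_{\rblue}$.

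Finally, for \eqref{eq:nontwistingenergy_an} we use that the angular momentum operators $\mathcal{W}_i$ are Killing, hence commute with $\Box_{g_{\mathrm{RNAdS}}}$, and -- because our twisting function $f = e^{\beta_{\mathrm{blue}}r}$ depends only on $r$ -- commute also with $\tilde\nabla$ and preserve the whole twisted framework. Moreover $\mathcal{W}_i$ commutes with the convolution/Fourier projection defining $\psi_\flat$, so $\mathcal{W}_i \mathcal{W}_j \psi_\flat$ is itself a low-frequency solution of \eqref{eq:wave}. Running the entire argument of \cref{sec:exteriorlow} through \cref{subsubsec:noshiftregion} for $\mathcal{W}_i \mathcal{W}_j \psi_\flat$ produces the analogue of \cref{prop:boundonblue} for $\mathcal{W}_i \mathcal{W}_j \psi_\flat$ with initial data term $E_1[\mathcal{W}_i\mathcal{W}_j\psi_\flat](0)$, and repeating the present proof gives \eqref{eq:nontwistingenergy_an}. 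The main technical obstacle throughout is the sign-indefiniteness of the bulk $\tilde K^{S_N}$ produced by the blue-shift: the scheme above is a twisted adaptation of the ``large-$N$ absorption'' trick of \cite{anneboundedness}, with the additional care required by the possibly negative Klein--Gordon mass handled via the twisting potential $\mathcal{V} \gtrsim 1$.
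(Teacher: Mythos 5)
Your overall skeleton matches the paper's: the twisted energy identity for $S_N$ with twisting function $e^{\beta_{\mathrm{blue}}r}$ on $\mathcal{R}_f$, the past flux on $\Sigma_{\rblue}$ controlled via \cref{prop:boundonblue}, conversion from twisted to ordinary derivatives for \eqref{eq:nontwistingenergy}, and commutation with the Killing fields $\mathcal{W}_i$ for \eqref{eq:nontwistingenergy_an}. But there is a genuine gap in your bulk-term analysis, and it sits exactly where the real work of the paper is. First, your accounting of $\tilde K^{S_N}$ is structurally off: by \eqref{eq:bulk}, the coefficient of $|\tilde{\nabla}_v\psi_\flat|^2$ is $-\frac{2}{\Omega^2}\partial_u(r^N\vp)$, in which $\partial_u$ never hits $\vp$, so the good term is $N r^{N-1}\bigl(\vp\,|\tilde{\nabla}_v\psi_\flat|^2+\up\,|\tilde{\nabla}_u\psi_\flat|^2\bigr)$ as in \eqref{eq:bulkmainterm} — i.e.\ $\vp$ pairs with the $v$-derivative, not with $|\tilde{\nabla}_u\psi_\flat|^2/\Omega^2$ as you wrote; and the weight-derivative terms $p\,v\langle v\rangle^{p-2}$, $p\,u\langle u\rangle^{p-2}$ appear only against $(|\slashed\nabla\psi_\flat|^2+\V|\psi_\flat|^2)$ and are sign-indefinite (on $\mathcal{R}_f$, $u$ or $v$ can be arbitrarily negative along $\Sigma_{\rblue}$), so "favorable sign for $u,v\ge 1$" does not dispose of them; the paper absorbs them via $|v|\langle v\rangle^{p-2}\le\epsilon(\vp+\langle v-2r_\ast\rangle^p)$ after pushing $\rblue$ toward $r_-$.

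Second, and more seriously, your group (iii) omits the dominant sign-indefinite term $-\frac{2}{r}(S_N^u+S_N^v)\operatorname{Re}\bigl(\overline{\tilde{\nabla}_u\psi_\flat}\,\tilde{\nabla}_v\psi_\flat\bigr)=-2r^{N-1}(\vp+\up)\operatorname{Re}(\cdots)$ of \eqref{eq:secondindterm}, together with the analogous piece of $S_N^\nu\tilde S_\nu$ coming from $\partial_v f^2/f^2\sim\Omega^2$ contracted against $g^{uv}\sim\Omega^{-2}$ (see \eqref{eq:lastterm}); neither carries any factor of $\Omega^2$, so smallness of $\Omega^2$ does not help, and large $N$ cannot help either: Cauchy--Schwarz applied to $(\vp+\up)|\tilde{\nabla}_u\psi_\flat||\tilde{\nabla}_v\psi_\flat|$ inevitably produces the mismatched combinations $\vp\,|\tilde{\nabla}_u\psi_\flat|^2$ and $\up\,|\tilde{\nabla}_v\psi_\flat|^2$, which the good bulk term does not control (think of points near $\mathcal{CH}_A$ with $v$ huge and $u$ bounded). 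This is precisely why the paper's Part II integrates this term by parts in $u$ (resp.\ $v$), substitutes $\partial_u\partial_v\psi_\flat$ via the wave equation, and then absorbs the resulting boundary terms \eqref{eq:errortermcu} into the null fluxes using the weight transfer $\vp(\Omega^2)^{1/4}\lesssim 1+\up(\Omega^2)^{1/4}$ of \eqref{eq:afteribp2}, exploiting the exponential decay of $\Omega^2$ in $r_\ast$. Without this mechanism (which is also needed, in milder form, for the twisted-to-untwisted conversion giving \eqref{eq:nontwistingenergy}), your "large-$N$ plus small-$\Omega^2$ absorption" does not close, so the proposal as written does not prove the proposition.
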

			\begin{proof}
			The general strategy of the proof is to apply \eqref{eq:energyidentity} and to show that \begin{align}\label{eq:generalstrag}
					\int_{\mathcal{R}_f} \tilde K^{S_N} \dvol \geq 0 + \text{boundary terms},
				\end{align} where the boundary terms are small (lower orders in $\Omega$) and by choosing $\rblue$ closer to $r_-$,  can be absorbed in the positive flux terms  on the left hand side of \eqref{eq:energyidentity}. In the first part, we compute the flux terms for our vector field $S^N$ defined in \eqref{eq:defnsn}. Then, in the second part, we will estimate the bulk term and indeed show \eqref{eq:generalstrag}. From this we will then deduce \eqref{eq:energyidentityestim}.
				\paragraph{\textbf{Part I: Flux terms of \texorpdfstring{$\boldsymbol{S_N}$}{Sq}}}
		We obtain three flux terms from \eqref{eq:energyidentity}. The future flux terms read (cf.~\cref{prop:appendix})
		\begin{align}\nonumber
			\int_{\mathcal{C}_{u_0}(v_{\rblue}(u_0), v_0)}& \tilde{J}^{S_N}_\mu[\psi_\flat]  n_{\mathcal{C}_{u_0}}^\mu \dvol_{\mathcal{C}_{u_0}} \\& = 	\int_{\mathcal{C}_{u_0}(v_{\rblue}(u_0), v_0)} \left(\vp |\tilde{\nabla}_v {{\psi}_\flat}|^2 + \Omega^2 \frac{\up}{4} (|\slashed\nabla {{\psi}_\flat}|^2 + \V |{{\psi}_\flat}^2 |)\right) r^{N+2} \d{v} \d\sigma_{\mathbb{S}^2} \label{eq:fluxterm1}
					\end{align}
					and
					\begin{align}\nonumber
					\int_{\underline{\mathcal{C}}_{v_0}(u_{\rblue}(v_0), u_0)}& \tilde{J}^{S_N}_\mu[\psi_\flat]  n_{\mathcal{C}_{v_0}}^\mu \dvol_{\underline{\mathcal{C}}_{v_0}}  \\ &= 	\int_{\underline{\mathcal{C}}_{v_0}(u_{\rblue}(v_0),u_0)} \left(\up |\tilde{\nabla}_u {{\psi}_\flat}|^2 + \Omega^2 \frac{\vp}{4} (|\slashed\nabla {{\psi}_\flat}|^2 + \V |{{\psi}_\flat}|^2 )\right) r^{N+2} \d{u} \d\sigma_{\mathbb{S}^2}.
					\end{align}
The past flux term on the spacelike hypersurface $\Sigma_{\rblue}$
is uniformly bounded by initial data from \cref{prop:boundonblue}:
\begin{align} \int_{\Sigma_{\rblue}\cap J^-(v_0,u_0)} \tilde J^{S_N}_\mu[\psi_\flat] n^\mu_{\Sigma_{\rblue}} \dvol_{\Sigma_{\rblue}}\lesssim E_1[\psi_\flat](0). \end{align} 
		\paragraph{\textbf{Part II: Bulk term of \texorpdfstring{$\boldsymbol{S_N}$}{Sq}}}  
		We will now estimate the bulk term \begin{align*}
		\int_{\mathcal{R}_f} \tilde K^{S_N} \dvol\end{align*} appearing in the energy identity \eqref{eq:energyidentity}. The terms appearing in $\tilde K^{S_N}$ can be read off in \eqref{eq:bulk} with $S_N^u = X^u = r^N \langle u \rangle^p$ and $S_N^v = X^v = r^N \langle v \rangle^p$. To estimate all terms, we will also integrate by parts and substitute terms of the form $\partial_u \partial_v {{\psi}_\flat}$ using the equation $\Box_g{{\psi}_\flat} =0$. The boundary terms arising from the integration by parts will then be absorbed in the future flux terms appearing in \emph{Part I: Flux terms of $S_N$}. In the following we shall treat each terms of $\tilde K^X$ as in \eqref{eq:bulk} with  $X=S_N$ individually.
		
		\paragraph{First term of \eqref{eq:bulk}}The first term of \eqref{eq:bulk} is non-negative:
		\begin{align}
			-\frac{2}{\Omega^2} \left(\langle v\rangle^p \partial_u (r^N) |\tilde{\nabla}_v {{\psi}_\flat}|^2 + \langle u\rangle^p \partial_v (r^N) |\tilde{\nabla}_u {{\psi}_\flat}|^2 \right) = N r^{N-1} (\langle v\rangle^p |\tilde{\nabla}_v {{\psi}_\flat}|^2 + \langle u\rangle^p |\tilde{\nabla}_u {{\psi}_\flat}|^2).\label{eq:bulkmainterm}
		\end{align}
		This means that---by choosing $N>0$ large enough---we will be able to absorb sign-indefinite terms of the form $r^{N-1} \langle v \rangle^p |\tilde{\nabla}_v {{\psi}_\flat}|^2$ and $r^{N-1} \langle u \rangle^p  |\tilde{\nabla}_u {{\psi}_\flat}|^2$. This will be used in the following.
		
		Before we treat the second term appearing in \eqref{eq:bulk}, which is sign-indefinite, we look at the angular and potential term in the second line of \eqref{eq:bulk}. 
		\paragraph{Angular and potential term: Second line of \eqref{eq:bulk}}Now, we look at the term involving angular derivatives. In the region $\mathcal{R}_f$ we have
		\begin{align}
			- & \left( \frac{1}{2}(\partial_v (r^N \vp) + \partial_u (r^N \up ) ) - \frac{r^N}{4} \left( \partial_{r} \Omega^2\right) (\vp + \up )\right) \left( |\slashed\nabla {{\psi}_\flat}|^2 + \V|{{\psi}_\flat}|^2\right)\nonumber
			\\ &\gtrsim r^N (\vp + \up)  \left( |\slashed\nabla {{\psi}_\flat}|^2 + \V |{{\psi}_\flat}|^2\right) .\label{eq:bulkangularcontrol}
		\end{align}
		The terms arising when $\partial_v$ hits $\langle v \rangle^p$ and  when $\partial_u$ hits $\langle u \rangle^p$ are sign-indefinite and of the form
		\begin{align}
			- \frac{p}{2} r^N \left( \langle v \rangle^{p-2} v + \langle u \rangle^{p-2} u \right) \left( |\slashed \nabla \psi_\flat|^2 + \mathcal{V} |\psi_\flat|^2 \right).
		\end{align}
		They are absorbed in $r^N (\vp + \up)  \left( |\slashed\nabla {{\psi}_\flat}|^2 + \V |{{\psi}_\flat}|^2\right)$. Indeed, for any fixed $\epsilon = \epsilon(p)>0$, we can choose $r_{\mathrm{blue}}$ even closer to $r_-$ (depending on $\epsilon$) such that  $ |v| \langle v \rangle^{p-2} \leq   \langle v \rangle^{p-1} \leq \epsilon (\langle v \rangle^p + \langle v - 2 r_\ast \rangle^p) $ holds in $\mathcal{R}_f$ and similarly for $|u|\langle u \rangle^{p-2}$. 
	 Also recall that we have chosen the twisting function such that $\V \gtrsim 1$. 
		
\paragraph{Second, sign-indefinite term of \eqref{eq:bulk}}	Now, note that the second term in the first line of \eqref{eq:bulk} \begin{align}-2 r^{N-1} (\vp  + \up ) \operatorname{Re}\left( \overline{ \dvpsi_\flat} \dupsi_\flat \right)\label{eq:secondindterm}\end{align} is sign-indefinite, however, we can absorb it in other positive terms after integrating by parts in the region $\mathcal{R}_f$ as we will see in the following. In order to integrate by parts, it is useful to express the twisted derivatives with ordinary derivatives. The integration by parts will generate boundary terms. As mentioned above, we estimate these boundary terms with the fluxes in the energy identity. This will be done later in \eqref{eq:errortermcu} and we will not write the boundary terms explicitly in the following. 
We will also have to control (sign-indefinite) ordinary derivatives by positive terms in \eqref{eq:bulkmainterm} and \eqref{eq:bulkangularcontrol}. Note that this is possible since 	\begin{align}\label{eq:changetwistedtopartial}
\vp |\partial_v {{\psi}_\flat}|^2=  \vp |\dvpsi|^2  - \vp  \Omega^2\operatorname{Re}\left(\overline{ {{\psi}_\flat}} \partial_v {{\psi}_\flat}\right) - \frac 14 \vp  \Omega^4 |{{\psi}_\flat}|^2,\end{align}
where the right hand side of \eqref{eq:changetwistedtopartial} is controlled by \eqref{eq:bulkmainterm}, \eqref{eq:bulkangularcontrol} and potentially choosing $\rblue$ closer to $r_-$. The analogous statement holds true for $\up |\partial_u {{\psi}_\flat}|^2$. 
	
	The integrated term we have to estimate reads
	\begin{align} \label{eq:analog}	\int_{\mathcal{R}_f} & -2 r^{N-1} (\vp  + \up)  \frac{1}{f^2} \operatorname{Re}\left( \overline{ \partial_v (f{{\psi}_\flat})} \partial_u(f {{\psi}_\flat})   \right) \Omega^2 r^2  \d{u}\d{v} \d\sigma_{\mathbb{S}^2}   .
	\end{align}
	We only look at
	\begin{align} \Big| \int_{\mathcal{R}_f} &  r^{N+1} \vp  \frac{1}{f^2} \operatorname{Re} \left(\overline{ \partial_v (f{{\psi}_\flat})} \partial_u(f {{\psi}_\flat})   \right) \Omega^2  \d{u}\d{v} \d\sigma_{\mathbb{S}^2}\Big| \nonumber  
	\end{align}
	as the term in \eqref{eq:analog} involving $\langle u \rangle^p$ is estimated in an analogous manner.
	Using the explicit form of $f$ and noting that we have control over $(\vp  + \up)\Omega^4 |{{\psi}_\flat}|^2$ from \eqref{eq:bulkangularcontrol}, it suffices to estimate 
	\begin{align}\nonumber
 \Big| 	\int_{\mathcal{R}_f}   r^{N+1} \vp & \operatorname{Re}\left(  \overline{\partial_v {{\psi}_\flat}} \partial_u {{\psi}_\flat} \right)\Omega^2  \d{u}\d{v} \d\sigma_{\mathbb{S}^2}\Big|  + \Big| 	\int_{\mathcal{R}_f}   \Omega^2 
 \vp \operatorname{Re}\left( \overline{{ {\psi}_\flat} }(\partial_v {{\psi}_\flat})\right) \Omega^2 \d{u}\d{v} \d\sigma_{\mathbb{S}^2}\Big| \\ &+  \Big| 	\int_{\mathcal{R}_f}   \Omega^2 
 \vp\operatorname{Re}\left(  \overline{{{\psi}_\flat}} (\partial_u {{\psi}_\flat}) \right) \Omega^2 \d{u}\d{v} \d\sigma_{\mathbb{S}^2}\Big|   \label{eq:bulk3terms}.
	\end{align}
		Now, note that the second term  of \eqref{eq:bulk3terms} (excluding the factor $ \Omega^2$ appearing in the volume form) reads $ r^{-2}\Omega^2 
		\vp \operatorname{Re}\left( \overline{{ {\psi}_\flat} }(\partial_v {{\psi}_\flat})\right) $ and is controlled by \eqref{eq:bulkmainterm} and \eqref{eq:bulkangularcontrol} using Cauchy's inequality and by potentially choosing $\rblue$ even closer to $r_-$. Now, in both terms, the first and third term of \eqref{eq:bulk3terms}, we integrate by parts in $u$. We also use $\operatorname{Re}\left( \overline{{\psi}_\flat} \partial_u {{\psi}_\flat} \right)= \frac{1}{2} \partial_u (|{{\psi}_\flat}|^2)$. Then, it follows that---up to boundary contributions which will be dealt with below in \eqref{eq:errortermcu}---we have to control the terms
	\begin{align}\nonumber
 \Big| 	\int_{\mathcal{R}_f}  N  r^{N} \vp \operatorname{Re}&\left( \overline{{\psi}_\flat} \partial_v {{\psi}_\flat}\right) \Omega^4  \d{u}\d{v} \d\sigma_{\mathbb{S}^2}\Big|  + \Big| 	\int_{\mathcal{R}_f}    r^{N+1} \vp \operatorname{Re}\left( \overline{{\psi}_\flat} (\partial_u \partial_v {{\psi}_\flat} )\right)\Omega^4  \d{u}\d{v} \d\sigma_{\mathbb{S}^2}\Big| \\ &+   \Big| 	\int_{\mathcal{R}_f}    
 \vp  |{{\psi}_\flat}|^2 \Omega^4 \d{u}\d{v} \d\sigma_{\mathbb{S}^2}\Big|.\label{eq:bulk2terms}
\end{align}
The first and third term (excluding $\Omega^2$ as above) of \eqref{eq:bulk2terms} are controlled by \eqref{eq:bulkmainterm}, \eqref{eq:bulkangularcontrol} and by potentially choosing $\rblue$ even closer to $r_-$.  For the second term of \eqref{eq:bulk2terms} we will use \eqref{eq:wave} which reads \begin{align}\nonumber
0 = \Box_{g_{\mathrm{RNAdS}}} {{\psi}_\flat} + \frac{\alpha}{\ell^2} {{\psi}_\flat} = \frac{-4}{\Omega^2}(\partial_u \partial_v {{\psi}_\flat}) + \frac{2}{r} \left(\partial_v {{\psi}_\flat} + \partial_u {{\psi}_\flat} \right) + \frac{1}{r^2} \slashed\Delta_{\mathbb{S}^2} {{\psi}_\flat} + \frac{\alpha}{\ell^2} {{\psi}_\flat}
\end{align} to substitute $\partial_u \partial_v {{\psi}_\flat}$. Replacing $\partial_u \partial_v {{\psi}_\flat}$ and integrating by parts on the sphere, we  estimate all but one term of \eqref{eq:bulk2terms} using \eqref{eq:bulkangularcontrol} and \eqref{eq:bulkmainterm}. The term which we cannot estimate with \eqref{eq:bulkangularcontrol} and \eqref{eq:bulkmainterm} is of the form 
\begin{align}
  \Big| 	\int_{\mathcal{R}_f}   r^N
  \vp\operatorname{Re}\left(  \overline{{{\psi}_\flat}} (\partial_u {{\psi}_\flat}) \right) \Omega^6 \d{u}\d{v} \d\sigma_{\mathbb{S}^2}\Big| = \frac 12  \Big| 	\int_{\mathcal{R}_f}   r^N
  \vp \partial_u( |\psi_\flat|^2) \Omega^6 \d{u}\d{v} \d\sigma_{\mathbb{S}^2}\Big|.
\end{align}
This is of a similar form as the third term in \eqref{eq:bulk3terms}, which we control---as before---via an integration by parts in $u$.
Finally we have controlled all terms except for boundary terms arising from the integration by parts.

The first boundary terms arose from integrating by parts the first term in \eqref{eq:bulk3terms}. It consists of two parts and is of the form 
\begin{align}\label{eq:errortermcu}
\Big|	\int_{\mathcal{C}_{u_0}\cap\{ v_{\rblue}(u_0)\leq v \leq v_0
		\}} &r^{N+1} \vp  \operatorname{Re}\left( \overline{{\psi}_\flat} (\partial_v {{\psi}_\flat} )\right)\Omega^2 \d{v} \d\sigma_{\mathbb{S}^2} \Big|\\ & + \Big|	\int_{\Sigma_{\rblue} \cap J^-(v_0,u_0) } r^{N+1} \vp  \operatorname{Re}\left( \overline{{\psi}_\flat} (\partial_v {{\psi}_\flat} )\right)\Omega^2 \d{v} \d\sigma_{\mathbb{S}^2} \Big|.\label{eq:boundarytermsigma}
\end{align}
The second term \eqref{eq:boundarytermsigma} is absorbed in the past flux term on the spacelike hypersurface $\Sigma_{\rblue}$ by choosing $\rblue$ possibly closer to $r_-$ and noting that $\dvol_{\Sigma_{\rblue}} = {\sqrt{\Omega^2}} r^2 \d v \d \sigma_{\mathbb S^2}$. The first term \eqref{eq:errortermcu} is controlled as follows
\begin{align}\nonumber
\Big|&	\int_{\mathcal{C}_{u_0}\cap\{ v_{\rblue}(u_0)\leq v \leq v_0
	\}} r^{N+1} \vp \operatorname{Re}\left( \overline{   {{\psi}_\flat}} (\partial_v {{\psi}_\flat} )\right) \Omega^2 \d{v} \d\sigma_{\mathbb{S}^2} \Big| \\ \nonumber &\leq \Big|	\int_{\mathcal{C}_{u_0}\cap\{ v_{\rblue}(u_0)\leq v \leq v_0
	\}} r^{N+1} \vp     |\partial_v {{\psi}_\flat} |^2 \sqrt{\Omega^2 }\d{v} \d\sigma_{\mathbb{S}^2} \Big|\\ & + \Big|	\int_{\mathcal{C}_{u_0}\cap\{ v_{\rblue}(u_0)\leq v \leq v_0
	\}} r^{N+1} \vp    |{\psi}_\flat|^2 (\Omega^2)^{\frac{1}{4}} (\Omega^2)^{\frac{1}{4}} \Omega^2 \d{v} \d\sigma_{\mathbb{S}^2} \Big|. \label{eq:afteribp1}
\end{align} Now, note that \begin{align}\vp (\Omega^2)^{\frac{1}{4}} \lesssim \langle r_\ast - u\rangle^p (\Omega^2)^{\frac{1}{4}} \lesssim 1+ \up (\Omega^2)^{\frac{1}{4}}, \label{eq:afteribp2} \end{align} 
where we have used that $r_\ast^p (\Omega^2)^{\frac 14} \lesssim 1$ for $r_\ast \geq r_\ast(\rblue)$ which holds true since $\Omega^2$ decays exponentially as $r_\ast \to \infty$. Using \eqref{eq:afteribp2} we absorb \eqref{eq:afteribp1} in the flux term \eqref{eq:fluxterm1} by potentially choosing $\rblue$ closer to $r_-$ such that $\Omega^2$ is uniformly small in the blue-shift region. Completely analogously, we control the other boundary terms which arose from integrating by parts.

Now, we are left with the terms of the last two lines in \eqref{eq:bulk}.
\paragraph{Terms from last two lines of \eqref{eq:bulk}}
We will only look at the terms with $v$ weights as the terms involving $u$ weights are estimated completely analogously. It suffices to estimate the terms
\begin{align}\label{eq:4partii}
r^N \Big| \frac{\Omega^2}{2r}\vp \V |{{\psi}_\flat}|^2 \Big| + r^N \Big|\vp \frac{\partial_v (f^2\V)}{2f^2} |{{\psi}_\flat}|^2 \Big| \end{align} 
and 
\begin{align}- r^N  \vp   \frac{\partial_v f^2}{2f^2}\overline{\tilde{\nabla}_\sigma {{\psi}_\flat}} \tilde \nabla^\sigma {{\psi}_\flat} . \label{eq:hardterm} 
\end{align}
Since $ \Big| \frac{\partial_v (f^2\V)}{2f^2}  \Big|  \lesssim\Omega^2$, we control the terms in \eqref{eq:4partii} using \eqref{eq:bulkangularcontrol} and by potentially choosing $\rblue$ closer to $r_-$. Expanding \eqref{eq:hardterm} yields
\begin{align}\label{eq:lastterm}
-r^N  \vp \frac{\partial_v f^2}{2f^2}\tilde{\nabla}_\sigma {{\psi}_\flat}\overline{ \tilde \nabla^\sigma {{\psi}_\flat}}  = - 2 \beta_{\mathrm{blue}} r^N \vp \mathrm{Re}\left( \overline{ \tilde{\nabla}_u {{\psi}_\flat}} \tilde{\nabla}_v {{\psi}_\flat}\right)  + \frac{\beta_{\mathrm{blue}}}{2}r^N \vp \Omega^2 |\slashed\nabla \psi_\flat |^2. 
\end{align}
The second term on the right-hand side is estimated by \eqref{eq:bulkangularcontrol} and potentially choosing $\rblue$ closer to $r_-$. The first term on the right-hand side of \eqref{eq:lastterm} has the same from as \eqref{eq:secondindterm} and is estimated in the same way as \eqref{eq:secondindterm}. 

Finally, we have estimated and absorbed all sign-indefinite terms in the energy identity to obtain \eqref{eq:generalstrag}. Thus, we have proved \eqref{eq:energyidentityestim}, which concludes the first part of the proof.

\paragraph{\textbf{Part III: Proof of \eqref{eq:nontwistingenergy} and \eqref{eq:nontwistingenergy_an}}}
Now, observe that the estimate \eqref{eq:nontwistingenergy} follows from \eqref{eq:energyidentityestim} and \eqref{eq:changetwistedtopartial}. More precisely, the error arising from interchanging the twisted derivatives with partial derivatives on $\mathcal{C}_u$ are estimated as
\begin{align}\nonumber &\vp |\partial_v {{\psi}_\flat}|^2=  \vp |\dvpsi|^2  + \vp  \Omega^2 \operatorname{Re}\left( \overline{{\psi}_\flat} \partial_v {{\psi}_\flat}\right) - \frac 14 \vp  \Omega^4 |{{\psi}_\flat}|^2 \\ \nonumber  &\leq \vp |\dvpsi|^2  + | \vp  \Omega^2  \operatorname{Re}\left(\overline{{{\psi}_\flat}}\partial_v {{\psi}_\flat}\right) |.
\end{align}
Finally, note that the error term on the right hand side is controlled as in \eqref{eq:errortermcu}. This works for $\underline{\mathcal{C}}_{v}$ completely analogously which concludes the proof. 
\end{proof}
\subsubsection{Uniform boundedness and continuity at the Cauchy horizon for bounded frequencies}
Now, \cref{prop:decayinblueshift} allows us to prove the uniform boundedness.
\begin{prop}\label{prop:lowbound}
	Let ${{\psi}_\flat}$ be as defined in \eqref{eq:fouriertransforminterior}.
	Then, 
	\begin{align}\label{eq:uniformflat}
	\sup_{\mathcal B\cap J^+(\Sigma_0)}	|{{\psi}_\flat}|^2 \lesssim E_1[\psi_\flat](0) + \sum_{i,j=1}^3 E_1[\mathcal{W}_i \mathcal{W}_j \psi_\flat](0) \lesssim D[\psi_\flat].
	\end{align} 
	\begin{proof}
In view of \cref{prop:boundednessred}, it suffices to prove \eqref{eq:uniformflat} only in $J^+(\Sigma_{\rblue})\cap \mathcal{B}$. Let $(u_0,v_0)\in J^+(\Sigma_{\rblue})\cap \mathcal{B}$ be arbitrary. Then, by \cref{prop:boundednessred}, \cref{prop:decayinblueshift} and the Sobolev embedding on the sphere $H^2(\mathbb{S}^2)\hookrightarrow L^\infty(\mathbb{S}^2)$, we have
\begin{align}\nonumber
	|\psi_\flat(u_0,v_0,\varphi,\theta)|^2 \lesssim  &\left( \int_{v_{\rblue}(u_0)}^{v_0}  |\partial_v \psi_\flat(u_0,v,\varphi,\theta)| \d v \right)^2 + |\psi_\flat(u_0, v_{\rblue}(u_0),\varphi,\theta)|^2\\ \nonumber  \nonumber \lesssim  & \int_{\mathcal{C}_{u_0} (v_{\rblue}(u_0),v_0)} \langle v\rangle^p |\partial_v \psi_\flat|^2 \d v\d\sigma_{\mathbb{S}^2}+ \sum_{i,j} \int_{\mathcal{C}_{u_0} (v_{\rblue}(u_0),v_0)} \langle v\rangle^p |\partial_v \mathcal{W}_i\mathcal{W}_j\psi_\flat|^2 \d v \d\sigma_{\mathbb{S}^2}\\  &+ E_1[\psi_\flat](0) +  \sum_{i,j=1}^3 E_1[\mathcal{W}_i \mathcal{W}_j \psi_\flat] \lesssim  E_1[\psi_\flat](0) +  \sum_{i,j=1}^3 E_1[\mathcal{W}_i \mathcal{W}_j \psi_\flat](0), \label{eq:boundedness}
\end{align}
where $(\mathcal{W}_i)_{i=1,2,3}$ are the angular momentum operators. This shows \eqref{eq:uniformflat}.
\end{proof}
\end{prop}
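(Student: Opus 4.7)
The plan is to combine the pointwise control already available in the red-shift / no-shift region with the weighted integrated fluxes of \cref{prop:decayinblueshift}, and then use Sobolev embedding on the sphere to upgrade $L^2_{\mathbb{S}^2}$ control to $L^\infty$. Since \cref{prop:boundednessred} already gives the bound $\sup_{\mathcal{B}\cap\{\rblue \leq r < r_+\}}|\psi_\flat|^2 \lesssim D[\psi_\flat]$, the task reduces to treating an arbitrary point $(u_0,v_0,\theta,\varphi)\in J^+(\Sigma_\rblue)\cap\mathcal{B}$ lying in the blue-shift region $\{r_-<r\leq \rblue\}$.

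First I would write, by the fundamental theorem of calculus along the outgoing null cone $\mathcal{C}_{u_0}=\{u=u_0\}$,
\begin{equation*}
\psi_\flat(u_0,v_0,\theta,\varphi) = \psi_\flat(u_0,v_{\rblue}(u_0),\theta,\varphi) + \int_{v_{\rblue}(u_0)}^{v_0} \partial_v\psi_\flat(u_0,v,\theta,\varphi)\,\d v.
\end{equation*}
The boundary term sits on $\Sigma_\rblue$ and is controlled pointwise (after taking $\sup$ in $\theta,\varphi$) by \cref{prop:boundednessred}. For the integral term, Cauchy--Schwarz with weight $\langle v\rangle^p$ (with $1<p<q$ fixed as in \cref{rmk:defp}) gives
\begin{equation*}
\left|\int_{v_{\rblue}(u_0)}^{v_0}\partial_v\psi_\flat\,\d v\right|^2 \leq \left(\int_{\mathbb{R}}\langle v\rangle^{-p}\,\d v\right)\int_{v_{\rblue}(u_0)}^{v_0}\langle v\rangle^p|\partial_v\psi_\flat|^2\,\d v,
\end{equation*}
and the first factor is finite since $p>1$.

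The remaining task is to pass from a $(\theta,\varphi)$-dependent integral bound to a uniform-in-$(\theta,\varphi)$ bound. Since the angular momentum operators $\mathcal{W}_i$ are Killing and commute with $\Box_g$, the commuted quantities $\mathcal{W}_i\mathcal{W}_j\psi_\flat$ also satisfy \eqref{eq:wave}, and the Sobolev embedding $H^2(\mathbb{S}^2)\hookrightarrow L^\infty(\mathbb{S}^2)$ yields
\begin{equation*}
\sup_{\theta,\varphi}\int_{v_{\rblue}(u_0)}^{v_0}\langle v\rangle^p|\partial_v\psi_\flat|^2\,\d v \lesssim \sum_{i,j=0}^3 \int_{\mathcal{C}_{u_0}(v_{\rblue}(u_0),v_0)}\langle v\rangle^p|\partial_v \mathcal{W}_i\mathcal{W}_j\psi_\flat|^2\,\d v\,\d\sigma_{\mathbb{S}^2},
\end{equation*}
where $\mathcal{W}_0:=\mathrm{Id}$. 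Applying \eqref{eq:nontwistingenergy_an} of \cref{prop:decayinblueshift} to each term on the right bounds the whole expression by $E_1[\psi_\flat](0)+\sum_{i,j=1}^3 E_1[\mathcal{W}_i\mathcal{W}_j\psi_\flat](0)$, which in turn is controlled by $D[\psi_\flat]$ by definition.

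The only mild subtlety is that the choice of integrating along $\mathcal{C}_{u_0}$ versus $\underline{\mathcal{C}}_{v_0}$ is a matter of convenience: for points approaching $\mathcal{CH}_A$ (where $v\to+\infty$) the argument above is the natural one, but points approaching $\mathcal{CH}_B$ (where $u\to+\infty$) could equally well be handled by integrating in $u$ and invoking the symmetric $\langle u\rangle^p$ estimate in \eqref{eq:nontwistingenergy}. Either way the output is uniform over $J^+(\Sigma_\rblue)\cap\mathcal{B}$, and no step in the argument is genuinely hard: essentially all the analytic work was done in establishing \cref{prop:decayinblueshift}, and this proposition is the final bookkeeping step.
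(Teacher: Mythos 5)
Your proposal is correct and follows essentially the same route as the paper's proof: reduce to $J^+(\Sigma_{\rblue})\cap\mathcal{B}$ via \cref{prop:boundednessred}, apply the fundamental theorem of calculus along $\mathcal{C}_{u_0}$ with a $\langle v\rangle^p$-weighted Cauchy--Schwarz (using $p>1$), and upgrade to a pointwise bound by commuting with the $\mathcal{W}_i$ and using $H^2(\mathbb{S}^2)\hookrightarrow L^\infty(\mathbb{S}^2)$ together with \eqref{eq:nontwistingenergy_an} of \cref{prop:decayinblueshift}. The paper in fact integrates in $v$ for all points (the estimate being uniform in $(u_0,v_0)$), so your remark about the alternative $u$-integration near $\mathcal{CH}_B$ is an optional variant rather than a needed case distinction.
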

\begin{prop}\label{continuityflat}Let $\psi_\flat$ be as defined in \eqref{eq:fouriertransforminterior}. Then, $\psi_\flat$  is continuously extendible beyond the Cauchy horizon $\mathcal{CH}$.
\begin{proof}
Similarly to \eqref{eq:boundedness} we have
\begin{align}\nonumber
		|\psi_\flat(u_0,v_2,\varphi,\theta)  - \psi_\flat(u_0, v_1,\varphi,\theta)|^2 & \lesssim  \int_{v_1}^{v_2} \langle v\rangle^{-p} \d v   \int_{v_1}^{v_2}  \langle v\rangle^{p} |\partial_v \psi_\flat(u_0,v,\varphi,\theta)|^2 \d v \\ &\lesssim \int_{v_1}^{v_2} \langle v\rangle^{-p} \d v   \left(  E_1[\psi_\flat] +  \sum_{i,j=1}^3 E_1[\mathcal{W}_i \mathcal{W}_j \psi_\flat]\right) \label{eq:continuity1}
\end{align}
uniformly in $u_0,\varphi,\theta$. 
The same estimate holds after interchanging the roles of $u$ and $v$. After commuting the equation with $\mathcal{W}_3$, we have from \eqref{eq:uniformflat}
\begin{align}
	\sup_{\mathcal{B}} |\partial_\varphi \psi |^2 \lesssim  E_1[\partial_\varphi \psi_\flat](0) + \sum_{i,j=1}^3 E_1[\mathcal{W}_i \mathcal{W}_j \partial_\varphi \psi_\flat](0) < \tilde C < \infty
\end{align} for some constant $\tilde C < \infty$ depending on the initial data. (Recall that we assumed our initial data to be smooth and compactly supported.) Thus, for $\varphi_1 \leq \varphi_2$, we have
\begin{align}
|\psi_\flat(u_0,v_0,\varphi_2,\theta)  - \psi_\flat(u_0, v_0,\varphi_1,\theta_0)|^2 & \lesssim \int_{\varphi_1}^{\varphi_2}  \sup_{\mathcal{B}} |\partial_\varphi \psi_\flat | \leq \tilde C |\varphi_2 - \varphi_1|
\end{align}
uniformly in $u_0,v_0,\theta_0$. A similar estimate holds true for $\theta$. Applications of the fundamental theorem of calculus and a triangle inequality finally yield the continuity result for $\psi_\flat$.
	\end{proof}
\end{prop}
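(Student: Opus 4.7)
The plan is to establish continuous extendibility by proving a uniform Cauchy criterion in each of the four coordinates $u, v, \theta, \varphi$ on the relevant domain, and then concluding via the triangle inequality that $\psi_\flat$ extends continuously to $\Ch_A$, $\Ch_B$ and the bifurcation sphere $\mathcal{B}_+$.

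First, I would establish continuity in the null variable $v$ by a Cauchy criterion. Fixing $(u_0,\varphi,\theta)$ and any $v_{\rblue}(u_0) \leq v_1 \leq v_2 < \infty$, an application of the Cauchy--Schwarz inequality yields
\begin{align*}
\left|\psi_\flat(u_0,v_2,\varphi,\theta) - \psi_\flat(u_0,v_1,\varphi,\theta)\right|^2 \lesssim \left(\int_{v_1}^{v_2} \langle v \rangle^{-p}\, \d v\right) \left(\int_{v_1}^{v_2} \langle v\rangle^p |\partial_v \psi_\flat|^2(u_0,v,\varphi,\theta)\, \d v\right).
\end{align*}
The first factor tends to zero as $v_1,v_2\to\infty$ since $p>1$. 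For the second factor we need a \emph{pointwise-in-}$(\varphi,\theta)$ bound on the weighted $L^2$-norm of $\partial_v \psi_\flat$ along the null cone $\mathcal{C}_{u_0}$; this is achieved by commuting with the angular Killing fields $\mathcal{W}_i$, which produces a solution $\mathcal{W}_i\mathcal{W}_j \psi_\flat$ to \eqref{eq:wave} to which the same estimates apply, and then invoking the Sobolev embedding $H^2(\mathbb{S}^2)\hookrightarrow L^\infty(\mathbb{S}^2)$ together with the bound \eqref{eq:nontwistingenergy_an} of \cref{prop:decayinblueshift}. This shows $\sup_{u_0,\varphi,\theta}$ of the difference goes to zero as $v_1,v_2\to\infty$, giving uniform continuity of $\psi_\flat$ in $v$ and hence the existence of a uniform limit as $v\to\infty$ along $\Ch_A$. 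An entirely symmetric argument in $u$ gives the extension to $\Ch_B$ and, combining both, to $\mathcal{B}_+$.

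For continuity in the angular variables $(\varphi,\theta)$, I would commute \eqref{eq:wave} with $\mathcal{W}_3=-\partial_\varphi$ (and analogously for a $\theta$-derivative, up to the usual coordinate degeneracy at the poles which can be handled by working in overlapping charts) and apply the uniform boundedness result \cref{prop:lowbound} to the commuted solution. This yields a uniform pointwise bound
\begin{align*}
\sup_{\mathcal{B}} |\partial_\varphi \psi_\flat|^2 \lesssim E_1[\partial_\varphi \psi_\flat](0) + \sum_{i,j=1}^3 E_1[\mathcal{W}_i\mathcal{W}_j \partial_\varphi \psi_\flat](0),
\end{align*}
which is finite because the initial data is smooth and compactly supported. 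The fundamental theorem of calculus then produces a Lipschitz estimate in $\varphi$ (and $\theta$) uniform in $(u,v)$, which extends to the closure including $\Ch$.

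Combining the three one-variable estimates by a triangle-inequality argument of the form
\begin{align*}
|\psi_\flat(P_2) - \psi_\flat(P_1)| \leq |\psi_\flat(u_2,v_2,\varphi_1,\theta_1) - \psi_\flat(u_1,v_1,\varphi_1,\theta_1)| + \text{angular terms}
\end{align*}
shows that the extended function $\bar\psi_\flat \in C^0(\mathcal{M}_{\mathrm{RNAdS}} \cup \Ch)$. The main obstacle in this strategy is ensuring that the $L^2$ control on $\partial_v \psi_\flat$ is strong enough to give pointwise (in $\varphi,\theta$) Cauchy control along each $\mathcal{C}_{u_0}$; this is precisely where commuting with two angular operators and invoking the Sobolev embedding on $\mathbb{S}^2$ is essential, and where it is important that all commuted energies $E_1[\mathcal{W}_i\mathcal{W}_j\psi_\flat](0)$ have already been shown to be controlled by $D[\psi]$ in Step~2 of the proof of \cref{thm:main}.
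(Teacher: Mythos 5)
Your proposal is correct and follows essentially the same route as the paper: a Cauchy--Schwarz argument in $v$ (and $u$) with the weights $\langle v\rangle^{\pm p}$, controlled pointwise in the angles by commuting with $\mathcal{W}_i\mathcal{W}_j$ and the Sobolev embedding on $\mathbb{S}^2$ via \eqref{eq:nontwistingenergy_an}, together with angular Lipschitz bounds obtained by commuting with $\mathcal{W}_3$ and applying \cref{prop:lowbound}, all combined by the fundamental theorem of calculus and a triangle inequality. No gaps; the paper's proof is the same argument written more briefly.
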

\section{High frequency part \texorpdfstring{$\psi_\sharp$}{psisharp}}
\label{sec:highfreq}
In the previous section we have shown the uniform boundedness for the low frequency part $\psi_\flat$. Now, we turn to $\psi_\sharp$, the high frequency part. The key ingredient for the proof of the uniform boundedness for $|\psi_\sharp|$ in the interior is \emph{(a)} the uniform boundedness of transmission and reflection coefficients associated to the radial o.d.e.\ \eqref{eq:radialode} which is proved in \cite{kehle2018scattering} for $\Lambda=0$, together with \emph{(b)} the finiteness of the (commuted) $T$-energy flux on the event horizon given by \eqref{eq:boundednenergyflux}. 
 
Now, recall the radial o.d.e.~\eqref{eq:radialode} which reads $-u^{\prime \prime} + V_\ell u = \omega^2 u$ in the interior, where $V_\ell$ decays exponentially as $r_\ast \to + \infty (r\to r_-)$ and $r_\ast \to -\infty (r\to r_+)$. 
For $\omega \neq 0$, so in particular for $|\omega| > \frac{\omega_0}{2}$, the radial o.d.e.\ admits the following pairs of mode solutions $(u_1,u_2)$ and $(v_1,v_2)$, where $u_1$ and $u_2$ are solutions to \eqref{eq:radialode} satisfying $u_1 = e^{i\omega r_\ast} + O_\ell (r-r_+)$ and $u_2 = e^{-i\omega r_\ast} + O_\ell (r-r_+)$ as $r_\ast \to - \infty $. Similarly, $v_1$ and $v_2$ satisfy $v_1 = e^{i\omega r_\ast} + O_\ell(r-r_-)$  and $v_2 = e^{-i \omega r_\ast} + O_\ell(r - r_-)$ as $r_\ast \to +\infty$. Now, for $\omega \neq 0$, the transmission and reflection coefficients $\mathfrak T(\omega,\ell)$ and $\mathfrak R(\omega,\ell)$ are defined as the unique coefficients satisfying
\begin{align}
	u_1 = \mathfrak T(\omega,\ell)  v_1 + \mathfrak R (\omega,\ell)v_2.
\end{align}
See \cite{kehle2018scattering} for more details. In the following we will state the uniform boundedness of $\mathfrak T(\omega,\ell)$ and $\mathfrak R(\omega,\ell)$ for $|\omega| \geq \frac{\omega_0}{2}$. In {\cite[Proposition~4.7, Proposition~4.8]{kehle2018scattering} this has been proven for $\Lambda=0$. However, the proof of Proposition~4.7 and Proposition~4.8 in \cite{kehle2018scattering} also applies if we include a non-vanishing cosmological constant.\footnote{Note that for  $\Lambda\neq 0$ the scattering coefficients $\mathfrak R$ and $\mathfrak T$ have a pole at $\omega=0$. However, for frequencies bounded away from $\omega =0$, so in particular for $|\omega| \geq \frac{\omega_0}{2}$ as in the present case, $\mathfrak T$ and $\mathfrak R$ are uniformly bounded for both cases $\Lambda =0$ \emph{and}  $\Lambda \neq 0 $.   See \cite{kehle2018scattering} for more details.}
	
\begin{lemma}[{\cite[Proposition~4.7, Proposition~4.8]{kehle2018scattering}}] Fix subextremal Reissner--Nordström--AdS black hole parameters  $(M,Q,l)$, a constant $\omega_0 >0$ and a Klein--Gordon mass parameter $\alpha<\frac 94$. Then, the scattering coefficients $\mathfrak T(\omega,\ell)$ and $\mathfrak R (\omega,\ell)$ as defined above satisfy
	\begin{align}\label{eq:boundednessoftr}
		 \sup_{|\omega| \geq \frac{\omega_0}{2}, \ell \in \mathbb{N}_0} \left(  |\mathfrak T(\omega,\ell) | + |\mathfrak{R}(\omega,\ell) | \right) \lesssim_{M,Q,l,\omega_0,\alpha} 1
	\end{align}
and the mode solutions $u_1,u_2$ and $v_1,v_2$ are uniformly bounded
\begin{align}\label{eq:u1u2bounded}
	 &\sup_{|\omega| \geq \frac{\omega_0}{2}, \ell \in \mathbb{N}_0}	\|u_1 \|_{L^\infty(\mathbb{R})} \lesssim_{M,Q,l,\omega_0,\alpha} 1, \; 	 \sup_{|\omega| \geq\frac{\omega_0}{2}, \ell \in \mathbb{N}_0}	\|u_2 \|_{L^\infty(\mathbb{R})} \lesssim_{M,Q,l,\omega_0,\alpha} 1,\\
	 & \sup_{|\omega| \geq \frac{\omega_0}{2}, \ell \in \mathbb{N}_0}	\|v_1 \|_{L^\infty(\mathbb{R})} \lesssim_{M,Q,l,\omega_0,\alpha} 1, \; 	 \sup_{|\omega| \geq \frac{\omega_0}{2}, \ell \in \mathbb{N}_0}	\|v_2 \|_{L^\infty(\mathbb{R})} \lesssim_{M,Q,l,\omega_0,\alpha} 1.
\end{align}
\begin{proof}
	Since we are the regime $|\omega|\geq \frac{\omega_0}{2}$, the proof for $\Lambda <0$ works exactly as for $\Lambda=0$ as shown in {\cite[Proposition~4.7, Proposition~4.8]{kehle2018scattering}}. Thus, we will be very brief.  
	
	We first consider the case $\ell \leq \ell_0$, where $\ell_0$ is chosen sufficiently large later in the second part. Note that $u_1$ solves the Volterra equation
	\begin{align}
		u_1(r_\ast) = e^{i\omega r_\ast} + \int_{-\infty}^{r_\ast} \frac{\sin(\omega (r_\ast - y))}{\omega} V(y) u_1(y) \d y.
	\end{align}
	As $|\omega|\geq \frac{\omega_0}{2}$ and since the potential $V$ is uniformly bounded (in the regime $\ell \leq \ell_0$) and decays exponentially as $r_\ast \to \pm \infty$ , standard estimates for Volterra integral equations (see \cite[Proposition~2.3]{kehle2018scattering}) yield \eqref{eq:u1u2bounded} for $u_1$ and similarly for $u_2$, $v_1$ and $v_2$. 
	
	For the regime $\ell \geq \ell_0$, we will use a WKB approximation. Indeed, choosing $\ell_0$ sufficiently large, we have that $p:= \omega^2 - V$ is positive for $r_\ast \in \mathbb R$ and smooth. Now, $u_1$ is a solution of the radial o.d.e.\ $u'' = -p u$. Just like in \cite[Equation (4.149)]{kehle2018scattering} we control the error term $F(r_\ast) = \int_{-\infty}^{r_\ast} p^{- \frac 14} |\frac{\d{}^2 }{\d y^2} p^{-\frac 14} | \d y$ of the WKB approximation and conclude that $u_1$ remains uniformly bounded. Similarly, this holds true for $u_2$, $v_1$ and $v_2$ and for the scattering coefficients $\mathfrak R$ and $\mathfrak T$ which concludes the proof.
\end{proof}
\end{lemma}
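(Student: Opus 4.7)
The plan is to bound the four mode solutions $u_1, u_2, v_1, v_2$ uniformly in $|\omega|\ge\omega_0/2$ and $\ell\in\mathbb{N}_0$, and then extract the boundedness of the scattering coefficients $\mathfrak{T}, \mathfrak{R}$ from Wronskian identities. I would split the analysis into two angular-momentum regimes, $\ell\le\ell_0$ and $\ell\ge\ell_0$, with $\ell_0 = \ell_0(M,Q,l,\alpha,\omega_0)$ chosen large in the second step. Throughout, the key data is that $V_\ell$ and all its derivatives in $r_\ast$ decay exponentially as $r_\ast\to\pm\infty$, since $\Delta$ has simple zeros at $r_\pm$.

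For bounded $\ell\le\ell_0$, the natural tool is the Volterra representation used in the excerpt,
\begin{align*}
u_1(r_\ast) = e^{i\omega r_\ast} + \int_{-\infty}^{r_\ast}\frac{\sin(\omega(r_\ast-y))}{\omega}\,V_\ell(y)\,u_1(y)\,dy,
\end{align*}
and the analogous integral equations for $u_2, v_1, v_2$. Since $\|V_\ell\|_{L^1(\mathbb{R})}\lesssim_{\ell_0}1$ uniformly in $\ell\le\ell_0$, and since the kernel satisfies $|\sin(\omega(r_\ast-y))/\omega|\le 2/\omega_0$, a standard Neumann-series estimate for Volterra kernels (e.g.\ Proposition~2.3 in \cite{kehle2018scattering}) shows that the series converges absolutely and gives the uniform $L^\infty$ bound.

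For $\ell\ge\ell_0$, the Volterra argument breaks down because $\|V_\ell\|_{L^1}\sim\ell^2$. Instead, I would use a WKB approximation. The main preliminary observation is that inside the black hole $\Delta<0$, so $h=\Delta/r^2<0$, and therefore the leading $\ell(\ell+1)h/r^2$-contribution to $V_\ell$ is negative. Choosing $\ell_0$ large enough, the quantity $p:=\omega^2-V_\ell$ is strictly positive and smooth on all of $\mathbb{R}$, with a uniform lower bound $p\gtrsim \ell(\ell+1)$ in the bulk and exponential convergence $p\to\omega^2$ at both ends. One then writes
\begin{align*}
u_1(r_\ast) = p^{-1/4}\exp\!\left(i\int_{-\infty}^{r_\ast}\sqrt{p(y)}\,dy\right) + \text{error},
\end{align*}
and controls the error by the standard WKB quantity
\begin{align*}
F(r_\ast) = \int_{-\infty}^{r_\ast} p^{-1/4}\left|\tfrac{d^2}{dy^2}\bigl(p^{-1/4}\bigr)\right|dy,
\end{align*}
following \cite[Eq.~(4.149)]{kehle2018scattering}. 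Here the exponential decay of $V_\ell$ and its first two derivatives at both horizons forces $F(+\infty)$ to be integrable, and the uniform lower bound on $p$ keeps $F$ bounded uniformly in $\ell$ and $|\omega|\ge\omega_0/2$. The same analysis applies verbatim to $u_2, v_1, v_2$.

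Once all four mode solutions are uniformly $L^\infty$-bounded, the scattering coefficients follow: since $u_1=\mathfrak{T}v_1+\mathfrak{R}v_2$ and $W(v_1,v_2)=-2i\omega$, one obtains
\begin{align*}
\mathfrak{T}(\omega,\ell) = \frac{W(u_1,v_2)}{W(v_1,v_2)}, \qquad \mathfrak{R}(\omega,\ell) = \frac{W(u_1,v_1)}{W(v_2,v_1)},
\end{align*}
and the Wronskians can be evaluated in the $r_\ast\to+\infty$ asymptotic regime where the exponential tails of $V_\ell$ make the mode solutions (and their derivatives) converge uniformly to their leading exponentials. I expect the main technical obstacle to lie in the WKB regime: specifically, in verifying that $F(r_\ast)$ is bounded \emph{uniformly} in $\ell$ up to the endpoints, since $p^{-1/4}$ transitions between a bounded value at $\pm\infty$ and a size $\sim\ell^{-1/2}$ in the bulk, and one must check that the derivatives of $p^{-1/4}$ do not spoil this integral near either horizon. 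This ultimately reduces to quantitative control of the exponential decay of $\partial_r^2 V_\ell$ uniform in $\ell$, which is visible from the explicit form of $V_\ell$.
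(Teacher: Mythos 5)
Your proposal follows essentially the same route as the paper: the same split into $\ell\le\ell_0$ (Volterra equation, using $|\omega|\ge\omega_0/2$ and the exponentially decaying, uniformly bounded potential) and $\ell\ge\ell_0$ (WKB with $p=\omega^2-V_\ell>0$, controlling the error term $F$), followed by reading off $\mathfrak T,\mathfrak R$ from Wronskians with $|\mathfrak W(v_1,v_2)|=2|\omega|$ bounded away from zero. Your added observations (negativity of the $\ell(\ell+1)h/r^2$ term in the interior, where the Wronskians are evaluated) are exactly the points the paper leaves implicit by citing \cite{kehle2018scattering}.
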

Another result which we will use from \cite{kehle2018scattering} is the representation formula for $\psi_\sharp$ in the separated picture. It is essential that $|\omega|\geq \frac{\omega_0}{2}$ to apply the same steps as in \cite[Proof of Proposition~5.1]{kehle2018scattering}.
\begin{lemma}[{\cite[Proof of Proposition~5.1]{kehle2018scattering}}]\label{lem52}
Let $\psi_\sharp$ as in \eqref{eq:fouriertransforminterior}. Then, we have 
\begin{align}\nonumber
	\psi_\sharp (t,r,\varphi,\theta) = 	& \frac{1}{\sqrt{2\pi} r}\sum_{\ell \in \mathbb{N}_0} \sum_{|m|\leq \ell} Y_{\ell m}(\theta,\varphi)  \int_{|\omega|\geq \frac{\omega_0}{2}} \mathcal{F}_{\mathcal{H}^+_A}\left[\psi_\sharp\restriction_{\mathcal{H}_A^+}\right](\omega, m,\ell) u_1(\omega,\ell,r) e^{i\omega t} \d \omega\\ &+ \frac{1}{\sqrt{2\pi} r}\sum_{\ell \in \mathbb{N}_0} \sum_{|m|\leq \ell} Y_{\ell m}(\theta,\varphi) \int_{|\omega|\geq \frac{\omega_0}{2}} \mathcal{F}_{\mathcal{H}^+_B}\left[\psi_\sharp\restriction_{\mathcal{H}_B^+}\right](\omega, m,\ell) u_2(\omega,\ell,r) e^{i\omega t} \d \omega, \label{eq:psisharp}
\end{align}
where
\begin{align}\label{eq:56}
\mathcal{F}_{\mathcal{H}^+_A}[\phi](\omega,m,\ell) :=\frac{r_+}{\sqrt{2\pi}} \int_{\mathbb R} e^{-i\omega v} \langle \phi, Y_{\ell m}\rangle_{\mathbb S^2} \d v\end{align}
and
\begin{align}
\mathcal{F}_{\mathcal{H}^+_B}[\phi](\omega,m,\ell) :=\frac{r_+}{\sqrt{2\pi}} \int_{\mathbb R} e^{i\omega u} \langle \phi, Y_{\ell m}\rangle_{\mathbb S^2} \d u.\end{align}
\begin{proof}[Proof of \cref{lem52}]
	This proof is very similar to \cite[Proof of Proposition~5.1]{kehle2018scattering} so we will be rather brief.

	Let $\psi_\sharp$ as in \eqref{eq:fouriertransforminterior}. Since the expansion in spherical harmonics converges pointwise, it suffices to prove \eqref{eq:psisharp} for $\psi_\sharp^{\ell m} := \langle \psi_\sharp, Y_{\ell m} \rangle_{\mathbb{S}^2} Y_{\ell m}$ for fixed $m,\ell$. Now, define $u[\psi_\sharp^{\ell m}]$ as in \eqref{eq:definitionu} such that 
	\begin{align}
		\psi_\sharp^{\ell m} = \frac{1}{\sqrt{2\pi} r} Y_{\ell m} \int_{|\omega|\geq\frac{\omega_0}{2}} u[\psi_\sharp^{\ell m}] e^{i\omega t} \d \omega.
	\end{align} This is well-defined in the interior in view of \cref{rmk:welldefnofu}. Moreover, $u[\psi_\sharp^{\ell m}]$ solves the radial o.d.e.\ and can be expanded in the basis $u_1$ and $u_2$ ($|\omega|> \frac{\omega_0}{2}$):
	\begin{align}
		u[\psi_\sharp^{\ell m}](r_\ast ,\omega,m, \ell) = a(\omega,m, \ell) u_1(r_\ast,\ell,\omega) + b(\omega,m,\ell) u_2(r_\ast,\ell,\omega).
	\end{align}
	Now, first note \cref{prop:decayofpsil} implies that $\omega \mapsto u[\psi_\sharp^{\ell m}](r,\omega) $ is a Schwartz function for $r\in (r_-,r_+)$. Since 
	\begin{align}
	|	a(\omega,m,\ell)| = \left|\frac{\mathfrak W(u[\psi_\sharp^{\ell m}] , u_2)}{\mathfrak W(u_1,u_2)}\right| = \left|\frac{\mathfrak W(u[\psi_\sharp^{\ell m}] , u_2)}{2 \omega}\right| \lesssim  \left|\mathfrak W(u[\psi_\sharp^{\ell m}],u_2) \right|
	\end{align}
in view of $|\omega| \geq \frac{\omega_0}{2}$, we conclude that $\omega \mapsto a(\omega,m,\ell)$ is in $L^1(\mathbb{R})$ for fixed $\ell,m$. Recall that the Wronskian $\mathfrak W(f,g) := f' g - fg'$ is independent of $r_\ast$ for two solutions of the radial o.d.e.~\eqref{eq:radialode}. We have also used that $\| u_2 \|_{L^\infty} \lesssim 1$ and $\| u_2^{\prime} \|_{L^\infty}\lesssim_{\ell}1+| \omega |$ for $|\omega|\geq \frac{\omega_0}{2}$ (cf.\ \cite[Proposition~4.7 and Proposition~4.8]{kehle2018scattering}).   Similarly, we have that $\omega \mapsto b(\omega,m,l)$ is in $L^1(\mathbb{R})$. 
	Using
	\begin{align}\label{eq:psisharplm}
		 \psi_\sharp^{\ell m}= Y_{\ell m}  \frac{1}{\sqrt{2\pi}r} \int_{|\omega|\geq \frac{\omega_0}{2}}  \left( a(\omega,m, \ell) u_1(r,\omega,\ell) + b(\omega,m,\ell) u_2(r,\omega,\ell) \right) e^{i\omega t}\d\omega
	\end{align}
and a direct adaptation of \cite[Proof of Proposition~5.1]{kehle2018scattering} finally shows $a(\omega,m,\ell) = \mathcal{F}_{\mathcal{H}_A^+}[\psi^{\ell m}_\sharp\restriction_{\mathcal{H}_A^+}](\omega,m,\ell)$, $b(\omega,m,\ell) = \mathcal{F}_{\mathcal{H}_B^+}[\psi^{\ell m}_\sharp\restriction_{\mathcal{H}_B^+}](\omega,m,\ell)$.\footnote{More precisely, following the lines starting from equation (5.20) in \cite[Proof of Proposition~5.1]{kehle2018scattering} which contain an application of Lebesgue's dominated convergence, the Riemann--Lebesgue lemma and the inverse Fourier transform yields the result.} This shows the representation formula \eqref{eq:psisharp} for $\psi_\sharp$. 
\end{proof}
\end{lemma}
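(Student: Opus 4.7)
The plan is to mirror the proof of Proposition~5.1 in \cite{kehle2018scattering}, adapted to the present high-frequency cutoff $|\omega| \geq \omega_0/2$. First I would reduce to a fixed spherical harmonic mode: since the expansion $\psi_\sharp = \sum_{\ell, m} \psi_\sharp^{\ell m}$ with $\psi_\sharp^{\ell m} := \langle \psi_\sharp, Y_{\ell m}\rangle_{\mathbb S^2} Y_{\ell m}$ converges pointwise, it suffices to establish \eqref{eq:psisharp} for each mode separately. For fixed $(\ell, m)$, set $u[\psi_\sharp^{\ell m}](r_\ast, \omega) := \frac{r}{\sqrt{2\pi}} \int_{\mathbb R} e^{-i\omega t}\langle \psi_\sharp, Y_{\ell m}\rangle_{\mathbb S^2} \d t$, which is well-defined by \cref{rmk:welldefnofu} and, by the exponential decay of fixed angular frequencies asserted in \cref{prop:basicenergyest} and its interior analog, is a Schwartz function of $\omega$ for each fixed $r \in (r_-, r_+)$.

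Since $u[\psi_\sharp^{\ell m}]$ satisfies the radial o.d.e.~\eqref{eq:radialode} and the pair $(u_1, u_2)$ has Wronskian $\mathfrak W(u_1, u_2) = 2i\omega$, which is non-vanishing on the support $|\omega| \geq \omega_0/2$, we can uniquely expand
\begin{align*}
u[\psi_\sharp^{\ell m}](r_\ast, \omega) = a(\omega, \ell, m)\, u_1(r_\ast, \omega, \ell) + b(\omega, \ell, m)\, u_2(r_\ast, \omega, \ell),
\end{align*}
with $a, b$ given by Wronskian ratios $a = \mathfrak W(u[\psi_\sharp^{\ell m}], u_2)/(2i\omega)$ and $b = -\mathfrak W(u[\psi_\sharp^{\ell m}], u_1)/(2i\omega)$. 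The factor $1/\omega$ is harmless because $|\omega| \geq \omega_0/2$, and combining the Schwartz decay of $u[\psi_\sharp^{\ell m}]$ and its derivative in $\omega$ with the uniform bounds $\|u_1\|_{L^\infty}, \|u_2\|_{L^\infty} \lesssim 1$ and the $O(1+|\omega|)$ bound on $u_1', u_2'$ (from \cite[Proposition~4.7--4.8]{kehle2018scattering}) shows that $\omega \mapsto a(\omega, \ell, m)$ and $\omega \mapsto b(\omega, \ell, m)$ lie in $L^1(\{|\omega|\geq \omega_0/2\})$. Plugging this decomposition into the Fourier inversion formula $\psi_\sharp^{\ell m} = \frac{Y_{\ell m}}{\sqrt{2\pi}\, r}\int_{|\omega|\geq \omega_0/2}( a u_1 + b u_2) e^{i\omega t}\, \d\omega$ already yields the structural form of \eqref{eq:psisharp}.

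It remains to identify $a$ and $b$ with the Fourier transforms along the event horizons. I would approach $\mathcal{H}_A^+$ by letting $r_\ast \to -\infty$ with $v = r_\ast + t$ held fixed. Since $u_1(r_\ast, \omega, \ell) e^{i\omega t} \to e^{i\omega v}$ and $u_2(r_\ast, \omega, \ell) e^{i\omega t} = e^{-i\omega r_\ast}(1 + O_\ell(r - r_+)) e^{i\omega t} \to e^{-i\omega u}$ with $u \to -\infty$, Lebesgue dominated convergence (using $L^1$ control of $a$ and $b$ and the uniform bounds on $u_1, u_2$) interchanges the limit with the $\omega$-integral; the Riemann--Lebesgue lemma then annihilates the $b$-integral. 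This gives
\begin{align*}
r_+\, \langle \psi_\sharp, Y_{\ell m}\rangle_{\mathbb S^2}\big|_{\mathcal H_A^+}(v) = \frac{1}{\sqrt{2\pi}}\int_{|\omega|\geq \omega_0/2} a(\omega, \ell, m)\, e^{i\omega v}\, \d\omega,
\end{align*}
and Plancherel/Fourier inversion identify $a(\omega, \ell, m) = \mathcal{F}_{\mathcal H_A^+}[\psi_\sharp\restriction_{\mathcal H_A^+}](\omega, m, \ell)$, using that $\psi_\sharp$ has no frequency content for $|\omega| < \omega_0/2$. A symmetric argument, letting $r_\ast \to -\infty$ with $u = r_\ast - t$ fixed, gives $b(\omega, \ell, m) = \mathcal{F}_{\mathcal H_B^+}[\psi_\sharp\restriction_{\mathcal H_B^+}](\omega, m, \ell)$.

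The main obstacle I anticipate is justifying the limit-integral interchange along the event horizon rigorously and uniformly in $v$ (resp.~$u$): it requires a dominating function integrable in $\omega$ for the full integrand $a(\omega) u_1(r_\ast, \omega) e^{i\omega t}$, which in turn rests on the Schwartz decay of $u[\psi_\sharp^{\ell m}](r_\ast, \omega)$ at a fixed interior slice being \emph{uniform} as $r_\ast \to -\infty$. Establishing this uniform decay is precisely where the exponential decay of fixed angular frequency modes and propagation through the red-shift region (\cref{prop:reddecay}) enter; once these are in hand, the Wronskian estimates plus Riemann--Lebesgue close the argument exactly as in \cite[Proof of Proposition~5.1]{kehle2018scattering}.
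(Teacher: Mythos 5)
Your proposal follows essentially the same route as the paper's proof: mode-by-mode reduction, expansion of $u[\psi_\sharp^{\ell m}]$ in the basis $(u_1,u_2)$ with $L^1$ control of the Wronskian coefficients from the bounds of \cite[Propositions~4.7--4.8]{kehle2018scattering}, and identification of $a,b$ with the horizon Fourier transforms via dominated convergence, the Riemann--Lebesgue lemma and Fourier inversion, exactly as in the adaptation of \cite[Proof of Proposition~5.1]{kehle2018scattering}. The uniformity worry you raise at the end is not actually needed: since $a(\omega)$ and $b(\omega)$ are $r_\ast$-independent and $u_1,u_2$ are bounded uniformly in $r_\ast$ and $\omega$, the dominating function $|a(\omega)|+|b(\omega)|$ already suffices.
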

We will now prove the uniform boundedness for $\psi_\sharp$.
\begin{prop}\label{prop:highbound}
	Let $\psi_\sharp$ be as defined in \eqref{eq:fouriertransforminterior}. 
	Then, 
	\begin{align}\label{eq:uniformsharp}
	\sup_{\mathcal B\cap J^+(\Sigma_0)}	|{{\psi}_\sharp}|^2 \lesssim E_1[\psi_\sharp](0)+ \sum_{i,j=1}^3 E_1[\mathcal{W}_i \mathcal{W}_j \psi_\sharp](0) \lesssim D[\psi_\sharp].
	\end{align} 
	\begin{proof} 
We start with the representation of $\psi_\sharp$ as in \eqref{eq:psisharp}.
For convenience, we will only estimate the term involving $\mathcal{F}_{\mathcal{H}^+_A}[\phi](\omega,m,\ell)$ and assume without loss of generality that $\mathcal{F}_{\mathcal{H}^+_B}[\phi](\omega,m,\ell)=0$. Indeed the term $\mathcal{F}_{\mathcal{H}^+_B}[\phi](\omega,m,\ell)$ can be treated analogously.
Now, in view of \eqref{eq:u1u2bounded}, 
we conclude
\begin{align}\nonumber
	|	\psi_\sharp(r,t,\varphi,\theta)|^2 &\lesssim\left|\sum_{\ell\in \mathbb{N}_0} \sum_{{m\in\mathbb{Z}}, {|m|\leq \ell}} Y_{\ell m}(\varphi,\theta) \int_{|\omega|\geq \frac{\omega_0}{2}}\mathcal{F}_{\mathcal{H}_A}\left[\psi_\sharp\restriction_{\mathcal{H}_A^+}\right](\omega,m,\ell) \d \omega\right|^2 \\\nonumber &
	\leq 
	  \sum_{\ell\in \mathbb{N}_0}  \sum_{{m\in\mathbb{Z}}, {|m|\leq \ell}} \int_{|\omega|\geq \frac{\omega_0}{2}}  (1 + \ell)^3 \omega^2 \left| \mathcal{F}_{\mathcal{H}_A}\left[\psi_\sharp\restriction_{\mathcal{H}_A^+}\right](\omega, m,\ell)\right|^2 \d \omega \\ \nonumber & \cdot 
\sum_{\ell\in \mathbb{N}_0}  \sum_{{m\in\mathbb{Z}}, {|m|\leq \ell}} \frac{|Y_{\ell m}(\varphi,\theta)|^2}{(1+ \ell)^3}\int_{|\omega | \geq \frac{\omega_0}{2} }\frac{1}{\omega^2} \d \omega
	  \\ \nonumber &
	\lesssim \sum_{\ell\in \mathbb{N}_0} \sum_{{m\in\mathbb{Z}}, {|m|\leq \ell}} \int_{|\omega|\geq \frac{\omega_0}{2}} (1 + \ell)^3 \omega^2 \left| \mathcal{F}_{\mathcal{H}_A}\left[\psi_\sharp\restriction_{\mathcal{H}_A^+}\right](\omega,m,\ell)\right|^2 \d \omega\\ & \lesssim  	\int_{\mathcal{H}_A^+} |T\psi_\sharp|^2  \d v \d \sigma_{\mathbb{S}^2}+
\sum_{i,j=1}^3	\int_{\mathcal{H}_A^+} |T\mathcal{W}_{i}\mathcal{W}_j \psi_\sharp|^2\d v \d \sigma_{\mathbb{S}^2}. \label{estimateonuniformboundforpsisharp}
\end{align}
Here, we have used that \begin{align}
\label{eq:unsold}
\sum_{m = -\ell}^{\ell} |Y_{\ell m}(\varphi,\theta)|^2 = \frac{2\ell +1}{4\pi}\end{align} which is known as Unsöld's Theorem~\cite[Eq.~(69)]{unsold1927beitrage}. 

Finally, on the right hand side of \eqref{estimateonuniformboundforpsisharp} we only see the commuted $T$-energy flux. An application of the $T$-energy identity in the exterior and an energy estimate in a  compact spacetime region shows that the commuted $T$-energy flux on the event horizon is controlled from the initial data (cf.\ \eqref{eq:boundednenergyflux} in \cref{prop:wellposedness}). Thus, in view of \eqref{estimateonuniformboundforpsisharp} we conclude 
\begin{align}
|\psi_\sharp (r,t,\varphi,\theta)|^2 \lesssim E_1[\psi_\sharp](0) + \sum_{i,j=1}^3 E_1[\mathcal{W}_i \mathcal{W}_j \psi_\sharp](0).
\end{align}
\end{proof}
\end{prop}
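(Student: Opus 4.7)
The plan is to leverage the representation formula from \cref{lem52}, which expresses $\psi_\sharp$ in the interior as a superposition over frequencies $|\omega|\geq \omega_0/2$ and spherical harmonics of the bounded mode solutions $u_1,u_2$ multiplied by the horizon Fourier transforms $\mathcal{F}_{\mathcal{H}_A^+}[\psi_\sharp\!\restriction_{\mathcal{H}_A^+}]$ and $\mathcal{F}_{\mathcal{H}_B^+}[\psi_\sharp\!\restriction_{\mathcal{H}_B^+}]$. In the interior $r$ is bounded away from $0$, so the prefactor $1/r$ is harmless; and the uniform bound \eqref{eq:u1u2bounded} gives $|u_1|, |u_2| \lesssim 1$ uniformly in $(\omega,\ell)$ for $|\omega|\geq \omega_0/2$. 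By symmetry between the two horizons it suffices to estimate the $\mathcal{H}_A^+$ contribution.

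The first main step is a Cauchy--Schwarz splitting that trades $\omega$ and $\ell$ weights against each other. One inserts the product $(1+\ell)^{3/2}\,|\omega|$ together with its reciprocal inside the sum/integral defining $\psi_\sharp$, and separates it into two factors. The first factor is
\[
\sum_{\ell\in\mathbb{N}_0}\sum_{|m|\leq \ell} \frac{|Y_{\ell m}(\theta,\varphi)|^2}{(1+\ell)^3}\int_{|\omega|\geq \omega_0/2}\frac{d\omega}{\omega^2},
\]
which is a universal constant: by Unsöld's theorem \eqref{eq:unsold} the angular sum reduces to $\sum_\ell (2\ell+1)/[4\pi(1+\ell)^3]<\infty$, and the frequency cutoff $|\omega|\geq \omega_0/2$ is precisely what makes $\int \omega^{-2}\, d\omega$ converge.

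The second factor, by Plancherel's theorem applied to the Fourier transform in $v$ on $\mathcal{H}_A^+$ as in \eqref{eq:56}, equals (after summing over $m,\ell$)
\[
\sum_{\ell,m}\int_{|\omega|\geq \omega_0/2}(1+\ell)^3\,\omega^2\,\bigl|\mathcal{F}_{\mathcal{H}_A^+}[\psi_\sharp\!\restriction_{\mathcal{H}_A^+}](\omega,m,\ell)\bigr|^2 d\omega
\lesssim \int_{\mathcal{H}_A^+}|T\psi_\sharp|^2 + \sum_{i,j=1}^3 \int_{\mathcal{H}_A^+}|T\mathcal{W}_i\mathcal{W}_j\psi_\sharp|^2,
\]
since the $\omega^2$ factor produces $T=\partial_t$ derivatives and the $(1+\ell)^3$ factor is absorbed by commuting with two pairs of angular momentum operators $\mathcal{W}_i\mathcal{W}_j$ (which commute with $\Box_g$ and hence produce solutions of \eqref{eq:wave} with commuted data). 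The resulting commuted $T$-energy flux on the event horizon is bounded by initial data via the $T$-energy identity \eqref{eq:energyidnofluxthroughI} (and its $\mathcal{W}_i\mathcal{W}_j$-commuted version), giving the desired bound by $E_1[\psi_\sharp](0)+\sum_{i,j}E_1[\mathcal{W}_i\mathcal{W}_j\psi_\sharp](0) = D[\psi_\sharp]$.

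The main obstacle is the precise balancing of weights: one needs enough powers of $(1+\ell)^{-1}$ to kill the $2\ell+1$ multiplicity from Unsöld (hence the choice $(1+\ell)^{3}$, absorbed by two angular commutations), and the frequency cutoff $|\omega|\geq \omega_0/2$ is essential to render the $\omega^{-2}$ integral convergent. Everything else is an application of the scattering representation \eqref{eq:psisharp}, the uniform mode bounds \eqref{eq:u1u2bounded}, Cauchy--Schwarz, Unsöld, Plancherel, and the energy identity; \emph{crucially, no quantitative decay along $\mathcal{H}_A^+\cup\mathcal{H}_B^+$ enters the argument}, which is what allows us to bypass the slow logarithmic decay in the exterior.
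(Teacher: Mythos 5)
Your proposal is correct and follows essentially the same route as the paper: the representation formula \eqref{eq:psisharp}, the uniform mode bounds \eqref{eq:u1u2bounded}, a Cauchy--Schwarz splitting with the weight $(1+\ell)^{3}\omega^{2}$ against its reciprocal, Unsöld's theorem and the convergence of $\int_{|\omega|\geq\omega_0/2}\omega^{-2}\,\d\omega$, Plancherel on $\mathcal{H}_A^+$, and the (angularly commuted) $T$-energy identity to bound the horizon flux by $D[\psi_\sharp]$. No gaps; your remark that no quantitative decay along the horizon is used matches the paper's key point.
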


\begin{prop}
Let $\psi_\sharp$ be as defined in \eqref{eq:fouriertransforminterior}. Then, $\psi_\sharp$  is continuously extendible across the Cauchy horizon $\mathcal{CH}$. 
	\label{continuitysharp}
\begin{proof}
Let $(u_n,v_n,\theta_n,\varphi_n) \to (\tilde u,\tilde v,\tilde \theta,\tilde \varphi)$ be a convergent sequence. We will also allow $\tilde u=+\infty$ and $\tilde v=+\infty$ as limits which correspond to limits to the Cauchy horizon.
We represent $\psi_\sharp$ again as in \eqref{eq:psisharp}. Similar to the proof of \cref{prop:highbound}, it is enough to consider the case where $\mathcal{F}_{\mathcal{H}^+_B}[\psi_\sharp\restriction_{\mathcal{H}_B^+}]$ vanishes. Hence, 
\begin{align}\psi_\sharp (t,r,\varphi,\theta) = 	& \frac{1}{\sqrt{2\pi} r}\sum_{\ell \in \mathbb{N}_0} \sum_{|m|\leq \ell} Y_{\ell m}(\theta,\varphi)  \int_{|\omega|\geq \frac{\omega_0}{2}} \mathcal{F}_{\mathcal{H}^+_A}\left[\psi_\sharp\restriction_{\mathcal{H}_A^+}\right](m,\ell,\omega) u_1(\omega,\ell,r) e^{i\omega t} \d \omega.\end{align}

First from \eqref{eq:unsold} we have $\sup_{\varphi,\theta} |Y_{\ell m}(\varphi,\theta)|\lesssim 1+\ell$ and from \eqref{eq:u1u2bounded} we have that \begin{align*}\sup_{u,v} |u_1 e^{i\omega t(u,v)}| = \sup_{t,r} |u_1 e^{i \omega t}|\lesssim 1.\end{align*}  
Then, a similar estimate as in \eqref{estimateonuniformboundforpsisharp} and an application of Lebesgue's dominated convergence theorem allow us to interchange the limit $n\to\infty$ with the sum $\sum_{\ell \in \mathbb{N}_0} \sum_{|m|\leq \ell}$. Since $Y_{\ell m}(\theta_n,\varphi_n) \to Y_{\ell m}(\tilde \theta,\tilde\varphi) $ pointwise as $n\to\infty$, it remains to show that
\begin{align*}&\int_{|\omega|\geq \frac{\omega_0}{2}} \mathcal{F}_{\mathcal{H}^+_A}\left[\psi_\sharp\restriction_{\mathcal{H}_A^+}\right](m,\ell,\omega) u_1(\omega,\ell,r(u_n,v_m)) e^{i\omega t(u_n,v_n)} \d \omega \\ & = \int_{|\omega|\geq \frac{\omega_0}{2}}\mathcal{F}_{\mathcal{H}^+_A}\left[\psi_\sharp\restriction_{\mathcal{H}_A^+}\right](m,\ell,\omega) \Big( \mathfrak T (\omega,\ell) v_1(\omega,\ell,r(u_n,v_n))  + \mathfrak R (\omega,\ell) v_2(\omega,\ell,r(u_n,v_n))\Big)  e^{i\omega t(u_n,v_n)} \d \omega \end{align*}
 converges as $n\to\infty$ for fixed angular parameters $m,\ell$. But, in view of \eqref{eq:boundednessoftr}, depending on whether $\tilde v = +\infty$ or $\tilde u = +\infty$, we can deduce the continuity using Lebesgue's dominated convergence and the Riemann--Lebesgue lemma. Both are justified by a slight adaptation of the steps which resulted in \eqref{eq:psisharplm}. This concludes the proof.
\end{proof}
\end{prop}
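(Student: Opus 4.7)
The plan is to start from the frequency-space representation \eqref{eq:psisharp} and combine it with the scattering identity $u_1(\omega,\ell,r) = \mathfrak{T}(\omega,\ell)\,v_1(\omega,\ell,r) + \mathfrak{R}(\omega,\ell)\,v_2(\omega,\ell,r)$ (and its counterpart for $u_2$) in order to trade the event-horizon-adapted basis for the Cauchy-horizon-adapted basis. Since $v_1 = e^{i\omega r_\ast} + O_\ell(r-r_-)$ and $v_2 = e^{-i\omega r_\ast} + O_\ell(r-r_-)$ as $r \to r_-$, the kernel $u_1 e^{i\omega t}$ formally splits as $\mathfrak{T}(\omega,\ell)\,e^{i\omega v} + \mathfrak{R}(\omega,\ell)\,e^{-i\omega u}$ modulo an $O_\ell$-error that vanishes at $\mathcal{CH}$. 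The $\mathcal{H}_B^+$ piece of \eqref{eq:psisharp} admits an analogous decomposition via $u_2$, so by symmetry it suffices to describe only the $\mathcal{H}_A^+$ contribution.

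To establish continuity at an arbitrary limit point $(\tilde u,\tilde v,\tilde\theta,\tilde\varphi)$ with $\tilde u, \tilde v \in \mathbb{R} \cup \{+\infty\}$, I will take a sequence $(u_n,v_n,\theta_n,\varphi_n)$ converging to it and attempt to pass the limit through both the sum over $(\ell,m)$ and the $\omega$-integral. The exchange of limit with $\sum_{\ell,m}$ is inherited from the proof of \cref{prop:highbound}: Unsöld's theorem together with the commuted $T$-energy bound \eqref{eq:boundednenergyflux} yields a uniform-in-$n$ tail estimate, which reduces matters to continuity of the individual fixed-$(\ell,m)$ Fourier integral. For each such mode, \cref{prop:basicenergyest} and \cref{rmk:rmkdecay} give exponential decay of $\langle \psi_\sharp, Y_{\ell m} \rangle$ along $\mathcal{H}_A^+$ in $v$, so that $\omega \mapsto \mathcal{F}_{\mathcal{H}_A^+}[\psi_\sharp^{\ell m}](\omega)$ is a Schwartz function (in particular $L^1_\omega$), while the uniform bounds \eqref{eq:boundednessoftr} and \eqref{eq:u1u2bounded} furnish an integrable dominant for the kernels that appear in the limit.

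The fixed-$(\ell,m)$ convergence then follows on a case-by-case basis. If $\tilde u, \tilde v$ are both finite, $r(u_n,v_n)$ stays in a compact subset of $(r_-,r_+)$, the integrand converges pointwise in $\omega$, and Lebesgue's dominated convergence gives the limit directly. If $\tilde v = +\infty$ and $\tilde u \in \mathbb{R}$, the above decomposition splits the integral into $\int \mathcal{F}_{\mathcal{H}_A^+}[\psi_\sharp^{\ell m}](\omega)\,\mathfrak{T}(\omega,\ell)\, e^{i\omega v_n}\,\d\omega$, which vanishes in the limit by the Riemann--Lebesgue lemma, plus $\int \mathcal{F}_{\mathcal{H}_A^+}[\psi_\sharp^{\ell m}](\omega)\,\mathfrak{R}(\omega,\ell)\, e^{-i\omega u_n}\,\d\omega$, which converges to its value at $u = \tilde u$ by dominated convergence; the $O_\ell$-remainder contributes an integrable function of $\omega$ multiplied by a prefactor vanishing as $r(u_n,v_n) \to r_-$. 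The case $\tilde u = +\infty$ is symmetric, and the corner case $\tilde u = \tilde v = +\infty$ (approach to $\mathcal{B}_+$) follows from the same splitting applied in both arguments. The main obstacle will be to justify that the $O_\ell(r-r_-)$-error in the asymptotics of $v_1, v_2$ is genuinely uniform in $\omega$ for fixed $\ell$, so that it may be absorbed in an integrable dominant and driven to zero as $r \to r_-$; this reduces to invoking the quantitative mode-solution control from \cite{kehle2018scattering} already exploited in \cref{prop:highbound}.
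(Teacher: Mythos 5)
Your proposal follows the paper's own proof in all essentials: the representation \eqref{eq:psisharp}, the interchange of the limit with the $(\ell,m)$-sum via Unsöld's theorem, the commuted $T$-energy bound and dominated convergence, and then, for each fixed mode, the identity $u_1=\mathfrak T v_1+\mathfrak R v_2$ together with \eqref{eq:boundednessoftr}, \eqref{eq:u1u2bounded}, dominated convergence and Riemann--Lebesgue, with the same case distinction according to whether $\tilde u$ or $\tilde v$ is infinite (the paper leaves the splitting of $v_1,v_2$ into $e^{\pm i\omega r_\ast}$ plus an $O_\ell(r-r_-)$ remainder implicit; your explicit version is fine, and the uniformity in $\omega$ of that remainder for $|\omega|\geq\tfrac{\omega_0}{2}$ is indeed supplied by the Volterra/WKB bounds of \cite{kehle2018scattering}). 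The one justification that does not quite work as stated is your claim that $\langle\psi_\sharp,Y_{\ell m}\rangle$ decays exponentially along $\mathcal{H}_A^+$ so that $\omega\mapsto\mathcal{F}_{\mathcal{H}^+_A}[\psi_\sharp^{\ell m}]$ is Schwartz: exponential decay holds as $v\to+\infty$, but as $v\to-\infty$ the horizon restriction tends to the (generally nonvanishing) value of $\psi_\sharp$ at the bifurcation sphere $\mathcal{B}_-$, so it is not rapidly decaying in both directions. The $L^1_\omega$ integrability on $\{|\omega|\geq\tfrac{\omega_0}{2}\}$ that you actually need is instead obtained as in \cref{lem52}: the coefficient $a(\omega,m,\ell)$, identified there with $\mathcal{F}_{\mathcal{H}^+_A}[\psi_\sharp^{\ell m}\restriction_{\mathcal{H}_A^+}]$, is controlled through the Wronskian $\mathfrak W(u[\psi_\sharp^{\ell m}],u_2)$ and the fact that $\omega\mapsto u[\psi_\sharp^{\ell m}](r,\omega)$ is Schwartz at interior radii (\cref{prop:decayofpsil}); this is precisely the ``slight adaptation of the steps which resulted in \eqref{eq:psisharplm}'' that the paper invokes, and replacing your Schwartz-on-the-horizon claim by this argument closes the only gap in an otherwise faithful reproduction of the paper's proof.
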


\appendix
\addtocontents{toc}{\protect\setcounter{tocdepth}{1}}
\section{Appendix}
\setcounter{equation}{0}  
\setcounter{prop}{0} 
\renewcommand{\theequation}{A.\arabic{equation}}
\renewcommand{\theprop}{A.\arabic{prop}}
\renewcommand{\thecor}{A.\arabic{cor}}
\subsection{Twisted energy-momentum tensor in null coordinates in the interior}
We will write out the components of the twisted energy-momentum tensor in the interior.
\label{appendix}
 \begin{prop}\label{prop:appendix} Consider null coordinates $(u,v,\theta,\varphi)$ in the interior region $\mathcal{B}$. Recall that the metric is given by \eqref{eq:metricinterior}.  Let $f\in C^\infty(\mathcal B )$ be a spherically symmetric nowhere vanishing real valued function and $X$ be a smooth vector field of the form $X = X^u \partial_u + X^v \partial_v$.
 	
 	The components of the twisted energy-momentum tensor \eqref{eq:twistedemt} associated to $f$ are given by
 	 	\begin{align*}
 	&\tilde{\mathbf T}_{uu} = |\tilde \nabla_{u} {{\phi}} |^2 = f^2 \left| \partial_u \left(\frac{{\phi}}{f}\right)\right|^2, \tilde T_{vv} = |\tilde \nabla_{v} {\phi} |^2 =f^2 \left| \partial_v \left(\frac{\phi}{f}\right)\right|^2,\\
 	& 	\tilde{\mathbf T}_{uv} = \tilde{\mathbf T}_{vu}= \frac{\Omega^2}{4}\left(|\slashed\nabla{\phi}|^2 + \V |{\phi}|^2\right), \\
 	&\tilde{\mathbf T}_{\theta\theta} = |\partial_\theta {\phi}|^2 + \frac{2r^2}{\Omega^2} \operatorname{Re}\left( \overline{\tilde{\nabla}_u {\phi} }\tilde{\nabla}_v {\phi}\right) - \frac{r^2}{2}\left( |\slashed\nabla{\phi}|^2 + \V|\phi|^2\right),\\
 	&\tilde{\mathbf T}_{\varphi\varphi} = |\partial_\varphi {\phi}|^2 + \frac{2r^2\sin^2\theta }{\Omega^2}\operatorname{Re}\left(\overline{ \tilde{\nabla}_u {\phi}} \tilde{\nabla}_v {\phi}\right) - \frac{r^2 \sin^2\theta}{2}\left( |\slashed\nabla{\phi}|^2 + \V |\phi|^2\right).
 	\end{align*}
 	
 	The deformation tensor ${}^X\pi := \frac 12 \mathcal{L}_X g $ is given by
 	\begin{align*}
 	&{}^X\pi^{vv} =- \frac{2}{\Omega^2} \partial_u X^v, 
 	{}^X\pi^{uu} =- \frac{2}{\Omega^2} \partial_v X^u, {}^X\pi^{uv} =- \frac{1}{\Omega^2} \left(\partial_u X^u + \partial_v X^v\right) - \frac{2}{\Omega^2} \left( \frac{\partial_v \sqrt{\Omega^2}}{\sqrt{\Omega^2}} X^v + \frac{\partial_u \sqrt{\Omega^2}}{\sqrt{\Omega^2}}X^u\right), \\ & {}^X\pi^{\theta\theta} =- \frac{\Omega^2}{2 r^3}(X^v+X^u), {}^X\pi^{\varphi\varphi} =- \frac{\Omega^2}{2 r^3 \sin^2\theta}(X^v+X^u). 
 	\end{align*}
 	
 	In the following we explicitly write down future-directed normals and induced volume forms for hypersurfaces of constant $r$ values $\Sigma_r$ and for null cones $\mathcal{C}_u$ and $\underline{\mathcal{C}}_v$ of constant $u$ and $v$ values, respectively.
 	\begin{align*}
 	&n_{\Sigma_r} = \frac{1}{\sqrt{\Omega^2}} (\partial_u + \partial_ v), \d\mathrm{vol}_{\Sigma_r} = {r^2}\sqrt{\Omega^2}\d\sigma_{\mathbb{S}^2} \d u = {r^2}{}\sqrt{\Omega^2}\d\sigma_{\mathbb{S}^2} \d v,\\
 	&n_{\underline{\mathcal{C}}_v}= \frac{2}{\Omega^2} \partial_u,  \d\mathrm{vol}_{\underline{\mathcal{C}}_v} = \frac{r^2}{2}\Omega^2 \d \sigma_{\mathbb S^2} \d u,\\ 
 	& n_{{\mathcal{C}}_u} = \frac{2}{\Omega^2} \partial_v, \d\mathrm{vol}_{{\mathcal{C}}_u}  = \frac{r^2}{2}\Omega^2 \d \sigma_{\mathbb S^2} \d v.
 	\end{align*}
 	Then, the fluxes of $X$ are given by
 	\begin{align}&\tilde J^X_\mu [\phi] n^\mu_{\mathcal{C}_u}  = \frac{2X^v}{\Omega^2} |\tilde \nabla_v {\phi}|^2 + \frac{X^u}{2} \left(|\slashed\nabla {\phi}|^2 + \V |\phi|^2 \right), \\\label{eq:jxv}
 	&\tilde J^X_\mu [\phi] n^\mu_{\underline{\mathcal C}_v}  = \frac{2X^u}{\Omega^2} |\tilde \nabla_u {\phi}|^2 + \frac{X^v}{2} \left(|\slashed\nabla {\phi}|^2 + \V |\phi|^2 \right), \\
 	&\tilde J^X_\mu[{\phi}] n_{\Sigma_r}^\mu  = \frac{1}{\sqrt{\Omega^2}} \left( X^u |\tilde \nabla_u {\phi}|^2 + X^v |\tilde \nabla_v {\phi}|^2 + \frac{\Omega^2}{4}(X^u + X^v) (|\slashed \nabla \phi|^2 + \V |\phi|^2) \right).
 	\end{align}
The twisted bulk term associated to the twisting function $f$ reads (cf.\ \cite{warnick_massive_wave})
 	\begin{align*}
 	\tilde K^X = {}^X\pi_{\mu \nu} \tilde T^{\mu\nu} + X^\nu \tilde S_\nu,
 	\end{align*}
 	where 
 	\begin{align*}
 	\tilde S_\nu = \frac{\tilde \nabla^\ast_\nu (f\V)}{2f} |\phi|^2 + \frac{\tilde \nabla^\ast_\nu f}{2f} \overline{\tilde{\nabla}_\sigma {\phi} }\tilde \nabla^\sigma {\phi}.
 	\end{align*}
 	In coordinates we have
 	\begin{align}\nonumber
 	\tilde	K^X =& - \frac{2}{\Omega^2} \left( \partial_u X^v |\tilde \nabla_v {\phi}|^2 + \partial_v X^u |\tilde{\nabla}_u {\phi}|^2 \right) - \frac{2}{r}(X^u + X^v) \operatorname{Re}(\overline{\tilde{\nabla}_u {\phi}} \tilde \nabla_v {\phi})\\ \nonumber
 	&- \left( \frac{1}{2}(\partial_v X^v + \partial_u X^u) -  \frac{\partial_{r} {\Omega^2}}{{4}}\left( X^v+ X^u \right)\right)\left( |\slashed\nabla {\phi}|^2 + \V |\phi|^2\right)\\
 	& + \frac{\Omega^2}{2r}(X^v + X^u) \V |\phi|^2 + X^u\left( -\frac{\partial_u (f^2\V)}{2f^2} |\phi|^2 - \frac{\partial_u f^2}{2f^2}\overline{\tilde{\nabla}_\sigma {\phi}} \tilde \nabla^\sigma {\phi}\right)\nonumber \\
 	& +  X^v\left( -\frac{\partial_v (f^2\V)}{2f^2} |\phi|^2 - \frac{\partial_v f^2}{2f^2}  \overline{\tilde{\nabla}_\sigma {\phi}} \tilde \nabla^\sigma {\phi}\right). \label{eq:bulk}
 	\end{align}
 	\end{prop}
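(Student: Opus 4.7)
The entire proposition is a direct coordinate computation starting from the definitions, so the plan is to organize the work so that each formula is reduced to a short, self-contained calculation using only the metric components in null coordinates and the Leibniz rule for $\tilde\nabla$.

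First I would record once and for all the needed metric data: from \eqref{eq:metricinterior} one has $g_{uv}=g_{vu}=-\Omega^2/2$, $g_{\theta\theta}=r^2$, $g_{\varphi\varphi}=r^2\sin^2\theta$, and correspondingly $g^{uv}=g^{vu}=-2/\Omega^2$, $g^{\theta\theta}=r^{-2}$, $g^{\varphi\varphi}=r^{-2}\sin^{-2}\theta$, all other components vanishing. From this I would write out the gradient squared as
\begin{equation*}
\overline{\tilde\nabla_\sigma\phi}\,\tilde\nabla^\sigma\phi
=-\tfrac{4}{\Omega^2}\operatorname{Re}\bigl(\overline{\tilde\nabla_u\phi}\,\tilde\nabla_v\phi\bigr)+|\slashed\nabla\phi|^2.
\end{equation*}
The diagonal components $\tilde{\mathbf T}_{uu},\tilde{\mathbf T}_{vv}$ are then immediate from $g_{uu}=g_{vv}=0$. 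For $\tilde{\mathbf T}_{uv}$ the cross term $\operatorname{Re}(\overline{\tilde\nabla_u\phi}\,\tilde\nabla_v\phi)$ cancels exactly against the corresponding piece of $-\tfrac12 g_{uv}\overline{\tilde\nabla_\sigma\phi}\,\tilde\nabla^\sigma\phi$, leaving $\tfrac{\Omega^2}{4}(|\slashed\nabla\phi|^2+\mathcal V|\phi|^2)$. The angular components $\tilde{\mathbf T}_{\theta\theta},\tilde{\mathbf T}_{\varphi\varphi}$ follow from the same substitution, with the angular part of $|\tilde\nabla\phi|^2$ isolated by subtracting the sphere part from the total.

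Next I would compute $^X\pi$. Because $X$ has no angular components and $g$ is spherically symmetric, only $^X\pi^{uu},{}^X\pi^{vv},{}^X\pi^{uv},{}^X\pi^{\theta\theta},{}^X\pi^{\varphi\varphi}$ can be nonzero. Using $\mathcal L_X g_{\mu\nu}=X^\alpha\partial_\alpha g_{\mu\nu}+g_{\alpha\nu}\partial_\mu X^\alpha+g_{\mu\alpha}\partial_\nu X^\alpha$ on the explicit metric, lowering/raising by the inverse metric, and then using $\partial_u(\Omega^2)+\partial_v(\Omega^2)=-\partial_r(\Omega^2)\cdot 2r^2/\Delta$ only at the very end, I would read off each contravariant component in the claimed form. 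The normals and volume forms on $\Sigma_r,\mathcal C_u,\underline{\mathcal C}_v$ are then obtained from the requirement that $n$ be future-directed and unit (resp.\ null and dual to the chosen volume form) together with the explicit form of $g$; this is immediate. The fluxes $\tilde J^X_\mu n^\mu$ are then just contractions of the components already obtained.

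For the bulk term $\tilde K^X={}^X\pi_{\mu\nu}\tilde{\mathbf T}^{\mu\nu}+X^\nu\tilde S_\nu$ I would expand $^X\pi_{\mu\nu}\tilde{\mathbf T}^{\mu\nu}$ in the $(u,v,\theta,\varphi)$ coordinate basis using the two tables just established. Here a little bookkeeping is needed because each of $^X\pi^{uu},{}^X\pi^{vv},{}^X\pi^{uv},{}^X\pi^{\theta\theta},{}^X\pi^{\varphi\varphi}$ contributes, and the angular pieces of $\tilde{\mathbf T}$ themselves carry a $\operatorname{Re}(\overline{\tilde\nabla_u\phi}\tilde\nabla_v\phi)$ cross term and a $|\slashed\nabla\phi|^2+\mathcal V|\phi|^2$ piece; I would collect these so that the three types of structural terms (derivatives of $X^u,X^v$; the $1/r$ cross terms from $^X\pi^{\theta\theta},{}^X\pi^{\varphi\varphi}$; and the $\partial_r\Omega^2$ terms from $^X\pi^{uv}$) appear separately. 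Finally, $\tilde S_\nu$ is computed directly from its definition in terms of the twisting function $f$ and of $\mathcal V$ using $\tilde\nabla^\ast_\mu=-f^{-1}\nabla_\mu(f\,\cdot)$; the $u$ and $v$ components combine with the appropriate $X^u,X^v$ factors to produce the last two lines of \eqref{eq:bulk}. No step is an obstacle; the only mild care required is to group terms so that the sphere Laplacian combinations $|\slashed\nabla\phi|^2+\mathcal V|\phi|^2$ and the cross term $\operatorname{Re}(\overline{\tilde\nabla_u\phi}\tilde\nabla_v\phi)$ assemble exactly as stated.
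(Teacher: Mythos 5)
Your overall plan coincides with how this proposition is meant to be established: the paper states it without proof precisely because every formula follows by routine computation from the metric \eqref{eq:metricinterior}, the definition of the twisted derivative, and the definitions of ${}^X\pi$, $\tilde J^X$, $\tilde S_\nu$. Your organization (the gradient-squared identity $\overline{\tilde\nabla_\sigma\phi}\,\tilde\nabla^\sigma\phi=-\tfrac{4}{\Omega^2}\operatorname{Re}(\overline{\tilde\nabla_u\phi}\,\tilde\nabla_v\phi)+|\slashed\nabla\phi|^2$, the exact cancellation of the cross term in $\tilde{\mathbf T}_{uv}$, the contraction tables for ${}^X\pi_{\mu\nu}\tilde{\mathbf T}^{\mu\nu}$, and computing $\tilde S_\nu$ from $\tilde\nabla^\ast_\mu=-f^{-1}\nabla_\mu(f\,\cdot)$) is the natural route and does reproduce the stated components.

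Two corrections so the computation actually closes. First, the chain-rule identity you plan to use at the end is wrong as written: since $\partial_u+\partial_v=\partial_{r_\ast}$ and $\frac{\d r}{\d r_\ast}=\frac{\Delta}{r^2}=-\Omega^2$ in $\mathcal{B}$, one has $\partial_u\Omega^2=\partial_v\Omega^2=-\tfrac{\Omega^2}{2}\,\partial_r\Omega^2$, hence $\partial_u\Omega^2+\partial_v\Omega^2=-\Omega^2\,\partial_r\Omega^2$, not $-\tfrac{2r^2}{\Delta}\,\partial_r\Omega^2$ (you have used $\frac{\d r_\ast}{\d r}=\frac{r^2}{\Delta}$ in the wrong direction). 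This conversion is exactly what produces $\frac{\partial_u\sqrt{\Omega^2}}{\sqrt{\Omega^2}}=-\frac{\partial_r\Omega^2}{4}$ in ${}^X\pi^{uv}$ and the term $\frac{\partial_r\Omega^2}{4}(X^v+X^u)$ in the second line of \eqref{eq:bulk}; taken literally, your identity would give these coefficients incorrectly (off by a factor proportional to $\Omega^{-4}$). Second, keep the factor $\tfrac12$ in ${}^X\pi=\tfrac12\mathcal{L}_X g$: the Lie-derivative formula you quote has no $\tfrac12$, while the stated components, e.g.\ ${}^X\pi^{\theta\theta}=-\frac{\Omega^2}{2r^3}(X^u+X^v)$ and ${}^X\pi^{vv}=-\frac{2}{\Omega^2}\partial_u X^v$, correspond to the symmetrized convention, which is also the one for which $\tilde K^X={}^X\pi_{\mu\nu}\tilde{\mathbf T}^{\mu\nu}+X^\nu\tilde S_\nu$ equals $\nabla^\mu\tilde J^X_\mu$.
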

 	\subsection{Construction of the twisted red-shift vector field}
 	\label{sec:twistedredshift}
 	In this section we will give the proof of \cref{prop:redshift}. 
 	\begin{proof}[Proof of \cref{prop:redshift}]We choose the ansatz $N= N^u \partial_u + N^v \partial_v$ for our red-shift vector field.
 		We will first estimate the twisted 1-jet $\tilde J$ and then the twisted bulk term $\tilde K$.
 		\paragraph{\texorpdfstring{$\tilde J$ current}{J current}}
 		From \eqref{eq:jxv}, we have
 		\begin{align} \tilde J^N_\mu [{\phi}] n_{\underline{\mathcal{C}}_v}^\mu  = \frac{2N^u}{\Omega^2} |\tilde \nabla_u {\phi}|^2 + \frac{N^v}{2} \left(|\slashed\nabla {\phi}|^2 + \V {|\phi|}^2 \right),\end{align}
 		where
 		\begin{align}\V = - \left(\frac{\Box_g f}{f}+ \frac{\alpha}{l^2}\right).\end{align}
 		First, if $f= f(r)$ we have 
 		\begin{align}
 		-\frac{\Box_g f}{f}= {\Omega^2} \frac{\ddot{f}}{f} + \left( \frac{2\Omega^2}{r} + {\partial_r(\Omega^2)}\right) \frac{\dot f}{f},
 		\end{align}
 		where $\dot f := \frac{\d f}{\d r}$. Thus, choosing $f=e^{- \beta_{\mathrm{red}}r}$ gives
 		\begin{align}\label{eq:f}
 		\mathcal{V} = -\left( \frac{\Box_g f}{f} + \frac{\alpha}{l^2}\right) = \beta_{\mathrm{red}}^2 {\Omega^2 } -  \partial_r (\Omega^2) \beta_{\mathrm{red}} - \frac{2\beta_{\mathrm{red}} }{r}  \Omega^2 - \frac{\alpha}{l^2}.
 		\end{align}
 		Note that for $\red < r_+$ close enough to $r_+$, we have 
 		\begin{align}
 		- \partial_r \Omega^2 \geq c_{\mathrm{red}}
 		\end{align}
 		for all $\red \leq r \leq r_+$ and some constant $c_{\mathrm{red}}>0$ only depending on the black hole parameters. The constant $c_{\mathrm{red}}>0$ does not decrease, when we choose $\red$ even closer $r_+$. 
 		Now, by choosing $\beta_{\mathrm{red}} >0$ large enough to absorb the negative contribution from $-\frac{\alpha}{l^2}$ and by choosing $\red$ close enough to $r_+$, we ensure that $\V \gtrsim 1$ in $r_\mathrm{red}\leq r \leq r_+$. This finally shows that if we take $N$ as a future directed vector field, the 1-jet $\tilde J^N_\mu n_{\underline{\mathcal{C}}_v}^\mu$ is positive definite. We will construct the explicit form of $N$ in the bulk term estimate.
 		\paragraph{\texorpdfstring{Bulk term $\tilde K^N$}{Bulk term KN}}
 		Now, we will estimate the bulk term.
 		We will choose the components of the timelike vector field $N= N^u\partial_u + N^v \partial_v $ as
 		\begin{align}N^u := \frac{1}{\Omega^2} - \frac{1}{\delta_1} \text{ and } N^v := 1-\frac{\Omega^2}{\delta_2}.
 		\end{align}
 		Note that $N$ is smooth in $\Rred$. Moreover, for fixed $\delta_1,\delta_2>0$ (only depending on the black hole parameters), we can choose $\red$ close enough to $r_+$ such that $N$ is future directed in $\Rred$. 
 		Then, note that 
 		\begin{align}\label{eq:kx1}
 		\tilde K^N[\phi]= &  \left( - \partial_r \Omega^2 \right)\left(  \frac{1}{\delta_2}   |\tilde \nabla_v {\phi}|^2 + \frac{1}{\Omega^4} |\tilde \nabla_u {\phi}|^2 \right) - \frac{2}{r}\left(\frac{1}{\Omega^2} - \frac{1}{\delta_1} +1 - \frac{1}{\delta_2} \Omega^2  \right) \operatorname{Re}(\overline{\tilde{\nabla}_u {\phi}}\tilde{\nabla}_v {\phi}) \\\label{eq:kx2}
 		& +\frac 14 \left( -\frac{\d \Omega^2 }{\d r} \right)  \left( \frac{1}{\delta_1} - 1 + \frac{2\Omega^2}{\delta_2} \right)  ( |\slashed \nabla {\phi}|^2 + \V |{\phi}|^2)\\
 		& + \frac{1}{2r}\left(1 + \left( 1-\frac{1}{\delta_1} \right) \Omega^2 - \frac{1}{\delta_2} \Omega^4 \right) \V |{\phi}|^2 \\ & + 
 		\left(	\frac{1}{\Omega^2} - \frac{1}{\delta_1} \right)\frac{ - \partial_u (f^2\V)}{2f^2} |{\phi}|^2 +	\left( \frac{1}{\Omega^2} - \frac{1}{\delta_1} \right) \frac{- \partial_u (f^2)}{2f^2}\overline{\tilde{\nabla}_\sigma {\phi}} \tilde \nabla^\sigma {\phi}\label{eq:kx4}\\
 		&+ \left(1-\frac{\Omega^2}{\delta_2}\right) \frac{ - \partial_v (f^2\V)}{2f^2} |\phi|^2 +	\left(1-\frac{\Omega^2}{\delta_2}\right) \frac{- \partial_v (f^2)}{2f^2} \overline{\tilde{\nabla}_\sigma {\phi}} \tilde \nabla^\sigma {\phi}
 		\label{eq:kx5}.\end{align}
 		In the following we will show that 
 		\begin{align}
 		\tilde K^N[\phi] 	& \gtrsim \frac{1}{\Omega^4} |\tilde \nabla_u {\phi}|^2 +  |\tilde \nabla_ v {\phi}|^2 + (|\slashed \nabla {\phi}|^2 + \V |\phi|^2). \label{eq:positivitybulk}
 		\end{align}
 		We will start with the sign-indefinite term appearing in \eqref{eq:kx1}. We estimate it as follows
 		\begin{align}\label{eq:redshift}
 		\Big| - \frac{2}{r}\left(\frac{1}{\Omega^2} -\frac{1}{\delta_1} +1 - \frac{1}{\delta_2} \Omega^2  \right) \operatorname{Re}(\overline{\tilde{\nabla}_u {\phi}}\tilde{\nabla}_v {\phi}) \Big| \lesssim \frac{\epsilon}{\Omega^4} |\tilde{\nabla}_u {\phi}|^2 + \frac{1}{\epsilon} |\tilde{\nabla}_v {\phi}|^2,
 		\end{align} 
 		where we have applied an $\epsilon$-weighted Young's inequality. We have also used that---by choosing $\red$ closer to $r_+$---we can make $\Omega^2$ uniformly smaller than any constant, in particular smaller than $\delta_1$ and $\delta_2$ once those are fixed. Choosing $\epsilon$ small enough, we absorb the term $\frac{\epsilon}{\Omega^4} |\tilde{\nabla}_u {\phi}|^2$ of \eqref{eq:redshift} in the first term of \eqref{eq:kx1}. Then, choosing $\delta_2(\delta_1,\epsilon)$ small enough, we can also absorb the term $ \frac{1}{\epsilon} |\tilde{\nabla}_v {\phi}|^2$ in the first term of \eqref{eq:kx1}. Completely analogously and by potentially choosing $\delta_2$ and $\delta_1$ even smaller, we estimate the terms of the form $\frac{1}{\Omega^2} \operatorname{Re}(\overline{\tilde{\nabla}_u {\phi}}\tilde{\nabla}_v {\phi})$ arising from \eqref{eq:kx4} and \eqref{eq:kx5}.
 		
 		Next, note that, in view of $\mathcal V\gtrsim 1$ and $ \Big|\frac{ - \partial_v (f^2\V)}{2f^2} \Big|\lesssim \Omega^2$, we choose $\delta_1$ small enough such that we absorb error terms coming from \eqref{eq:kx4} and \eqref{eq:kx5} in the term with the good sign in \eqref{eq:kx2}. By doing so we also have to make $\delta_2(\epsilon,\delta_1)>0$ small enough. 
 		Finally, once $\delta_1$ and $\delta_2$ are fixed, note that we can make terms involving higher orders of $\Omega^2$ arbitrarily small by choosing $\red$ close to $r_+$.  This finally shows \eqref{eq:positivitybulk} and concludes the proof.
 	\end{proof}
 	\subsection{Well-definedness of the Fourier projections \texorpdfstring{$\psi_\flat$ and $\psi_\sharp$}{psib and psis}}
 	\begin{prop}\label{prop:reddecay2}
 	Let $\psi \in C^\infty(\mathcal{M}_{\mathrm{RNAdS}}\setminus  \Ch )$ be as in \eqref{eq:defpsi} and 	let $r \in (r_-,r_+)$, $(\varphi,\theta) \in \mathbb S^2$ be fixed. Then, $t\mapsto \psi(t,r,\theta,\varphi)$ is a tempered distribution.
 		Moreover, higher derivatives  $t\mapsto \partial^k \psi(t,r,\theta,\varphi)$, where $\partial\in \{ \partial_t, \partial_r,\partial_\theta,\partial_\varphi \}$ are also tempered distributions.
 	\end{prop}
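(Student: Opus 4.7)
The plan is to prove the stronger statement that $\psi$ and each of its coordinate derivatives $\partial^k\psi$ is \emph{uniformly bounded} in $t$ along every curve $t\mapsto(t,r_0,\theta,\varphi)$ with $r_0\in(r_-,r_+)$ and $(\theta,\varphi)\in\mathbb S^2$ fixed; uniform boundedness in $t$ of course implies temperedness. The strategy is to chain the twisted red-shift estimate \cref{prop:redshiftprelim} with the twisted no-shift estimate \cref{prop:noshiftprop} to bound the twisted flux through any dyadic strip $\Sigma_{r_0}(v_\ast,2v_\ast)$ uniformly in $v_\ast$, and then to convert this $L^2$-in-$v$ bound to a pointwise bound by commuting with the angular Killing fields $\mathcal W_i$ and invoking Sobolev embedding on $\mathbb S^2$ together with a mean-value argument in $v$.

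Fix $r_0\in(r_-,r_+)$ and choose $\rblue\in(r_-,r_0)$; this freedom is exactly the one advertised before \cref{prop:noshiftprop}, at the cost of a constant that degenerates as $\rblue\to r_-$. For $v_\ast\ge 1$, \cref{prop:noshiftprop} with $\phi=\psi$ controls $\int_{\Sigma_{r_0}(v_\ast,2v_\ast)}\tilde J^X[\psi]\,n^\mu_{\Sigma_r}\,\dvol_{\Sigma_r}$ by the analogous flux on $\Sigma_\red$ over a $v$-interval of \emph{fixed} length, which in turn is bounded, via \cref{prop:redshiftprelim} applied with $r_0=\red$ and $v_1=1$, by the finite initial null-cone flux $\int_{\cv{1}(\red,r_+)}\tilde J^N[\psi]\,n_{\cv{v}}$ plus the horizon contribution $\int_{\mathcal H_A^+(1,2v_\ast)}\tilde J^N[\psi]\,n_{\mathcal H}$. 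The latter is finite uniformly in $v_\ast$ by the $T$-energy bound \eqref{eq:boundednenergyflux} on $\mathcal H_A^+$ combined with the exterior red-shift vector field; the mirror argument using $\mathcal H_B^+$ and the $u$-coordinate handles $t\to-\infty$. Altogether,
\[
\int_{\Sigma_{r_0}(v_\ast,2v_\ast)}\tilde J^X[\psi]\,n^\mu_{\Sigma_r}\,\dvol_{\Sigma_r}\;\lesssim_{r_0}\;1\quad\text{uniformly in }v_\ast\ge 1.
\]

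Commuting with $\mathcal W_i\mathcal W_j$ (Killing, hence commuting with $\Box_g$) gives the same bound for $\mathcal W^\alpha\psi$ with $|\alpha|\le 2$, and $H^2(\mathbb S^2)\hookrightarrow L^\infty(\mathbb S^2)$ applied fibrewise yields
\[
\int_{v_\ast}^{2v_\ast}\bigl(|\psi|^2+|\partial_v\psi|^2\bigr)(s,r_0,\theta,\varphi)\,\d s\;\lesssim_{r_0}\;1,
\]
uniformly in $v_\ast$ and $(\theta,\varphi)$; twisted and ordinary derivatives are interchangeable here since the twisting function and $\Omega^2$ are two-sided bounded on $\{r=r_0\}$. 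A mean-value argument---pick $s_\ast\in[v_\ast,2v_\ast]$ with $|\psi|^2(s_\ast)\le\tfrac{2}{v_\ast}\int|\psi|^2\,\d s$ and estimate $|\psi|^2(t)-|\psi|^2(s_\ast)=2\int_{s_\ast}^{t}\psi\,\partial_v\psi\,\d s$ by Cauchy--Schwarz---now gives uniform boundedness of $|\psi|(\cdot,r_0,\theta,\varphi)$ in $t\in\mathbb R$. Coordinate derivatives $\partial^k\psi$ are handled identically after pre-commuting with $T$ and the $\mathcal W_i$; for $\partial_r$ one trades $\partial_u\partial_v\psi$ for $(\partial_u+\partial_v)\psi$, $\slashed\Delta_{\mathbb S^2}\psi$ and lower order terms via \eqref{eq:wave} and iterates. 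The main obstacle is bookkeeping rather than analysis: the implicit constant in \cref{prop:noshiftprop} degenerates as $\rblue\to r_-$, which forces $\rblue$ to be chosen in terms of the given $r_0$ and precludes a uniform-in-$r_0$ version of the bound---this is however exactly the caveat announced before \cref{prop:noshiftprop} and is harmless for the pointwise-in-$r_0$ statement of the proposition.
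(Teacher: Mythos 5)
Your chaining of \cref{prop:redshiftprelim} and \cref{prop:noshiftprop} is exactly the mechanism the paper uses, but the crucial input you feed into it is not justified: you claim that the non-degenerate horizon contribution $\int_{\mathcal{H}_A^+(1,2v_\ast)}\tilde J^N_\mu[\psi]\,n^\mu_{\mathcal{H}^+}$ is bounded \emph{uniformly} in $v_\ast$ ``by the $T$-energy bound \eqref{eq:boundednenergyflux} combined with the exterior red-shift vector field''. The flux $J^T_\mu n^\mu_{\mathcal{H}^+}$ in \eqref{eq:boundednenergyflux} is degenerate: on the horizon it only controls $|T\psi|^2$, whereas the twisted red-shift flux $\tilde J^N_\mu n^\mu_{\mathcal{H}^+}$ also contains $|\slashed\nabla\psi|^2+|\psi|^2$. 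Upgrading to a uniform bound on the non-degenerate flux over $\mathcal{H}_A^+(1,\infty)$ would require an integrated decay statement along the horizon, and this is precisely what is \emph{not} available here: general solutions decay only logarithmically (\cref{prop:basicenergyest}), so $\int_{\mathcal{H}_A^+}|\psi|^2\,\d v$ and the exterior spacetime integral of $J^T$ needed to close an exterior red-shift estimate are not uniformly bounded — this is the very obstruction (infinite flux of weighted currents on $\{r=\mathrm{const}\}$ due to slow decay towards $i^+$) emphasized in the introduction, and the reason the paper needs the frequency decomposition at all. If your uniform strip bound were available, the subsequent Sobolev/mean-value step would give uniform pointwise boundedness of $\psi$ on $\{r=r_0\}$ by purely physical-space arguments, which is essentially the content the paper can only obtain for the low-frequency piece $\psi_\flat$ (after proving superpolynomial decay) and for $\psi_\sharp$ via scattering theory.

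What the paper actually does for this proposition is much weaker and is what you should aim for: temperedness only needs \emph{polynomial} growth. From the uniform boundedness of $\psi$ and its derivatives along the event horizon (\cref{prop:basicenergyest} and \cref{rmk:rmkdecay}) one gets the pointwise bound on the flux density, hence $\int_{\mathcal{H}(v_1,v_2)}\tilde J^N_\mu[\psi]\,n^\mu\lesssim\langle v_2\rangle$, i.e.\ \emph{linear} growth rather than a uniform bound. Feeding this into \cref{prop:redshiftprelim} and then \cref{prop:noshiftprop} (with $\rblue=\rblue(r_0)$, exactly as you set up) gives $\int_{\Sigma_{r_0}(v,2v)}\tilde J^X\lesssim_{r_0}\langle v\rangle$, and after commuting with the $\mathcal{W}_i$ and Sobolev embedding on $\mathbb{S}^2$ one obtains $\int_0^t(|\psi|^2+|\partial_t\psi|^2)\,\d t\lesssim_{r_0}\langle t\rangle$, which already shows slow growth and hence temperedness; higher derivatives follow by commuting with $T$, $\mathcal{W}_i$ and elliptic estimates. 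So the architecture of your argument is right, but the uniform-in-$v_\ast$ horizon bound is a genuine gap, and repairing it by downgrading to the linear-growth bound recovers the paper's proof.
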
 
 	\begin{proof}Fix $r \in (r_-,r_+)$, $(\varphi,\theta) \in \mathbb S^2$.	We will first prove that $t\mapsto \psi(t,r,\varphi,\theta)$ is slowly growing.\footnote{With \emph{slowly growing} we mean that $t\mapsto \psi(t,r,\varphi,\theta)$ and all its $\partial_t$ derivatives have at most polynomial growth as $ |t| \to \infty$.}  Since $\psi \in C^\infty(\mathcal{M}_{\mathrm{RNAdS}}\setminus \Ch )$ and in view of the facts that $\Box_g$ commutes with $T = \partial_t$ and our initial data are smooth and compactly supported, it suffices to obtain a polynomial bound for $\psi(t,r,\varphi,\theta)$. To do this we will propagate mild polynomial growth from the exterior region in the interior. (Note that this growth is far from being sharp but it will be sufficient for the purpose of proving well-definedness of $\psi_\flat$ and $\psi_\sharp$.)
 		
 	From \cref{prop:basicenergyest} and \cref{rmk:rmkdecay} we infer that $\psi$ and its derivatives remain bounded along the event horizon $\mathcal H$. A direct integration yields
 		\begin{align}
 		\int_{\mathcal{H}(v_1,v_2)} \tilde J_\mu^N[{{\psi}}] n^\mu_{\mathcal{H}^+} \dvol_{\Hp^+} \lesssim_{\psi_0,\psi_1} \langle v_2 \rangle,
 		\end{align}
 where $\langle v_2 \rangle$ denotes the Japanese bracket and $0\leq v_1\leq v_2$. The constant appearing in $\lesssim_{\psi_0,\psi_1}$ depends on some higher Sobolev norm of the initial data.
 		
 		Then, using the red-shift vector field (more precisely, applying \cref{prop:redshiftprelim}) yields
 		\begin{align}
 			\int_{\Sigma_{r_0}(v_1,v_2)} &\tilde J_\mu^N[{\psi}] n^\mu_{\Sigma_r} \dvol_{\Sigma_r}  \lesssim_{\psi_0,\psi_1} \langle v_2 \rangle
 		\end{align}
 		for any $r_0 \in [\red, r_+)$. If $r\in(r_-,r_+)$ as fixed above lies in the red-shift region $[\red, r_+)$, we directly conclude \eqref{eq:decayonconstrslice} after commuting with the angular momentum operators $\mathcal{W}_i$ and a Sobolev embedding on $\mathbb S^2$. If however $r \in (r_-,\red)$, we choose $\rblue = \rblue(r)$ small enough such that $r\in [\rblue,\red]$, i.e.\ $r$ lies in the no-shift region. Then, \cref{prop:noshiftprop} yields
 		\begin{align}\label{eq:noshiftestimateappendix}
 		\int_{\Sigma_{r}(v, 2v )} \tilde J_\mu^X[{{\psi}}] n^\mu_{\Sigma_{ r}}\dvol_{\Sigma_{ r}} \lesssim_{r} \int_{\Sigma_{\red}(v_{\red }(u_{r} (v )), 2v )} \tilde J_\mu^X[{{\psi}}] n^\mu_{\Sigma_{ r}}\dvol_{\Sigma_{ r}}\lesssim_{\psi_0,\psi_1,r} \langle v\rangle
 		\end{align}
 		for any $v\geq 1$. After commuting with angular momentum operators $\mathcal{W}_i$ and a Sobolev embedding on $\mathbb S^2$ we obtain
 		\begin{align}\label{eq:decayonconstrslice}
 		\int_{0}^{t} |\psi(t,r,\varphi,\theta) |^2 + |\partial_t \psi(t,r,\varphi,\theta) |^2 \d t \lesssim_{\psi_0,\psi_1, r} \langle t \rangle
 		\end{align}
 		from which we can deduce that $t \mapsto \psi(t,r,\varphi,\theta)$ is slowly growing (where we recall that $r,\varphi,\theta$ are fixed). Similarly, as $t\to -\infty$, we obtain the same conclusion. 
 		
 		Now, commuting with $\partial_t$, the angular momentum operators $\mathcal{W}_i$ and using elliptic estimates it follows that higher order derivatives are also slowly growing which concludes the proof.
 	\end{proof}
 	\begin{cor}\label{eq:welldefinedpsiflatpsisharp}
 		The Fourier projections $\psi_\flat$ and $\psi_\sharp$ in the interior $\mathcal{B}$ as in \eqref{eq:fouriertransforminterior} are well-defined and are smooth solutions of \eqref{eq:wave}.
 	\end{cor}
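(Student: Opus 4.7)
The plan is to combine Proposition A.3 with standard properties of convolution of tempered distributions with Schwartz functions. Since $\chi_{\omega_0}\in C_c^\infty(\mathbb{R})$, the inverse Fourier transform $\eta:=\mathcal{F}_T^{-1}[\chi_{\omega_0}]$ is a Schwartz function on $\mathbb{R}$. Fix any $(r,\varphi,\theta)$ with $r\in(r_-,r_+)$. By Proposition A.3, $t\mapsto \psi(t,r,\varphi,\theta)$ is a slowly growing continuous function, hence a tempered distribution in $t$. Therefore the convolution
\[
\psi_\flat(t,r,\varphi,\theta) = \tfrac{1}{\sqrt{2\pi}}\bigl(\eta \ast \psi(\cdot,r,\varphi,\theta)\bigr)(t)
\]
is well-defined for each $(t,r,\varphi,\theta)\in\mathcal{B}$ and defines a smooth, slowly growing function of $t$ with derivatives given by $\partial_t^k \psi_\flat = \eta \ast \partial_t^k\psi$.

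Next, I would promote this to joint smoothness in all four coordinates. Proposition A.3 (via its proof, which uses \eqref{eq:noshiftestimateappendix} combined with commutation with $T$, $\mathcal{W}_i$ and elliptic estimates) provides polynomial-in-$t$ bounds for every mixed derivative $\partial_t^i\partial_r^j\partial_\theta^k\partial_\varphi^l\psi$, locally uniform in $(r,\varphi,\theta)$ on compact subsets of $\mathcal{B}$. Paired with the rapid decay $|\eta(s)|\lesssim \langle s\rangle^{-N}$ for every $N$, this justifies differentiation under the convolution integral via the dominated convergence theorem, yielding $\psi_\flat\in C^\infty(\mathcal{B})$. The high frequency part $\psi_\sharp=\psi-\psi_\flat$ is then smooth by linearity.

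Finally, to verify that $\psi_\flat$ solves \eqref{eq:wave}, I would use the fact that $T=\partial_t$ is a Killing vector field, so that in the coordinates $(t,r,\varphi,\theta)$ the coefficients of the Klein--Gordon operator $\Box_g-\alpha/l^2$ are independent of $t$. Hence this operator commutes with convolution in $t$, giving
\[
\Bigl(\Box_g-\tfrac{\alpha}{l^2}\Bigr)\psi_\flat \;=\; \tfrac{1}{\sqrt{2\pi}}\,\eta\ast\Bigl(\Box_g-\tfrac{\alpha}{l^2}\Bigr)\psi \;=\; 0
\]
in $\mathcal{B}$, where the interchange is again justified by the polynomial bounds from Proposition A.3 and the Schwartz decay of $\eta$. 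Linearity then shows that $\psi_\sharp$ also solves \eqref{eq:wave}.

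The only point requiring care is the differentiation under the integral sign. The difficulty is that Proposition A.3 gives growth constants that depend on the choice of $\rblue$ (and hence on $r$) used in the no-shift estimate \eqref{eq:noshiftestimateappendix}. This is harmless, however: on any compact subset of $\mathcal{B}$ one may fix a single $\rblue<r_-^{\min}$ below the infimum of $r$, which makes all the estimates uniform on that compact set. Together with the Schwartz decay of $\eta$, this yields the required dominated convergence. All other steps are formal manipulations with convolutions that are standard for tempered distributions.
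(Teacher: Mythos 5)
Your proposal is correct and follows essentially the same route as the paper's own proof: invoke \cref{prop:reddecay2} to get that $t\mapsto\partial^k\psi(t,r,\varphi,\theta)$ are tempered (slowly growing) distributions, conclude well-definedness and smoothness of the convolution with the Schwartz function $\mathcal{F}_T^{-1}[\chi_{\omega_0}]$, and use that $T$ is Killing to commute the Klein--Gordon operator with the convolution. Your extra remarks on locally uniform bounds and fixing $\rblue$ on compact subsets simply spell out details the paper leaves implicit.
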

 	\begin{proof}
 		From \cref{prop:reddecay2} we know that $t \mapsto \psi(t,r,\varphi,\theta)$ is a tempered distribution in the interior for fixed $r,\varphi,\theta$. Thus, $\psi_\flat$ defined in \eqref{eq:fouriertransforminterior} is well defined as $\mathcal{F}_T^{-1}[\chi_{\omega_0}]$ is a Schwartz function. Moreover, $\psi_\flat$ is smooth because $\psi$ is smooth itself and by \cref{prop:reddecay2} we have that all higher derivatives $t \mapsto \partial^k \psi(t,r,\varphi,\theta)$ are tempered distributions, too. Now, this also implies that $\psi_\flat\in C^\infty(\mathcal B)$ solves \eqref{eq:wave} which concludes the proof in view of $\psi = \psi_\flat + \psi_\sharp$. 
 	\end{proof}
 	\begin{prop}\label{prop:welldefofpsiflat}
 		Let $\psi \in C^\infty(\mathcal{M}_{\mathrm{RNAdS}}\setminus \mathcal{CH})$ be defined as in \eqref{eq:defpsi}. Then, there exist $\psi_\flat\in C^\infty(\mathcal{M}_{\mathrm{RNAdS}}\setminus \mathcal{CH})$ and $\psi_\sharp \in C^\infty(\mathcal{M}_{\mathrm{RNAdS}}\setminus \mathcal{CH})$,  two solutions of \eqref{eq:wave}  with
 			\begin{align}\label{eq:psiflatexterior1}\psi_\flat  =\frac{1}{\sqrt{2\pi }} \mathcal{F}_T^{-1} \left[ \chi_{\omega_0}\right]  \ast \psi  \text{ and } \psi_\sharp = \psi - \psi_\flat,\end{align}
 	where $\chi_{\omega_0}$ is defined in \eqref{eq:chiomega0} and \begin{align}\label{eq:fourierprojection}\psi_\flat (t,r,\varphi,\theta) = \int_{\mathbb{R}} \frac{1}{\sqrt{2\pi}} \mathcal{F}_T^{-1}[\chi_{\omega_0}](s) \psi(t-s,r,\varphi,\theta) \d s\end{align}
 	in all coordinate patches $(t_{\mathcal{R}_A},r_{\mathcal{R}_A},\theta_{\mathcal{R}_A},\varphi_{\mathcal{R}_A} )$, $(t_{\mathcal{R}_B},r_{\mathcal{R}_B},\theta_{\mathcal{R}_B},\varphi_{\mathcal{R}_B})$ and $(t_{\mathcal{B}},r_{\mathcal{B}},\theta_{\mathcal{B}},\varphi_{\mathcal{B}} )$ in the regions $\mathcal{R}_A$, $\mathcal{R}_B$ and $\mathcal{B}$, respectively.
 	\end{prop}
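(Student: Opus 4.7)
The plan is to define $\psi_\flat$ intrinsically on $\mathcal{M}_{\mathrm{RNAdS}}\setminus \mathcal{CH}$ via convolution along the Killing flow of $T$, and then verify that this intrinsic definition reduces to \eqref{eq:fourierprojection} in each of the three coordinate patches. Explicitly, set $\eta := \frac{1}{\sqrt{2\pi}}\mathcal{F}_T^{-1}[\chi_{\omega_0}]$, which is a Schwartz function on $\mathbb{R}$, let $\Phi^T_s$ denote the flow of $T$ (which is complete and smooth on $\mathcal{M}_{\mathrm{RNAdS}}$, with $\mathcal{B}_-$ as a fixed point), and put
\begin{align*}
\psi_\flat(p) := \int_{\mathbb{R}} \eta(s)\, \psi\!\left(\Phi^T_{-s}(p)\right) \d s, \qquad \psi_\sharp := \psi - \psi_\flat.
\end{align*}
Since the coordinates $(t,r,\theta,\varphi)$ on each of $\mathcal{R}_A, \mathcal{R}_B, \mathcal{B}$ are adapted to the flow of $T$ in the sense that $\Phi^T_{-s}(t,r,\theta,\varphi) = (t-s,r,\theta,\varphi)$, the identity \eqref{eq:fourierprojection} is then automatic in each patch.

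The next step is to verify that the integral converges absolutely and produces smooth output. For fixed $p\in \mathcal{M}_{\mathrm{RNAdS}}\setminus\mathcal{CH}$, the orbit $s\mapsto \Phi^T_{-s}(p)$ stays inside one of the regions $\mathcal{R}_A$, $\mathcal{R}_B$ or $\mathcal{B}$ (possibly on a horizon, whose generators are $T$-orbits), and I would bound $|\psi\circ\Phi^T_{-s}(p)|$ by: the logarithmic decay of \cref{prop:basicenergyest} together with commutation with $T$ when the orbit lies in $\mathcal{R}_A\cup\mathcal{R}_B\cup\mathcal{H}^\pm_{A,B}$, and by \cref{prop:reddecay2} (polynomial growth in $t$) when the orbit lies in $\mathcal{B}$. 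Because $\eta$ is Schwartz, the polynomial bounds are annihilated and the integral converges together with all its derivatives; the degenerate case $p=\mathcal{B}_-$ collapses the integral to $\psi(\mathcal{B}_-)\int\eta\d s = \chi_{\omega_0}(0)\psi(\mathcal{B}_-)=\psi(\mathcal{B}_-)$, which is consistent with the values on a deleted neighborhood.

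To promote $\psi_\flat$ to a smooth function on $\mathcal{M}_{\mathrm{RNAdS}}\setminus\mathcal{CH}$, I would work in coordinate systems regular across the horizons (ingoing Eddington--Finkelstein $(v,r,\theta,\varphi)$ near $\mathcal{H}_A^+$, outgoing $(u,r,\theta,\varphi)$ near $\mathcal{H}_B^+$, etc.), in which $T$ extends as $\partial_v$ or $\partial_u$, so that $\Phi^T_s$ remains a smooth $1$-parameter family of diffeomorphisms. Since $T$ is Killing, any smooth coordinate vector field $\partial$ in such a chart satisfies $\partial\circ \Phi^T_{-s} = (\Phi^T_{-s})_*\partial$ up to a smooth commutator term, so differentiation under the integral sign is justified and inherited from smoothness of $\psi$. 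The same commutation reasoning gives $\Box_g \psi_\flat = \int \eta(s)(\Box_g\psi)\circ\Phi^T_{-s}\d s = 0$, so both $\psi_\flat$ and $\psi_\sharp$ solve \eqref{eq:wave}.

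The main obstacle is the smoothness verification across $\mathcal{H}_A^\pm\cup\mathcal{H}_B^\pm$, where the tortoise coordinate $r_*$ blows up so that the three coordinate charts on $\mathcal{R}_A,\mathcal{R}_B,\mathcal{B}$ are \emph{not} directly comparable; one must genuinely pass to regular Eddington--Finkelstein-type coordinates and check that the (uniform-on-compacta) bounds furnished by \cref{prop:basicenergyest} and \cref{prop:reddecay2} transfer to those charts with derivatives in all directions (including transverse to the horizon). Once this is in place, the identification of the intrinsic definition with the patch-wise formula \eqref{eq:fourierprojection} follows by unravelling $\Phi^T_{-s}$ in each coordinate system, completing the proof.
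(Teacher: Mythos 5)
Your proposal is correct and takes essentially the same route as the paper: well-definedness from the (qualitative) logarithmic decay of \cref{prop:basicenergyest} in the exterior and the polynomial growth of \cref{prop:reddecay2} in the interior against the Schwartz kernel $\mathcal{F}_T^{-1}[\chi_{\omega_0}]$, with $\Box_g$ commuting with the $T$-convolution, and smoothness across the horizons obtained by rewriting the convolution in regular Eddington--Finkelstein coordinates where $T$ is the coordinate translation. Your intrinsic definition via the Killing flow is just a coordinate-free repackaging of the paper's patchwise construction and gluing.
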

 	\begin{proof}
First, from \cref{prop:basicenergyest} we know that $\psi$ and all higher derivatives decay logarithmically on the exterior regions $\mathcal{R}_A$ and $\mathcal{R}_B$.\footnote{This decay is only used in a qualitative way.} Hence, $\psi$ and all higher derivatives are smooth tempered distributions (for fixed $r,\varphi,\theta$) in the exterior regions $\mathcal{R}_A$ and $\mathcal{R}_B$ as functions of $t_{\mathcal{R}_A}$ and $t_{\mathcal{R}_B}$, respectively. Thus, the Fourier projections $\psi_\flat$ \eqref{eq:fourierprojection} is well-defined in $\mathcal{R}_A$ and $\mathcal{R}_B$ and it follows by Lebesgue's dominated convergence that $\psi_\flat$ is a smooth solution of \eqref{eq:wave}.
Moreover, from \cref{eq:welldefinedpsiflatpsisharp} we deduce that $\psi_\flat$ is also a well-defined smooth solution of \eqref{eq:wave} in the interior $\mathcal{B}$. 

Finally, $\psi_\flat$, defined a priori only in $\mathcal{R}_A$, $\mathcal{R}_B$ and $\mathcal{B}$, extends to a smooth solution of \eqref{eq:wave} on $\mathcal{M}_{\mathrm{RNAdS}}\setminus \mathcal{CH}$. This follows from using regular coordinates near the respective event horizons (outgoing Eddington--Finkelstein coordinates $(v,r,\theta,\varphi)$, where $v(t,r) = t + r_\ast, r(t,r) = r, \theta = \theta, \varphi = \varphi$ near $\mathcal{H}_A$ and ingoing Eddington--Finkelstein coordinates near $\mathcal{H}_B$) and writing $\psi_\flat$ again as a convolution in this coordinate system $\psi_\flat = \frac{1}{\sqrt{2\pi}} \mathcal{F}_T^{-1} [\chi_{\omega_0}] \ast \psi$. Note that $T= \partial_v$ in this coordinate system. This concludes the proof in view of $\psi=\psi_\flat + \psi_\sharp$. 
 	\end{proof}
 	\begin{prop}\label{prop:decayofpsil}
Assume that $\psi \in C^\infty(\mathcal{M}_{\mathrm{RNAdS}}\setminus \mathcal{CH})$ is a solution of \eqref{eq:wave} arising from smooth and compactly supported initial data as in \cref{prop:wellposedness}. Assume further that there exists an $L\in \mathbb{N}$ with $\langle \psi ,Y_{m\ell} \rangle_{L^2(\mathbb{S}^2)} =0 $ for $\ell \geq L$. Then, for every $r\in (r_-,r_+)$ and $(\theta,\varphi) \in \mathbb{S}^2$, the function $t \mapsto \psi(t,r,\varphi,\theta)$ is a Schwartz function. Moreover, higher derivatives $t\mapsto \partial^k \psi (t,r,\theta,\varphi)$, where $\partial\in \{ \partial_t, \partial_r,\partial_\theta,\partial_\varphi \}$ are also Schwartz functions.
\begin{proof}
The proof follows the same lines as the proof \cref{prop:reddecay2} with the difference that we have exponential decay on the event horizon 
\begin{align}\label{eq:expdec}
	\int_{v_1}^{v_2} \tilde J^N_\mu[\psi] n^\mu_{\mathcal{H}^+_A} \dvol_{\mathcal H^+_A} \lesssim D[\psi] \exp\left(- e^{-C(M,Q,l,\alpha) L}  v_1 \right),
\end{align}
where $D[\psi]$ is as in \eqref{eq:dpsi}.  Note that 
\eqref{eq:expdec} follows from \cite[Section~12]{decaykg}. Analogously to the proof of \cref{prop:reddecay2} we can propagate this decay to any $ \{r=const. \} $ hypersurface in the interior. This is very similar to \cite{MR3697197}. As before, by commuting with $\partial_t$ and $\mathcal{W}_i$ as well as using elliptic estimates, we see that on $ \{r=const. \} $, $\psi$ and higher derivatives $\partial^k \psi$ decay exponentially towards both components of $i^+$. This concludes the proof.
\end{proof}
\end{prop}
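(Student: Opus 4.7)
The plan follows the template of \cref{prop:reddecay2}, but with exponential decay on the event horizon replacing the logarithmic/polynomial bounds used there. Throughout, fix $r \in (r_-,r_+)$ and $(\theta,\varphi) \in \mathbb S^2$; the goal is to show that $\psi(t,r,\theta,\varphi)$ and all its $\partial_t,\partial_r,\mathcal{W}_i$ derivatives decay faster than any polynomial as $|t| \to \infty$.

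First, by assumption $\psi = \sum_{\ell < L, |m|\le \ell} \langle \psi, Y_{\ell m}\rangle Y_{\ell m}$ is supported on bounded angular frequencies, so \cref{prop:basicenergyest} (together with \cref{rmk:rmkdecay}) yields exponential decay in the exterior; commuting with $T$ and the $\mathcal{W}_i$ and invoking \cite[Section~12]{decaykg} one obtains, for any $k\in \mathbb N$,
\begin{align*}
\int_{\mathcal{H}^+_A(v_1,v_2)} \tilde J^N_\mu[T^{a}\mathcal{W}^{b}\psi] n^\mu_{\mathcal H^+_A}\, \dvol_{\mathcal H^+_A} \;\lesssim_{a,b,L}\; D_{a,b}[\psi]\,\exp\!\bigl(-e^{-C(M,Q,l,\alpha) L} v_1\bigr)
\end{align*}
for all $0\le v_1 \le v_2$ and $a+|b|\le k$, where $D_{a,b}[\psi]$ is a higher-order data quantity. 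The analogous bound holds along $\mathcal H^+_B$ in $u$, and by the symmetric arguments indicated in the excerpt also along $\mathcal H^-_A,\mathcal H^-_B$ as $|v|,|u|\to\infty$.

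Next, I propagate this decay into the interior exactly as in \cref{prop:reddecay,prop:reddecay2}. The twisted red-shift estimate (\cref{prop:redshiftprelim}) applied on $\{r_{\mathrm{red}} \le r \le r_+\}\cap\{v_1 \le v \le v_2\}$ turns the exponential horizon bound into an exponentially small flux through any $\Sigma_{r_0}(v_1,v_2)$ with $r_0 \in [r_{\mathrm{red}},r_+)$. For the given $r\in(r_-,r_+)$ I then pick $r_{\mathrm{blue}}=r_{\mathrm{blue}}(r)\in(r_-,r)$ so that $r\in[r_{\mathrm{blue}},r_{\mathrm{red}}]$ lies in the no-shift region, and apply \cref{prop:noshiftprop} on dyadic $v$-intervals $[v_*,2v_*]$ to transport the exponential decay from $r_{\mathrm{red}}$ to $r$:
\begin{align*}
\int_{\Sigma_{r}(v_*,2v_*)} \tilde J^X_\mu[T^{a}\mathcal{W}^{b}\psi] n^\mu_{\Sigma_r}\, \dvol_{\Sigma_r} \;\lesssim_{r,L,a,b}\; D_{a,b}[\psi]\, \exp\bigl(- c_L v_*\bigr),
\end{align*}
with $c_L>0$. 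Commuting freely with the Killing fields $T$ and $\mathcal{W}_i$ is harmless (they commute with $\Box_g$), and elliptic estimates on $\Sigma_r$ together with the equation \eqref{eq:wave} convert $\mathcal{W}_i$ and $T$ derivatives into $\partial_r$ derivatives. A Sobolev embedding $H^2(\mathbb{S}^2)\hookrightarrow L^\infty(\mathbb{S}^2)$ at each fixed $(t,r)$ then upgrades these $L^2$-type bounds to pointwise exponential decay
\begin{align*}
|\partial^k \psi(t,r,\theta,\varphi)| \;\lesssim_{r,k,L}\; \tilde D_k[\psi]\, e^{-\tilde c_L |t|} \qquad \text{as } |t|\to\infty
\end{align*}
for every multi-index $k$ in $\{\partial_t,\partial_r,\partial_\theta,\partial_\varphi\}$, where the bound for $t\to -\infty$ is obtained by the symmetric argument with $\mathcal H^-_A,\mathcal H^-_B$. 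Pointwise exponential decay of $\psi$ and all its derivatives in $t$ is of course much stronger than Schwartz behavior, which completes the proof.

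The only mildly delicate point is ensuring that the no-shift constants in \cref{prop:noshiftprop} (which \emph{a priori} degenerate as $r_{\mathrm{blue}}\to r_-$) do not destroy the exponential-in-$v_*$ decay. This is fine because $r\in(r_-,r_+)$ is fixed once and for all, so $r_{\mathrm{blue}}(r)$ and hence all implicit constants are fixed; the exponential rate is inherited unchanged from the red-shift region and only the prefactor depends on $r$. This is the same scheme used in \cite{MR3697197} and flagged in the remark after the statement.
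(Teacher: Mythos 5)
Your proposal is correct and follows essentially the same route as the paper: exponential decay of the fixed-angular-momentum solution along the event horizon (from \cite[Section~12]{decaykg}), propagated into the interior via the twisted red-shift and no-shift estimates exactly as in \cref{prop:reddecay2}, then upgraded to pointwise exponential decay of $\psi$ and its derivatives by commuting with $T$ and $\mathcal{W}_i$, elliptic estimates, and Sobolev embedding. Your remark that the no-shift constants depend only on the fixed $r$ (so the exponential rate survives with an $r$-dependent prefactor) is precisely the point the paper leaves implicit.
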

\printbibliography[heading=bibintoc]
\end{document}